\documentclass[english,11pt]{article}
\usepackage{amsmath}
\usepackage{amsthm}
\usepackage{amssymb}
\usepackage{fontspec}
\usepackage{float}
\usepackage{bm}
\usepackage{graphicx}
\usepackage{geometry}
\usepackage{setspace}
\usepackage[authoryear]{natbib}
\usepackage[bookmarks=true,bookmarksnumbered=true,bookmarksopen=false,
 breaklinks=true,pdfborder={0 0 0},pdfborderstyle={},backref=false,colorlinks=false]
 {hyperref}
\hypersetup{pdftitle={Robust Signal Maximization in Spillover Experiments},
 pdfauthor={K. Borusyak, P. Hull, E. Munro}}

\makeatletter

\newcommand*\LyXZeroWidthSpace{\hspace{0pt}}
\providecommand{\tabularnewline}{\\}
\floatstyle{ruled}
\newfloat{algorithm}{tbp}{loa}
\providecommand{\algorithmname}{Algorithm}
\floatname{algorithm}{\protect\algorithmname}

\usepackage[strict]{changepage}
\newcommand\independent{\protect\mathpalette{\protect\independenT}{\perp}}
\def\independenT#1#2{\mathrel{\rlap{$#1#2$}\mkern2mu{#1#2}}}

\usepackage{pdflscape}
\usepackage{afterpage}
\usepackage{flafter}

\usepackage{dcolumn,booktabs}
\usepackage{multirow,booktabs}
\usepackage{threeparttable}
\usepackage[justification=centering]{caption}
\newcolumntype{d}[1]{D{.}{.}{#1}}

\PassOptionsToPackage{position=top}{subfig}

\usepackage{tikz}
\usetikzlibrary{arrows.meta, positioning, calc}

\usepackage{chngcntr}

\makeatother

\theoremstyle{plain}
\newtheorem{thm}{\protect\theoremname}
\newtheorem{prop}{\protect\propositionname}
\newtheorem{lem}{\protect\lemmaname}
\newtheorem{cor}{\protect\corollaryname}
\newtheorem{assumption}{\protect\assumptionname}
\usepackage{polyglossia}
\setdefaultlanguage[variant=american]{english}
\providecommand{\assumptionname}{Assumption}
\providecommand{\corollaryname}{Corollary}
\providecommand{\lemmaname}{Lemma}
\providecommand{\propositionname}{Proposition}
\providecommand{\theoremname}{Theorem}

\begin{document}
\title{Robust Signal Maximization in Spillover Experiments}
\author{\vspace{1.5cm}
}
\author{Kirill Borusyak\\
UC Berkeley\and Peter Hull\\
Brown\and Evan Munro\\
Chicago Booth\thanks{Contact: k.borusyak@berkeley.edu, peter\_hull@brown.edu, and evan.munro@chicagobooth.edu. We thank David Atkin, Michael Best, and Eric Verhoogen for encouraging this project; we thank Gabriel Kreindler, Shuangning Li, and  Davide Viviano for useful comments. OpenAI's ChatGPT contributed valuable insights. }}
\date{\vspace{0.25cm}
July 2026}

\maketitle
\vspace{0.25cm}

\begin{abstract}
\begin{singlespace} \noindent\begin{adjustwidth*}{1cm}{1cm} {\normalsize We
study the optimal design and analysis of experiments for estimating
spillover effects. Assuming a known (e.g., linear) exposure mapping,
we characterize the treatment-assignment distribution and regression-based
estimator that minimize worst-case asymptotic variance against a broad
class of distributions of unobservables. The design problem yields
an intuitive solution in which the planner trades off spillover signal
strength against diffusion of spillover variation. The analysis problem
yields a simple recentered instrumental variable estimator to best
leverage this variation. This framework produces natural solutions
in several benchmark cases---such as clustered exposure---and suggests
computationally tractable approximations for general networks, including
bipartite settings. We illustrate these new tools in semi-synthetic
experiments based on two applications from development economics.
Our approach yields large standard error reductions in both experiments,
increasing effective sample sizes by 50--100\% or more.}\end{adjustwidth*} \end{singlespace} \vfill{}
\thispagestyle{empty}
\end{abstract}
\newpage
\global\long\def\expec#1#2{\mathbb{E}_{#1}\left[#2\right]}%
\global\long\def\Pr#1#2{\mathrm{Pr}_{#1}\left[#2\right]}%
\global\long\def\var#1#2{\mathrm{Var}_{#1}\left[#2\right]}%
\global\long\def\cov#1#2{\mathrm{Cov}_{#1}\left[#2\right]}%
\global\long\def\corr#1#2{\mathrm{Corr}_{#1}\left[#2\right]}%
\global\long\def\V#1#2{\mathcal{V}_{#1}\left[#2\right]}%
\global\long\def\one{\mathbf{1}}%
\global\long\def\diag{\operatorname{diag}}%
\global\long\def\tr{\operatorname{tr}}%
\global\long\def\plim{\operatorname*{plim}}%
\global\long\def\col{\operatorname{Col}}%
\global\long\def\Im{\operatorname*{Im}}%
\global\long\def\op#1{\left\Vert #1\right\Vert _{\infty}}%
\global\long\def\tr{\operatorname{tr}}%
\global\long\def\rank{\operatorname{rank}}%
\global\long\def\Frob#1{\left\Vert #1\right\Vert _{2}}%
\global\long\def\toP{\xrightarrow{p}}%
\global\long\def\toD{\Rightarrow}%
\global\long\def\fhat{\widehat{f}}%
\global\long\def\bhat{\widehat{\beta}}%
\global\long\def\PR{\mathbb{P}}%
 
\global\long\def\dist{\operatorname{dist}}%
\global\long\def\interior#1{\mathrm{int}\left(#1\right)}%

\section{Introduction}

Researchers increasingly estimate spillovers in experiments. Sometimes,
an intervention is randomized to one set of units while the spillover
effects are measured in another set. In such “bipartite” experiments
the intervention and outcome units are connected via a network: e.g.,
of friendship across individuals \citep{cai_social_2015} or of supplying
relationships across firms \citep{best_spillover_2025}. In other
cases, the intervention and outcome units coincide and direct and
spillover effects are jointly estimated in a single experimental sample
\citep{Miguel2004,chaurey_social_2025}. Power is a common issue in
such studies: spillover effect estimates can be noisy, while increasing
the sample size is costly. It is therefore important to make the best
use of the experimental variation, as well as to design the randomization
to target the spillover parameter of interest.

This paper develops new tools for experimental design and estimation
when spillover effects are of primary interest. Our baseline assumption
is that the researcher adopts a linear model of spillovers, in which
the key treatment variable is the fraction or number of “friends”
(i.e., network connections) who have been randomly selected for the
intervention---perhaps augmented with weights reflecting the importance
of each connection.\footnote{Specifically, we assume the spillover treatment can be written $x_{i}=\sum_{k}w_{ik}g_{k}$
where $g_{k}$ is an indicator of unit $k$ assigned to the treatment
group. The $w_{ik}$ are fully unrestricted, and the intervention
units $k=1,\dots,K$ can differ from the outcome units $i=1,\dots,N$.
We also discuss how our approach can be used for nonlinear formula
treatments $x_{i}=X_{i}(g_{1},\dots,g_{K})$ for known $X_{i}(\cdot)$
that can implicitly depend on some $w_{ik}$ \citep{borusyak_design-based_2025}.} Such linear specifications are widespread in economics, including
in experiments (e.g., \citealp{Miguel2004,cai_social_2015,atkin_reducing_2024}).
We assume the researcher has access to the network measure at the
experimental design stage. We further assume the researcher wants
to identify the spillover effect using the experimental variation;
hence, they use some recentered estimator \citep{BH1}. It is increasingly
recognized that recentering is necessary to guarantee a causal interpretation
of spillover effect estimates without assuming that the network is
exogenous, with appropriate regression or instrumental variable (IV)
estimators employed either intuitively \citep{Miguel2004} or formally
\citep{chaurey_social_2025,atkin_reducing_2024}.

We then ask: which randomization design and which recentered estimator
are likely to yield good power? Our answer reflects two countervailing
intuitions. On the one hand, when an outcome unit $i$ is exposed
to the shocks of multiple intervention units $k$, it is desirable
to positively correlate the treatment assignments of those intervention
units. Doing so increases the variance of $i$'s spillover treatment
(i.e., the spillover “signal”). On the other hand, when multiple
outcome units are exposed to the same intervention unit, their exposures
are correlated and this can increase estimation noise if their unobservables
are also correlated; correlating assignments of multiple intervention
units exacerbates this clustering problem. This makes it more desirable
to increase the degree of independence across different exposures,
spreading experimental variation out across the sample (“isotropy”).
A likely-powerful design is thus one which maximizes spillover signals
while ensuring spillover variation is sufficiently diffuse. The estimator
choice can also help with isotropy, reweighting the data to spread
out the induced spillover variation, at the cost of a lower signal.

We formalize these intuitions in a minimax approach: we look for the
experimental design and recentered estimator that minimize the worst-case
approximate variance over a large class of error distributions. We
parameterize this class parsimoniously, by a scalar that captures
the extent to which the researcher is willing to rule out mutual dependencies,
dependence on the network, and heteroskedasticity of the errors. This
approach has three advantages. First, it reflects the reality that
researchers tend to have at best a rough sense of how errors are clustered,
especially at the experimental design stage. Second, it allows for
general network structures and yields a tractable closed-form characterization
of worst-case variance, both as a function of the estimator (holding
the design fixed) and as a function of the design (provided the optimal
estimator will be used). Although implementing our solution generally
involves numerical optimization, it is amenable to practical algorithms.
Finally, the minimax approach yields non-degenerate and intuitive
solutions in contrast to some corner solutions in the literature (e.g.,
\citet{kiefer_general_1974}). For example, independent randomization
and cluster-level randomization arise as special cases when, respectively,
there are no spillovers and when the exposure is the cluster-level
treatment saturation. When exposure is a leave-one-out average of
cluster shocks, the solution is to optimally mix between these two
designs.

We characterize the minimax-optimal design and estimator in two steps.
First, we derive the optimal recentered IV estimator for any experimental
design. Its construction entails taking the spillover treatment itself,
recentering it by its expectation over the experimental design, and
partially “whitening” it to spread out the remaining exogenous
variation. While the whitening step weakens the first stage, the induced
isotropy improves robustness against non-spherical errors. Second,
we characterize the best designs given the optimal recentered instrument
will be used. We show this optimal design satisfies natural properties,
such as a 50/50 marginal distribution for each shock and separability
across independent network blocks. The optimal IV is straightforward
to construct for any design, and we propose a computationally efficient
approximation to the optimal design; the entire procedure is collected
in Algorithm \ref{alg:full}, below. In the parameterization with
the highest robustness to adversarial errors, this procedure yields
an especially intuitive closed-form solution in which the correlation
of shock assignments is determined by the cosine similarity of exposure
to them. We show how our estimators are asymptotically normal with
a sparse exposure network---even when the errors have strong mutual
correlations---and give a simple inference procedure.

We extend these solutions in several directions, including by characterizing
minimax-optimal designs and estimators when spillover effects are
estimated alongside direct effects or other linear exposure measures.
Here the optimal design balances signal and isotropy of the residual
variation in the focal spillover treatment after accounting for the
other exposures. Other extensions include estimators with predetermined
covariates, nonlinear spillover treatments, and optimal design when
a researcher faces budget constraints in allocating shocks. 

We then illustrate the power gains from using our approach, relative
to independent randomization, in semi-synthetic experiments calibrated
to the data from the bipartite experiment in \citet{cai_social_2015}
and the joint estimation of direct and spillover effects in \citet{Miguel2004}.
We find large declines in standard errors in both settings, equivalent
to increasing effective sample sizes by around 50--100\% or more
relative to the independent randomization in the original papers and
no whitening. These gains arise from both the optimized experimental
design and the use of optimal recentered instruments.

This paper contributes to several related literatures. Most directly,
we add to a growing literature on optimal experimental design under
interference. Many papers in this literature focus on specific settings,
a restricted class of designs, and a pre-specified estimator (with
different models of interference and estimands). For example, \citet{Baird2018}
and \citet{Cruces} focus on partial interference, where spillovers
are confined within clusters, and derive optimal saturation designs
for estimating direct and spillover effects, including by regression.
\citet{pouget-abadie_variance_2019} and \citet{harshaw_design_2023}
instead focus on bipartite experiments and derive optimal clustered
assignments under a linear exposure-response model with heterogeneous
treatment effects. \citet{thiyageswaran_optimal_2026} study optimal
worst-case estimation of the global average treatment effect (GATE)
over a general class of designs using the standard Horvitz-Thompson
estimator, which does not account for spillovers and is generally
biased. \citet{faridani2026linear} study experimental design and
estimation of the GATE when the structure of interference is unknown
but decays with distance sufficiently fast, deriving minimax convergence
rates.\footnote{Other work with general networks includes \citet{viviano_experimental_2025}
who studies two-wave experiments for estimands including average treatment
and spillover effects, using a pilot wave to estimate error variances,
and \citet{viviano_causal_2025} who choose treatment clusterings
to trade off worst-case asymptotic bias and variance when estimating
GATEs.}

Relative to this literature, our approach differs in four ways. First,
we provide a unified and tractable characterization of optimal experimental
designs for general network structures, nesting partial interference
and bipartite graphs as special cases. Second, we do not restrict
the class of designs \emph{a priori}. Third, we jointly optimize the
design and the estimator to achieve further power improvement. We
minimize the finite-sample approximate variance of the estimator rather
than only its convergence rate, as in \citet{faridani2026linear}.
Like \citet{harshaw_design_2023}, we employ recentering to get unbiased
estimators without assuming network exogeneity. Finally, we target
the parameters of a linear spillover model, rather than GATE-style
estimands. While more restrictive, such constant-effects models are
widely used in practice with evidence for linearity in different contexts
(e.g., in \citet{Miguel2004} and \citet{egger2022general}).\footnote{In the presence of heterogeneous treatment effects, our baseline estimators
identify their convex averages. This is shown in Appendix \ref{appx:HetFX},
in line with earlier results on design-based estimators \citep{borusyak_negative_2024}.} When both direct and indirect effects are included, our approach
focuses on isolating them rather than obtaining the total in a GATE-style
analysis. Our problem can also be reformulated to target the GATE
under a known linear-exposure structure. This additional structure
is expected to deliver more power than designs targeting GATE under
more non-parametric (linear or nonlinear) interference, e.g. in \citet{faridani2026linear}.

We also add to a literature using minimax arguments to justify different
experimental designs. In settings without interference, \citet{kallus_optimality_2021}
and \citet{Bai2021} show that complete or blockwise randomization
is minimax-optimal absent strong assumptions on error dependence.
We extend this logic to spillover estimation with recentered IV, where
we show how the minimax-optimal design depends on the allowed adversarial
clustering of errors. As in \citet{thiyageswaran_optimal_2026}, we
impose a Schatten $p$-norm constraint on the second-moment matrix
of model residuals to define our minimax problem; choosing a smaller
$p$ leads to greater robustness to cross-unit dependence and pushes
the optimal design towards more diffuse randomization. The ex-ante
choice of $p$ by the researcher is similar to choices made in the
broader literature on minimax-optimal estimators.\footnote{In non-parametric regression, for example, researchers choose the
smoothness class of the data-generating process---determining the
tradeoff between robustness and confidence interval length \citep{armstrong2018optimal}.}

The minimax approach allows us to avoid degenerate solutions that
often arise in other analyses. For example, a classic literature on
optimal regression design \citep[e.g., ][]{kiefer_general_1974} chooses
where to place observations in covariate space to minimize a functional
of the ordinary least squares (OLS) covariance matrix (e.g. D-optimality,
which minimizes its determinant; \citet{kiefer1959optimum}). This
literature assumes independent and homoskedastic errors, and optimal
designs tend to be concentrated on a few extreme points in the design
space.\footnote{Our focus on non-spherical errors echoes \citet{bickel_robustness_1979}
who recover non-degenerate designs when errors are serially correlated.} In the interference setting, the optimal design in \citet{pouget-abadie_variance_2019}
can similarly produce very coarse randomization: e.g., grouping all
units into two clusters only and assigning treatment at the cluster
level. \citet{harshaw_design_2023} employ a heuristic correction
to the optimization criterion to reduce the issue, while our minimax
approach resolves it naturally---yielding a guarantee of asymptotic
normality with a sparse exposure network and mild regularity conditions.

Finally, we add to a literature on robust and powerful spillover estimation
with recentered IV. \citet{BH1} show how recentering via knowledge
of the (quasi-)experimental design can identify spillover effects
under arbitrary endogeneity of the network, while \citet{borusyak_optimal_2026}
and \citet{borusyak_estimating_2025} show how asymptotically optimal
recentered instruments can be constructed given a particular design.
Here we optimize both the design and instrument, with our applications
showing that both levers can meaningfully improve estimator precision.
Moreover, while the implementation of the \citet{borusyak_optimal_2026}
optimal estimator is hindered by its dependence on the (likely unknown)
dependence of error terms, our minimax solution provides a parsimonious
solution robust to a wide range of error structures.

The remainder of this paper is organized as follows. In Section \ref{sec:optimal_designs}
we develop our theoretical framework and results. In Section \ref{sec:computation_clt},
we propose tractable methods of computing the optimal design and provide
a central limit theorem for the recentered IV estimator under the
feasible design. In Section \ref{sec:Applications} we illustrate
these tools in two semi-synthetic experiments. Section \ref{sec:Conclusion}
concludes. The proofs of main results are given in Appendix \ref{appx:Proofs-of-Main}.
Additional results are given in Appendix \ref{appx:Additional-Results},
with proofs in Appendix \ref{appx:Proofs-additional}.

\section{Theory\label{sec:optimal_designs}}

\subsection{Setting}

We consider estimation of the spillover effects of a set of treatment
shocks $g_{k}\in\{0,1\}$, $k=1,\dots,K$, on a set of outcomes $y_{i}$,
$i=1,\dots,N$. This might be in a bipartite setting, in which the
$K$ intervention units are distinct from the $N$ outcome units,
or in a setting like \citet{Miguel2004} in which we estimate spillovers
within a single experimental sample (with $i=k$).

We parameterize spillovers with a linear exposure model, which we
assume is correctly specified: 
\begin{align}
y_{i} & =\beta x_{i}+\varepsilon_{i},\qquad x_{i}=w^{\prime}_{i}g,\label{eq:model_baseline}
\end{align}
where the spillover treatment $x_{i}$ combines the shocks $g=(g_{1},\dots,g_{K})^{\prime}$
with exposure weights $w_{i}\in\mathbb{R}^{K}$, which may or may
not add to one, and $\varepsilon_{i}$ is an unobserved error.\footnote{Here correct specification includes an implicit exclusion restriction:
that the shocks only affect outcomes through $x_{i}$. While we focus
on a linear model with constant effects, Appendix \ref{appx:HetFX}
discusses the interpretation of our proposed estimands under heterogeneous
treatment effects.} For example, \citet{cai_social_2015} estimate a spillover effect
$\beta$ from information sessions randomized to a set of $K$ farmers
in China on their peers' decisions to adopt weather insurance $y_{i}$;
here $x_{i}=w^{\prime}_{i}g$ gives the share of farmer $i$'s neighbors
who were assigned to the information session, with $w_{ik}$ being
entries of the row-normalized adjacency matrix of the peer graph.
In Section \ref{subsec:Direct-Effects} we extend the model to include
direct effects of treatment, for settings like \citet{Miguel2004}.
Below we discuss how our main results extend to spillover treatments
that are nonlinear functions of the shocks.

In most of our analysis, we condition on the exposure matrix $W=(w^{\prime}_{1},\dots,w^{\prime}_{N})^{\prime}$;
we make this conditioning implicit to simplify notation. We allow
the error vector $\varepsilon=(\varepsilon_{1},\dots,\varepsilon_{N})^{\prime}$
to be either fixed (as in a standard design-based setup) or drawn
from some distribution that can implicitly depend on $W$.\footnote{All of our results hold in both cases but the restriction on the error
distribution introduced below is more natural when $\varepsilon$
are viewed as stochastic. Nevertheless, our approach is still design-based
in the sense that the moment conditions we leverage derive their validity
from the randomization of $g$ alone.} Without loss of generality, we assume that no column of $W$ is entirely
zero.

We consider the problem of choosing an experimental design, i.e.,
a probability distribution for $g$: $\delta\in\mathcal{D}$ where
$\mathcal{D}=\Delta(\left\{ 0,1\right\} ^{K})$ and $\Delta(\cdot)$
denotes the simplex. By virtue of randomization, we have $g\independent\varepsilon$.
We then estimate $\beta$ by recentered IV \citep{BH1}: i.e., we
choose a set of instrument functions $z_{i}(\cdot):\{0,1\}^{K}\rightarrow\mathbb{R}$
where $\expec{\delta}{z_{i}(g)}=0$ for all $i$, and estimate:
\begin{align*}
\hat{\beta}\left[z\right] & =\frac{\sum^{N}_{i=1}z_{i}(g)y_{i}}{\sum^{N}_{i=1}z_{i}(g)x_{i}}.
\end{align*}
We let $\mathcal{Z}_{\delta}$ be the set of all such $z=(z_{i})^{N}_{i=1}$
for a given $\delta$.

Recentering (i.e., the mean-zero property of $z_{i}(g)$, implicitly
conditional on $W$) ensures we identify $\beta$ just using the experimental
variation in $g$: the network $W$ can be arbitrarily correlated
with the unobserved $\varepsilon$. Indeed, the moment condition $\expec{\delta}{\sum^{N}_{i=1}z_{i}\varepsilon_{i}}=0$
is satisfied for any such $z$. Recentered $z\in\mathcal{Z}_{\delta}$
can be obtained from any fixed function, $\tilde{z}_{i}$, of both
$g$ and $W$ by subtracting its (known) expectation under a given
design: i.e. $z_{i}=\tilde{z}_{i}-\expec{\delta}{\tilde{z}_{i}}$.
The set of recentered IV estimators $\hat{\beta}\left[z\right]$,
$z\in\mathcal{Z}_{\delta}$, is large, including estimators that weight
by functions of $W$ (by rescaling $z$), control for functions of
$W$ (by the Frisch--Waugh--Lovell theorem), as well as non-instrumented
estimators (e.g., OLS of $y_{i}$ on $\tilde{z}_{i}=x_{i}$ controlling
for $\expec{\delta}{\tilde{z}_{i}}$, again by the Frisch--Waugh--Lovell
theorem).\footnote{Appendix \ref{appx:non-recentered} considers estimation without recentering.
While non-recentered IVs can in principle improve on mean-squared
error grounds by introducing a small amount of bias to estimation,
we show the corresponding optimal designs can have undesirable properties
like unusually high treatment rates. We also show that when $W\mathbf{1}=1$
and an intercept is included in estimation (or, more generally, when
$W\mathbf{1}$ is linearly spanned by the included controls), there
is no loss in only considering recentered instruments since any bias
in non-recentered instruments is absorbed.}

With this setup, we ask: what design $\delta$ and instrument $z$
should we choose in order to get the most precise estimate of the
spillover effect $\beta$?

\subsection{Optimal Design and Instrument}

We follow \citet{borusyak_optimal_2026} in studying the finite-sample
approximate variance of the recentered IV estimator: 
\begin{align*}
\V{\delta,\mathcal{E}}z & =\frac{\var{\delta,\mathcal{E}}{\sum^{N}_{i=1}z_{i}(g)\varepsilon_{i}}}{\expec{\delta}{\sum^{N}_{i=1}z_{i}(g)x_{i}}^{2}},
\end{align*}
where here we let $\mathcal{E}$ denote the (unknown) distribution
of $\varepsilon$, with $\V{\delta,\mathcal{E}}z=\infty$ whenever
the denominator is zero. Proposition 2 of \citet{borusyak_optimal_2026}
shows that $\V{\delta,\mathcal{E}}z$ gives a good approximation to
the appropriately scaled asymptotic variance of $\hat{\beta}[z]$
so long as the estimator is well-behaved in a particular sense, which
primarily means it converges to $\beta$ at some rate with a non-vanishing
first stage.\footnote{\label{fn:well-defined-estimator}As a just-identified IV estimator,
$\hat{\beta}\left[z\right]$ may have no finite-sample moments. More
precisely, in our setting with discrete $g$ the problems may arise
from $\hat{\beta}\left[z\right]$ not well-defined in a small-probability
event that $\sum^{N}_{i=1}z_{i}(g)x_{i}=0$.}

We look for the experimental design and instrument that minimize the
worst-case $\V{\delta,\mathcal{E}}z$ under a weak restriction on
the error distribution. Specifically, we suppose the researcher believes
$\mathcal{E}$ belongs to some class of distributions $\mathcal{F}_{p}(\sigma)$
and solve the minimax problem:
\begin{align}
\delta^{*},z^{*} & \in\arg\min_{\delta\in\mathcal{D},z\in\mathcal{Z}_{\delta}}\max_{\mathcal{E}\in\mathcal{F}_{p}(\sigma)}\V{\delta,\mathcal{E}}z.\label{eq:minimax}
\end{align}
The class of error distributions, parameterized by $p\in[1,\infty]$
and $\sigma>0$, is given by: 
\begin{align}
\mathcal{F}_{p}(\sigma) & =\left\{ \mathcal{E}\colon\left\Vert \expec{\mathcal{E}}{\varepsilon\varepsilon'}\right\Vert _{p}\le N^{1/p}\sigma^{2}\right\} ,\label{eq:F_p}
\end{align}
where $\left\Vert \cdot\right\Vert _{p}$ is the Schatten-$p$ matrix
norm defined for positive semi-definite matrices as\footnote{For general matrices, it is defined as $\left(\tr\left(\sqrt{A'A}\right)^{p}\right)^{1/p}$.
Schatten norms should not be confused with other matrix norms that
can be denoted in the same way.}
\[
\left\Vert A\right\Vert _{p}=\left(\tr\left(A^{p}\right)\right)^{1/p}.
\]
Three important special cases are the nuclear norm (a.k.a., the trace
norm) with $p=1$, the Frobenius norm with $p=2$, and the operator
norm with $p=\infty$. Denoting the eigenvalues of $\expec{\mathcal{E}}{\varepsilon\varepsilon'}$
by $\lambda_{1},\dots,\lambda_{N}\ge0$, we can rewrite (\ref{eq:F_p})
as an upper bound on their power mean:
\[
\mathcal{F}_{p}(\sigma)=\left\{ \mathcal{E}\colon\left(\frac{1}{N}\sum^{N}_{i=1}\lambda^{p}_{i}\right)^{1/p}\le\sigma^{2}\right\} ,
\]
with the limit $\mathcal{F}_{p}(\sigma)=\left\{ \mathcal{E}\colon\max\left\{ \lambda_{1},\dots,\lambda_{N}\right\} \le\sigma^{2}\right\} $
for $p=\infty$.

\begin{figure}[t]
\caption{Boundary of $\mathcal{F}_{p}(\sigma)$ for $\sigma^{2}=1$, $N=2$,
for Different Choices of $p$\label{fig:Fp_illustration}}

\begin{centering}
\includegraphics[width=0.3\paperwidth]{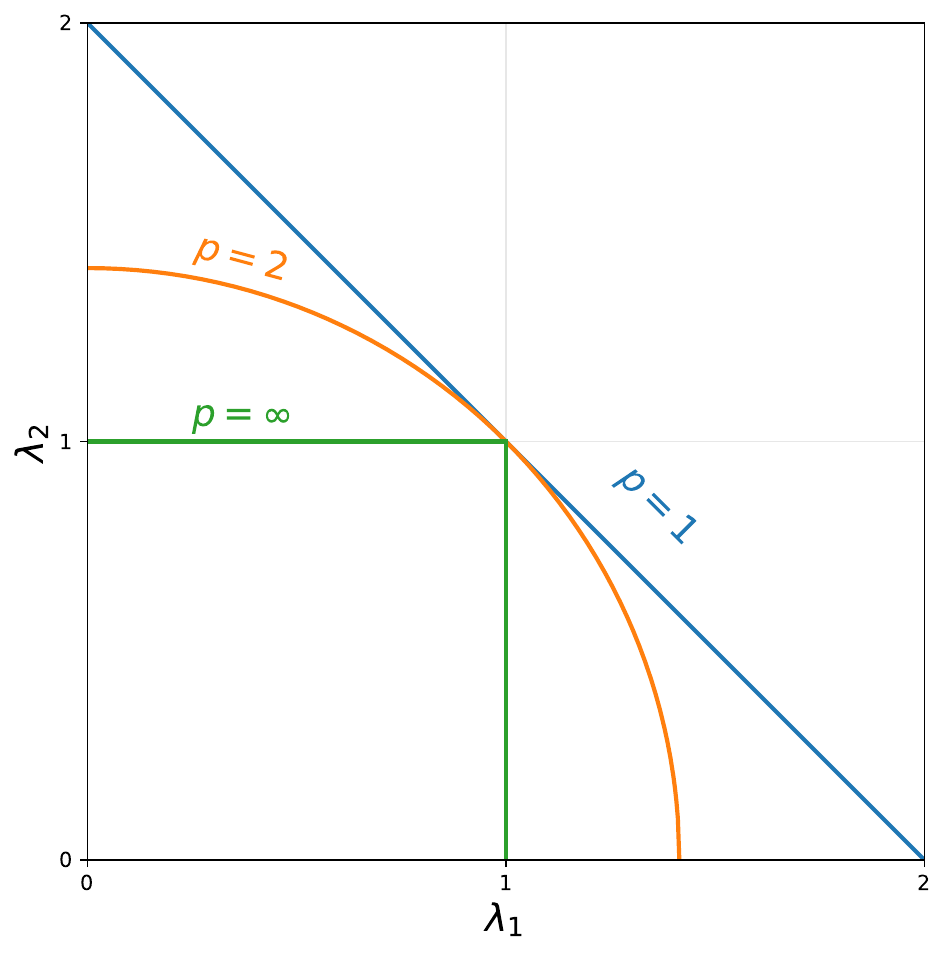}\smallskip{}
\par\end{centering}
{\small\emph{Notes}}{\small : This figure shows boundaries of $\mathcal{F}_{p}(\sigma)$
for different Schatten-$p$ norms, with $N=2$ observations and $\sigma=1$,
in terms of the two eigenvalues of the $\expec{\mathcal{E}}{\varepsilon\varepsilon'}$
matrix. The $(1,1)$ point corresponds to spherical errors.}{\small\par}
\end{figure}

To interpret $\mathcal{F}_{p}(\sigma)$, note first that spherical
(i.e., homoskedastic and mutually uncorrelated) errors with mean zero
(implicitly conditional on $W$) and variance $\sigma^{2}$, i.e.
$\expec{\mathcal{E}}{\varepsilon\varepsilon'}=\sigma^{2}I_{N}$, are
at the boundary of this set for any $p$. Different choices of $p$
then place different weights on the correlations among the errors.
When $p=1$, only the average second moment of $\varepsilon_{i}$
is constrained, allowing fully unrestricted correlations across observations.
At the other extreme, when $p=\infty$, correlations are strongly
penalized such that a worst-case scenario in (\ref{eq:minimax}) is
always with spherical errors. Intermediate values of $p$ correspond
to smaller penalties on correlations. Figure \ref{fig:Fp_illustration}
provides a visualization for $N=2$ by plotting the boundary of $\mathcal{F}_{p}(\sigma)$
in terms of the two eigenvalues of $\expec{\mathcal{E}}{\varepsilon\varepsilon'}$;
increasing the correlation across errors makes the eigenvalues more
asymmetric without changing their sum; see also Appendix \ref{appx:Equicorrelated}
for analytical results with equicorrelated $\varepsilon_{i}$. Overall,
$p$ captures the extent to which the researcher is willing to rule
out strong correlations across the errors when designing the experiment.
Similarly, deviations from homoskedasticity and from network exogeneity
(i.e., the zero conditional mean of the errors) are penalized more
strongly when $p$ is larger.

\paragraph{Optimal Instrument.}

Solving the problem (\ref{eq:minimax}) backwards, we first characterize
the optimal instrument for a given design $\delta$:
\begin{thm}
\label{prop:OptimalIV}Fix $\delta$ and let
\begin{align*}
\tilde{x}_{\delta} & =W\tilde{g}_{\delta},\qquad\tilde{g}_{\delta}=g-\expec{\delta}g,\\
S_{\delta} & =\var{\delta}x=W\var{\delta}gW^{\prime}.
\end{align*}
Suppose $S_{\delta}\neq0$. Then for any $p\in[1,\infty]$ and $\sigma>0$,
the optimal instrument is:
\begin{align*}
z^{*}_{\delta}(g) & \propto\left(S^{\dag}_{\delta}\right)^{\frac{1}{p+1}}\tilde{x}_{\delta}\in\arg\min_{z\in\mathcal{Z}_{\delta}}\max_{\mathcal{E}\in\mathcal{F}_{p}(\sigma)}\V{\delta,\mathcal{E}}z,
\end{align*}
where $\dag$ denotes the pseudoinverse. The worst-case approximate
variance is:
\[
\max_{\mathcal{E}\in\mathcal{F}_{p}(\sigma)}\V{\delta,\mathcal{E}}{z^{*}_{\delta}}=N^{1/p}\sigma^{2}\left(\tr(S^{p/(p+1)}_{\delta})\right)^{-(p+1)/p}.
\]
\end{thm}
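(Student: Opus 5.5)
Our plan is to solve the inner maximization over $\mathcal{E}$ in closed form for an arbitrary instrument, and then minimize the resulting expression over $z\in\mathcal{Z}_{\delta}$.

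\emph{Inner maximization.} Since $g\independent\varepsilon$ and $\expec{\delta}{z_i(g)}=0$, the numerator of $\V{\delta,\mathcal{E}}z$ equals $\tr(\Sigma M_z)$, where $\Sigma=\expec{\mathcal{E}}{\varepsilon\varepsilon'}$ and $M_z=\expec{\delta}{z(g)z(g)'}$. By Hölder's inequality for Schatten norms, $\tr(\Sigma M_z)\le\Vert\Sigma\Vert_p\Vert M_z\Vert_q$ with $1/p+1/q=1$. The bound is attained by choosing $\Sigma$ with the same eigenvectors as $M_z$ and eigenvalues proportional to $\mu_j^{q-1}$, where $\mu_j$ are the eigenvalues of $M_z$; for $p=1$ we instead put all mass on the top eigenvector. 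A Gaussian $\varepsilon$ with this second-moment matrix realizes it. Hence
\[
\max_{\mathcal{E}\in\mathcal{F}_p(\sigma)}\V{\delta,\mathcal{E}}z=N^{1/p}\sigma^2\,\frac{\Vert M_z\Vert_q}{\left(\expec{\delta}{z'x}\right)^2}.
\]

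\emph{Reduction to linear instruments.} Write $z=z^{L}+r$, where $z^{L}=A\tilde g_\delta$ is the $L^2(\delta)$ projection of $z$ onto linear functions of $\tilde g_\delta$. Then $\expec{\delta}{z'x}=\expec{\delta}{z'\tilde x_\delta}=\tr(AV_gW')$, with $V_g=\var{\delta}g$, because $r$ is orthogonal to $\tilde g_\delta$ and $\tilde x_\delta=W\tilde g_\delta$. Moreover $M_z=M_{z^L}+M_r\succeq M_{z^L}$, and the Schatten norm is monotone in the Loewner order, so we may restrict to $z=A\tilde g_\delta$.

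We then parametrize $A$ further. Let $B=AV_g^{1/2}$, so that $M_z=BB'$ and the first stage equals $\tr(BC')$ with $C=WV_g^{1/2}$ and $CC'=S_\delta$. Only the component of $B$ in the row space of $C$ contributes to the first stage, and dropping the orthogonal part lowers $BB'$. So we can take $B=HC$ with $H$ an $N\times N$ matrix, which gives the first stage $\tr(HS_\delta)$ and $M_z=HS_\delta H'$.

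\emph{Main step.} We must minimize $\Vert HS_\delta H'\Vert_q/\tr(HS_\delta)^2$. Note that $\tr(HS_\delta)=\tr(D\,S_\delta^{1/2})$ with $D=HS_\delta^{1/2}$, and $M_z=DD'$. By Hölder again,
\[
\tr(DS_\delta^{1/2})\le\Vert D\Vert_{2q}\,\Vert S_\delta^{1/2}\Vert_{r},\qquad \frac{1}{2q}+\frac1r=1,
\]
where $\Vert D\Vert_{2q}^2=\Vert DD'\Vert_q$. Using $1/q=1-1/p$, we get $1/r=1-1/(2q)=(p+1)/(2p)$, so $r=2p/(p+1)$. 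Consequently $\Vert S_\delta^{1/2}\Vert_r^2=\left(\tr S_\delta^{p/(p+1)}\right)^{(p+1)/p}$, which yields the lower bound $N^{1/p}\sigma^2\left(\tr S_\delta^{p/(p+1)}\right)^{-(p+1)/p}$.

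Equality in Hölder requires $D\propto S_\delta^{(r-1)/2}$ restricted to the range of $S_\delta$. Since $r-1=(p-1)/(p+1)$, this gives $H=D(S_\delta^{1/2})^\dagger\propto(S_\delta^{\dagger})^{1/(p+1)}$, and hence $z=H\tilde x_\delta$, which is exactly the claimed instrument. The assumption $S_\delta\ne0$ ensures the first stage is nonzero.

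The main obstacle is bookkeeping rather than substance: checking the Hölder exponent algebra, and making sure the projection and pseudoinverse steps are valid when $S_\delta$ or $V_g$ is singular. We also need to treat the endpoint cases $p=1$ and $p=\infty$, either as limits or directly using nuclear/operator norm duality.
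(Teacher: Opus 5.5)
Your proof is correct and follows essentially the same route as the paper. You use Schatten $p$--$q$ duality for the inner maximization and a projection argument (monotonicity of $\Vert\cdot\Vert_q$ in the Loewner order) to restrict to instruments $z=H\tilde{x}_\delta$. You then apply H\"older with exponents $2q$ and $r=2p/(p+1)$ to $\operatorname{tr}(HS_\delta^{1/2}S_\delta^{1/2})$, with equality at $H=(S_\delta^{\dagger})^{1/(p+1)}$. The only difference is cosmetic: the paper projects $z$ directly onto $\tilde{x}_\delta$ via $V_z=\mathbb{E}_\delta[z\tilde{x}_\delta']S_\delta^{\dagger}$, which does in one step what your two-step reduction through $\tilde{g}_\delta$ and the row space of $WV_g^{1/2}$ does.
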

\noindent The Appendix \ref{appx:Proof-Thm1} proof provides an explicit
characterization of the worst-case error distribution. In this proposition
and for the rest of the paper, we use the convention that $1/\infty=0$,
$p/(p+1)=1$ for $p=\infty$, and $A^{0}=A^{\dag}A$ for a matrix
$A$.

Intuitively, $z^{*}_{\delta}$ takes the spillover treatment $x$,
recenters it by $\expec{\delta}x$, and then reweights the resulting
$\tilde{x}_{\delta}$ inversely by a fractional power of $\var{\delta}x$,
$S^{1/(p+1)}_{\delta}$. The goal in reweighting is to spread out
the useful variation in $\tilde{x}_{\delta}$ across different directions
that could be adversarially targeted by clustering in $\varepsilon$.
When $p=1$ and $\var{\delta}x$ is an invertible matrix, this fully
“whitens” $\tilde{x}_{\delta}$ (i.e., rotates it to make $\var{\delta}{z^{*}_{\delta}}=I$)
in order to guard against the fully-unconstrained clustering of $\varepsilon$,\footnote{When $\var{\delta}x$ is not invertible, e.g. because $\delta$ perfectly
correlates $x_{i}$ for several $i$, $z^{\ast}$ preserves that dependence.} while as $p$ increases we do less whitening as errors with strong
mutual correlations are excluded. As a result, at $p=\infty$, the
optimal IV is the unwhitened $\tilde{x}_{\delta}$.\footnote{Theorem \ref{prop:OptimalIV} technically only shows this for invertible
$\var{\delta}x$; the general optimality of $\tilde{x}_{\delta}$
for $p=\infty$, i.e. when $\mathcal{F}_{p}(\sigma)$ bounds the operator
norm of the error second-moment matrix, was previously shown in \citet[Lemma 2]{borusyak_optimal_2026}.}

\paragraph{Optimal Design.}

We next characterize the optimal design, assuming the optimal instrument
will be used for estimation:
\begin{prop}
\label{prop:OptimalDesign}Define $\tilde{x}_{\delta}$ and $S_{\delta}$
as in Theorem \ref{prop:OptimalIV} and suppose there is $\delta\in\mathcal{D}$
such that $S_{\delta}\ne0$. Then 
\[
\delta^{*},z^{*}\in\arg\min_{\delta\in\mathcal{\mathcal{D}},z\in\mathcal{Z}_{\delta}}\max_{\mathcal{E}\in\mathcal{F}_{p}(\sigma)}\V{\delta,\mathcal{E}}z.
\]
is solved by
\begin{align*}
\delta^{*} & \in\arg\max_{\delta\in\mathcal{D}}\tr\left(S^{\frac{p}{p+1}}_{\delta}\right)
\end{align*}
together with the optimal instrument $z^{\ast}=z^{\ast}_{\delta^{\ast}}$
from Theorem \ref{prop:OptimalIV}.
\end{prop}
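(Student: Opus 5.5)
The plan is to reduce the joint minimax problem to an outer optimization over designs, using Theorem \ref{prop:OptimalIV} for the inner problem. For every design with $S_\delta\neq0$, Theorem \ref{prop:OptimalIV} gives the best achievable worst case over $z\in\mathcal{Z}_\delta$:
\[
\min_{z\in\mathcal{Z}_{\delta}}\max_{\mathcal{E}\in\mathcal{F}_{p}(\sigma)}\V{\delta,\mathcal{E}}z=N^{1/p}\sigma^{2}\left(\tr(S^{p/(p+1)}_{\delta})\right)^{-(p+1)/p}.
\]
Because the feasible set $\{(\delta,z):\delta\in\mathcal{D},z\in\mathcal{Z}_\delta\}$ is the union over $\delta$ of the fibers $\{\delta\}\times\mathcal{Z}_\delta$, the joint minimum equals $\min_\delta$ of this inner value. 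The map $t\mapsto N^{1/p}\sigma^2 t^{-(p+1)/p}$ is strictly decreasing on $(0,\infty)$, so minimizing it is the same as maximizing $\tr(S_\delta^{p/(p+1)})$.

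Next I would handle designs with $S_\delta=0$. For these, $\tilde x_\delta=0$ almost surely, so $x$ is degenerate under $\delta$. For any recentered $z$, the first stage is then $\expec{\delta}{\sum_i z_i x_i}=\sum_i x_i\expec{\delta}{z_i}=0$, so $\mathcal{V}=\infty$. The same conclusion holds when $z$ is nonzero only where $x$ is constant. Such designs are therefore never optimal, given that some $\delta$ with $S_\delta\neq0$ exists and yields a finite value. I would also handle $p=\infty$ using the paper's conventions, where the exponent is $1$ and the objective is $\tr S_\delta$.

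The remaining issue is existence of a maximizer, which I expect to be the main, though mild, obstacle. $\mathcal{D}$ is a compact simplex. The map $\delta\mapsto\var{\delta}g$ is continuous, being quadratic in the probabilities. The map $S\mapsto\tr(S^{q})$ with $q=p/(p+1)\in[1/2,1]$ is continuous on PSD matrices, since it is a continuous spectral function. Weierstrass then gives a maximizer $\delta^*$. Its value is positive because some feasible $S_\delta\neq0$ has positive trace power. Hence $S_{\delta^*}\neq0$, and Theorem \ref{prop:OptimalIV} applies at $\delta^*$. Combining, $(\delta^*,z^*_{\delta^*})$ attains the outer minimum.

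To close the argument, I would take any competitor $(\delta,z)$ and chain two inequalities. The worst-case variance of $(\delta,z)$ is at least the inner optimum at $\delta$. That inner optimum is in turn at least the inner optimum at $\delta^*$, which $z^*_{\delta^*}$ attains.
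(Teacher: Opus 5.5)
Your proposal is correct and takes the paper's route: the paper's proof is the one-line observation that the result follows from the Theorem~\ref{prop:OptimalIV} worst-case variance formula, which is finite whenever $S_\delta\neq0$ and strictly decreasing in $\tr(S_\delta^{p/(p+1)})$. Your additional steps make explicit what the paper leaves implicit, namely that designs with $S_\delta=0$ give an infinite objective and that a maximizer exists, by compactness of $\mathcal{D}$ and continuity of $\delta\mapsto\tr(S_\delta^{p/(p+1)})$.
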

\noindent This result follows directly from Theorem \ref{prop:OptimalIV}.
For $p=1$, the optimal design maximizes $\tr\left(\var{\delta}x^{1/2}\right)$.
At $p=\infty$, the sum of variances, $\tr\left(\var{\delta}x\right)$,
is maximized instead.

In what follows we avoid $p=\infty$ as the optimal design problem
has a degenerate rank-1 solution. Specifically, Appendix \ref{appx:Rank-1-Designs}
shows that there exists $\delta^{*}$ that puts all the mass on just
one pair of complementary assignments $g^{*}\in\left\{ 0,1\right\} ^{K}$
and $\one-g^{*}$. Moreover, if $W$ is non-negative, one solution
corresponds to $g^{\ast}=\one$, i.e. treating all intervention units
with probability 0.5 and treating none otherwise. While this is not
problematic under spherical errors, it is highly undesirable when
errors can be correlated since the estimator may not be consistent.
Choosing $p<\infty$ guards against this possibility and helps with
consistency, as we show in Section \ref{subsec:clt-inference}.\footnote{Another problem with degenerate designs is that the asymptotic approximation
that makes the approximate variance a useful concept may fail. Moreover,
the estimator itself may not be well-defined with a probability that
is not asymptotically small, as $\sum_{i}z_{i}(g)x_{i}=0$ whenever
$g=\boldsymbol{0}$; see footnote \ref{fn:well-defined-estimator}.}

A\textbf{ }further result helps characterize optimal designs:
\begin{prop}
\label{prop:DesignProperties}There is a $\delta^{*}$ from Proposition
\ref{prop:OptimalDesign} with $\expec{\delta^{\ast}}{g_{k}}=0.5$
for all $k$. Further, if $W$ is block-diagonal after a joint partition
of rows and columns, there is such a $\delta^{*}$ with independent
subvectors of $g$ across blocks and the optimal design can be solved
separately block-by-block.
\end{prop}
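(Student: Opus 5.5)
The plan rests on two properties of the objective $\Phi(\delta)=\tr\bigl(S_{\delta}^{p/(p+1)}\bigr)$ from Proposition \ref{prop:OptimalDesign}: it is invariant under suitable symmetrizations of the design, and it does not decrease when designs are mixed. Symmetrization then delivers both claims.

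\textbf{Balanced marginals.} Take any maximizer $\delta$. Let $\bar\delta$ be the law of $g$ under $\delta$, and let $\delta'$ be the law of $\one-g$. Then $\var{\delta'}{g}=\var{\delta}{g}$, so $S_{\delta'}=S_\delta$. Now form the mixture $\delta^{*}=\tfrac12\delta+\tfrac12\delta'$, which lies in $\mathcal D$ and has $\expec{\delta^*}{g}=\tfrac12\one$. By the law of total variance,
\[
\var{\delta^*}{g}=\tfrac12\var{\delta}{g}+\tfrac12\var{\delta'}{g}+\tfrac14(\mu-\mu')(\mu-\mu')' ,
\]
where $\mu=\expec{\delta}{g}$ and $\mu'=\one-\mu$. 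Hence $S_{\delta^*}\succeq S_\delta$ in the Loewner order. For $q=p/(p+1)\in(0,1]$, the map $A\mapsto\tr(A^q)$ is monotone on PSD matrices: $A\succeq B$ implies $\lambda_j(A)\ge\lambda_j(B)$ for each ordered eigenvalue by Weyl's inequalities, and $t\mapsto t^q$ is increasing. Therefore $\Phi(\delta^*)\ge\Phi(\delta)$, so $\delta^*$ is also optimal and has marginals $0.5$. This step is routine. The only care needed is that $\Phi$ attains its maximum on the compact simplex $\mathcal D$, which holds because $\Phi$ is continuous.

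\textbf{Block separability.} Suppose that after permutation $W=\diag(W_1,\dots,W_B)$, with $g=(g^{(1)},\dots,g^{(B)})$. Then $S_\delta$ has diagonal blocks $S_{bb}=W_b\var{\delta}{g^{(b)}}W_b'$, which depend only on the marginal law $\delta_b$ of $g^{(b)}$. Its off-diagonal blocks are $W_b\cov{\delta}{g^{(b)},g^{(c)}}W_c'$. The key inequality is that for PSD block matrices and a concave $f(t)=t^q$,
\[
\tr f(S)\le\sum_b\tr f(S_{bb}).
\]
One way to see this: pinching (the map replacing $S$ by its block-diagonal part) is a mixture of unitary conjugations $S\mapsto USU'$, with $U=\diag(\pm I)$ averaged over sign patterns. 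Since $\tr f$ is concave and unitarily invariant, Jensen gives $\tr f(\mathrm{pinch}(S))\ge\tr f(S)$. Alternatively, one can use majorization: the diagonal blocks' eigenvalues are majorized by those of $S$ (Ky Fan), and $\tr f$ is Schur-concave.

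Now replace $\delta$ by the product design $\otimes_b\delta_b$. Its $S$ is exactly block-diagonal with the same diagonal blocks, so $\Phi$ weakly increases. Moreover, for a product design $\Phi=\sum_b\tr\bigl((W_b\var{\delta_b}{g^{(b)}}W_b')^q\bigr)$, which is maximized by maximizing each block's objective separately over $\Delta(\{0,1\}^{K_b})$.

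\textbf{Combining the two claims.} Apply the balanced-marginals symmetrization within each block. This preserves independence across blocks, so both properties hold at once.

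\textbf{Main obstacle.} I expect the pinching or concavity inequality to be the main point to justify carefully, including the pseudo-power convention at singular matrices. At $p=\infty$ ($q=1$) it is an equality, which is still fine.
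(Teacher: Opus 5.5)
Your proposal is correct and follows essentially the paper's route. For the first claim, you mix the design with its complement, so the covariance of $g$ rises in the Loewner order, and then use monotonicity of $A\mapsto\operatorname{tr}(A^{p/(p+1)})$. For the second claim, you replace the design by the product of its block marginals and invoke the pinching inequality; the paper justifies that inequality by majorization, which is your stated alternative, while your sign-averaging/Jensen argument is an equally valid justification. Your remarks on existence of a maximizer and on symmetrizing within each block, so that balanced marginals and cross-block independence hold together, spell out points the paper leaves implicit.
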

\noindent Intuitively, any design with \emph{$\expec{\delta^{\ast}}{g_{k}}\ne0.5$}
can be symmetrized by averaging it with its complement and this can
never hurt the criterion. Similarly, any across-block dependence in
$g_{k}$ can get removed without hurting the criterion, since $S_{\delta^{*}}=W\var{\delta^{\ast}}gW^{\prime}$.\footnote{While $\mathcal{F}_{p}(\sigma)$ allows for error dependence even
across blocks, Proposition \ref{prop:DesignProperties} implies that
the same design would be optimal even if one were to \emph{a priori
}rule out cross-block dependence.} This result is generalized in Appendix \ref{appx:Symmetry-Normalization}:
whenever $W$ is invariant under a group of relabelings over rows
and columns, there exists a $\delta^{*}$ which is invariant under
the same group. In particular, if $W$ is invariant to permutations
of rows and columns in a cluster, one may restrict attention to exchangeable
designs within that cluster.

\subsection{Special Cases\label{subsec:Special-Cases}}

We now illustrate the results of Theorem \ref{prop:OptimalIV} and
Proposition \ref{prop:OptimalDesign} in several insightful special
cases.

\paragraph{No Spillovers.}

A useful benchmark is the standard experimental case with $N=K$ and
$W=I$. Since $W$ is diagonal, Proposition \ref{prop:DesignProperties}
immediately implies that independent random assignment, $g_{i}\stackrel{iid}{\sim}Bernoulli(0.5)$,
is optimal in this case, and the optimal instrument is just $z^{*}_{i}(g)\propto g_{i}-0.5$.\footnote{Note that complete randomization, with exactly half of the units treated,
is not optimal for $p<\infty$ as it induces a slight negative correlation
across assignments which can be exploited by the adversary choosing
the errors.}

\paragraph{Group-Specific Shocks.}

This and remaining cases are visualized in Figure \ref{fig:SpecialCases}.
Consider a simple bipartite graph in which observations are partitioned
into $J$ groups, $C_{1},\dots,C_{J}$, with shocks assigned at the
group level: $K=J$ and $x_{i}=g_{j(i)}$ corresponding to $w_{i}=e_{j(i)}$,
where $e_{j}$ is the $j$th standard basis in $\mathbb{R}^{K}$ and
$j(i)\in\left\{ 1,\dots,K\right\} $ gives the group of observation
$i$. Again Proposition \ref{prop:DesignProperties} implies independent
$g_{j}\stackrel{iid}{\sim}Bernoulli(0.5)$ is optimal. The optimal
instrument is now proportional to $N^{-1/(1+p)}_{j(i)}\cdot(g_{j(i)}-0.5)$
where $N_{j}=\left|C_{j}\right|$. Thus when $p=1$ the estimator
down-weights large groups by a factor of $1/\sqrt{N_{j(i)}}$ to ensure
the random variation is equally spread out; not doing so would make
the estimator noisy if the errors are very correlated within clusters.
With higher $p$ less of this reweighting is needed, with no reweighting
in the limit $p=\infty$.

\begin{figure}
\caption{Graphs in Special Cases\label{fig:SpecialCases}}

\begin{centering}
\begin{tabular}{ccc}
(a) Group-specific shocks\medskip{}
 &  & (b) Group-average shocks\tabularnewline
\begin{tikzpicture}[
    >=Stealth,
    obs/.style={circle, fill=black, inner sep=0pt, minimum size=4.5pt},
    grp/.style={circle, draw, line width=0.8pt, inner sep=1.2pt, minimum size=18pt, font=\small},
    edge/.style={thin}
]

\node[obs] (i1) at (0,3.0) {};
\node[left=3pt of i1] {\small $i$};

\node[obs] (i2) at (0,2.4) {};
\node[obs] (i3) at (0,1.8) {};
\node[obs] (i4) at (0,1.2) {};
\node[obs] (i5) at (0,0.3) {};
\node[obs] (i6) at (0,-0.3) {};
\node[obs] (i7) at (0,-1.2) {};
\node[obs] (i8) at (0,-1.8) {};

\node[grp] (j1) at (4,2.1) {$C_1$};
\node[grp] (j2) at (4,0.3) {$C_2$};
\node[grp] (j3) at (4,-1.5) {$C_3$};

\draw[edge] (i1) -- (j1);
\draw[edge] (i2) -- (j1);
\draw[edge] (i3) -- (j1);
\draw[edge] (i4) -- (j1);

\draw[edge] (i5) -- (j2);
\draw[edge] (i6) -- (j2);

\draw[edge] (i7) -- (j3);
\draw[edge] (i8) -- (j3);

\end{tikzpicture}\vspace{1cm}
 &  & \begin{tikzpicture}[
    >=Stealth,
    obs/.style={circle, fill=black, inner sep=0pt, minimum size=4.5pt},
    edge/.style={thin}
]

\node[obs] (l1) at (0,3.0) {};
\node[obs] (l2) at (0,2.4) {};
\node[obs] (l3) at (0,1.8) {};
\node[obs] (l4) at (0,1.2) {};
\node[obs] (l5) at (0,0.3) {};
\node[obs] (l6) at (0,-0.3) {};
\node[obs] (l7) at (0,-1.2) {};
\node[obs] (l8) at (0,-1.8) {};
\node[left=4pt of l1] {\small $i$};

\node[obs] (r1) at (4,3.0) {};
\node[obs] (r2) at (4,2.4) {};
\node[obs] (r3) at (4,1.8) {};
\node[obs] (r4) at (4,1.2) {};
\node[obs] (r5) at (4,0.3) {};
\node[obs] (r6) at (4,-0.3) {};
\node[obs] (r7) at (4,-1.2) {};
\node[obs] (r8) at (4,-1.8) {};
\node[right=4pt of r1] {\small $k$};


\foreach \a in {1,2,3,4}
  \foreach \b in {1,2,3,4}
    \draw[edge] (l\a) -- (r\b);

\foreach \a in {5,6}
  \foreach \b in {5,6}
    \draw[edge] (l\a) -- (r\b);

\foreach \a in {7,8}
  \foreach \b in {7,8}
    \draw[edge] (l\a) -- (r\b);

\end{tikzpicture}\tabularnewline
(c) Leave-out averages\medskip{}
 &  & (d) Spatial spillovers\tabularnewline
\begin{tikzpicture}[
    >=Stealth,
    obs/.style={circle, fill=black, inner sep=0pt, minimum size=4.5pt},
    edge/.style={thin}
]

\node[obs] (l1) at (0,3.0) {};
\node[obs] (l2) at (0,2.4) {};
\node[obs] (l3) at (0,1.8) {};
\node[obs] (l4) at (0,1.2) {};
\node[obs] (l5) at (0,0.3) {};
\node[obs] (l6) at (0,-0.3) {};
\node[obs] (l7) at (0,-1.2) {};
\node[obs] (l8) at (0,-1.8) {};
\node[left=4pt of l1] {\small $i$};

\node[obs] (r1) at (4,3.0) {};
\node[obs] (r2) at (4,2.4) {};
\node[obs] (r3) at (4,1.8) {};
\node[obs] (r4) at (4,1.2) {};
\node[obs] (r5) at (4,0.3) {};
\node[obs] (r6) at (4,-0.3) {};
\node[obs] (r7) at (4,-1.2) {};
\node[obs] (r8) at (4,-1.8) {};
\node[right=4pt of r1] {\small $k$};


\draw[edge] (l1) -- (r2);
\draw[edge] (l1) -- (r3);
\draw[edge] (l1) -- (r4);

\draw[edge] (l2) -- (r1);
\draw[edge] (l2) -- (r3);
\draw[edge] (l2) -- (r4);

\draw[edge] (l3) -- (r1);
\draw[edge] (l3) -- (r2);
\draw[edge] (l3) -- (r4);

\draw[edge] (l4) -- (r1);
\draw[edge] (l4) -- (r2);
\draw[edge] (l4) -- (r3);

\draw[edge] (l5) -- (r6);
\draw[edge] (l6) -- (r5);

\draw[edge] (l7) -- (r8);
\draw[edge] (l8) -- (r7);

\end{tikzpicture} &  & \raisebox{-1.5cm}[0pt][0pt]{%
\begin{tikzpicture}[
  x=1cm,y=1cm,
  line cap=round,
  line join=round,
  scale=0.90,
  transform shape
]
  \def\R{2.0}
  \def\N{10}

  \draw[line width=0.35pt,draw=black!60,dash pattern=on 2pt off 1.5pt]
    (0,0) circle (\R);

  \foreach \j in {0,...,9} {
    \coordinate (v\j) at ({90-360*\j/\N}:\R);
  }

  \newcommand{\drawbundleat}[3]{%
    \pgfmathtruncatemacro{\jmone}{mod(#1-1+\N,\N)}%
    \pgfmathtruncatemacro{\jmtwo}{mod(#1-2+\N,\N)}%
    \pgfmathtruncatemacro{\jpone}{mod(#1+1,\N)}%
    \pgfmathtruncatemacro{\jptwo}{mod(#1+2,\N)}%
    \pgfmathsetmacro{\ang}{90-360*(#1)/\N}%
    \begin{scope}[draw=#2,line width=#3]
      \draw (v#1) .. controls ($ (0,0)!2.08!(v#1) $) and ($ (0,0)!1.62!(v\jmtwo) $) .. (v\jmtwo);
      \draw (v#1) .. controls ($ (0,0)!1.72!(v#1) $) and ($ (0,0)!1.40!(v\jmone) $) .. (v\jmone);
      \draw (v#1) .. controls ($ (0,0)!1.72!(v#1) $) and ($ (0,0)!1.40!(v\jpone) $) .. (v\jpone);
      \draw (v#1) .. controls ($ (0,0)!2.08!(v#1) $) and ($ (0,0)!1.62!(v\jptwo) $) .. (v\jptwo);

      \coordinate (loopcenter) at ($ (0,0)!1.75!(v#1) $);
      \draw (v#1)
        .. controls ($ (loopcenter)+(\ang+90:0.60) $)
                  and ($ (loopcenter)+(\ang-90:0.60) $)
        .. (v#1);
    \end{scope}
  }

  \foreach \j in {1,...,9} {
    \drawbundleat{\j}{black!20}{0.45pt}
  }

  \drawbundleat{0}{black}{0.45pt}

  \foreach \j in {0,...,9} {
    \fill (v\j) circle (1.9pt);
  }
  \fill (v0) circle (2.5pt);
\end{tikzpicture}%
}\tabularnewline
\end{tabular}
\par\end{centering}
\begin{centering}
\bigskip{}
\bigskip{}
\par\end{centering}
{\small\emph{Notes}}{\small : This figure shows network graphs in the
special cases considered in Section \ref{subsec:Special-Cases}.}{\small\par}
\end{figure}

\paragraph{Group-Average Shocks.}

Now consider a case in which observations are partitioned into $J$
groups $C_{1},\dots,C_{J}$, but the shocks $g_{i}$ are assigned
to individuals ($K=N$) and fully propagate through the cluster-average:
$x_{i}=\bar{g}_{j(i)}$ for $\bar{g}_{j}=\frac{1}{N_{j}}\sum_{i\in C_{j}}g_{i}$,
which corresponds to $W=\diag(P_{N_{1}},\dots,P_{N_{J}})$ for $P_{n}=\mathbf{1}_{n}\mathbf{1}^{\prime}_{n}/n$.
This is isomorphic to the previous case, with shocks that are perfectly
correlated within groups and independent across groups being optimal:
$g_{i}=\check{g}_{j(i)}$ with $\check{g}_{j}\stackrel{iid}{\sim}Bernoulli(0.5)$.
The optimal instrument is again proportional to $N^{-1/(1+p)}_{j(i)}(\check{g}_{j(i)}-0.5)$.
Intuitively, there is no need to vary the shocks within groups since
treatment is at the group level; perfect within-group shock correlation
ensures the maximal variability of the spillover treatment while we
still reweight by $N^{-1/(1+p)}_{j(i)}$ to more evenly spread out
this variation.

\paragraph{Leave-Out Averages.}

Suppose in the previous case we instead have $N_{j}>1$ for all $j$
and $x_{i}=\bar{g}_{-i}=\frac{1}{N_{j(i)}-1}\sum_{k\in C_{j(i)},k\neq i}g_{k}$.
Appendix \ref{appx:Leave-Out-Averages} shows that an optimal design
mixes between group-specific iid $Bernoulli(0.5)$ shocks with probability
$\psi_{j}=(1-r_{j})/(1+(N_{j}-1)r_{j})$ for $r_{j}=(N_{j}-1)^{-2p}$,
and independent individual-level $Bernoulli(0.5)$ shocks with probability
$1-\psi_{j}$, independently across groups.\footnote{Note that this is just one way to induce the optimal $\var{\delta^{*}}x$.
Other designs may deliver the same variance matrix and are therefore
also optimal.} This design is close to a more conventional saturation design (studied,
e.g., in \citet{Baird2018}) with saturation rates of $0$, $0.5$,
and $1$, except with independent (rather than complete) randomization
within groups. The optimal instrument is now proportional to $\left(1+(N_{j(i)}-1)\psi_{j(i)}\right){}^{-1/(1+p)}\cdot\left((\bar{g}_{j(i)}-\frac{1}{2})+(N_{j(i)}-1)(\bar{g}_{j(i)}-g_{i})\right)$.
It is easy to verify that it is optimal to correlate the shocks more
in larger groups: $\psi_{j}$ increases in $N_{j}$, from $\psi_{j}=0$
when $N_{j}=2$ to $\psi_{j}\to1$ when $N_{j}\to\infty$. Intuitively,
in the smallest group of size $N_{j}=2$, we are effectively back
in the no-spillover case, while with $N_{j}\to\infty$ the leave-out
average is equivalent to the group average from the previous case.
For intermediate sizes, within-group correlation increases in $p$:
as before, it is higher when less clustering is allowed in $\varepsilon$.

\paragraph{Spatial Spillovers.}

Finally, suppose the $N=K$ units are equal-spaced points on a circle,
with treatment being an average of shocks to units within the distance
of $L$: $x_{i}=\frac{1}{2L+1}\sum^{i+L}_{k=i-L}g_{k}$, where index
summation is understood modulo $N$. Appendix \ref{appx:Spatial-Spillovers}
shows that, for $p=1$, the optimal design correlates shock assignments
at the same spatial scale as the exposure mapping. Specifically, the
correlation between assignments at circular distance $d$ is $\max\left\{ 1-d/(2L+1),0\right\} $,
linearly decaying in $d$. The design can be implemented by splitting
the circle into blocks of $2L+1$ consecutive units, randomly shifting
this partition, and assigning shocks at the block level. The optimal
instrument is $z^{\ast}=W^{\dag}\tilde{g}$. Under the shifted-block
implementation of the design, it takes values $(2L+1)\tilde{g}_{i}-2L\bar{\tilde{g}}$
for units at the center of a block and $\bar{\tilde{g}}$ for all
other units, where $\bar{\tilde{g}}$ is the average of all recentered
shocks. Intuitively, the IV undoes the smoothing done by spatial averaging
to mitigate the correlations that could otherwise be exploited by
the adversary. These results can be extended to a two-dimensional
grid, with $x_{i}$ averaging the shocks within a certain distance
in each coordinate.

\subsection{\label{subsec:Direct-Effects}Multiple Exposures}

We now consider a model in which the shocks $g$ affect the outcomes
through multiple linear exposure measures. For brevity, we consider
two exposures but all results generalize immediately to three or more.
For a known fixed $N\times K$ matrix $U=(u_{i}')^{N}_{i=1}$, we
now have:
\begin{align}
y_{i} & =\beta x_{i}+\tau h_{i}+\varepsilon_{i},\label{eq:multiple_exposure}\\
x_{i} & =w^{\prime}_{i}g,\qquad h_{i}=u^{\prime}_{i}g.\nonumber 
\end{align}
Our goal is still to minimize the worst-case approximate estimation
variance of the effect of $x_{i}$.

While we consider a general formulation, this setting is especially
relevant when the intervention and outcome units are the same and
$\tau$ corresponds to the direct effect of the intervention:
\begin{equation}
y_{i}=\beta x_{i}+\tau g_{i}+\varepsilon_{i}\label{eq:model_direct}
\end{equation}
which corresponds to $U=I_{N}$.\footnote{Another class of applications is to multiplex networks (cf. \citet{zenou2025peer}).}
In this context, there are three reasons why one would focus on the
estimation power of $\beta$. First, the researcher may be most interested
in the spillover effect, but still wants to acknowledge the presence
of the direct effect. Second, even if direct and spillover effects
are of equal interest, the spillover effect is usually more challenging
to estimate. A design that yields sufficient power for it would usually
deliver precise estimates of the direct effect, too. Finally, if the
researcher is interested in the Global Average Treatment Effect (GATE),
defined as the effect of switching from $g=\boldsymbol{0}_{K}$ to
$g=\boldsymbol{1}_{K}$ via both direct and indirect effects, that
problem can be reformulated as the one we solve here.\footnote{Specifically, the GATE equals $\gamma=\bar{w}\beta+\bar{u}\tau$ for
known constants $\bar{w}=\frac{1}{N}\sum_{i,k}w_{ik}$ and $\bar{u}=\frac{1}{N}\sum_{i,k}u_{ik}$.
One can then rewrite (\ref{eq:multiple_exposure}) as $y_{i}=\gamma\frac{x_{i}}{\bar{w}}+\tau\left(h_{i}-\frac{\bar{u}}{\bar{w}}x_{i}\right)+\varepsilon_{i}$
and apply our results to this reformulation, optimizing estimation
power for $\gamma$ while also including $h_{i}-\frac{\bar{u}}{\bar{w}}x_{i}$
in the model.}

To see how Theorem \ref{prop:OptimalIV} and Propositions \ref{prop:OptimalDesign}--\ref{prop:DesignProperties}
generalize, collect the parameters and explanatory variables in $\theta=(\beta,\tau)^{\prime}$
and $\boldsymbol{x}_{i}=(x_{i},h_{i})^{\prime}$. Absent restrictions
on how the error term can correlate with $W$ and $U$, we again use
recentered IV. For some recentered instrument vector $\boldsymbol{z}(\cdot)=(\boldsymbol{z}_{i}(\cdot))^{N}_{i=1}$,
where $\boldsymbol{z}_{i}(\cdot):\{0,1\}^{K}\rightarrow\mathbb{R}^{2}$
and $\expec{\delta}{\boldsymbol{z}_{i}(g)}=0$, consider the estimator:
\begin{align*}
\hat{\theta}\left[\boldsymbol{z}\right] & =\left(\sum^{N}_{i=1}\boldsymbol{z}_{i}(g)\boldsymbol{x}^{\prime}_{i}\right)^{-1}\sum^{N}_{i=1}\boldsymbol{z}_{i}(g)y_{i}.
\end{align*}
Define the approximate variance of $\hat{\beta}$ as:
\begin{align*}
\boldsymbol{\mathcal{V}}_{\delta,\mathcal{E}}\left[\boldsymbol{z}\right] & =e^{\prime}_{1}\expec{\delta}{\sum^{N}_{i=1}\boldsymbol{z}_{i}(g)\boldsymbol{x}^{\prime}_{i}}^{-1}\var{\delta,\mathcal{E}}{\sum^{N}_{i=1}\boldsymbol{z}_{i}(g)\varepsilon_{i}}\expec{\delta}{\sum^{N}_{i=1}\boldsymbol{z}_{i}(g)\boldsymbol{x}^{\prime}_{i}}^{-1\prime}e_{1},
\end{align*}
where $e_{1}=(1,0)^{\prime}$, with $\boldsymbol{\mathcal{V}}_{\delta,\mathcal{E}}\left[\boldsymbol{z}\right]=\infty$
whenever $\expec{\delta}{\sum^{N}_{i=1}\boldsymbol{z}_{i}(g)\boldsymbol{x}^{\prime}_{i}}$
is singular. Also, as before, define
\begin{align*}
\mathcal{V}_{\delta,\mathcal{E}}[z] & =\frac{\var{\delta,\mathcal{E}}{\sum^{N}_{i=1}z_{i}(g)\varepsilon_{i}}}{\expec{\delta}{\sum^{N}_{i=1}z_{i}(g)x_{i}}^{2}}.\LyXZeroWidthSpace
\end{align*}

An initial result shows that the problem of searching over recentered
IV vectors $\boldsymbol{z}$ is equivalent to a constrained version
of the original problem of searching over scalar $z$ that requires
$z$ to be orthogonal to the “nuisance” exposure $h$:
\begin{lem}
\label{lemma:directFX}Fix $\delta$ and $\mathcal{E}$. Let $\mathcal{Z}^{(2)}_{\delta}$
be the set of recentered $\boldsymbol{z}$ and suppose $\tr(\var{\delta}h)>0$
for $h=(h_{i})^{N}_{i=1}$. Then:
\begin{align*}
\left\{ \boldsymbol{\mathcal{V}}_{\delta,\mathcal{E}}\left[\boldsymbol{z}\right]:z\in\mathcal{Z}^{(2)}_{\delta}\right\}  & =\left\{ \V{\delta,\mathcal{E}}z:z\in\mathcal{Z}_{\delta},\expec{\delta}{z^{\prime}h}=0\right\} .
\end{align*}
That is, the set of attainable $\boldsymbol{\mathcal{V}}_{\delta,\mathcal{E}}\left[\boldsymbol{z}\right]$
equals the set of attainable $\V{\delta,\mathcal{E}}z$ under the
constraint $\expec{\delta}{z^{\prime}h}=0$. Moreover, this equivalence
is constructive in both directions: any $\boldsymbol{z}\in\mathcal{Z}^{(2)}_{\delta}$
can be transformed into an instrument vector whose first component
is some scalar $z\in\mathcal{Z}_{\delta}$ with $\expec{\delta}{z'h}=0$,
without changing the value of the objective, while any such scalar
$z$ can be implemented by the vector instrument $\boldsymbol{z}_{i}(g)=\bigl(z_{i}(g),\ h_{i}-\expec{\delta}{h_{i}}\bigr)$.
\end{lem}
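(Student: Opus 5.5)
The plan is to work directly with the $2\times2$ matrix
\[
A(\boldsymbol{z})=\expec{\delta}{\sum_{i}\boldsymbol{z}_{i}(g)\boldsymbol{x}_{i}'}.
\]
The key observation is that the $\beta$-component of $\hat{\theta}[\boldsymbol{z}]$ behaves exactly like a scalar recentered IV estimator. Its instrument is the linear combination of the components of $\boldsymbol{z}$ given by the first row of $A(\boldsymbol{z})^{-1}$. I will prove the two set inclusions separately. The value $\infty$ is handled in each direction as a side case.

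\emph{Vector to scalar.} Take $\boldsymbol{z}=(z^{(1)},z^{(2)})\in\mathcal{Z}^{(2)}_{\delta}$.
\begin{itemize}
\item If $A(\boldsymbol{z})$ is singular, then $\boldsymbol{\mathcal{V}}_{\delta,\mathcal{E}}[\boldsymbol{z}]=\infty$. This value is also attained on the scalar side by $z\equiv0$, which is recentered, satisfies $\expec{\delta}{z'h}=0$, and has zero denominator.
\item Otherwise, write $e_{1}'A(\boldsymbol{z})^{-1}=(a,b)$ and set $z_{i}=a z^{(1)}_{i}+b z^{(2)}_{i}$. Because $a,b$ are constants (they depend only on $\delta$ and $W,U$), $z$ is recentered, so $z\in\mathcal{Z}_{\delta}$.
\item The identity $e_{1}'A^{-1}A=e_{1}'$ says precisely that $\expec{\delta}{z'x}=1$ and $\expec{\delta}{z'h}=0$.
\item Since $(a,b)$ is nonrandom, bilinearity of the variance gives
\[
e_{1}'A^{-1}\,\var{\delta,\mathcal{E}}{\sum_{i}\boldsymbol{z}_{i}\varepsilon_{i}}\,A^{-1\prime}e_{1}=\var{\delta,\mathcal{E}}{\sum_{i}z_{i}\varepsilon_{i}}.
\]
This equals $\V{\delta,\mathcal{E}}{z}$ because the denominator is $1$.
\end{itemize}
The step that transforms $\boldsymbol{z}$ into an instrument vector with first component $z$ without changing the objective is simply the replacement of $\boldsymbol{z}$ by $(z,\,h-\expec{\delta}{h})$, by the converse argument below.

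\emph{Scalar to vector.} Take $z\in\mathcal{Z}_{\delta}$ with $\expec{\delta}{z'h}=0$, and set $\boldsymbol{z}_{i}=(z_{i},\tilde{h}_{i})$ with $\tilde{h}_{i}=h_{i}-\expec{\delta}{h_{i}}$. This vector is recentered. Then
\[
A=\begin{pmatrix}\expec{\delta}{z'x} & 0\\ \expec{\delta}{\tilde{h}'x} & \expec{\delta}{\tilde{h}'h}\end{pmatrix}.
\]
We also have $\expec{\delta}{\tilde{h}'h}=\expec{\delta}{\tilde{h}'\tilde{h}}=\tr(\var{\delta}{h})>0$ by assumption. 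So $A$ is lower triangular and is nonsingular exactly when $\expec{\delta}{z'x}\neq0$.
\begin{itemize}
\item If $\expec{\delta}{z'x}=0$, both objectives equal $\infty$.
\item Otherwise $e_{1}'A^{-1}=(1/\expec{\delta}{z'x},\,0)$, so $\boldsymbol{\mathcal{V}}_{\delta,\mathcal{E}}[\boldsymbol{z}]=\var{\delta,\mathcal{E}}{\sum_{i}z_{i}\varepsilon_{i}}/\expec{\delta}{z'x}^{2}=\V{\delta,\mathcal{E}}{z}$.
\end{itemize}

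\emph{Main obstacle.} The argument is essentially linear algebra, so the only delicate points are bookkeeping. The first is the $\infty$ conventions. The second is the use of $\tr(\var{\delta}{h})>0$: without it, the constructed vector instrument could give a singular $A$ even when the scalar $z$ has a nonzero first stage. Finally, recentering must be preserved under the constant linear combination, which holds because $A$ is an expectation and hence deterministic.
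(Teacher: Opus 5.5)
Your proposal is correct and follows the paper's proof essentially step for step. The paper forms the scalar instrument from the first row of the inverse first-stage matrix and uses $z\equiv0$ for the singular case; for the converse it uses the lower-triangular first-stage matrix of $(z,h-\mathbb{E}_{\delta}[h])$, whose $(2,2)$ entry is the positive trace of the variance of $h$. Your remark that $(z,h-\mathbb{E}_{\delta}[h])$ then gives the ``transformed instrument vector'' part of the statement is a small but useful addition the paper leaves implicit.
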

Imposing the orthogonality constraint yields the analog of Theorem
\ref{prop:OptimalIV}:

\renewcommand{\thethm}{1$\,^\prime$} 
\begin{thm}
\label{prop:OptimalIV_direct}Fix $\delta$ and let 
\begin{align}
c_{\delta} & \in\arg\min_{c\in\mathbb{R}}\tr\text{\ensuremath{\left(\left((W-cU)\Sigma_{\delta}(W-cU)^{\prime}\right)^{p/(p+1)}\right)},}\label{eq:c_delta}\\
\tilde{x}^{\perp}_{\delta} & =(W-c_{\delta}U)(g-\expec{\delta}g),\qquad S^{\perp}_{\delta}=(W-c_{\delta}U)\Sigma_{\delta}(W-c_{\delta}U)^{\prime}.\nonumber 
\end{align}
Suppose $S^{\perp}_{\delta}\neq0$ and $\tr(U\Sigma_{\delta}\LyXZeroWidthSpace U')>0$.
Then, for any $p\in[1,\infty]$ and $\sigma>0$:

(a) The worst-case approximate variance with the optimal instrument
is:
\[
\min\limits_{\bm{z}\in\mathcal{Z}^{(2)}_{\delta}}\max_{\mathcal{E}\in\mathcal{F}_{p}(\sigma)}\mathcal{\boldsymbol{\mathcal{V}}}_{\delta,\mathcal{E}}\left[\bm{z}\right]=N^{1/p}\sigma^{2}\tr\left(\left(S^{\perp}_{\delta}\right)^{p/(p+1)}\right)^{-(p+1)/p}.
\]

(b) Assume that either $p>1$ or $p=1$ and $\rank{\!}\left((W-c_{\delta}U)\Sigma^{1/2}_{\delta}\right)=\min\left\{ N,\rank(\Sigma_{\delta})\right\} $.
Then the instrument $\boldsymbol{z}^{\ast}_{i}(g)=(z^{*}_{\delta}(g)_{i},h_{i}-\expec{\delta}{h_{i}})$
minimizes $\max_{\mathcal{E}\in\mathcal{F}_{p}(\sigma)}\boldsymbol{\mathcal{V}}_{\delta,\mathcal{E}}\left[\boldsymbol{z}\right]$
over $\mathcal{Z}^{(2)}_{\delta}$, where:
\begin{align*}
z^{*}_{\delta}(g) & =\left((S^{\perp}_{\delta})^{\dag}\right)^{1/(p+1)}\tilde{x}^{\perp}_{\delta}.
\end{align*}
\end{thm}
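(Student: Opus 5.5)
The plan is to reduce to a scalar problem with Lemma \ref{lemma:directFX}, prove a lower bound valid for every $c\in\mathbb{R}$, and then show that the instrument built at $c=c_{\delta}$ attains the bound. By Lemma \ref{lemma:directFX}, it suffices to study $\min\max_{\mathcal{E}}\V{\delta,\mathcal{E}}z$ over scalar $z\in\mathcal{Z}_{\delta}$ satisfying $\expec{\delta}{z'h}=0$. The minimizer is then implemented through $\boldsymbol{z}_{i}=(z_{i},h_{i}-\expec{\delta}{h_{i}})$.

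For fixed $z$, independence of $g$ and $\varepsilon$ together with $\expec{\delta}{z}=0$ give $\var{\delta,\mathcal{E}}{z'\varepsilon}=\tr\left(\expec{\mathcal{E}}{\varepsilon\varepsilon'}\,\expec{\delta}{zz'}\right)$. Hence, by H\"older duality for Schatten norms, $\max_{\mathcal{E}\in\mathcal{F}_{p}(\sigma)}$ of the numerator equals $N^{1/p}\sigma^{2}\Vert\expec{\delta}{zz'}\Vert_{q}$ with $1/p+1/q=1$. This is the same step as in the proof of Theorem \ref{prop:OptimalIV}.

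\emph{Lower bound.} The key observation is that for any $z$ with $\expec{\delta}{z'h}=0$ and any $c\in\mathbb{R}$, the first stage is unchanged when $W$ is replaced by $W-cU$:
\[
\expec{\delta}{z'x}=\expec{\delta}{z'(W-cU)\tilde{g}_{\delta}}.
\]
Applying the Theorem \ref{prop:OptimalIV} argument with exposure matrix $W-cU$ gives
\[
\expec{\delta}{z'x}^{2}\le\Vert\expec{\delta}{zz'}\Vert_{q}\,\tr\left(S_{c}^{p/(p+1)}\right)^{(p+1)/p},\qquad S_{c}=(W-cU)\Sigma_{\delta}(W-cU)'.
\]
The argument behind this is the Cauchy--Schwarz/H\"older step comparing $\expec{\delta}{z'A\tilde{g}_{\delta}}$ to the second moments of $z$ and $A\tilde{g}_{\delta}$. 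It yields $\max_{\mathcal{E}}\V{\delta,\mathcal{E}}z\ge N^{1/p}\sigma^{2}\tr(S_{c}^{p/(p+1)})^{-(p+1)/p}$ for every $c$. Choosing $c$ to minimize $\tr(S_{c}^{p/(p+1)})$ gives the sharpest bound, which is the value claimed in (a).

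Such a minimizer exists. The map $c\mapsto\tr(S_{c}^{p/(p+1)})$ is continuous, and it is coercive as $|c|\to\infty$ because $\tr(U\Sigma_{\delta}U')>0$, so $S_{c}\approx c^{2}U\Sigma_{\delta}U'\neq0$.

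\emph{Attainment.} Next I take the Theorem \ref{prop:OptimalIV} instrument for exposure $W-c_{\delta}U$, namely $z^{*}=(S_{c_{\delta}}^{\dag})^{1/(p+1)}\tilde{x}^{\perp}_{\delta}$. By Theorem \ref{prop:OptimalIV}, its worst-case ratio with denominator $\expec{\delta}{z^{*\prime}(W-c_{\delta}U)\tilde{g}_{\delta}}$ equals the bound. It remains to check two things: that $z^{*}$ is admissible, i.e.\ $\expec{\delta}{z^{*\prime}h}=0$, and that this denominator then coincides with $\expec{\delta}{z^{*\prime}x}$.

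Admissibility follows from computing
\[
\expec{\delta}{z^{*\prime}h}=\tr\left((S_{c_{\delta}}^{\dag})^{1/(p+1)}(W-c_{\delta}U)\Sigma_{\delta}U'\right),
\]
and comparing it with the derivative
\[
\frac{d}{dc}\tr(S_{c}^{r})=-2r\,\tr\left(S_{c}^{r-1}(W-cU)\Sigma_{\delta}U'\right),\qquad r=p/(p+1),\quad r-1=-1/(p+1).
\]
So $\expec{\delta}{z^{*\prime}h}$ is, up to a nonzero constant, this derivative at $c_{\delta}$, and the first-order condition at the minimizer gives exactly $\expec{\delta}{z^{*\prime}h}=0$. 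Lemma \ref{lemma:directFX} then transfers optimality to $\boldsymbol{z}^{*}$, proving (b).

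\emph{Main obstacle.} The delicate step is justifying this first-order condition. The map $A\mapsto\tr(A^{r})$ is differentiable only along directions that do not change the rank of $A$, with the pseudoinverse power interpreted on the range. When $p>1$, a variation that enlarges the range contributes at order $\epsilon^{r}$ to the trace but only at order $\epsilon^{2}$ in the relevant entries, so I expect one-sided derivatives to still vanish appropriately. When $p=1$ ($r=1/2$), rank changes of $S_{c}$ as $c$ varies can create a kink; this is where the rank condition $\rank\left((W-c_{\delta}U)\Sigma_{\delta}^{1/2}\right)=\min\{N,\rank(\Sigma_{\delta})\}$ enters, since it keeps the rank of $S_{c}$ locally constant at $c_{\delta}$ and so makes $c\mapsto\tr(S_{c}^{1/2})$ differentiable there.

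For (a) in the remaining degenerate case, I would not need attainment. Instead I would show the infimum equals the bound by approximation: perturb $\Sigma_{\delta}$ to $\Sigma_{\delta}+\epsilon I$ through the instrument construction, or perturb $c$, build near-optimal admissible instruments, and pass to the limit using continuity of $\tr(S_{c}^{r})$ in its arguments.
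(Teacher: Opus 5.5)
Your argument is essentially the paper's, except at the duality step. The shared parts are the reduction via Lemma \ref{lemma:directFX}, the dualization of the numerator to $N^{1/p}\sigma^{2}\Vert Q_{z}\Vert_{q}$, the bound from Theorem \ref{prop:OptimalIV} applied to $W-cU$ (using $\expec{\delta}{z'x}=\expec{\delta}{z'(W-cU)\tilde{g}_{\delta}}$ under orthogonality), existence of $c_{\delta}$ by coercivity, and the first-order condition giving $\expec{\delta}{z^{*\prime}_{\delta}h}=0$.

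The paper writes the constrained problem as $\max_{z:\Vert Q_{z}\Vert_{q}\le1}\inf_{c}\expec{\delta}{z'(\tilde{x}-c\tilde{h})}$ and swaps the order with Sion's minimax theorem. That yields (a) for every $p$ with no differentiability. Your per-$c$ lower bound is only the easy inequality $\max\inf\le\inf\max$, and you close it by exhibiting an admissible instrument that attains it. Whenever $H(c)=\tr(S_{c}^{p/(p+1)})$ is differentiable at $c_{\delta}$, this is complete and more elementary than the paper's route.

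For $p>1$, replace your ``I expect'' heuristic with the factorization the paper uses. Write $S_{c}=A_{c}A_{c}'$ with $A_{c}=(W-cU)R_{\delta}$ affine in $c$ and $R_{\delta}R_{\delta}'=\Sigma_{\delta}$. Then $H(c)=\sum_{j}\sigma_{j}(A_{c})^{2p/(p+1)}$ with exponent above one, so $H$ is differentiable. Equivalently, eigenvalues of $S_c$ that vanish at $c_{\delta}$ are $O(\epsilon^{2})$ and contribute only $o(\epsilon)$.

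The genuine gap is part (a) for $p=1$ when the rank condition fails. The theorem asserts (a) there, but your argument does not deliver it. In this case $H(c)=\Vert A_{c}\Vert_{1}$ is convex but can have a kink at its minimizer.

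Your suggested perturbations do not work as stated:
\begin{itemize}
\item \emph{Perturbing $c$.} At differentiability points $H'(c)=-\expec{\delta}{z_{c}'h}$. At a kink with $H'(c_{\delta}^{-})<0<H'(c_{\delta}^{+})$, the instruments $z_{c}$ for $c$ near $c_{\delta}$ therefore violate orthogonality by amounts bounded away from zero, with opposite signs on the two sides. Perturbing $c$ alone produces no admissible limit.
\item \emph{Perturbing $\Sigma_{\delta}$ to $\Sigma_{\delta}+\epsilon I$.} The resulting first-order condition enforces orthogonality under $\Sigma_{\delta}+\epsilon I$, not under the true $\Sigma_{\delta}$. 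It also need not restore the rank condition: if $W$ and $U$ share a zero row and $N\le K$, then $\rank(W-cU)<N$ for every $c$.
\end{itemize}

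What is missing is a strong-duality step. You can supply it in either of two ways:
\begin{itemize}
\item Invoke Sion as the paper does. The set $\{z\in\mathcal{Z}_{\delta}:\Vert Q_{z}\Vert_{q}\le1\}$ is convex and compact, and the objective is bilinear in $(z,c)$.
\item Stay in your own framework and mix. Pick differentiability points $c_{1}<c_{\delta}<c_{2}$; these exist arbitrarily close to $c_{\delta}$ because $H$ is convex. The sign pattern above lets you choose $\theta\in[0,1]$ with $\theta\,\expec{\delta}{z_{c_{1}}'h}+(1-\theta)\,\expec{\delta}{z_{c_{2}}'h}=0$. Since $\Vert Q_{z}\Vert_{\infty}^{1/2}$ is a norm in $z$, the mixture $z=\theta z_{c_{1}}+(1-\theta)z_{c_{2}}$ is admissible with $\Vert Q_{z}\Vert_{\infty}\le1$. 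Its first stage is $\theta H(c_{1})+(1-\theta)H(c_{2})+O(|c_{2}-c_{1}|)\to H(c_{\delta})$, so the infimum equals the bound. Attainment of the minimum then follows from compactness of the normalized feasible set.
\end{itemize}
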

\renewcommand{\theprop}{1$\,^\prime$} 
The key difference from the optimal instrument in the single-exposure
case is that the matrix $W$ is replaced with an appropriate residualization
$W-c_{\delta}U$ where $c_{\delta}$ (which plays the role of a Lagrange
multiplier on the $\expec{\delta}{z^{\prime}h}=0$ constraint) can
be solved for by the scalar optimization (\ref{eq:c_delta}). The
additional technical assumption for $p=1$ is imposed because the
objective (\ref{eq:c_delta}) can be non-differentiable, which may
lead to a different analytical form of $z^{*}_{\delta}(g)$. To avoid
this issue, we require that the residualized exposure matrix retains
as much assignment-induced variation as its dimensions allow. Given
orthogonality between the optimal IV for $x_{i}$ and the exposure
$h_{i}$, the IV for $h_{i}$ is simply $h_{i}-\expec{\delta}{h_{i}}$.

Analogs of Propositions \ref{prop:OptimalDesign}--\ref{prop:DesignProperties}
also follow:
\begin{prop}
\label{prop:OptimalDesign_direct}In the Theorem \ref{prop:OptimalIV_direct}
setting, suppose there is $\delta\in\mathcal{D}$ such that $\tr(U\Sigma_{\delta}\LyXZeroWidthSpace U')>0$
and $\min_{c\in\mathbb{R}}\tr\left(((W-cU)\Sigma_{\delta}(W-cU)^{\prime})^{p/(p+1)}\right)>0$.
Then
\[
\delta^{*},\boldsymbol{z}^{*}\in\arg\min_{\delta\in\mathcal{D},\boldsymbol{z}\in\mathcal{Z}^{(2)}_{\delta}}\max_{\mathcal{E}\in\mathcal{F}_{p}(\sigma)}\mathcal{\boldsymbol{\mathcal{V}}}_{\delta,\mathcal{E}}\left[\boldsymbol{z}\right]
\]
is solved by 
\begin{align*}
\delta^{*} & \in\arg\max_{\delta\in\mathcal{D}}\tr\left((S^{\perp}_{\delta})^{p/(p+1)}\right).
\end{align*}
 The optimal instrument is $\boldsymbol{z}^{\ast}=\boldsymbol{z}^{\ast}_{\delta^{\ast}}$
from Theorem \ref{prop:OptimalIV_direct}, as long as, for $p=1$,
\textup{$\rank{\!}\left((W-c_{\delta}U)\Sigma^{1/2}_{\delta}\right)=\min\left\{ N,\rank(\Sigma_{\delta})\right\} $
}.
\end{prop}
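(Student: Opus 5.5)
The plan is to reduce the joint minimax problem to a two-stage problem, mirroring how Proposition \ref{prop:OptimalDesign} follows from Theorem \ref{prop:OptimalIV}. For a fixed design $\delta$, Theorem \ref{prop:OptimalIV_direct}(a) gives the inner value in closed form:
\[
\min_{\boldsymbol{z}\in\mathcal{Z}^{(2)}_{\delta}}\max_{\mathcal{E}\in\mathcal{F}_{p}(\sigma)}\boldsymbol{\mathcal{V}}_{\delta,\mathcal{E}}[\boldsymbol{z}]=N^{1/p}\sigma^{2}\,T(\delta)^{-(p+1)/p},\qquad T(\delta)=\min_{c\in\mathbb{R}}\tr\left(((W-cU)\Sigma_{\delta}(W-cU)^{\prime})^{p/(p+1)}\right).
\]
This formula is valid whenever $T(\delta)>0$ and $\tr(U\Sigma_\delta U')>0$. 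Since $x\mapsto x^{-(p+1)/p}$ is strictly decreasing on $(0,\infty)$, minimizing the outer objective over such $\delta$ amounts to maximizing $T(\delta)=\tr((S^{\perp}_{\delta})^{p/(p+1)})$.

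The first step is to handle the degenerate designs, where the closed form does not apply, and show they cannot beat a nondegenerate design. There are two cases.

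\emph{Case $\tr(U\Sigma_\delta U')=0$.} Then $h-\mathbb{E}_\delta h=0$ almost surely. Every recentered $\boldsymbol{z}$ therefore has a singular $\mathbb{E}_\delta[\sum_i\boldsymbol{z}_i\boldsymbol{x}_i']$, because its second column, $\mathbb{E}_\delta[\sum_i \boldsymbol{z}_i h_i]$, equals $\mathbb{E}_\delta[\sum_i\boldsymbol{z}_i(h_i-\mathbb{E}_\delta h_i)]=0$. So $\boldsymbol{\mathcal{V}}=\infty$.

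\emph{Case $\tr(U\Sigma_\delta U')>0$ but $T(\delta)=0$.} There is then some $c$ with $(W-cU)\Sigma_\delta^{1/2}=0$. By Lemma \ref{lemma:directFX}, any feasible scalar $z$ satisfies $\mathbb{E}_\delta[z'h]=0$, so $\mathbb{E}_\delta[z'x]=\mathbb{E}_\delta[z'(x-ch)]$. This vanishes because $x-ch-\mathbb{E}_\delta(x-ch)=(W-cU)\tilde g=0$ almost surely. Hence $\mathcal{V}=\infty$ here too.

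By hypothesis at least one nondegenerate $\delta$ exists and attains a finite value. So the outer problem may be restricted to designs with $T(\delta)>0$ and $\tr(U\Sigma_\delta U')>0$, where the closed form holds.

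The next step is to check that a maximizer $\delta^*$ of $T$ exists. $\mathcal{D}$ is a compact simplex and $\Sigma_\delta$ is continuous (quadratic) in $\delta$. $T$ is a pointwise minimum over $c$ of continuous functions, hence upper semicontinuous, so the maximum is attained. One caveat is that the maximizer might satisfy $\tr(U\Sigma_{\delta^*}U')=0$. In that case $T(\delta^*)=\tr((W\Sigma_{\delta^*}W')^{p/(p+1)})$, and the argmax statement should be understood over the nondegenerate set. I would check this via continuity: perturb $\delta^*$ slightly toward a nondegenerate design, which keeps $T$ nearly unchanged. Alternatively, I would simply read the proposition as taking the argmax among designs satisfying the stated conditions. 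This is the step I expect to need the most care.

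Finally, given $\delta^*$, Theorem \ref{prop:OptimalIV_direct}(b) supplies the inner minimizer $\boldsymbol{z}^*_{\delta^*}$. For $p>1$ this holds unconditionally, and for $p=1$ it holds under the stated rank condition. Because it attains the inner minimum, the pair $(\delta^*,\boldsymbol{z}^*_{\delta^*})$ attains the joint minimax value, which completes the argument.
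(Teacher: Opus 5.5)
Your core reduction is the paper's own proof, which is the one-line remark that the proposition follows from Theorem~\ref{prop:OptimalIV_direct}. For nondegenerate designs the inner value is $N^{1/p}\sigma^2T(\delta)^{-(p+1)/p}$, which is decreasing in $T$. Your two degenerate cases and the upper-semicontinuity existence argument are correct and more careful than the paper.

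\textbf{The gap.} The caveat you flag is a real gap, and the perturbation fix cannot close it. Moving from a degenerate maximizer $\delta^*$ toward a nondegenerate design at best shows that nondegenerate designs come arbitrarily close to $T(\delta^*)$. It does not produce one that attains it, and none need exist.

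\textbf{Counterexample.} Take $N=1$, $K=2$, $W=(1,0)$, $U=(1,-1)$, so $x=g_1$ and $h=g_1-g_2$.
\begin{itemize}
\item Here $T(\delta)^{(p+1)/p}=\min_c\var{\delta}{g_1-c(g_1-g_2)}$, the residual variance of $g_1$ after projecting on $g_1-g_2$.
\item This is at most $\var{\delta}{g_1}\le 1/4$. Equality requires $\cov{\delta}{g_1,g_2}=\var{\delta}{g_1}=1/4$, which forces $g_2=g_1$ almost surely.
\item So the unique maximizer of $T$ over $\mathcal{D}$ is $g_1=g_2\sim\mathrm{Bernoulli}(0.5)$. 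There $h\equiv 0$, so $\boldsymbol{\mathcal{V}}=\infty$ for every $\boldsymbol{z}$ (your first case).
\item Yet independent Bernoulli$(0.5)$ shocks satisfy the proposition's hypotheses and give a finite value.
\item Bernoulli$(0.5)$ designs with correlation $\rho<1$ give $T^{(p+1)/p}=(1+\rho)/8<1/4$. The minimax infimum is approached as $\rho\uparrow 1$ but never attained.
\end{itemize}
So the argmax over all of $\mathcal{D}$ does not solve the problem here. In fact no solution exists.

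\textbf{What can be proved.} The statement is provable only in one of two forms:
\begin{itemize}
\item Under your second reading: the argmax is taken over designs with $\tr(U\Sigma_\delta U')>0$. Note this argmax may be empty.
\item Under an added condition that rules out degenerate maximizers, for example $\ker U\subseteq\ker W$. This includes $U$ of full column rank, as with direct effects, $U=I$. Then $U\Sigma_\delta U'=0$ forces $W\Sigma_\delta^{1/2}=0$ and hence $T(\delta)=0$. So, by the hypothesis that some design has $T>0$, no maximizer is degenerate.
\end{itemize}
The paper's one-line proof also passes over this edge case.

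\textbf{A smaller point.} For $p=1$ without the rank condition, your final step should not rely on Theorem~\ref{prop:OptimalIV_direct}(b), which needs that condition. The design claim only requires that the inner minimum in part~(a) is attained.
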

\renewcommand{\theprop}{2$\,^\prime$} 
\begin{prop}
\label{prop:DesignProperties_direct}There is a $\delta^{*}$ from
Proposition \ref{prop:OptimalDesign_direct} with $\expec{\delta^{\ast}}{g_{k}}=0.5$
for all $k$. Further, if $W$ and $U$ are jointly block diagonal
under the same row-column partition, there is such a $\delta^{*}$
with independent subvectors of $g$ across blocks.
\end{prop}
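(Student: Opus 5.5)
The plan mirrors the single-exposure argument behind Proposition \ref{prop:DesignProperties}. The key observation is that the criterion in Proposition \ref{prop:OptimalDesign_direct} depends on $\delta$ only through $\Sigma_{\delta}=\var{\delta}g$. Write it as
\[
\Phi(\Sigma)=\min_{c\in\mathbb{R}}\tr\left(\left((W-cU)\Sigma(W-cU)^{\prime}\right)^{p/(p+1)}\right).
\]
For each fixed $c$, the map $\Sigma\mapsto\tr\left((A\Sigma A^{\prime})^{q}\right)$ with $q=p/(p+1)\in(0,1]$ is concave on PSD matrices. This follows because $X\mapsto X^{q}$ is operator concave for $q\in[0,1]$ and the trace is linear. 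A pointwise minimum of concave functions is concave, so $\Phi$ is concave in $\Sigma$. Both claims then follow from exhibiting, starting from any optimal $\delta$, a modified design $\delta^{\prime}$ whose covariance dominates a convex combination of covariances that all share the same criterion value.

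\emph{Marginals of 0.5.} Take an optimal $\delta$ and let $\bar{\delta}$ be the law of $\one-g$ under $\delta$. Then $\var{\bar{\delta}}g=\Sigma_{\delta}$. Let $\delta^{\prime}=\frac{1}{2}\delta+\frac{1}{2}\bar{\delta}$; it has $\expec{\delta^{\prime}}{g_{k}}=0.5$ for every $k$. The mixture covariance is
\[
\Sigma_{\delta^{\prime}}=\tfrac{1}{2}\Sigma_{\delta}+\tfrac{1}{2}\Sigma_{\delta}+\tfrac{1}{4}(\mu-\bar{\mu})(\mu-\bar{\mu})^{\prime}\succeq\Sigma_{\delta},
\]
where $\mu$ and $\bar{\mu}$ are the two means. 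It remains to check that $\Phi$ is monotone in the Loewner order. For each fixed $c$, $A\Sigma A^{\prime}$ is Loewner-monotone in $\Sigma$, and $X\mapsto\tr(X^{q})$ is monotone (by eigenvalue interlacing/Weyl). Taking the minimum over $c$ preserves monotonicity. Hence $\Phi(\Sigma_{\delta^{\prime}})\ge\Phi(\Sigma_{\delta})$, so $\delta^{\prime}$ is also optimal.

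The feasibility conditions $\tr(U\Sigma U^{\prime})>0$ and $\Phi>0$ are also monotone and therefore carry over. Only monotonicity is needed here, not concavity.

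\emph{Independence across blocks.} Let the rows of $W,U$ be partitioned into blocks $R_{b}$ and the columns into $C_{b}$. Given optimal $\delta$ (already symmetrized), let $\delta^{\prime}$ be the product of the block marginals of $\delta$. Then $\Sigma_{\delta^{\prime}}=\diag_{b}(\Sigma_{\delta,bb})$, i.e., $\Sigma_{\delta}$ with its off-diagonal blocks zeroed out, and the means are unchanged at 0.5.

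I need $\Phi(\Sigma_{\delta^{\prime}})\ge\Phi(\Sigma_{\delta})$. This is the main obstacle: the common scalar $c$ couples the blocks, so one cannot simply optimize block by block. The plan is to use block sign flips. For a sign vector $s\in\{\pm1\}^{B}$, let $D_{s}$ be the diagonal column-sign matrix and $E_{s}$ the corresponding row-sign matrix. Joint block-diagonality gives $(W-cU)D_{s}=E_{s}(W-cU)$ for every $c$. Therefore
\[
\tr\left(((W-cU)D_{s}\Sigma D_{s}(W-cU)^{\prime})^{q}\right)=\tr\left((E_{s}M_{c}E_{s})^{q}\right)=\tr(M_{c}^{q}),
\]
where $M_{c}=(W-cU)\Sigma(W-cU)^{\prime}$, because conjugation by the orthogonal $E_{s}$ preserves the spectrum. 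So $\Phi(D_{s}\Sigma D_{s})=\Phi(\Sigma)$ for all $s$. Averaging over uniform $s$ gives $2^{-B}\sum_{s}D_{s}\Sigma D_{s}=\diag_{b}(\Sigma_{bb})$. By concavity of $\Phi$,
\[
\Phi(\Sigma_{\delta^{\prime}})\ge2^{-B}\sum_{s}\Phi(D_{s}\Sigma_{\delta}D_{s})=\Phi(\Sigma_{\delta}).
\]
The step to check carefully is concavity of $\Phi$: it holds because $\Phi$ is a minimum over $c$ of concave functions. Note that $D_{s}\Sigma D_{s}$ need not be the covariance of any design; this is harmless, since concavity is applied on the PSD cone and only $\Sigma_{\delta^{\prime}}$ must be realizable.

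Finally, feasibility is preserved, since $\tr(U\Sigma U^{\prime})$ is unchanged when off-diagonal blocks are zeroed, $U$ being block diagonal. The optimal instrument then follows from Proposition \ref{prop:OptimalDesign_direct}.
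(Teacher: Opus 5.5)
Your proof is correct. The marginals-of-$0.5$ step is exactly the paper's argument: complement symmetrization gives $\Sigma_{\delta'}\succeq\Sigma_\delta$, $\operatorname{tr}(X^{p/(p+1)})$ is Loewner-monotone, and an inequality that holds for every $c$ survives taking the minimum over $c$.

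For block independence you take a different route. The paper proves Proposition~\ref{prop:DesignProperties_direct} ``analogously'' to Proposition~\ref{prop:DesignProperties}. Because $W-cU$ is block diagonal for every $c$, the matrix $S_{c,\delta^{ind}}=(W-cU)\Sigma_{\delta^{ind}}(W-cU)'$ is the pinching of $S_{c,\delta}$. Majorization $\lambda(S_{c,\delta^{ind}})\prec\lambda(S_{c,\delta})$, together with concavity of $x\mapsto x^{p/(p+1)}$, then gives the trace inequality for each fixed $c$, and minimizing both sides over $c$ finishes.

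So the coupling through the common $c$, which you flag as the main obstacle, is in fact harmless. A comparison that holds pointwise in $c$ passes to the minimum, and no concavity of $\Phi$ is needed.

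Your argument replaces the majorization lemma with explicit group averaging: $\Phi$ is invariant under block sign flips, and $\Phi$ is concave as a minimum of concave functions. This is self-contained, needing only concavity of $X\mapsto\operatorname{tr}(X^{q})$ and orthogonal invariance of the spectrum. It is the same template as the paper's symmetry-normalization argument (Proposition~\ref{prop:Symmetry}), specialized to the group of block sign flips. The paper's route is shorter given off-the-shelf majorization facts.

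One small inaccuracy, which does no harm: $D_s\Sigma_\delta D_s$ is always realizable. Replacing $g_k$ by $1-g_k$ for every $k$ in the flipped blocks yields a design with exactly this covariance, so your caveat is unnecessary.
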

\renewcommand{\theprop}{\arabic{prop}}\setcounter{prop}{2} \renewcommand{\thethm}{\arabic{thm}}\setcounter{thm}{1}

We note that, unlike with Proposition \ref{prop:DesignProperties},
the optimal design cannot generally be solved block-by-block in the
block-diagonal case because $c_{\delta}$ is common across all blocks.

We illustrate these results in a simple but nontrivial special case:
when $x_{i}$ is the leave-out average of shocks in the cluster to
which $i$ belongs and $h_{i}=g_{i}$ captures the direct effect.
Appendix \ref{appx:Leave-Out-Averages} shows that, at least when
$N_{j}$ is even, it is optimal to mix between cluster-specific $Bernoulli(0.5)$
shocks and complete randomization of the shocks within each cluster,
independently across clusters. In the special case where all clusters
are of the same size, $N_{j}=n$, one can obtain a solution more similar
to the one from Section \ref{subsec:Special-Cases}: mix between cluster-specific
shocks with probability $\psi^{\perp}=(1-r^{\perp})/(1+(n-1)r^{\perp})$
for $r^{\perp}=(n-1)^{-2p/(2p-1)}$ and independent $Bernoulli(0.5)$
randomization with probability $1-\psi^{\perp}$. The optimal spillover
instrument is proportional to $(\bar{g}_{j(i)}-0.5)+(n-1)^{1/(2p-1)}(\bar{g}_{j(i)}-g_{i})$.
Relative to the case without direct effects, there is generally more
randomization within clusters ($\psi^{\perp}\le\psi$, with strict
inequality for $n>2$ and $p>1$), which helps isolate spillover effects
from the direct effects. The exception is when $p=1$: with both homogeneous
and heterogeneous $N_{j}$, the design and instrument are then the
same as in the original problem, because the objective already creates
enough variation to identify direct effects.

\subsection{Other Extensions}

We derive three further extensions to the baseline results. First,
Appendix \ref{appx:predetermined_covs} shows how the results extend
when the researcher includes an intercept and possibly other predetermined
covariates $r_{i}$ in estimation. We formalize this idea by enriching
the class of error distributions to $\varepsilon_{i}=r^{\prime}_{i}\gamma+\tilde{\varepsilon}_{i}$
where $\tilde{\varepsilon}=\left(\tilde{\varepsilon}_{i}\right)$
satisfies $\expec{\mathcal{E}}{\tilde{\varepsilon}\tilde{\varepsilon}'}\in\mathcal{F}_{p}(\sigma)$
while $\gamma$ is unrestricted. The results in Theorem \ref{prop:OptimalIV}
and Propositions \ref{prop:OptimalDesign}--\ref{prop:DesignProperties}
continue to hold, with the exposure matrix $W$ residualized columnwise
on the covariates.

Second, Appendix \ref{appx:nonlinear} discusses how Theorem \ref{prop:OptimalIV}
and Proposition \ref{prop:OptimalDesign} extend to the general formula
treatment setting where the $x_{i}$ are nonlinear (but still known)
$i$-specific formulas combining $W$ and $g$. This allows, for example,
$x_{i}$ to be an indicator for $i$ having at least one treated friend
in their social network or the more elaborate formulas considered
in \citet{BH1}. The proofs to Theorem \ref{prop:OptimalIV} and Proposition
\ref{prop:OptimalDesign} turn out to extend verbatim after defining
the general $\tilde{x}_{\delta}=x-\expec{\delta}x$ and $S_{\delta}=\var{\delta}x$,
making them a robust characterization of optimal design with formula
treatments. Proposition \ref{prop:DesignProperties}, however, does
not extend cleanly: it is no longer without loss of generality to
consider designs with $0.5$ marginals. The following discussion of
feasible design computation and inference is also reliant on linearity
of the spillover treatment formula.

Finally, Appendix \ref{appx:budgets} considers settings where the
researcher faces a constraint of a marginal treatment probability
$q\neq0.5$ (common across units), e.g. due to a fixed budget or government
target. Theorem \ref{prop:OptimalIV} applies to any design and is
unaffected by the constraints on feasible designs. We therefore show
that Proposition \ref{prop:OptimalDesign} extends naturally with
such constraints, as well as the second claim of Proposition \ref{prop:DesignProperties}
and the results on computation, below.

\section{Feasible Designs and Inference \label{sec:computation_clt}}

We now introduce a relaxed version of the optimal design problem.
This relaxation admits a closed-form solution in special cases and
a computationally efficient approximation in general. Moreover, this
approximation can yield $\sqrt{K}$-consistent recentered IV estimators
with $p<\infty$, as well as a simple asymptotic inference procedure.

\subsection{The Relaxed Problem\label{subsec:Computation}}

Solving the Proposition \ref{prop:OptimalDesign} problem is computationally
intractable for even moderate $K$. While its objective depends on
the experimental design $\delta$ only through the $K\times K$ covariance
matrix $\var{\delta}g$, the class of all $K\times K$ covariance
matrices achievable with binary treatments, $\mathcal{C}_{K}=\left\{ \var{\delta}g\colon\delta\in\mathcal{D}\right\} $,
is high-dimensional. The computation time required to find an optimal
$\var{\delta}g$, by searching over convex combinations of the $2^{K}$
possible treatment vectors in $\left\{ 0,1\right\} ^{K}$, is generally
exponential in $K$.

To make progress, we follow \citet{goemans_improved_1995} and \citet{thiyageswaran_optimal_2026}
in considering a relaxed problem with a polynomial-time solution in
$K$. Specifically, we replace the optimization over $\mathcal{C}_{K}$
with optimization over $\mathcal{Q}_{K}=\left\{ \Sigma\in\mathbb{S}^{K}:\Sigma\succeq0,\Sigma_{kk}=1/4\text{ for all }k\right\} $,
where $\mathbb{S}^{K}$ is the set of $K\times K$ real and symmetric
matrices. Thus $\mathcal{Q}_{K}$ is the set of $K\times K$ covariance
matrices with $1/4$ on the diagonal, containing $\mathcal{C}_{K}$
when we restrict attention to designs with $\expec{\delta}{g_{k}}=0.5$
(without loss of generality, by Proposition \ref{prop:DesignProperties}).
The relaxed Proposition \ref{prop:OptimalDesign} problem is then:
\begin{equation}
\max_{\Sigma\in\mathcal{Q}_{K}}\tr\left((W\Sigma W')^{\frac{p}{p+1}}\right).\label{eq:relax2}
\end{equation}

Searching over $\mathcal{Q}_{K}$ is feasible for moderate $K$ using
an interior point solver for conic optimization problems. Two additional
results further improve computational efficiency without any additional
approximation cost. First, if $W$ can be split into independent blocks
$W_{(1)},\dots,W_{(B)}$ with $K(b)$ shocks in each, we can use Proposition
\ref{prop:DesignProperties} to solve the optimal design problem separately
by block. Second, as the following result shows, we can characterize
the solution explicitly in terms of a lower-dimensional convex optimization
problem over the $K-1$-dimensional simplex $\Delta_{K}=\left\{ \ell\in\mathbb{R}^{K}\colon\ell_{k}\ge0,\sum^{K}_{k=1}\ell_{k}=1\right\} $
rather than $\mathcal{Q}_{K}$; this problem is amenable to gradient-based
optimization approaches that scale more efficiently with $K$.
\begin{prop}
\label{prop:Optimize_l}For $p<\infty$, let 
\[
\Sigma^{\ast}=\frac{1}{4}D^{-1/2}_{\ell^{\ast}}\frac{M^{p}_{\ell^{\ast}}}{\tr\left(M^{p}_{\ell^{\ast}}\right)}D^{-1/2}_{\ell^{\ast}},
\]
where, for $\ell>0$,
\[
D_{\ell}=\diag\left(\ell_{1},\dots,\ell_{K}\right),\qquad M_{\ell}=D^{-1/2}_{\ell}W'WD^{-1/2}_{\ell},
\]
and 
\begin{equation}
\ell^{\ast}\in\arg\min_{\ell\in\Delta_{K}}\tr\left(M^{p}_{\ell}\right),\label{eq:lstar}
\end{equation}
with $\ell^{*}>0$. Then $\Sigma^{\ast}$ solves the relaxed problem
(\ref{eq:relax2}).

\emph{This characterization is especially helpful in two special cases
where it yields closed-form solutions of the relaxed problem:}
\end{prop}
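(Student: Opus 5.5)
The plan is to prove the claim by weak duality: I will show that every $\ell>0$ gives an upper bound on the relaxed objective (\ref{eq:relax2}), and that $\Sigma^{\ast}$ is feasible and attains the bound at $\ell=\ell^{\ast}$. Throughout, write $q=p/(p+1)\in[1/2,1)$ and $A=W'W$.

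\emph{Step 1 (reparametrization).} Fix $\Sigma\in\mathcal{Q}_K$ and any $\ell>0$ in $\Delta_K$. Let $B=D_\ell^{1/2}\Sigma D_\ell^{1/2}\succeq0$. Since $\Sigma_{kk}=1/4$, we have $\tr B=\frac14\sum_k\ell_k=\frac14$. The nonzero eigenvalues of $W\Sigma W'$ coincide with those of $A\Sigma=D_\ell^{1/2}M_\ell D_\ell^{1/2}\Sigma$. This matrix is similar to $M_\ell B$, whose eigenvalues are those of $B^{1/2}M_\ell B^{1/2}$. Hence
\[
\tr\left((W\Sigma W')^{q}\right)=\tr\left((B^{1/2}M_\ell B^{1/2})^{q}\right)=\Frob{\,\cdot\,}\text{-type quantity }\left\Vert B^{1/2}M_\ell^{1/2}\right\Vert_{2q}^{2q},
\]
where $\Vert\cdot\Vert_{2q}$ is the Schatten $2q$ (quasi-)norm.

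\emph{Step 2 (upper bound).} I apply the Hölder inequality for Schatten norms, $\Vert XY\Vert_s\le\Vert X\Vert_a\Vert Y\Vert_b$ with $1/s=1/a+1/b$. This inequality remains valid for $s<1$ when stated in terms of singular values; it follows from Horn's majorization inequality. I take $s=2q$, $a=2$ and $b=2p$; the exponents match because $\frac{p+1}{2p}=\frac12+\frac1{2p}$. This gives
\[
\tr\left((W\Sigma W')^{q}\right)\le(\tr B)^{q}\left(\tr M_\ell^{p}\right)^{q/p}=4^{-q}\left(\tr M_\ell^{p}\right)^{1/(p+1)}.
\]
The bound holds for every feasible $\Sigma$ and every $\ell>0$, so the relaxed value is at most $4^{-q}(\tr M_{\ell^\ast}^p)^{1/(p+1)}$. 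I expect this step to be the main technical point. Justifying Hölder in the quasi-norm range $2q<2$ is where I would cite a standard majorization result.

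\emph{Step 3 (feasibility via first-order conditions).} Since $\ell^\ast>0$ is interior, the Lagrange condition holds for minimizing $\tr M_\ell^p$ on the simplex. A perturbation $d\ell_k$ changes $M_\ell$ by $-\frac{d\ell_k}{2\ell_k}(E_kM_\ell+M_\ell E_k)$, where $E_k=e_ke_k'$. Because $p\ge1$, $\tr M^p$ is differentiable on the PSD cone, with derivative $p\tr(M^{p-1}dM)$. Therefore
\[
\partial_{\ell_k}\tr M_\ell^p=-p\,(M_\ell^p)_{kk}/\ell_k .
\]
At the interior optimum these partial derivatives are all equal, so $(M^p_{\ell^\ast})_{kk}=\mu\ell^\ast_k$ for some $\mu$. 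Multiplying by $\ell^\ast_k$'s weights and summing over $k$ gives $\mu=\tr M^p_{\ell^\ast}$. Hence $\Sigma^\ast_{kk}=\frac14(M^p_{\ell^\ast})_{kk}/(\ell^\ast_k\tr M^p_{\ell^\ast})=\frac14$. Also $\Sigma^\ast\succeq0$, so $\Sigma^\ast\in\mathcal{Q}_K$.

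\emph{Step 4 (attainment).} For $\Sigma^\ast$ and $\ell=\ell^\ast$, Step 1 gives $B=M^p/(4\tr M^p)$, writing $M=M_{\ell^\ast}$. Then $B^{1/2}MB^{1/2}=M^{p+1}/(4\tr M^p)$, and so
\[
\tr\left((W\Sigma^\ast W')^{q}\right)=\tr(M^p)\,(4\tr M^p)^{-q}=4^{-q}(\tr M^p)^{1/(p+1)}.
\]
This equals the bound from Step 2. Hence $\Sigma^\ast$ solves (\ref{eq:relax2}).
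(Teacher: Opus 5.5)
Your proposal is correct and follows the paper's proof essentially step for step. It uses the same rescaling $D_\ell^{1/2}\Sigma D_\ell^{1/2}$ with trace $1/4$, the same Schatten--H\"older split with exponents $2$ and $2p$, the same interior first-order condition $(M^p_{\ell^\ast})_{kk}\propto\ell^\ast_k$ to obtain $\Sigma^\ast_{kk}=1/4$, and the same computation $M^{p+1}/(4\tr M^p)$ for attainment. Two small remarks: since $p\ge 1$, the exponent $2p/(p+1)$ lies in $[1,2)$, so the Schatten quantity is a genuine norm and no quasi-norm version of H\"older is needed; and you only use necessity of the first-order condition, which is all this proposition requires, so the paper's convexity argument is not needed here.
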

\begin{cor}
\label{cor:p1}For $p=1$, the relaxed problem (\ref{eq:relax2})
is solved by:
\[
\Sigma^{\ast}_{kl}=\frac{1}{4}\frac{w_{\cdot k}'w_{\cdot l}}{\left\Vert w_{\cdot k}\right\Vert _{2}\left\Vert w_{\cdot l}\right\Vert _{2}},\qquad k,l\in\left\{ 1,\dots,K\right\} ,
\]
where $w_{\cdot k}$ is the $k$th column of $W$. That is, in the
relaxed problem's solution, the correlation of assignments of any
two shocks equals the cosine similarity of the vectors of exposures
to those shocks.
\end{cor}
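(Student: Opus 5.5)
The plan is to specialize Proposition \ref{prop:Optimize_l} to $p=1$ and solve the inner problem (\ref{eq:lstar}) in closed form. With $p=1$, we have $M_\ell = D_\ell^{-1/2}W'WD_\ell^{-1/2}$ and
\[
\tr(M_\ell)=\sum_{k}\frac{(W'W)_{kk}}{\ell_k}=\sum_k \frac{\|w_{\cdot k}\|_2^2}{\ell_k}.
\]
This is minimized over the simplex $\Delta_K$. Write $a_k=\|w_{\cdot k}\|_2$, which is positive since no column of $W$ is zero. The problem $\min_{\ell\in\Delta_K}\sum_k a_k^2/\ell_k$ is strictly convex. Its Lagrangian first-order condition $a_k^2/\ell_k^2=\mu$ gives $\ell_k^\ast=a_k/\sum_m a_m>0$. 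Alternatively, Cauchy--Schwarz gives $(\sum_k a_k)^2\le(\sum_k a_k^2/\ell_k)(\sum_k\ell_k)$, with equality exactly at $\ell\propto a$. Either way the positivity requirement $\ell^\ast>0$ of Proposition \ref{prop:Optimize_l} holds.

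Next, substitute $\ell^\ast$ into the formula for $\Sigma^\ast$. With $p=1$,
\[
\Sigma^\ast=\frac{1}{4}\,\frac{D_{\ell^\ast}^{-1/2}D_{\ell^\ast}^{-1/2}\,W'W\,D_{\ell^\ast}^{-1/2}D_{\ell^\ast}^{-1/2}}{\tr(M_{\ell^\ast})}=\frac14\,\frac{D_{\ell^\ast}^{-1}W'WD_{\ell^\ast}^{-1}}{\tr(M_{\ell^\ast})}.
\]
Entrywise this reads $\Sigma^\ast_{kl}=\tfrac14\, w_{\cdot k}'w_{\cdot l}/(\ell_k^\ast\ell_l^\ast\tr(M_{\ell^\ast}))$. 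Using $\ell_k^\ast=a_k/A$ with $A=\sum_m a_m$, we get $\tr(M_{\ell^\ast})=\sum_k a_k^2 A/a_k=A^2$. The factor $\ell_k^\ast\ell_l^\ast\tr(M_{\ell^\ast})$ therefore equals $a_ka_l$, which yields the cosine-similarity formula. As a sanity check, the diagonal entries equal $1/4$, so $\Sigma^\ast\in\mathcal{Q}_K$. It is also positive semidefinite, being a congruence of the Gram matrix $W'W$.

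I expect no real obstacle, because the corollary is a direct computation once Proposition \ref{prop:Optimize_l} is granted. The only points needing care are confirming that the minimizer is interior, which follows from $a_k>0$, and tracking the normalization by $\tr(M^p_{\ell^\ast})$ correctly.
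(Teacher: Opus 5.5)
Your proposal is correct and follows the paper's own proof: specialize Proposition~\ref{prop:Optimize_l} to $p=1$, note that $\tr(M_\ell)=\sum_k \|w_{\cdot k}\|_2^2/\ell_k$ is minimized at $\ell^\ast_k\propto\|w_{\cdot k}\|_2$, and substitute into the formula for $\Sigma^\ast$. The paper states these steps tersely, and you supply the details it omits: the Lagrangian or Cauchy--Schwarz justification, the check that $\ell^\ast>0$, and the normalization $\tr(M_{\ell^\ast})=\bigl(\sum_m\|w_{\cdot m}\|_2\bigr)^2$.
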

\begin{cor}
\label{cor:sym_diagonal}For $p<\infty$, if all diagonal elements
of $\left(W'W\right)^{p}$ are equal, the relaxed problem (\ref{eq:relax2})
is solved by:
\[
\Sigma^{\ast}=\frac{1}{4}\frac{\left(W'W\right)^{p}}{\tr\left(\left(W'W\right)^{p}\right)/K}.
\]
The Theorem \ref{prop:OptimalIV} optimal IV corresponding to $\Sigma^{\ast}$
is $z^{\ast}=a\cdot\left(W'\right)^{\dag}\tilde{g}$ which depends
on $p$ only through $a=\left(4\tr\left(\left(W'W\right)^{p}\right)/K\right)^{1/(p+1)}$.
Moreover, when $W$ consists of multiple independent blocks with the
diagonal elements of $\left(W'W\right)^{p}$ equal within blocks,
these expressions apply block by block, with $a$ possibly varying
across blocks.
\end{cor}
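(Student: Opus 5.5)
The corollary has two parts: the form of $\Sigma^{\ast}$, and the form of the optimal IV. For the first I will apply Proposition \ref{prop:Optimize_l} with the uniform weight vector $\ell^{\ast}=\one/K$. Then $D_{\ell^{\ast}}=I/K$ and $M_{\ell^{\ast}}=K\,W'W$. The formula for $\Sigma^{\ast}$ gives
\[
\Sigma^{\ast}=\tfrac14\,K\,\frac{K^{p}(W'W)^{p}}{K^{p}\tr((W'W)^{p})},
\]
which is the claimed expression. So everything reduces to showing that $\ell=\one/K$ minimizes $f(\ell)=\tr(M_{\ell}^{p})$ over $\Delta_{K}$.

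\textbf{Optimality of the uniform weights.} Write $A=W'W$ and $M_\ell=D_\ell^{-1/2}AD_\ell^{-1/2}$. The nonzero eigenvalues of $M_\ell$ coincide with those of $A^{1/2}D_\ell^{-1}A^{1/2}$. Hence $f(\ell)=\tr((A^{1/2}D_\ell^{-1}A^{1/2})^{p})$. The map $\ell\mapsto D_\ell^{-1}$ is operator convex on positive diagonals, and $X\mapsto\tr(X^p)$ is convex and monotone on PSD matrices for $p\ge1$. Therefore $f$ is convex on the interior of the simplex, and first-order conditions suffice.

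For the derivative, note that $\partial_{\ell_k}D_\ell^{-1}=-\ell_k^{-2}e_ke_k'$. With $B=A^{1/2}D_\ell^{-1}A^{1/2}$ this gives
\[
\partial_{\ell_k}f=-p\,\ell_k^{-2}\,\bigl(D_\ell^{-1/2}\cdots\bigr)_{kk},
\]
which should simplify to $-p\,\ell_k^{-1}(M_\ell^{p})_{kk}$. At $\ell=\one/K$ we have $M_\ell^p=K^pA^p$, so the gradient is proportional to $-\diag(A^p)$. By hypothesis this is a constant vector. Hence the gradient is orthogonal to the tangent space $\{v:\one'v=0\}$ of the simplex, so $\one/K$ is a stationary point in the relative interior. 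By convexity it is a global minimizer. This also meets the requirement $\ell^{\ast}>0$ in Proposition \ref{prop:Optimize_l}.

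\textbf{Main obstacle.} I expect the main difficulty to be the convexity argument when $A$ is singular. I would handle it through the $A^{1/2}D^{-1}A^{1/2}$ representation, which avoids inverting $A$. As a fallback, I could skip convexity entirely: verify directly that $\Sigma^\ast$ satisfies the KKT conditions of the concave program (\ref{eq:relax2}), with a multiplier $\mu I$ on the diagonal constraints. Concavity of $\Sigma\mapsto\tr((W\Sigma W')^{p/(p+1)})$ follows from operator concavity of $t^{p/(p+1)}$.

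For that check, compute
\[
W\Sigma^\ast W'\propto W(W'W)^pW'=(WW')^{p+1}.
\]
Then $(W\Sigma^\ast W')^{-1/(p+1)}\propto (WW')^{-1}$ on its range. So the gradient $W'(W\Sigma W')^{-1/(p+1)}W\propto W'(WW')^{\dag}W$, which is the projection onto the row space of $W$. This matrix acts as the identity on the range of $\Sigma^\ast$, and complementary slackness with a multiplier $\mu I$ holds there.

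\textbf{Optimal IV.} Set $c=\tfrac14 K/\tr(A^p)$, so that $\Sigma^\ast=cA^p$ and $S=W\Sigma^\ast W'=c\,(WW')^{p+1}$. Theorem \ref{prop:OptimalIV} then gives
\[
z^\ast=(S^{\dag})^{1/(p+1)}W\tilde g=c^{-1/(p+1)}\,(WW')^{\dag}W\tilde g=c^{-1/(p+1)}(W')^{\dag}\tilde g.
\]
Here I use $(WW')^\dag W=(W')^\dag$. The constant $c^{-1/(p+1)}$ equals $a$, which is where the dependence on $p$ enters.

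\textbf{Block case.} Combine the above with Proposition \ref{prop:DesignProperties}, whose block-separability carries over to the relaxed problem. Apply the single-block result within each block. Since each block has its own $\tr(A^p_{(b)})/K_{(b)}$, the constant $a$ can differ across blocks.
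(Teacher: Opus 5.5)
Your proof is correct and follows the paper's route. The paper also shows $\ell^{\ast}=\one_K/K$ is optimal by checking the fixed-point condition of Lemma \ref{lem:lstar_fixed_point}, which is exactly your constant-gradient condition, so you could cite that lemma instead of re-deriving convexity and the derivative; it then plugs $\ell^{\ast}$ into Proposition \ref{prop:Optimize_l}, obtains $z^{\ast}=a(W')^{\dag}\tilde g$ via the thin SVD (the same computation as your identity $(WW')^{\dag}W=(W')^{\dag}$), and treats the block case identically, so your KKT fallback, though valid, is not needed.
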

Corollary \ref{cor:p1} captures a simple intuition: shocks to two
intervention units should be correlated to the extent that they are
connected, in the sense that the same outcome units are exposed to
them. This turns out to be the optimal solution to the relaxed problem
when $p=1$. Corollary \ref{cor:sym_diagonal} extends this result
to arbitrary $p$, as long as each block of $W$ is sufficiently symmetric---a
restrictive condition that nevertheless includes all of the special
cases in Section \ref{subsec:Special-Cases}. The corresponding solution
is particularly intuitive when $W$ is nonnegative and $p$ is an
integer. In that case, the matrix $(W'W)^{p}$ captures $p$th-degree
connections between intervention units $k$. When $p=2$, for example,
the solution suggests correlating two shocks when they are either
connected directly or connected to the same third shock. The worst-case
loss from such longer-range shock correlations is lower with higher
$p$, as error correlations are more restricted. As $p\to\infty$,
this solution correlates all shocks in the same connected component
of the bipartite network defined by $W$.

Corollary \ref{cor:sym_diagonal} further characterizes the IV that
is optimal if the relaxed problem's solution can be implemented (an
issue we discuss below). This optimal IV satisfies $W'z^{\ast}\propto\tilde{g}$,
meaning that partial whitening entails the inverse operation to the
exposure mapping, $\tilde{x}=W\tilde{g}$. For instance, when $x$
involves some averaging of the (correlated) shocks, $z^{\ast}$ performs
their deconvolution.\footnote{Interestingly, the optimal IV is invariant to $p$, up to the proportionality
constant $a$. With larger $p$, the shocks are more strongly correlated;
thus, the same IV that fully whitens the shocks when $p=1$ does only
partial whitening for larger $p$. We also note that when $W$ consists
of independent blocks the constant $a$ varies across blocks; thus,
$p$ affects the reweighting of blocks but not the optimal IV within
blocks.}

All of these results generalize immediately to settings with an intercept
and possibly other predetermined covariates in estimation, by first
residualizing $W$ on those covariates column by column. For instance,
for $p=1$ and when estimation includes an intercept as the only covariate,
Corollary \ref{cor:p1} applied to the demeaned $w_{\cdot k}$ implies
that shock correlations under the relaxed problem's solution are equal
to the correlations of the $N\times1$ exposures to those shocks (rather
than cosine similarities without the intercept). This solution positively
correlates shocks with similar exposure while introducing slight negative
correlations between shocks with non-overlapping exposure.\footnote{That is, if $w_{\cdot k},w_{\cdot l}\ge0$ and $w_{\cdot k}'w_{\cdot l}=0$
(no exposure overlap), $\Sigma^{\ast}_{kl}$ is proportional to the
sample correlation of $w_{\cdot k}$ and $w_{\cdot l}$, which is
negative.}

We can also generalize the relaxed problem to multiple exposures,
where it remains computationally efficient. By the same arguments
as in the Proposition \ref{prop:OptimalDesign_direct} proof, the
relaxed problem is:
\begin{equation}
\min\limits_{c\in\mathbb{R}}\max_{\Sigma\in\mathcal{Q}_{K}}\tr\Big(\left((W-cU)\Sigma(W-cU)')\right)^{\frac{p}{p+1}}\Big).\label{eq:relaxp}
\end{equation}
Here the inner problem has the same structure as before with a single
exposure, with $W$ replaced by $W-cU$; the computationally efficient
form of Proposition \ref{prop:Optimize_l} and its closed-form special
cases thus apply.\footnote{The Proposition \ref{prop:Optimize_l} problem becomes $\min_{c\in\mathbb{R}}\min_{\ell\in\interior{\Delta_{K}}}\tr\left\{ \left(D^{-1/2}_{\ell}\left(W-cU\right)'\left(W-cU\right)D^{-1/2}_{\ell}\right)^{p}\right\} $.}
Additionally, when $W$ and $U$ are jointly block-diagonal, the inner
problem can be split block-by-block for any $c$. The outer problem
is then a scalar optimization, amenable to a golden section search
over a compact subset of $\mathbb{R}$.\footnote{The minimizer is guaranteed to exist in a compact subset of $\mathbb{R}$
as long as $\max\limits_{\Sigma\in\mathcal{Q_{K}}}\tr\Big(\left((W-cU)\Sigma(W-cU)')\right)^{\frac{p}{p+1}}\Big)\ge\tr\Big(\left((W-cU)I/4(W-cU)')\right)^{\frac{p}{p+1}}\Big)\to\infty$
when $\left|c\right|\to\infty$, which requires only that $U\neq0$.}

\subsection{Feasible Implementation\label{subsec:Feasible-Implementation}}

Two challenges remain once a solution $\Sigma^{\ast}$ to the relaxed
problems in Equation (\ref{eq:relax2}) or (\ref{eq:relaxp}) is found.
First, this $\Sigma^{\ast}$ may not correspond to an element of $\mathcal{C}_{K}$:
i.e., it may not be implementable via binary shock vectors. Second,
even if it is theoretically implementable, there is no known computationally
feasible procedure for sampling binary vectors from a generic covariance
matrix. These problems can be overcome in simpler special cases: the
designs in Section \ref{subsec:Special-Cases} are indeed implementations
of the relaxed problem solution.

Outside special cases, we address both challenges by using a simple
Gaussian rounding procedure to sample binary assignments with the
variance matrix approximating $\Sigma^{\ast}$, following \citet{goemans_improved_1995}.
Specifically, we draw $\xi\sim\mathcal{N}(0,\Sigma^{\ast})$ and set
$g^{GR}_{k}=\one\left[\xi_{k}\ge0\right]$.\footnote{This procedure simplifies further in the special case of Corollary
\ref{cor:p1}: draw $\eta\sim\mathcal{N}(0,I_{N})$ and set $g^{GR}_{k}=\one\left[(W'\eta)_{k}\ge0\right]$.
Indeed, we can write $g^{GR}_{k}=\one\left[\xi_{k}\ge0\right]$ for
$\xi_{k}=\sum_{i}w_{ik}\eta_{i}/2\left\Vert w_{\cdot k}\right\Vert _{2}$,
where $(\xi_{1},\dots,\xi_{K})\sim\mathcal{N}(0,\Sigma^{\ast})$ for
$\Sigma^{\ast}$ from Corollary \ref{cor:p1}.} This procedure distorts correlations in a known way: $\corr{}{g^{GR}_{k},g^{GR}_{l}}=\arcsin\left[\corr{}{\xi_{k},\xi_{l}}\right]/\frac{\pi}{2}\ne\corr{}{\xi_{k},\xi_{l}}$;
see \citet{goemans_improved_1995} and the illustration in Appendix
Figure \ref{fig:arcsin}. Given this Gaussian rounding design, we
find the optimal IV based on $S_{GR}=W\var{}{g^{GR}}W^{\prime}$ for
\begin{equation}
\Sigma^{GR}=\var{}{g^{GR}}=\frac{1}{2\pi}\arcsin\left[4\Sigma^{\ast}\right],\label{eq:V}
\end{equation}
with $\arcsin\left[\cdot\right]$ applied entrywise. Algorithm \ref{alg:full}
summarizes the entire procedure.

\begin{algorithm}
\caption{\label{alg:full}Feasible Optimal Experimental Design and IV}

\begin{enumerate}
\item[0.] \textbf{Input:}
\begin{itemize}
\item $N\times K$ matrix of the indirect exposure of interest, $W$
\item \emph{Optional:} $N\times K$ matrix of the other included exposure,
$U$; e.g., $U=I$ for direct exposure
\item \emph{Optional:} $N\times L$ matrix of predetermined covariates,
$R$
\item Tuning parameter $p\in[1,\infty)$. Set lower $p$ for higher robustness
to non-spherical errors
\end{itemize}
\item[1.] If $R\ne\emptyset$, replace $W$ and $U$ with their columnwise
residuals after projecting on $R$
\item[2.] Split $(W,U)$ jointly into independent blocks of rows and columns,
$(W_{(b)},U_{(b)})$, if any
\begin{itemize}
\item \emph{Optional: }to ease computation, at the cost of some approximation
error, also split approximately independent blocks (e.g., if $R=\mathbf{1}$,
the zero elements in the original $(W,U)$ for rows and columns from
different blocks will be replaced in step 1 with $-1/N\approx0$)
\end{itemize}
\item[3.] Fix $c=0$. Solve for the relaxed-optimal shock covariance matrix
$\Sigma^{\ast}$ block-by-block:
\begin{itemize}
\item If $p=1$, use the Corollary \ref{cor:p1} closed-form solution
\item If all diagonal elements of $(W'_{(b)}W_{(b)})^{p}$ are equal, use
the Corollary \ref{cor:sym_diagonal} closed-form solution
\item Else, use the Proposition \ref{prop:Optimize_l} numerical optimization
\end{itemize}
\item[4.] If $U\ne\emptyset$, choose $c_{\delta}$ that solves the optimization
problem in (\ref{eq:relaxp}) by repeating Step 3 for $c\ne0$ and
$W-cU$ replacing $W$ (on a grid or using gradient-based methods)
\item[5.] Randomize shocks using the Gaussian rounding design $\delta$ from
Section \ref{subsec:Feasible-Implementation}. \textbf{Output $g$}
\item[6.] Set the recentered instrument as in Theorem \ref{prop:OptimalIV}
with $\var{\delta}g=\Sigma^{GR}$ from (\ref{eq:V}). \textbf{Output
$z^{\ast}_{\delta}(g)$}
\begin{itemize}
\item If $U\ne\emptyset$, add the simple IV from Theorem \ref{prop:OptimalIV_direct}
for the other exposure. \textbf{Output $\boldsymbol{z}^{\ast}_{\delta}(g)$}
\end{itemize}
\end{enumerate}
\end{algorithm}

Beyond simplicity in implementation, this procedure offers two advantages.
First, Appendix \ref{appx:approx_bound} shows the cost of using this
relaxed and rounded version of the problem is bounded: the worst-case
approximate variance of the resulting estimator is at most $\pi/2$
times that of the original problem.\footnote{We find much smaller approximation errors in applications: for all
designs in Section \ref{sec:Applications}, the worst-case approximate
variance exceeds the variance of the original problem by no more than
6\%. This is computed by comparing the worst-case variance arising
from the Gaussian rounded design to that of the unrounded solution
to the relaxed problem (\ref{eq:relax2}), which is a lower bound
for the variance of the optimal solution to the original problem.} Second, as we next show, the fact that our shocks are a simple transformation
of a correlated Gaussian vector is helpful for establishing a central
limit theorem for the recentered IV estimator and asymptotically valid
inference, even when all errors are mutually correlated.

\subsection{\label{subsec:clt-inference}Asymptotic Normality and Inference}

We now consider the asymptotic behavior of the estimator $\hat{\beta}=\hat{\beta}\left[z^{\ast}\right]$
corresponding to the Gaussian rounding implementation of the Section
\ref{subsec:Computation} relaxed-optimal design, with the Theorem
\ref{prop:OptimalIV} optimal IV $z^{\ast}$, as in Algorithm \ref{alg:full}.
We show that choosing a small integer $p$ can yield sparsity of the
implied shock covariance matrix, which in turn can yield a central
limit theorem for $\hat{\beta}$ and a simple inference procedure.

Let $d(\Sigma)=\max_{k}\sum^{K}_{l=1}\mathbf{1}\left\{ \Sigma_{kl}\neq0\right\} $
measure the sparsity of a covariance matrix $\Sigma$ by the maximum
number of non-zero covariances across all rows. We consider a sequence
of data-generating processes indexed by $K$. For $K\to\infty$ (which
under Assumption \ref{ass:asymptotic} will require $N\to\infty$
as well) we assume:

\begin{assumption}
\label{ass:asymptotic}There are constants $\underline{\lambda}>0$,
\textup{$\bar{\lambda}\in(0,\infty)$,} $\bar{d}<\infty$, $h^{*}\in(0,\infty)$,
and $v^{*}\in(0,\infty)$ such that:
\begin{enumerate}
\item[(a)] \textup{$d(\Sigma^{\ast})\le\bar{d}$ and }$\lambda_{\max}(W'W)\le\bar{\lambda}$
\textup{;}
\item[(b)] \textup{}$\lambda^{+}_{\min}(S_{GR})>\underline{\lambda}$ with $\lambda^{+}_{\min}(\cdot)$
denoting the smallest positive eigenvalue;
\item[(c)] \textup{}$h_{K}=K^{-1}\tr(S^{p/(p+1)}_{GR})\toP h^{*}$;
\item[(d)] \textup{}$v_{K}=K^{-1}\varepsilon'S^{(p-1)/(p+1)}_{GR}\varepsilon\stackrel{p}{\rightarrow}v^{*}$;
\item[(e)] \textup{}$\frac{1}{K^{3/2}}\sum^{K}_{k=1}|b_{k}|^{3}=o_{p}(1)$ for
$b=W'(S^{\dag}_{GR})^{1/(p+1)}\varepsilon$,
\end{enumerate}
where statements (a) and (b) hold with probability $1-o(1)$ and all
probability statements are with respect to the joint distribution
of $(W,\varepsilon,\xi)$.
\end{assumption}
The first part of Assumption \ref{ass:asymptotic}(a) is the key sparsity
condition imposed on the solution to the relaxed problem in Equation
(\ref{eq:relax2}). While the sparsity of $\Sigma^{*}$ is formally
an asymptotic condition, the researcher can heuristically check whether
$d(\Sigma^{\ast})$ is small compared to $K$; at the cost of more
complex primitive conditions, we can accommodate $d(\Sigma^{\ast})$
growing slowly in $K$.\footnote{Theorem \ref{thm:main} can be extended to cases where $\Sigma^{*}$
is approximately sparse, in that it contains many small entries but
has row and column sums that are bounded or grow very slowly. This
extension is particularly useful when predetermined covariates are
included, since then the effective $W$ and corresponding $\Sigma^{*}$
will not generally be sparse. In such cases we could allow the operator
norm of $\Sigma^{*}$ to grow slowly by a smoothing argument and the
second order Poincaré inequality of \citet{chatterjee2009fluctuations},
as applied in \citet{lei2018asymptotics}. We thank Lihua Lei for
pointing this out.} Since $\frac{1}{N}\left\Vert x\right\Vert ^{2}_{2}=\frac{1}{N}\left\Vert Wg\right\Vert ^{2}_{2}\le\lambda_{\max}(W'W)\cdot\frac{1}{K}\left\Vert g\right\Vert ^{2}\cdot\frac{K}{N}$,
the second part precludes $x$ from diverging provided $K\asymp N$.
Appendix \ref{app:Theorem1} provides sufficient conditions for both
parts of this assumption. It shows that $d(\Sigma^{\ast})$ is upper-bounded
by $\left(d(W'W)\right)^{p}$ under the assumptions of Proposition
\ref{prop:Optimize_l}. With $p=\infty$, no sparsity of $\Sigma^{\ast}$
is guaranteed within any connected component of the graph, and so
designs that are optimal for $p=\infty$ may not yield consistent
$\hat{\beta}$ when $W$ does not consist of many independent blocks
and when errors are also strongly dependent. But for smaller integer
$p$, the optimal design only correlates units that are connected
by paths in $W'W$ of lengths $p$ or less. Thus, for finite integer
$p$, both parts of Assumption \ref{ass:asymptotic}(a) are guaranteed
when maximum row and column degree of $W$ as well as the maximum
absolute value of its elements are all bounded. This set of conditions
in turn requires $K\asymp N$ in non-degenerate cases, as also shown
in Appendix \ref{app:Theorem1}.

In addition to sparsity, Assumption \ref{ass:asymptotic} imposes
regularity conditions to ensure the joint distribution of spillover
treatments $x_{i}$ and errors $\varepsilon_{i}$ is well-behaved.
Assumption \ref{ass:asymptotic}(b) precludes near-collinearity of
exposures to ensure that the whitening matrix remains well-conditioned.
Yet, this condition allows $S_{GR}$ to be degenerate, which happens,
e.g., when some observations have the same exposure to all shocks.
Noting that $h_{K}=\expec{\delta}{\frac{1}{K}x'z^{\ast}\mid W}$,
Assumption \ref{ass:asymptotic}(c) implies that $W$ and the experimental
design are such that there is sufficient variation in the spillover
treatments $x_{i}$ so that the first stage is not degenerate asymptotically.
Noting further that $v_{K}=\frac{1}{K}\var{\delta}{\varepsilon'z^{\ast}\mid W,\varepsilon}$,
Assumption \ref{ass:asymptotic}(d) ensures that the finite-sample
variance of $\varepsilon'z^{\ast}$ conditional on $W$ and $\varepsilon$
(scaled appropriately) converges to a deterministic constant. This
is a law of large numbers for the quadratic form $v_{K}$ and it imposes
regularity on the joint distribution of $W$ and $\varepsilon$. For
example, conditional on a sequence of $W$, if $\varepsilon$ is drawn
from a mixture of two deterministic sequences that lead to different
limits for $v_{K}$, then this assumption is violated; when dependence
in $\varepsilon$ is sufficiently weak or local, then this assumption
is satisfied. Assumption \ref{ass:asymptotic}(e) is an anti-concentration
condition for the implicit weights of the numerator $\frac{1}{K}\varepsilon'z^{\ast}$;
combined with the sparsity condition, it ensures that the numerator
can be asymptotically approximated by sums of independent components.

Under these conditions, $\widehat{\beta}$ is $\sqrt{K}$-consistent:
\begin{thm}
\label{thm:main} Under Assumption~ \ref{ass:asymptotic} and the
Gaussian rounding design based on the relaxed problem's solution from
Proposition \ref{prop:Optimize_l}, the Theorem \ref{prop:OptimalIV}
optimal recentered IV estimator $\hat{\beta}$ satisfies 
\[
\sqrt{K}(\bhat-\beta)\toD\mathcal{N}\!\left(0,\frac{v^{*}}{h^{*2}}\right).
\]
\end{thm}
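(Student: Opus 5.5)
The proof writes the estimation error as a ratio and treats numerator and denominator separately. Set $A=(S^{\dag}_{GR})^{1/(p+1)}$, so that $z^{\ast}=AW\tilde g$ with $\tilde g=g^{GR}-\tfrac12\mathbf{1}$. Then
\[
\sqrt{K}(\hat\beta-\beta)=\frac{K^{-1/2}\varepsilon'z^{\ast}}{K^{-1}x'z^{\ast}}=\frac{K^{-1/2}b'\tilde g}{K^{-1}x'z^{\ast}},
\]
where $b=W'A\varepsilon$ is the vector in Assumption \ref{ass:asymptotic}(e). The argument conditions on $(W,\varepsilon)$ throughout. Since $\xi$, and hence $g^{GR}$, is independent of $(W,\varepsilon)$, the conditional randomness comes only from the Gaussian rounding. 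A conditional CLT is then turned into an unconditional one by dominated convergence on characteristic functions, because the limit $\mathcal N(0,v^{\ast})$ does not depend on the conditioning.

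\emph{Denominator.} Since $\mathbb E[x'z^{\ast}\mid W]=\tr(S_{GR}A)=\tr(S^{p/(p+1)}_{GR})=Kh_K$, Assumption \ref{ass:asymptotic}(c) gives the right limit in mean. It remains to show $\var{}{K^{-1}x'z^{\ast}\mid W}\to0$. The statistic is a quadratic form $\tilde g'Q\tilde g$ with $Q=W'AW$. By Assumption \ref{ass:asymptotic}(a),(b), $\|A\|_{op}\le\underline{\lambda}^{-1/(p+1)}$ and $\|W\|^2_{op}\le\bar\lambda$, so $\|Q\|_{op}$ is bounded. The shocks are $d(\Sigma^{\ast})$-dependent: $\Sigma^{\ast}_{kl}=0$ implies $\xi_k\perp\xi_l$ (they are jointly Gaussian), hence $g^{GR}_k\perp g^{GR}_l$. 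The dependency graph of the $g^{GR}_k$ therefore has degree at most $\bar d$. A standard variance bound for quadratic forms in bounded variables under bounded-degree dependency (expand the fourth-order cumulant terms and use $\tr(Q^2)\le K\|Q\|^2_{op}$) gives a variance of order $\bar d^{\,c}\,\|Q\|^2_{op}/K\to0$. Hence $K^{-1}x'z^{\ast}\toP h^{\ast}$.

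\emph{Numerator.} Conditionally, $T=K^{-1/2}\sum_k b_k\tilde g_k$ is a linear statistic with mean zero and variance
\[
K^{-1}b'\Sigma^{GR}b=K^{-1}\varepsilon'AS_{GR}A\varepsilon=K^{-1}\varepsilon'S^{(p-1)/(p+1)}_{GR}\varepsilon=v_K\toP v^{\ast}
\]
by Assumption \ref{ass:asymptotic}(d). Here I use that $A$ and $S_{GR}$ commute and the pseudoinverse convention $A^0=A^\dag A$.

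I then apply Stein's method for sums over a dependency graph of maximum degree $\bar d$ (e.g.\ the Baldi--Rinott / Chen--Goldstein--Shao bound). Because $|\tilde g_k|\le1/2$, the Wasserstein distance to $\mathcal N(0,v_K)$ is bounded by a constant times $\bar d^{\,2}K^{-3/2}\sum_k|b_k|^3/v_K^{3/2}$, plus a fourth-moment term bounded by $\bar d^{\,3/2}(K^{-2}\sum_k b_k^4)^{1/2}/v_K$. The first term is $o_p(1)$ by Assumption \ref{ass:asymptotic}(e). For the second, $\sum_k b_k^4\le\max_k|b_k|\sum_k|b_k|^3$, and $\max_k|b_k|\le(\sum_k|b_k|^3)^{1/3}=o_p(K^{1/2})$. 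This gives $K^{-2}\sum_k b_k^4\le o_p(K^{1/2})\,o_p(K^{3/2})/K^2=o_p(1)$. Since $v^{\ast}>0$, $T\mid(W,\varepsilon)\Rightarrow\mathcal N(0,v^{\ast})$ in probability, and the unconditional result follows as described above. Slutsky's lemma then combines numerator and denominator to give $\mathcal N(0,v^{\ast}/h^{\ast2})$.

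\emph{Main obstacle.} I expect the hardest step to be the denominator's concentration: bounding the variance of the quadratic form in dependent rounded Gaussians while keeping the operator norm of the whitening matrix $A$ under control. Assumption \ref{ass:asymptotic}(b) is needed here to cope with the pseudoinverse on the possibly degenerate $S_{GR}$. The events in Assumption \ref{ass:asymptotic}(a),(b) hold only with probability $1-o(1)$, so I would work on those events and absorb their complement into the $o_p$ terms.
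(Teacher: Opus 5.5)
Your argument matches the paper's in all essentials. It uses the same ratio decomposition with $b=W'(S^{\dag}_{GR})^{1/(p+1)}\varepsilon$, then a conditional dependency-graph CLT for the numerator (normalized by $v_K$), lifted to an unconditional statement by bounded convergence. The denominator is handled by a conditional-variance bound using $\|W'AW\|_{op}\le\bar\lambda\,\underline{\lambda}^{-1/(p+1)}$ and $\operatorname{tr}(Q^2)\le K\|Q\|_{op}^2$, and Slutsky finishes.

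One step needs a small repair. The denominator is not the pure quadratic form $\tilde g'Q\tilde g$: since $x=Wg=W(\tilde g+\tfrac12\mathbf{1})$, you have $x'z^{\ast}=\tilde g'Q\tilde g+\tfrac12\mathbf{1}'Q\tilde g$. The extra linear term has mean zero and conditional variance $\tfrac{1}{4K^{2}}\mathbf{1}'Q\Sigma^{GR}Q\mathbf{1}\le\tfrac{1}{4K}\|Q\|_{op}^{2}\|\Sigma^{GR}\|_{op}$. This is $O(1/K)$ once you note $\|\Sigma^{GR}\|_{op}\le(\bar d+1)/4$, which holds because $\Sigma^{GR}$ has the same zero pattern as $\Sigma^{\ast}$ and entries bounded by $1/4$. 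The paper bounds this term separately, using the same operator-norm bound on $\Sigma^{GR}$.

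The one genuine difference is the normal-approximation tool. The paper uses Chen and Shao's (2004, Theorem 2.4) Kolmogorov bound under their LD3 condition. That bound needs only third moments, but it requires tracking nested neighborhoods $N^{1}[k]\subseteq N^{2}[k]\subseteq N^{3}[k]$ and produces a $(\bar d+1)^{10}$ constant. Your one-step Wasserstein bound is simpler to set up but carries an additional fourth-moment term. Your treatment of that term is correct; equivalently, $K^{-2}\sum_k b_k^4\le\bigl(K^{-3/2}\sum_k|b_k|^3\bigr)^{4/3}=o_p(1)$. So Assumption~\ref{ass:asymptotic}(e) suffices for either route.

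A minor wording point: $\xi$ is not independent of $W$, since $\Sigma^{\ast}$ is built from $W$. What you actually use is that $\xi\mid(W,\varepsilon)\sim\mathcal{N}(0,\Sigma^{\ast})$.
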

\noindent Asymptotically valid inference is straightforward from
this result using the normal approximation and a simple plug-in estimator
for the standard error, $\sqrt{\hat{v}/(Kh^{2}_{K})}$. Here $S_{GR}$
and therefore $h_{K}$ are known. Moreover, the plug-in estimator
for $v^{\ast}$ is straightforward to compute: $\hat{v}=\frac{1}{K}\hat{\varepsilon}\,'S^{(p-1)/(p+1)}_{GR}\hat{\varepsilon}$
for $\hat{\varepsilon}=y-\hat{\beta}x$. The convergence in probability
of $\hat{v}$ to $v^{*}$ follows from consistency of $\widehat{\beta}$
and Assumption \ref{ass:asymptotic} (see Appendix \ref{app:Theorem1}).

\section{Applications\label{sec:Applications}}

\subsection{Bipartite Experiment: Cai et al. (2015)}

\paragraph{Setup.}

Our first application is the bipartite experiment of \citet{cai_social_2015}
who study how social networks affect the adoption of weather insurance
by randomly assigning rice farmers in villages in China to simple
vs. intensive information sessions over two rounds. They collect information
on demographics, friendship, and insurance take-up for each farmer.
We calibrate a simulation to a simplified version of the social network
effect specification in Table 2, Column 2 of their paper. For the
sample of farmers assigned to either the simple or intensive information
session in the second round, it examines how insurance take-up varies
with the fraction of friends that are treated in the first round.
Specifically, our outcome units $i$ are those randomized in the second
round ($N=1,274$),\footnote{More precisely, outcome units are those used to estimate spillover
effects on insurance take-up (from “Type 1” villages) rather
than price effects.} and our intervention units $k$ are those randomized in the first
round with at least one friend in the second round ($K=995$). As
in \citet{cai_social_2015}, we define $x_{i}=\sum_{k}w_{ik}g_{k}$
where $w_{ik}$ equals an indicator that $i$ and $k$ are friends
divided by the total number of friends $i$ has (including friends
outside the first-round sample). Thus, $x_{i}$ measures the percentage
of the second round unit's friends who were treated in the first round.
Each $i$ and $k$ is assigned to one of 44 administrative villages,
which are administrative units that each combine multiple local communities;
throughout this analysis we drop 10 of 6,251 friendships that are
across administrative villages so $W$ has block-diagonal structure
that eases computational burden. The top row of Figure \ref{fig:Applications}(a)
shows the exposure matrix $W$ for one example village, along with
the $W'W$ matrix that captures similarity in exposure weights across
intervention units.

We estimate the specification $y_{i}=\alpha+\beta x_{i}+\upsilon_{i}$
using recentered IV without whitening. Our simple specification has
a smaller estimated coefficient of $\beta^{\ast}=0.1408$ compared
to the specification in \citet{cai_social_2015}, which uses OLS as
an estimator and includes an additional set of controls and village
fixed effects, rather than recentering.

For $S=1000$ simulations, we generate the outcome as $y_{i}=\beta^{*}x_{i}+\varepsilon_{i}$,
where $g_{k}$ varies by the experimental design, $W$ is fixed, and
we draw $\varepsilon=(\varepsilon_{i})$ from six different data-generating
processes (DGPs), i.e., joint distributions that vary the structure
of the $N\times N$ matrix $\expec{\mathcal{E}}{\varepsilon\varepsilon^{\prime}}$.\footnote{While the original outcomes in \citet{cai_social_2015} are binary,
our generated outcomes are continuous.} For the first five DGPs, the distribution of the errors is calibrated
so that the standard error of the unwhitened recentered IV estimator
of $\beta$ under a 50\% Bernoulli RCT matches the standard error
of the estimated coefficient in the data. The error distributions
for the simulation are as follows:
\begin{itemize}
\item Homoskedastic: $\varepsilon_{i}\stackrel{iid}{\sim}\mathcal{N}(0,\nu^{2})$,
with $\nu=0.56$;
\item Heteroskedastic: $\varepsilon_{i}\sim\mathcal{N}(0,\nu^{2}(1+\gamma d_{i}))$,
independent across units $i$, where $d_{i}$ is the number of $i$'s
first-round friends, $\nu=0.34$ and $\gamma=0.82$, which implies
50\% of the variance is driven by the heteroskedastic component;
\item Degree-in-Mean: $\varepsilon_{i}=\gamma d_{i}+\nu\zeta_{i}$, where
$\zeta_{i}\stackrel{iid}{\sim}\mathcal{N}(0,1)$, $\gamma=0.29$,
and $\nu=0.31$, with $\gamma$ calibrated such that 50\% of the cross-sectional
variation of $\varepsilon_{i}$ arises from the mean;
\item Village-Correlated: $\varepsilon_{i}=\nu\cdot(\zeta_{i}+u_{v(i)})$,
where $\nu=0.33$, $v(i)$ is the village of individual $i$, and
$u_{v}\stackrel{iid}{\sim}\mathcal{N}(0,1)$, $\zeta_{i}\stackrel{iid}{\sim}\mathcal{N}(0,1)$;
\item Network-Correlated: $\varepsilon=\nu\zeta+\gamma W_{\text{raw}}\eta$,
where $\zeta_{i}\stackrel{iid}{\sim}\mathcal{N}(0,1)$, $\eta_{k}\stackrel{iid}{\sim}\mathcal{N}(0,1)$,
and $W_{\text{raw}}$ is the un-normalized friendship matrix, so $W_{\text{raw},i,k}=1$
if $i$ and $k$ are friends and is otherwise 0. The parameters $\gamma=0.29$
and $\nu=0.18$ are calibrated so that 50\% of the variance comes
from the network component;
\item Estimated Residuals: $\varepsilon_{i}=\hat{\upsilon}_{i}$, the residuals
from the regression on the real data.
\end{itemize}
\begin{figure}
\caption{Spillover Structure and Optimal Designs in Applications\label{fig:Applications}}

\begin{centering}
(a) An example village in \citet{cai_social_2015}, $K=34$, $N=44$
\par\end{centering}
\begin{centering}
\includegraphics[width=0.95\textwidth]{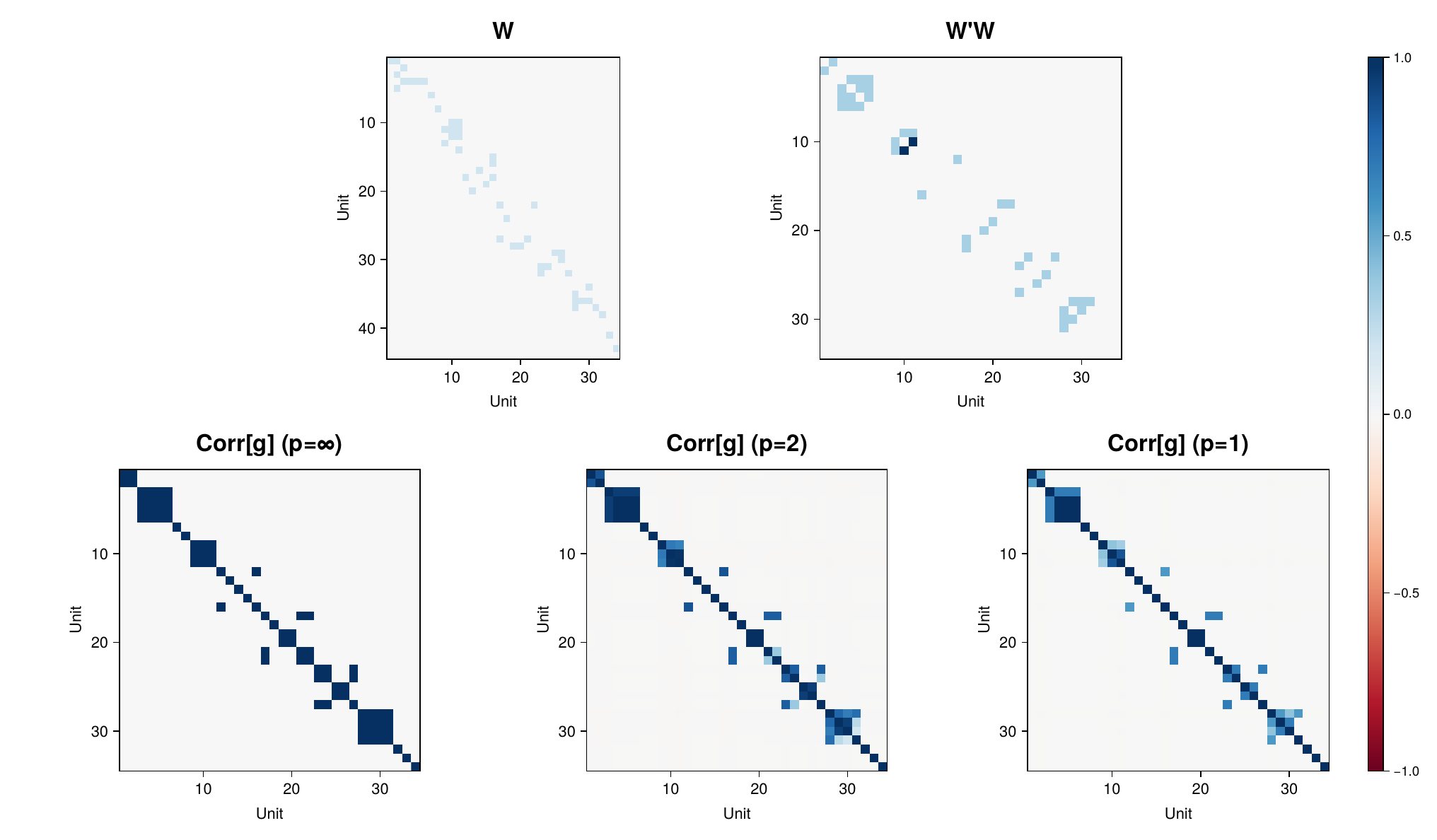}
\par\end{centering}
\begin{centering}
\medskip{}
(b) \citet{Miguel2004}, $K=N=49$
\par\end{centering}
\begin{centering}
\includegraphics[width=0.95\textwidth]{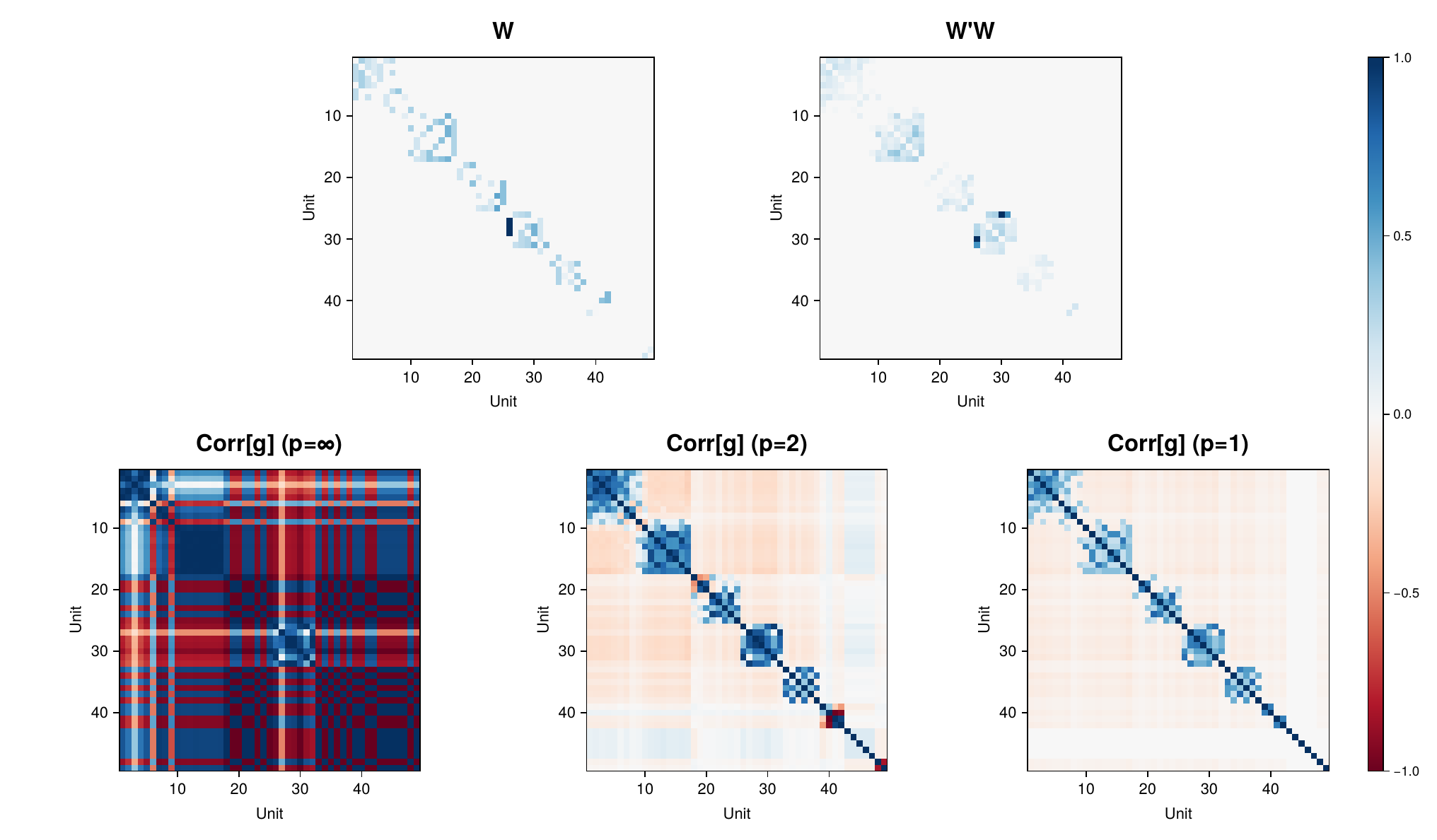}\smallskip{}
\par\end{centering}
{\small\emph{Notes}}{\small : Panel (a) shows an example of one village
(Xinlian) in the \citet{cai_social_2015} data. The first row shows
the exposure matrix $W$ and the matrix $W'W$ with the $kl$ element
measuring the extent to which $k$ and $l$ are friends of the same
farmers $i$. The second row shows the correlation matrices of the
Algorithm \ref{alg:full} feasible optimal designs corresponding to
$p=2$ and $p=1$, along with the analogous solutions for $p=\infty$.
Panel (b) reports the same objects for the full set of schools in
\citet{Miguel2004}, except $W$ and $W'W$ are normalized to be between
0 and 1 by dividing each matrix element-wise by its maximum value.}{\small\par}
\end{figure}

\paragraph{Results.}

Following Algorithm \ref{alg:full}, we solve for the feasible optimal
design and recentered IV estimator of the indirect effect for several
choices of $p$. The bottom row of Figure \ref{fig:Applications}(a)
plots a correlation heatmap for these designs. As our theory suggests,
$p=\infty$ involves perfectly correlating all treatments within clusters
of connected farmers (with independent randomization across villages),
while lower $p$ implies more diffuse correlation structures.\footnote{Appendix Figure \ref{fig:frontier} shows how the optimal designs
in both applications trade off isotropy, as measured by $\tr(S_{\delta^{*}})/\lambda_{\max}(S_{\delta^{*}})$,
and signal $\tr(S_{\delta^{*}})$. As expected, the frontier moves
down and to the right as $p\rightarrow\infty$. Notably, the baseline
independent Bernoulli design (indicated by the red RCT dot) has higher
isotropy than any optimal design but also much lower signal.}

\begin{table}[tp]
\caption{RMSE and Effective Sample Size Gains for the \citet{cai_social_2015}
Application\label{tab:cai}}

\begin{centering}
\begin{tabular}{l*{6}{c}}
\toprule
 & \shortstack{Homo- \\ skedastic} & \shortstack{Hetero- \\ skedastic} & \shortstack{Deg.-in- \\ Mean} & \shortstack{Village- \\ Corr.} & \shortstack{Network- \\ Corr.} & \shortstack{Estimated \\ Residuals} \\
Design and Estimator: & (1) & (2) & (3) & (4) & (5) & (6) \\
\midrule
\shortstack[l]{RCT Benchmark \\ \strut} & \shortstack{0.150 \\ {[+0\%]}} & \shortstack{0.139 \\ {[+0\%]}} & \shortstack{0.142 \\ {[+0\%]}} & \shortstack{0.138 \\ {[+0\%]}} & \shortstack{0.141 \\ {[+0\%]}} & \shortstack{0.129 \\ {[+0\%]}} \\
\shortstack[l]{Optimal, $p=1$ \\ \strut} & \shortstack{0.131 \\ {[+32\%]}} & \shortstack{0.123 \\ {[+27\%]}} & \shortstack{0.121 \\ {[+37\%]}} & \shortstack{0.115 \\ {[+44\%]}} & \shortstack{0.107 \\ {[+73\%]}} & \shortstack{0.111 \\ {[+35\%]}} \\
\shortstack[l]{Optimal, $p=2$ \\ \strut} & \shortstack{0.115 \\ {[+70\%]}} & \shortstack{0.111 \\ {[+57\%]}} & \shortstack{0.119 \\ {[+41\%]}} & \shortstack{0.105 \\ {[+72\%]}} & \shortstack{0.106 \\ {[+75\%]}} & \shortstack{0.103 \\ {[+55\%]}} \\
\shortstack[l]{Optimal, $p=\infty$ \\ \strut} & \shortstack{0.095 \\ {[+151\%]}} & \shortstack{0.099 \\ {[+96\%]}} & \shortstack{0.133 \\ {[+13\%]}} & \shortstack{0.117 \\ {[+39\%]}} & \shortstack{0.107 \\ {[+73\%]}} & \shortstack{0.099 \\ {[+71\%]}} \\
\bottomrule
\end{tabular}

\smallskip{}
\par\end{centering}
{\small\emph{Notes}}{\small : For the \citet{cai_social_2015} application,
this table reports the root mean-squared error of different design-estimator
pairs (rows) under different data-generating processes for the errors
(columns). The first row is the baseline of independent Bernoulli
assignment and unwhitened recentered IV. The other rows are the Algorithm
\ref{alg:full} feasible optimal design and optimal IV for different
finite values of $p$ (and the analogous solutions for $p=\infty$),
with $R=\one$. The text describes the six error DGPs. Increases in
effective sample sizes, given in brackets, are computed as the squared
ratio of inverse RMSE under the focal scenario relative to the RCT
baseline with the same error distributions, minus one.}{\small\par}
\end{table}

Table \ref{tab:cai} reports root mean-squared error (RMSE) for the
optimal recentered IV estimator from Theorem \ref{prop:OptimalIV};
RMSE represents the standard deviation of the recentered IV estimator
since the bias is negligible. All estimators include an intercept.\footnote{Correspondingly, Algorithm \ref{alg:full} is applied with $R=\one$.
We use the optional part of step 2 to block-diagonalize the residualized
$W$. This allows us to solve for the optimal design village by village,
with independent randomization across villages.} The columns correspond to the error structures described above, while
the rows correspond to the optimal designs, along with a benchmark
of a simple Bernoulli experiment with a 50\% treatment probability
and a recentered IV estimator without whitening. In brackets, we also
include the gain in effective sample size, defined as the squared
RMSE ratio relative to the benchmark.

All optimal designs beat a standard RCT, which is not optimized to
estimate spillover effects. Under homoskedastic errors, the design
with $p=\infty$ (i.e., perfectly correlating treatments within blocks)
works best. But under more complex and realistic error structures,
the advantage of extreme shock correlations deteriorates: the $p=2$
design is best for the DGP in columns 3--5 with network-dependent
or correlated errors. This choice of $p$ also shows robust power
gains across all columns, equivalent to increasing sample size of
41--75\% in all columns.

To isolate the gain from the optimal design vs. the optimal estimator,
Appendix Table \ref{tab:cai-extra} reports RMSE of the recentered
IV estimator without whitening for the $p$-optimal designs. On one
hand, the $p=\infty$ design involves no whitening yet achieves substantial
improvements across most error distributions---suggesting a key role
of design. Moreover, for simple (e.g., homoskedastic) errors whitening
only reduces precision gains. On the other hand, column 5 shows that
with more challenging error distributions extra power gains due to
the $p=1$ and $p=2$ designs are only realized if the instrument
is also whitened. Appendix Table \ref{tab:Coverage-Cai} also includes
an analysis of coverage for confidence intervals built using the normal
approximation from Theorem \ref{thm:main}. Since the network of \citet{cai_social_2015}
is made up of many independent villages, all designs and simulations
have close to nominal coverage, even when shocks are perfectly correlated
within villages under $p=\infty$.

\subsection{Direct and Indirect Effects: Miguel and Kremer (2004)}

\paragraph{Setup.}

Our second application is in the setting of \citet{Miguel2004} who
estimate the direct and spillover effects of a randomized school-level
deworming treatment on helminth infection rates among children in
Kenya. The intervention shocks $g_{i}$ are at the school level; slightly
simplifying the original analysis, we define the outcome $y_{i}$
as the average infection rate in school $i$. Thus, the outcome and
intervention units are the same set of $N=K=49$ schools (specifically,
Groups 1 and 2 of the experiment for which health outcomes are observed).
Following \citet{Miguel2004}, the spillover treatment $x_{i}$ is
the number (rather than the share) of students in nearby schools that
were treated, rescaled by 1,000. This is formalized by the $N\times N$
adjacency matrix $W$ with entry $w_{ik}=E_{k}/1000$ if school $i$
is within 3km of school $k\ne i$, where $E_{k}$ is the number of
pupils eligible for the treatment in school $k$, and 0 otherwise
(including $w_{ii}=0$ for all $i$). The top row of Figure \ref{fig:Applications}(b)
shows the exposure matrix $W$ and the matrix of similarity in exposure
weights $W'W$.

We calibrate our simulation to a simplified version of column (1)
in Table VII of \citet{Miguel2004}. We estimate a school-level regression
$y_{i}=\alpha+\beta x_{i}+\tau g_{i}+\upsilon_{i}$ using recentered
IV without whitening. This yields coefficients $\beta^{\ast}=-0.226$
and $\tau^{\ast}=-0.258$.\footnote{Our specification differs from \citet{Miguel2004} in several ways.
We use a linear model rather than probit, we recenter rather than
including additional controls, and we drop a third treatment variable
based on exposure to schools between 3 and 6 km away. Despite these
simplifications, our estimates are very similar to the corresponding
average marginal effects in the original paper, of -0.26 for the indirect
effect and -0.25 for the direct effect.}

For $S=1,000$ simulations, we generate the outcome as
\begin{equation}
y_{i}=\beta^{*}x_{i}+\tau^{*}g_{i}+\varepsilon_{i},\label{eq:MK_specification}
\end{equation}
where again the distribution of $g$ depends on the experimental design,
$W$ is fixed, and we vary $\expec{\mathcal{E}}{\varepsilon\varepsilon'}$
according to the same six DGPs described above, with a few minor changes.
We replace the village-correlated DGP with a “cluster-correlated”
one by creating five clusters of schools using spectral clustering
on $W$ and use random cluster shocks. For the network-correlated
DGP, we set $W_{\text{raw}}=W$. We calibrate the simulations like
in the first application: to match the estimated standard error of
the unwhitened recentered IV estimator and targeting the same shares
of variation explained by different error components.\footnote{The corresponding parameters are{\small{} as follows. Homoskedastic:
$\nu=0.17$; heteroskedastic: $\gamma=0.4$, $\nu=0.12$; degree-in-mean:
$\gamma=0.056$, $\nu=0.085$, cluster-correlated: $\nu=0.10$; network-correlated:
$\gamma=0.12,\nu=0.05$.}}

\begin{table}[tp]
\caption{RMSE and Effective Sample Size Gains for the \citet{Miguel2004} Application\label{tab:RMSE-MK}}

\begin{centering}
(a) Spillover Effect\smallskip{}
\par\end{centering}
\begin{centering}
\begin{tabular}{l*{6}{c}}
\toprule
 & \shortstack{Homo- \\ skedastic} & \shortstack{Hetero- \\ skedastic} & \shortstack{Deg.-in- \\ Mean} & \shortstack{Cluster- \\ Corr.} & \shortstack{Network- \\ Corr.} & \shortstack{Estimated \\ Residuals} \\
Design and Estimator: & (1) & (2) & (3) & (4) & (5) & (6) \\
\midrule
\shortstack[l]{RCT Benchmark \\ \strut} & \shortstack{0.101 \\ {[+0\%]}} & \shortstack{0.099 \\ {[+0\%]}} & \shortstack{0.128 \\ {[+0\%]}} & \shortstack{0.106 \\ {[+0\%]}} & \shortstack{0.109 \\ {[+0\%]}} & \shortstack{0.179 \\ {[+0\%]}} \\
\shortstack[l]{Optimal, $p=1$ \\ \strut} & \shortstack{0.085 \\ {[+43\%]}} & \shortstack{0.087 \\ {[+30\%]}} & \shortstack{0.087 \\ {[+114\%]}} & \shortstack{0.075 \\ {[+98\%]}} & \shortstack{0.064 \\ {[+191\%]}} & \shortstack{0.120 \\ {[+124\%]}} \\
\shortstack[l]{Optimal, $p=2$ \\ \strut} & \shortstack{0.070 \\ {[+110\%]}} & \shortstack{0.075 \\ {[+76\%]}} & \shortstack{0.069 \\ {[+241\%]}} & \shortstack{0.070 \\ {[+128\%]}} & \shortstack{0.064 \\ {[+190\%]}} & \shortstack{0.114 \\ {[+148\%]}} \\
\shortstack[l]{Optimal, $p=\infty$ \\ \strut} & \shortstack{0.052 \\ {[+277\%]}} & \shortstack{0.056 \\ {[+215\%]}} & \shortstack{0.049 \\ {[+570\%]}} & \shortstack{0.067 \\ {[+147\%]}} & \shortstack{0.057 \\ {[+267\%]}} & \shortstack{0.111 \\ {[+160\%]}} \\
\bottomrule
\end{tabular}\bigskip{}
\par\end{centering}
\begin{centering}
(b) Direct Effect\smallskip{}
\par\end{centering}
\begin{centering}
\begin{tabular}{l*{6}{c}}
\toprule
 & \shortstack{Homo- \\ skedastic} & \shortstack{Hetero- \\ skedastic} & \shortstack{Deg.-in- \\ Mean} & \shortstack{Cluster- \\ Corr.} & \shortstack{Network- \\ Corr.} & \shortstack{Estimated \\ Residuals} \\
Design and Estimator: & (1) & (2) & (3) & (4) & (5) & (6) \\
\midrule
\shortstack[l]{RCT Benchmark \\ \strut} & \shortstack{0.052 \\ {[+0\%]}} & \shortstack{0.049 \\ {[+0\%]}} & \shortstack{0.040 \\ {[+0\%]}} & \shortstack{0.041 \\ {[+0\%]}} & \shortstack{0.034 \\ {[+0\%]}} & \shortstack{0.064 \\ {[+0\%]}} \\
\shortstack[l]{Optimal, $p=1$ \\ \strut} & \shortstack{0.054 \\ {[-7\%]}} & \shortstack{0.054 \\ {[-18\%]}} & \shortstack{0.042 \\ {[-7\%]}} & \shortstack{0.046 \\ {[-22\%]}} & \shortstack{0.036 \\ {[-14\%]}} & \shortstack{0.067 \\ {[-10\%]}} \\
\shortstack[l]{Optimal, $p=2$ \\ \strut} & \shortstack{0.054 \\ {[-7\%]}} & \shortstack{0.051 \\ {[-7\%]}} & \shortstack{0.039 \\ {[+6\%]}} & \shortstack{0.045 \\ {[-18\%]}} & \shortstack{0.040 \\ {[-27\%]}} & \shortstack{0.068 \\ {[-12\%]}} \\
\shortstack[l]{Optimal, $p=\infty$ \\ \strut} & \shortstack{0.051 \\ {[+6\%]}} & \shortstack{0.048 \\ {[+5\%]}} & \shortstack{0.040 \\ {[+1\%]}} & \shortstack{0.046 \\ {[-20\%]}} & \shortstack{0.038 \\ {[-23\%]}} & \shortstack{0.100 \\ {[-59\%]}} \\
\bottomrule
\end{tabular}\smallskip{}
\par\end{centering}
{\small\emph{Notes}}{\small : For the \citet{Miguel2004} application,
Panel (a) reports the root mean-squared error of different design-estimator
pairs (rows) for the spillover effect under different data-generating
processes for the errors (columns). Panel (b) reports the same for
the direct effect. The first row is the baseline of independent Bernoulli
assignment and unwhitened recentered IV. The other rows are the Algorithm
\ref{alg:full} feasible optimal design and optimal IV for different
finite values of $p$ (and the analogous solutions for $p=\infty$),
with $R=\one$. The text describes the six error DGPs. Increases in
effective sample sizes, given in brackets, are computed as the squared
ratio of inverse RMSE under the focal scenario relative to the RCT
baseline with the same error distributions, minus one.}{\small\par}
\end{table}

\paragraph{Results.}

We follow Algorithm \ref{alg:full} to solve for the feasible design
optimized for the \emph{indirect} effect in a linear model (\ref{eq:MK_specification})
that also includes a direct effect. As in the first application, we
include the intercept for all estimators.\footnote{We again use the algorithm with $R=\one$. Here, however, we do not
need the optional part of step 2.}

The bottom row of Figure \ref{fig:Applications}(b) presents the correlation
heatmap for the optimal designs used in the simulations. Now that
the direct effect is in the model, we no longer have a perfectly correlated
design when $p=\infty$. As $p$ decreases, the optimal design from
Proposition \ref{prop:OptimalDesign_direct} weakens both positive
and negative correlations of the assignments to protect against adversarially
correlated (again, positively or negatively) error distributions.

The RMSE results for the indirect and direct effects are in Table
\ref{tab:RMSE-MK}, including in brackets the effective sample size
gain compared to the baseline RCT design with an unwhitened recentered
IV estimator. For estimating the spillover effect, Panel (a) shows
that all optimal designs beat the simple RCT under all error structures.
Selecting $p=\infty$ gives the best performance in all columns. Error
correlations are not adversarial enough for $p=2$ to dominate the
other designs, as it did in the previous application,\footnote{A possible reason is that here $x_{i}$ is the number (rather than
share) of treated neighbors, so there is more scope for increasing
signal by correlating the assignments---which is what the $p=\infty$
design targets.} but it remains a good choice, especially with more complex error
structures. Overall, the power gains are even more substantial than
in the first application: with the $p=2$ choice, they range from
+76\% in column 2 to +241\% in column 3, while with $p=\infty$ they
reach +570\% in column 3.\footnote{To isolate the performance impact of the optimal design vs. the optimal
estimator, Appendix Table \ref{tab:MK-extra} reports RMSE of the
recentered IV estimator without whitening for the $p$-optimal designs.} Yet, Appendix Table \ref{tab:Coverage-MK} provides a reason to prefer
a conservative choice of $p$: under $p=\infty$, the resulting covariance
matrix of the shocks is too dense to meet the conditions of Theorem
\ref{thm:main}. For the simulations with correlated errors, confidence
intervals built using the normal approximation under the $p=\infty$
design severely undercover the target parameter, while the $p=1$
design has close to nominal coverage under all DGPs.

Since our proposed designs are optimized for the indirect effect,
they can lead to some deterioration of performance of the direct effect
estimator, as reported in Panel (b) of Table \ref{tab:RMSE-MK}. For
$p=\infty$ this deterioration can be substantial, equivalent to a
59\% reduction in sample size in column 6. However, for $p=1$ and
$p=2$ the deterioration is mild, of at most 22\% and 27\%, respectively.

\section{Conclusion\label{sec:Conclusion}}

We have shown how researchers can design experiments and choose recentered
estimators to more precisely estimate spillover effects, under weak
restrictions on the distribution of unobservables. The core insight
of this approach is that spillover signal can be increased, by correlating
the assignments of connected units, without overly concentrating variation
in directions where the unobservables may be adversarially clustered.
Balancing this tradeoff between signal and isotropy yields a tractable
minimax problem with intuitive special cases and practical approaches
to computation. In two semi-synthetic experiments, based on high-profile
applications, we find large decreases in spillover effect standard
errors relative to conventional randomization and estimation that
are possible without large deterioration in the standard errors for
the direct effects in the application with such effects.

Several open questions remain. First, while we have focused on linear
spillover treatments, and derived theoretical extensions to nonlinear
spillover treatment constructions, computation of optimal designs
in nonlinear settings remains a challenge. Second, though we have
characterized the optimal design and estimator for spillovers in the
presence of direct effects, we have not presented the more general
problem of minimizing a weighted average of their variances. Third,
we have not shown how the optimal design problem changes when a researcher
has access to some prior information on the distribution of unobservables,
such as from a first-wave pilot (\citealp{viviano_experimental_2025}).
Finally, we have not studied how the minimax approach changes when
a researcher is interested in estimating a more complicated parametric
model of spillovers, such as from a structural economic model (\citealp{borusyak_estimating_2025}).
We hope to address some of these extensions in future drafts.

\begin{singlespace}
\addcontentsline{toc}{section}{References}

\bibliographystyle{aer}
\bibliography{OptimalRCT}

\end{singlespace}

\appendix
\newpage\setcounter{equation}{0} \renewcommand{\theequation}{A\arabic{equation}}
\setcounter{prop}{0} \renewcommand{\theprop}{A\arabic{prop}}
\setcounter{lem}{0} \renewcommand{\thelem}{A\arabic{lem}}\setcounter{table}{0} \renewcommand{\thetable}{A\arabic{table}}
\setcounter{figure}{0} \renewcommand{\thefigure}{A\arabic{figure}}
\begin{center}
{\LARGE\textbf{Appendix}}{\LARGE\par}
\par\end{center}

\section{Proofs of Main Results\label{appx:Proofs-of-Main}}

\subsection{\label{appx:Proof-Thm1}Theorem \ref{prop:OptimalIV}}

We drop the $\delta$ subscript from $\tilde{x}$, $S$, and $z^{\ast}$
to simplify notation. Write $\Omega_{\varepsilon}=\expec{\mathcal{E}}{\varepsilon\varepsilon'}$
and, for any recentered instrument $z$, $Q_{z}=\expec{\delta}{zz'}$.
The approximate variance of the recentered IV estimator can be written
\[
\V{\delta,\mathcal{E}}z{}=\frac{\expec{\delta}{z'\varepsilon\varepsilon'z}}{\expec{\delta}{z'x}^{2}}=\frac{\tr\left(Q_{z}\Omega_{\varepsilon}\right)}{\expec{\delta}{z'\tilde{x}}^{2}}
\]
where the denominator uses $\expec{\delta}{z'\expec{\delta}x}=0$.
By the duality of Schatten $p$- and $q$-norms where $q\in[1,\infty]$,
such that $1/p+1/q=1$, the minimax approximate variance satisfies:
\begin{align}
\max_{\mathcal{E}\in\mathcal{F}_{p}(\sigma)}\V{\delta,\mathcal{E}}z{} & =\sup_{\Omega_{\varepsilon}\succeq0,\left\Vert \Omega_{\varepsilon}\right\Vert _{p}\le N^{1/p}\sigma^{2}}\frac{\text{tr}(Q_{z}\Omega_{\varepsilon})}{\expec{\delta}{z'\tilde{x}}^{2}}=N^{1/p}\sigma^{2}\frac{\left\Vert Q_{z}\right\Vert _{q}}{\expec{\delta}{z'\tilde{x}}^{2}}.\label{eq:minimax_var}
\end{align}
Note that the additional constraint that $\Omega_{\varepsilon}$ is
positive semi-definite does not affect the supremum because $Q_{z}\succeq0$.

Next, we propose a worst-case $\Omega_{\varepsilon}$ for a given
$z$ and show that it attains the supremum. Fix an instrument $z$
for which the second moment is non-zero, so that $\left\Vert Q_{z}\right\Vert _{q}\neq0.$
Then, conjecture the following worst-case matrix : 
\begin{equation}
\Omega^{\ast}_{\varepsilon}=\begin{cases}
N\sigma^{2}\frac{B_{z}}{\tr(B_{z})}, & p=1,\\
N^{1/p}\sigma^{2}\frac{Q^{q-1}_{z}}{\|Q_{z}\|^{q-1}_{q}}, & 1<p<\infty,\\
\sigma^{2}Q^{0}_{z}, & p=\infty,
\end{cases}\label{eq:int_cov}
\end{equation}
where $B_{z}\neq0$ is any positive semidefinite matrix with $\Im(B_{z})\subseteq\mathcal{M}_{z}$
where $\mathcal{M}_{z}$ is the eigenspace associated with the largest
eigenvalue of $Q_{z}$. For $1<p<\infty$, $p(q-1)=q$ and $q/p=q-1$,
so:
\begin{align*}
\left\Vert Q^{q-1}_{z}\right\Vert _{p} & =\left\{ \tr(Q^{p(q-1)}_{z})\right\} ^{1/p}\\
 & =\left\{ \tr(Q^{q}_{z})\right\} ^{1/p}\\
 & =\|Q_{z}\|^{q-1}_{q}.
\end{align*}
This means $\left\Vert \Omega^{\ast}_{\varepsilon}\right\Vert _{p}=N^{1/p}\sigma^{2}$
so $\Omega^{\ast}_{\varepsilon}\in\mathcal{F}_{p}(\sigma)$ is feasible.
In addition, 
\begin{align*}
\tr\left(Q_{z}\Omega^{\ast}_{\varepsilon}\right) & =N^{1/p}\sigma^{2}\frac{\tr(Q^{q-1}_{z})}{\|Q_{z}\|^{q-1}_{q}}\\
 & =N^{1/p}\sigma^{2}\frac{\|Q_{z}\|^{q}_{q}}{\|Q_{z}\|^{q-1}_{q}}\\
 & =N^{1/p}\sigma^{2}\left\Vert Q_{z}\right\Vert _{q},
\end{align*}
which attains the supremum from the optimization problem in (\ref{eq:minimax_var}).
Similarly, for $p=1$ and $q=\infty$, we have $\left\Vert \Omega^{\ast}_{\varepsilon}\right\Vert _{1}=N\sigma^{2}$
and the matrix is feasible. Furthermore, we have $Q_{z}B_{z}=\left\Vert Q_{z}\right\Vert _{\infty}B_{z}$,
so $\tr(Q_{z}\Omega^{*}_{\varepsilon})=N\sigma^{2}\left\Vert Q_{z}\right\Vert _{\infty}$,
which attains the supremum. For $p=\infty$ and $q=1$, $\left\Vert \Omega^{\ast}_{\varepsilon}\right\Vert _{\infty}=\sigma^{2}$
and so it is again feasible, and $\tr(Q_{z}\Omega^{*}_{\varepsilon})=\sigma^{2}\tr(Q_{z})=\sigma^{2}\left\Vert Q_{z}\right\Vert _{1}$
which again attains the supremum.

To find the optimal instrument, we proceed in two steps. We first
show that there is an instrument that minimizes (\ref{eq:minimax_var})
that takes the form $z=V\tilde{x}$. We then find the optimal $V$
matrix. Define the linear projection of $z$ on $\tilde{x}$ as $v_{z}=V_{z}\tilde{x}$
for $V_{z}=\expec{\delta}{z\tilde{x}'}S^{\dag}$, with the residual
$u_{z}=z-v_{z}$ that satisfies $\expec{\delta}{\tilde{x}u_{z}'}=\expec{\delta}{v_{z}u_{z}'}=0$.
Thus, 
\[
\expec{\delta}{v_{z}\tilde{x}'}=\expec{\delta}{\left(z-u_{z}\right)\tilde{x}'}=\expec{\delta}{z\tilde{x'}}
\]
and
\[
Q_{z}=\expec{\delta}{\left(v_{z}+u_{z}\right)\left(v_{z}+u_{z}\right)'}=\expec{\delta}{v_{z}v_{z}'}+\expec{\delta}{u_{z}u_{z}'}\succeq\expec{\delta}{v_{z}v_{z}'}=Q_{v}.
\]
By the monotonicity of the Schatten $q$-norm on the positive semidefinite
cone, $\left\Vert Q_{z}\right\Vert _{q}\ge\left\Vert Q_{v}\right\Vert _{q}$.

Thus, the objective in (\ref{eq:minimax_var}) is weakly lower for
the instrument $v$ than $z$ and it is without loss to restrict attention
to the instruments that take the form $z=V\tilde{x}$. For them, $Q_{z}=VSV'$
and $\expec{\delta}{z^{\prime}\tilde{x}}=\tr\left(VS\right)$ such
that the optimal $V$ matrix solves
\begin{equation}
\min_{V}\frac{\left\Vert VSV'\right\Vert _{q}}{\tr(VS)^{2}}.\label{eq:VSV}
\end{equation}
Let $H=VS^{1/2}$ and $r=2p/(p+1)\in[1,2]$ be the Hölder conjugate
of $2q$, so that $\frac{1}{r}+\frac{1}{2q}=1$. By the definition
of the Schatten norm, $\left\Vert VSV'\right\Vert _{q}=\left\Vert HH'\right\Vert _{q}=\left\Vert H\right\Vert ^{2}_{2q}$
while $\tr\left(VS\right)=\tr\left(HS^{1/2}\right)$. By the Hölder
inequality for Schatten norms (Corollary IV.2.6 of \citet{bhatia2013matrix}),
\begin{equation}
\tr\left(VS\right)^{2}\le\left\Vert H\right\Vert ^{2}_{2q}\left\Vert S^{1/2}\right\Vert ^{2}_{r}=\left\Vert VSV'\right\Vert _{q}\cdot\left\{ \tr\left(S^{r/2}\right)\right\} ^{2/r}.\label{eq:inst_inequality}
\end{equation}
Plugging this inequality into (\ref{eq:VSV}), we have that for any
$V$, $\frac{\left\Vert VSV'\right\Vert _{q}}{\tr(VS)^{2}}\geq\left\{ \tr\left(S^{r/2}\right)\right\} ^{-2/r}$.
This lower bound is attained by $V^{*}=\left(S^{\dag}\right)^{1-r/2}=\left(S^{\dag}\right)^{1/(p+1)}$
since $\left\Vert V^{*}SV^{*}{}'\right\Vert _{q}=\tr\left(S^{p/(p+1)}\right)^{(p-1)/p}$
for $p>1$ and $\left\Vert V^{*}SV^{*}{}'\right\Vert _{q}=1$ for
$p=1$, while $\tr(V^{*}S)=\tr\left(S^{p/(p+1)}\right)$. Therefore,
$V^{\ast}$ solves (\ref{eq:VSV}). The optimal instrument is then
$z^{\ast}=\left(S^{\dag}\right)^{1/(p+1)}\tilde{x}$ and by plugging
in the optimum of (\ref{eq:VSV}) into (\ref{eq:minimax_var}), we
have $\max_{\mathcal{E}\in\mathcal{F}_{p}(\sigma)}\V{\delta,\mathcal{E}}{z^{*}}=N^{1/p}\sigma^{2}\left\{ \tr\left(S^{r/2}\right)\right\} ^{-2/r}$.

To get an optimal matrix $\Omega^{\ast}_{\varepsilon}$ corresponding
to $z^{\ast}$, we use (\ref{eq:int_cov}). For $1<p<\infty$, we
have 
\[
\Omega^{*}_{\varepsilon}=N^{1/p}\sigma^{2}\,\frac{Q^{q-1}_{z^{*}}}{\|Q_{z^{*}}\|^{q-1}_{q}}.
\]
Then, plugging in the expression for $Q_{z^{*}}=S^{(p-1)/(p+1)}$
, we have that the worst-case error covariance matrix corresponding
to $z^{\ast}$ is
\begin{align*}
\Omega^{*}_{\varepsilon} & =N^{1/p}\sigma^{2}\frac{S^{1/(p+1)}}{\left(\tr\left(S^{p/(p+1)}\right)\right)^{1/p}}.
\end{align*}
For $p=1$, $Q_{z^{*}}=S^{0}$. The eigenspace associated with the
largest eigenvalue of $Q_{z}$ is $\Im(S)$, so to match the formula
for general $p$ while still meeting the conditions of (\ref{eq:int_cov})
we can choose $B_{z}=S^{1/2}$ and $\Omega^{*}_{\varepsilon}=N\sigma^{2}\frac{{S^{1/2}}}{\tr(S^{1/2})}$.
The worst-case matrix is not unique here, because $B_{z}$ is not
unique ($B_{z}=S$ also works, for example), but it is convenient
to match the general formula. Finally, for $p=\infty$, $\Omega^{*}_{\varepsilon}=\sigma^{2}S^{0}$.

\subsection{Proposition \ref{prop:OptimalDesign}\label{appx:Proof-Prop1}}

Follows immediately from the expression for the worst-case approximate
variance in Theorem \ref{prop:OptimalIV}, which is finite whenever
$S_{\delta}\ne0$.

\subsection{Proposition \ref{prop:DesignProperties}}

For the first claim, take any $\delta$ and define its complement
$\delta^{c}$ by drawing $\one-g$ whenever $\delta$ would draw $g$.
Now symmetrize:
\begin{align*}
\delta^{sym} & =\frac{1}{2}\delta+\frac{1}{2}\delta^{c}.
\end{align*}
Clearly $\expec{\delta^{sym}}{g_{k}}=0.5$ for all $k$ while
\begin{align*}
\var{\delta^{sym}}g & =\var{\delta}g+\left(\expec{\delta}g-\frac{1}{2}\mathbf{1}\right)\left(\expec{\delta}g-\frac{1}{2}\mathbf{1}\right)'\succeq\var{\delta}g.
\end{align*}
Hence
\begin{align*}
S_{\delta^{sym}} & =W\var{\delta^{sym}}gW^{\prime}\succeq W\var{\delta}gW^{\prime}=S_{\delta}.
\end{align*}
Furthermore, since $A\mapsto\tr(A^{p/(p+1)})$ is increasing on the
positive semidefinite cone,
\begin{align*}
\tr\left(S^{p/(p+1)}_{\delta^{sym}}\right) & \ge\tr\left(S^{p/(p+1)}_{\delta}\right).
\end{align*}
Hence symmetrizing weakly improves the Proposition \ref{prop:OptimalDesign}
objective, so some $\delta^{*}$ has $\expec{\delta^{\ast}}{g_{k}}=0.5$
for all $k$.

For the second claim, suppose $W=\text{diag}(W_{(1)},\dots,W_{(B)})$
for blocks $W_{(b)}\in\mathbb{R}^{N_{b}\times K_{b}}$. For any design,
let $\delta_{b}$ be the marginal distribution of $g_{(b)}$. Define
the product design $\delta^{ind}=\delta_{1}\otimes\dots\otimes\delta_{B}$
and note
\begin{align*}
S_{\delta^{ind}} & =\diag(S_{\delta,11},\dots,S_{\delta,BB})
\end{align*}
for $S_{\delta,bb}=W_{(b)}\var{\delta}{g_{(b)}}W^{\prime}_{(b)}$
since cross-block covariances vanish under the product design. $S_{\delta^{ind}}$
is a pinching of $S_{\delta}$, following the definition in Chapter
2 of \citet{bhatia2013matrix}. By Equation II.39 of \citet{bhatia2013matrix},
we have a majorization relationship $\lambda(S_{\delta^{ind}})\prec\lambda(S_{\delta})$,
where $\lambda(A)$ gives the vector of eigenvalues of matrix $A$.
Since $x\mapsto x^{p/(p+1)}$ is concave for $p\in[1,\infty]$, by
the characterization of majorization in Theorem II.3.1 of \citet{bhatia2013matrix},
we have: 
\begin{align*}
\tr(S^{p/(p+1)}_{\delta^{ind}}) & \ge\tr(S^{p/(p+1)}_{\delta}).
\end{align*}
Hence any design is weakly improved by replacing it with the product
of its block marginals and we may restrict attention to product designs.

For the third claim, we have 
\[
\tr(S^{p/(p+1)}_{\delta^{ind}})=\sum^{B}_{b=1}\tr(S^{p/(p+1)}_{\delta^{ind},bb}).
\]
Thus, the design can be optimized block by block.

\subsection{\texorpdfstring{Lemma \ref{lemma:directFX}, Theorem \ref{prop:OptimalIV_direct},
Propositions \ref{prop:OptimalDesign_direct}--\ref{prop:DesignProperties_direct}}{Lemma 1', Theorem 1', Propositions 1'-2'}}

\paragraph{Lemma \ref{lemma:directFX}.}

Take any $\boldsymbol{z}\in\mathcal{Z}^{(2)}_{\delta}$. If $M_{\bm{z}}=\expec{\delta}{\boldsymbol{x}'\boldsymbol{z}}$
is nonsingular, we can define $z=\bm{z}M^{-1}_{\bm{z}}e_{1}$, where
$e_{1}=(1,0)'$. Since $\mathbb{E}_{\delta}[\bm{z}]=0$, $\mathbb{E}_{\delta}[z]=0$,
so it is re-centered. Also, $\expec{\delta}{\bm{x}'z}=M_{z}M^{-1}_{z}e_{1}=e_{1}$,
i.e. $\expec{\delta}{x^{\prime}z}=1$ and $\expec{\delta}{h^{\prime}z}=0$.
Next, to show the variance equality, we have that $\varepsilon'z=\varepsilon'\boldsymbol{z}M^{-1}_{z}e_{1}$.
Because $\expec{\delta}{x'z}=1$, 
\begin{align*}
\V{\delta,\mathcal{E}}z & =\var{\delta,\mathcal{E}}{\varepsilon'\boldsymbol{z}M^{-1}_{z}e_{1}}\\
 & =e_{1}'\left(M^{-1}_{\boldsymbol{z}}\right)'\var{\delta,\mathcal{E}}{\boldsymbol{z}'\varepsilon}M^{-1}_{\boldsymbol{z}}e_{1}\\
 & =\boldsymbol{\mathcal{V}}_{\delta,\mathcal{E}}\left[\boldsymbol{z}\right].
\end{align*}
Hence every vector instrument has a scalar instrument orthogonal to
$h$ with the same approximate variance for the spillover effect.
If $M_{z}$ is singular, $\boldsymbol{\mathcal{V}}_{\delta,\mathcal{E}}\left[\boldsymbol{z}\right]=\infty$
so we can take $z=0$ which achieves the same approximate variance.

Conversely, take any $z\in\mathcal{Z}_{\delta}$ with $\expec{\delta}{z^{\prime}h}=0$.
Let $\boldsymbol{z}=(z,h-\expec{\delta}h)$ and note that $\expec{\delta}{\boldsymbol{z}}=0$,
so $\boldsymbol{z}\in\mathcal{Z}^{(2)}_{\delta}$. If
\begin{align*}
\expec{\delta}{\boldsymbol{z}^{\prime}\boldsymbol{x}} & =\begin{bmatrix}\expec{\delta}{z^{\prime}x} & 0\\
\expec{\delta}{(h-\expec{\delta}h)^{\prime}x} & \tr(\var{\delta}h)
\end{bmatrix}
\end{align*}
is nonsingular, simple algebra verifies that
\begin{align*}
\V{\delta,\mathcal{E}}z & =\boldsymbol{\mathcal{V}}_{\delta,\mathcal{E}}\left[\boldsymbol{z}\right].
\end{align*}
If $\expec{\delta}{\boldsymbol{z}^{\prime}\boldsymbol{x}}$ is singular,
from $\tr(\var{\delta}h)>0$ we conclude that $\expec{\delta}{z'x}=0$.
Thus, we have $\V{\delta,\mathcal{E}}z=\boldsymbol{\mathcal{V}}_{\delta,\mathcal{E}}\left[\boldsymbol{z}\right]=\infty$.

\paragraph{Theorem \ref{prop:OptimalIV_direct}.}

By Lemma \ref{lemma:directFX}, it is enough to solve 
\begin{equation}
\inf_{\substack{z\in\mathcal{Z}_{\delta}:\,\expec{\delta}{z'h}=0}
}\max_{\mathcal{E}\in\mathcal{F}_{p}(\sigma)}\V{\delta,\mathcal{E}}z.\label{eq:const_minimax}
\end{equation}
By (\ref{eq:minimax_var}), this problem can be simplified to 
\begin{equation}
\inf\limits_{\substack{z\in\mathcal{Z}_{\delta}:\,\expec{\delta}{z'h}=0}
}N^{1/p}\sigma^{2}\frac{\left\Vert Q_{z}\right\Vert _{q}}{\expec{\delta}{z'\tilde{x}}^{2}}.\label{eq:prop1prime_step}
\end{equation}
Since $\expec{\delta}z=0$, for $\tilde{h}=h-\expec{\delta}h$, $\expec{\delta}{z'h}=\expec{\delta}{z'\tilde{h}}$,
and similarly, $\expec{\delta}{z'x}=\expec{\delta}{z'\tilde{x}}$.
This problem is invariant to rescaling of $z$: for any scalar $\alpha\ne0$,
$Q_{\alpha z}=\alpha^{2}Q_{z}$ and $\expec{\delta}{\left(\alpha z\right)'\tilde{x}}=\alpha^{2}\expec{\delta}{z'\tilde{x}}^{2}$.
Thus, it is without loss to impose the constraint $\left\Vert Q_{z}\right\Vert _{q}\le1$
and replace $\left\Vert Q_{z}\right\Vert _{q}$ with its maximum value
of one in (\ref{eq:prop1prime_step}): i.e., 
\[
\inf\limits_{\substack{z\in\mathcal{Z}_{\delta}:\,\expec{\delta}{z'h}=0}
}N^{1/p}\sigma^{2}\frac{\left\Vert Q_{z}\right\Vert _{q}}{\expec{\delta}{z'\tilde{x}}^{2}}=N^{1/p}\sigma^{2}\Bigl(\max_{z\in\mathcal{Z}_{\delta}:\expec{\delta}{z'\tilde{h}}=0,\ \left\Vert Q_{z}\right\Vert _{q}\le1}\expec{\delta}{z'\tilde{x}}\Bigr)^{-2}.
\]
We next drop constants and the exponent, which do not affect the solution,
and convert the constraint $\expec{\delta}{z'\tilde{h}}=0$ into the
Lagrangian form: 
\[
\max_{z\in\mathcal{Z}_{\delta}:\expec{\delta}{z'\tilde{h}}=0,\ \left\Vert Q_{z}\right\Vert _{q}\le1}\expec{\delta}{z'\tilde{x}}=\max_{\substack{z\in\mathcal{Z}_{\delta}:\left\Vert Q_{z}\right\Vert _{q}\le1}
}\inf_{c\in\mathbb{R}}\expec{\delta}{z'(\tilde{x}-c\tilde{h})}.
\]

We now apply the Sion minimax theorem to swap the order of optimization,
noting that the set $\mathcal{C}=\left\{ z\in\mathcal{Z}_{\delta}:\left\Vert Q_{z}\right\Vert _{q}\le1\right\} $
is convex (since $\mathcal{Z}_{\delta}$ is convex and by the properties
of the Schatten norm) and compact (because the support of $g$ is
finite), that $\mathbb{R}$ is convex, that $z'(\tilde{x}-c\tilde{h})$
is linear in $z$ and affine in $c$, and that the expectation is
a linear operator: 
\begin{equation}
\max_{\substack{z\in\mathcal{Z}_{\delta}:\left\Vert Q_{z}\right\Vert _{q}\le1}
}\inf_{c\in\mathbb{R}}\expec{\delta}{z'(\tilde{x}-c\tilde{h})}=\inf_{c\in\mathbb{R}}\max_{\substack{z\in\mathcal{Z}_{\delta}:\left\Vert Q_{z}\right\Vert _{q}\le1}
}\expec{\delta}{z'(\tilde{x}-c\tilde{h})}.\label{eq:duality}
\end{equation}
Although their optimal values are equal, we have not yet shown that
solutions for one side are optimal for the other side. We proceed
by finding a solution to the right side problem, and showing that
it is in fact optimal for the left-side problem. Applying the scale-invariance
argument again, this problem is equivalent to
\begin{equation}
\inf_{c\in\mathbb{R}}\Bigl(\min_{\substack{z\in\mathcal{Z}_{\delta}}
}N^{1/p}\sigma^{2}\frac{\left\Vert Q_{z}\right\Vert _{q}}{\expec{\delta}{z'(\tilde{x}-c\tilde{h})}^{2}}\Bigr)^{-1/2}.\label{eq:prop1prime_step2}
\end{equation}
For fixed $c$, since $\tilde{x}-c\tilde{h}=(W-cU)\tilde{g}$, the
inner optimization problem is therefore the Theorem \ref{prop:OptimalIV}
problem applied to the residualized exposure matrix $W-cU$. So we
can write: 
\begin{equation}
\min_{\substack{z\in\mathcal{Z}_{\delta}}
}N^{1/p}\sigma^{2}\frac{\left\Vert Q_{z}\right\Vert _{q}}{\expec{\delta}{z'(\tilde{x}-c\tilde{h})}^{2}}=N^{1/p}\sigma^{2}\left\{ \tr\left(\left[(W-cU)\Sigma_{\delta}(W-cU)'\right]^{p/(p+1)}\right)\right\} ^{-(p+1)/p}.\label{eq:multiple_exp_swapped}
\end{equation}

We next show that the outer optimization in (\ref{eq:prop1prime_step2}),
which can be equivalently written as $\inf\limits_{c\in\mathbb{R}}H(c)$
for $H(c)=\tr\!\left(\left[(W-cU)\Sigma_{\delta}(W-cU)'\right]^{p/(p+1)}\right)$,
attains its minimum, so that $c_{\delta}$ and $S^{\perp}_{\delta}$
are well-defined. This follows because we have assumed that $\tr(U\Sigma_{\delta}\LyXZeroWidthSpace U')>0$,
and thus $H(c)\to\infty$ when $|c|\to\infty$, while $H(c)$ is continuous
in $c$ as a composition of continuous functions. For a fixed $c=c_{\delta}$,
we know from Theorem \ref{prop:OptimalIV} with $W$ replaced by $W-c_{\delta}U$,
that the minimal approximate variance is $\left\{ \tr\left(\left[(W-c_{\delta}U)\Sigma_{\delta}(W-c_{\delta}U)'\right]^{p/(p+1)}\right)\right\} ^{-(p+1)/p}=\left\{ \tr\left[\left(S^{\perp}_{\delta}\right)^{p/(p+1)}\right]\right\} ^{-(p+1)/p}$
, which completes the proof of the first part of the Theorem. For
the optimal IV, an optimal recentered instrument from Theorem \ref{prop:OptimalIV}
is 
\[
z^{*}_{\delta}(g)=\left((S^{\perp}_{\delta})^{\dagger}\right)^{1/(p+1)}\tilde{x}^{\perp}_{\delta}.
\]

This $z^{*}_{\delta}(g)$ is a solution to the right-side problem
in (\ref{eq:duality}). For it to also be a solution to the left-side
problem in (\ref{eq:duality}) and therefore to the original problem,
it is enough to show that it is feasible in the original problem.
It is clearly recentered, so it remains to verify that the proposed
solution $z^{*}_{\delta}(g)$ is orthogonal to $h$. Let $z_{c}=\left((S_{\delta,c})^{\dagger}\right)^{1/(p+1)}(\tilde{x}-c\tilde{h})$,
where $S_{\delta,c}=(W-cU)\Sigma_{\delta}(W-cU)'$. Let $d=\rank(\Sigma_{\delta})$
and choose a factor $R_{\delta}\in\mathbb{R}^{K\times d}$ satisfying
$R_{\delta}R_{\delta}'=\Sigma_{\delta}$. Define$A_{c}=(W-cU)R_{\delta},$
so $S_{c}=A_{c}A_{c}'$ and
\[
H(c)=\tr\left(S^{p/(p+1)}_{c}\right)=\|A_{c}\|^{r}_{r}
\]
for $r=2p/(p+1)$ as in Appendix \ref{appx:Proof-Thm1}. Let $s\in\mathbb{R}^{m}$
, where $m=\min\{N,d\}$, and define $f_{r}(s)=\sum\limits^{m}_{j=1}\left|s_{j}\right|^{r}$.
$H(c)$ is the function $f_{r}(\cdot)$ applied to the singular values
of $A_{c}$. Since $f_{r}(\cdot)$ is invariant to sign changes and
permutations, by Theorem 3.1 of \citet{lewis1995convex}, if $f_{r}(s)$
is differentiable at the vector of singular values, 
\[
\nabla_{A}\|A_{c}\|^{\,r}_{r}=r\left((A_{c}A_{c}')^{\dagger}\right)^{1-r/2}A_{c},
\]
When $p>1$ and therefore $r>1$, $f_{r}(s)$ is always differentiable.
But when $p=1$, $f_{1}(s)$ is differentiable only if all singular
values of $A_{c_{\delta}}$ are non-zero. A sufficient condition is
the rank condition that we have assumed: $\rank{\!}\left((W-c_{\delta}U)R_{\delta}\right)=\rank{\!}\left((W-c_{\delta}U)\Sigma^{1/2}\right)=\min\{N,\rank(\Sigma_{\delta})\}$.
By the chain rule, given the derivative of the matrix function above
and that $dA_{c}/dc=-UR_{\delta}$, we have shown that $H(c)$ is
always differentiable at $c_{\delta}$.

Since $z_{c}$, up to a normalization, is a solution to the inner
maximization of the right side of (\ref{eq:duality}), the envelope
theorem implies $H'(c)\propto\expec{\delta}{z_{c}'\tilde{h}}$. Since
$H'(c_{\delta})=0$, we have that $\expec{}{z^{*}_{\delta}(g)'\tilde{h}}=0$,
as required for feasibility of $z^{*}_{\delta}(g)$ in the original
problem.

Finally, Lemma \ref{lemma:directFX} maps this constrained scalar
optimum back to the two-instrument problem: it is implemented by $\boldsymbol{z}^{*}_{i}(g)=(z^{*}_{\delta}(g)_{i},h_{i}-\expec{\delta}{h_{i}})$,
with the same optimal value, where $\tr(U\Sigma_{\delta}\LyXZeroWidthSpace U')>0$
guarantees that the first-stage matrix is non-singular.

\paragraph{Proposition \ref{prop:OptimalDesign_direct}.}

Follows immediately from Theorem \ref{prop:OptimalIV_direct}.

\paragraph{Proposition \ref{prop:DesignProperties_direct}.}

Follows analogously to Proposition \ref{prop:DesignProperties}.

\subsection{\label{appx:proof-relaxed}Characterization of Relaxed Solution from
Section \ref{subsec:Computation}}

We first note that, since $W$ has no zero columns, $W'W$ has all
positive diagonal elements. Thus, constraining $\ell^{\ast}>0$ is
without loss since $\tr\left(M^{p}_{\ell}\right)\to\infty$ whenever
any $\ell_{k}$ approaches zero. Next, we establish:
\begin{lem}
\label{lem:lstar_fixed_point}$\ell^{\ast}$ solves (\ref{eq:lstar})
if and only if it satisfies:
\[
\ell^{\ast}_{k}=\frac{\left(M^{p}_{\ell^{\ast}}\right)_{kk}}{\tr\left(M^{p}_{\ell^{\ast}}\right)},\qquad k=1,\dots,K.
\]
\end{lem}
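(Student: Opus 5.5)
The plan is to show that the objective $F(\ell)=\tr(M^{p}_{\ell})$ is convex on the open positive orthant. Then the KKT conditions on the simplex are both necessary and sufficient, and those conditions turn out to be exactly the displayed fixed-point equation. As noted just before the lemma, any minimizer lies in the relative interior of $\Delta_{K}$, since $F\to\infty$ as any $\ell_{k}\to0$. So only the equality constraint $\sum_{k}\ell_{k}=1$ is active, and the condition reduces to $\partial F/\partial\ell_{k}$ being equal across $k$.

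\emph{Convexity.} Write $A=W'W\succeq0$ and $t_{k}=1/\ell_{k}$, with $D_{t}=\diag(t)$. Since $D^{-1/2}_{\ell}AD^{-1/2}_{\ell}=D^{1/2}_{t}A^{1/2}A^{1/2}D^{1/2}_{t}$ has the same nonzero spectrum as $X_{t}=A^{1/2}D_{t}A^{1/2}$, we have $F(\ell)=G(t)$ with $G(t)=\tr(X^{p}_{t})$.
\begin{itemize}
\item The map $t\mapsto X_{t}$ is linear.
\item For $p\ge1$, $X\mapsto\tr(X^{p})$ is convex and monotone on the PSD cone.
\item Hence $G$ is convex and nondecreasing in each coordinate $t_{k}$, because increasing $t_{k}$ increases $X_{t}$ in the PSD order.
\item Each map $\ell_{k}\mapsto1/\ell_{k}$ is convex on $(0,\infty)$.
\end{itemize}
A convex function that is coordinatewise nondecreasing, composed with coordinatewise convex maps, is convex. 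Therefore $F$ is convex on $\{\ell>0\}$. The only care needed here is checking monotonicity of $G$ so that the composition rule applies; this is the step I expect to require the most attention, though it is standard.

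\emph{Gradient.} Using $d\,\tr(X^{p})=p\,\tr(X^{p-1}dX)$ and $\partial X_{t}/\partial t_{k}=A^{1/2}e_{k}e'_{k}A^{1/2}$, we get
\[
\partial G/\partial t_{k}=p\,(A^{1/2}X^{p-1}_{t}A^{1/2})_{kk}.
\]
Next, apply the push-through identity $(BB')^{p}=B(B'B)^{p-1}B'$ with $B=D^{1/2}_{t}A^{1/2}$:
\[
M^{p}_{\ell}=D^{1/2}_{t}A^{1/2}X^{p-1}_{t}A^{1/2}D^{1/2}_{t},
\qquad\text{so}\qquad
(M^{p}_{\ell})_{kk}=t_{k}\,(A^{1/2}X^{p-1}_{t}A^{1/2})_{kk}.
\]
By the chain rule with $dt_{k}/d\ell_{k}=-\ell^{-2}_{k}$, this gives
\[
\partial F/\partial\ell_{k}=-p\,(M^{p}_{\ell})_{kk}/\ell_{k}.
\]
For non-integer $p$ one should confirm differentiability of $\tr(X^{p})$. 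It holds because $t\mapsto\tr(X^{p}_{t})$ is a spectral function of a smooth power, which is $C^{1}$ on the PSD cone for $p\ge1$.

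\emph{KKT.} By convexity, an interior $\ell^{\ast}$ solves (\ref{eq:lstar}) if and only if there is a multiplier $\mu$ with $(M^{p}_{\ell^{\ast}})_{kk}/\ell^{\ast}_{k}=\mu$ for all $k$. Equivalently, $\ell^{\ast}_{k}=(M^{p}_{\ell^{\ast}})_{kk}/\mu$. Summing over $k$ and using $\sum_{k}\ell^{\ast}_{k}=1$ gives $\mu=\tr(M^{p}_{\ell^{\ast}})$, which is exactly the stated condition. Conversely, the stated condition implies the KKT condition with this $\mu$, and sufficiency follows from convexity.
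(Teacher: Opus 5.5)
Your proposal is correct and follows essentially the same route as the paper. Both arguments get convexity of $\tr(M_\ell^p)$ from a matrix that is linear in $1/\ell$ (the paper uses $WD_\ell^{-1}W'$ where you use $A^{1/2}D_tA^{1/2}$), combined with convexity and monotonicity of $\tr(X^p)$ on the PSD cone; both then use the gradient $-p(M^p_\ell)_{kk}/\ell_k$ and the equality-constrained KKT condition normalized by $\sum_k\ell_k=1$. The only cosmetic difference is that the paper differentiates $M_\ell$ in $\ell_k$ directly, which avoids your detour through $t=1/\ell$ and the push-through identity.
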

\begin{proof}
We show below that the objective function in (\ref{eq:lstar}) is
convex. Furthermore, Slater's condition holds since the point $\bar{\ell}=1/K\cdot\bm{1}_{K}$
is strictly feasible. Thus, the KKT condition is necessary and sufficient.
Some $\ell\in\Delta_{K}$ minimizes $f(\ell)=\tr\left(M^{p}_{\ell}\right)$
subject to $\sum_{k}\ell_{k}=1$ if and only if 
\begin{equation}
\frac{\partial\tr\left(M^{p}_{\ell}\right)}{\partial\ell_{k}}=\text{const},\qquad k=1,\dots,K.\label{eq:lemma_const_deriv}
\end{equation}
By the chain rule and the derivative of spectral functions (i.e.,
functions of symmetric matrices that depend only on the eigenvalues
of the matrix) given in Theorem 1.1 of \citet{lewis1996derivatives},
\[
\frac{\partial}{\partial\ell_{k}}\tr(M^{p}_{\ell})=p\,\tr\left(M^{p-1}_{\ell}\frac{\partial M_{\ell}}{\partial\ell_{k}}\right).
\]
Since $\frac{\partial M_{\ell}}{\partial\ell_{k}}=-\frac{1}{2\ell_{k}}\left(e_{k}e_{k}'M_{\ell}+M_{\ell}e_{k}e_{k}'\right)$,
we have $\frac{\partial}{\partial\ell_{k}}\tr(M^{p}_{\ell})=-p\frac{(M^{p}_{\ell})_{kk}}{\ell_{k}}$.
By (\ref{eq:lemma_const_deriv}), $\ell^{\ast}_{k}\propto\left(M^{p}_{\ell^{\ast}}\right)_{kk}$.
Using $\sum_{k}\ell^{\ast}_{k}=1$ and $\sum_{k}\left(M^{p}_{\ell^{\ast}}\right)_{kk}=\tr\left(M^{p}_{\ell^{\ast}}\right)$
yields the desired characterization.

To verify convexity, let $t\in[0,1]$ and $G_{\ell}=WD^{-1}_{\ell}W'$.
Since the matrices $G_{\ell}$ and $M_{\ell}$ have the same nonzero
eigenvalues, $f(\ell)=\tr(G^{p}_{\ell})$. Writing $w_{\cdot k}$
for the $k$-th column of $W$, we have 
\[
G_{\ell}=\sum^{K}_{k=1}\frac{w_{\cdot k}w_{\cdot k}'}{\ell_{k}}.
\]
For any $\ell,\tilde{\ell}>0$, the convexity of $x\mapsto x^{-1}$
gives $G_{t\ell+(1-t)\tilde{\ell}}\preceq tG_{\ell}+(1-t)G_{\tilde{\ell}}$.
The function $\tr(A^{p})=\|A\|^{p}_{p}$ is convex and monotonic on
the positive semidefinite cone for $p\geq1$, since it is the composition
of a norm with a convex function. Applying these arguments, 
\begin{align*}
f(t\ell+(1-t)\tilde{\ell}) & =\|G{}_{t\ell+(1-t)\tilde{\ell}}\|^{p}_{p}\\
 & \leq\|tG{}_{\ell}+(1-t)G_{\tilde{\ell}}\|^{p}_{p}\\
 & \leq t\|G_{\ell}\|^{p}_{p}+(1-t)\|G_{\tilde{\ell}}\|^{p}_{p}\\
 & =tf(\ell)+(1-t)f(\tilde{\ell}).
\end{align*}
Thus, the objective is convex.
\end{proof}

\paragraph{Proof of Proposition \ref{prop:Optimize_l}.}

Define 
\[
\bar{W}_{\ell}=WD^{-1/2}_{\ell},\qquad\bar{\Sigma}_{\ell}=D^{1/2}_{\ell}\Sigma D^{1/2}_{\ell},
\]
such that 
\[
W\Sigma W'=\bar{W}_{\ell}\bar{\Sigma}_{\ell}\bar{W}_{\ell}'.
\]
If $\Sigma\in\mathcal{Q}_{K}$, then 
\[
\tr(\bar{\Sigma}_{\ell})=\sum^{K}_{k=1}\ell_{k}\Sigma_{kk}=\frac{1}{4}.
\]
Using the Hölder inequality for Schatten norms (Corollary IV.2.6 of
\citet{bhatia2013matrix}) with $\left(\frac{2p}{p+1}\right)^{-1}=\left(2p\right)^{-1}+2^{-1}$,
we have 
\[
\tr\left((W\Sigma W')^{p/(p+1)}\right)=\left\Vert \bar{W}_{\ell}\bar{\Sigma}^{1/2}_{\ell}\right\Vert ^{2p/(p+1)}_{2p/(p+1)}\le\left\Vert \bar{W}_{\ell}\right\Vert ^{2p/(p+1)}_{2p}\cdot\left\Vert \bar{\Sigma}^{1/2}_{\ell}\right\Vert ^{2p/(p+1)}_{2}.
\]
Since 
\[
\left\Vert \bar{\Sigma}^{1/2}_{\ell}\right\Vert ^{2}_{2}=\tr(\bar{\Sigma}_{\ell})=\frac{1}{4}
\]
and 
\[
\left\Vert \bar{W}_{\ell}\right\Vert ^{2p}_{2p}=\tr\left((\bar{W}_{\ell}'\bar{W}_{\ell})^{p}\right)=\tr(M^{p}_{\ell}),
\]
it follows that 
\[
\tr\left((W\Sigma W')^{p/(p+1)}\right)\le4^{-p/(p+1)}\tr(M^{p}_{\ell})^{1/(p+1)}.
\]
This bound holds for every $\ell\in\interior{\Delta_{K}}$, so 
\begin{equation}
\tr\left((W\Sigma W')^{p/(p+1)}\right)\le4^{-p/(p+1)}\tr(M^{p}_{\ell^{*}})^{1/(p+1)}.\label{eq:relaxed_upper_bound}
\end{equation}

It remains to show that $\Sigma^{\ast}$ belongs to $\mathcal{Q}_{K}$
and attains the upper bound in (\ref{eq:relaxed_upper_bound}). We
have $\Sigma^{\ast}\succeq0$ because $M^{p}_{\ell^{\ast}}\succeq0$.
Moreover, using Lemma \ref{lem:lstar_fixed_point},
\[
\Sigma^{\ast}_{kk}=\frac{1}{4}\frac{\left(M^{p}_{\ell^{\ast}}\right)_{kk}}{\ell^{\ast}_{k}\tr\left(M^{p}_{\ell^{\ast}}\right)}=\frac{1}{4}.
\]
Plugging in $\Sigma^{\ast}$ into the objective yields
\[
\tr\left((W\Sigma^{\ast}W')^{p/(p+1)}\right)=\tr\left(\left(\bar{W}_{\ell^{\ast}}\bar{\Sigma}^{\ast}_{\ell^{\ast}}\bar{W}_{\ell^{\ast}}'\right)^{p/(p+1)}\right)
\]
where $\bar{\Sigma}^{\ast}_{\ell^{\ast}}=\frac{1}{4}\frac{M^{p}_{\ell^{\ast}}}{\tr\left(M^{p}_{\ell^{\ast}}\right)}$.
The non-zero eigenvalues of $\bar{W}_{\ell^{\ast}}\bar{\Sigma}^{\ast}_{\ell^{\ast}}\bar{W}_{\ell^{\ast}}'$
equal the non-zero eigenvalues of $\left(\bar{\Sigma}^{\ast}_{\ell^{\ast}}\right)^{1/2}\bar{W}_{\ell^{\ast}}'\bar{W}_{\ell^{\ast}}\left(\bar{\Sigma}^{\ast}_{\ell^{\ast}}\right)^{1/2}$.
Noting that $\bar{W}_{\ell^{\ast}}'\bar{W}_{\ell^{\ast}}=M_{\ell^{\ast}}$,
the latter matrix equals $\frac{M^{p/2}_{\ell^{\ast}}}{2\sqrt{\tr\left(M^{p}_{\ell^{\ast}}\right)}}M_{\ell^{\ast}}\frac{M^{p/2}_{\ell^{\ast}}}{2\sqrt{\tr\left(M^{p}_{\ell^{\ast}}\right)}}=\frac{M^{p+1}_{\ell^{\ast}}}{4\tr\left(M^{p}_{\ell^{\ast}}\right)}$.
Thus, non-zero eigenvalues of $\left(\bar{W}_{\ell^{\ast}}\bar{\Sigma}^{\ast}_{\ell^{\ast}}\bar{W}_{\ell^{\ast}}'\right)^{p/(p+1)}$
equal those of $\left(\frac{M^{p+1}_{\ell^{\ast}}}{4\tr\left(M^{p}_{\ell^{\ast}}\right)}\right)^{p/(p+1)}=\frac{M^{p}_{\ell^{\ast}}}{\left(4\tr\left(M^{p}_{\ell^{\ast}}\right)\right)^{p/(p+1)}}$,
from which we conclude that
\begin{equation}
\tr\left((W\Sigma^{\ast}W')^{p/(p+1)}\right)=\tr\left(\frac{M^{p}_{\ell^{\ast}}}{\left(4\tr\left(M^{p}_{\ell^{\ast}}\right)\right)^{p/(p+1)}}\right)=4^{-p/(p+1)}\tr(M^{p}_{\ell^{*}})^{1/(p+1)},\label{eq:relaxed_value}
\end{equation}
achieving the upper bound in (\ref{eq:relaxed_upper_bound}). This
concludes the proof.

\paragraph{Proof of Corollary \ref{cor:p1}}

When $p=1$, 
\[
\tr(M^{p}_{\ell})=\tr(M_{\ell})=\sum^{K}_{k=1}\frac{\left\Vert w_{\cdot k}\right\Vert ^{2}_{2}}{\ell_{k}}.
\]
Thus, the weights solving (\ref{eq:lstar}) satisfy 
\[
\ell^{*}_{k}\propto\left\Vert w_{\cdot k}\right\Vert _{2}.
\]
Plugging these weights into the formula for $\Sigma^{\ast}$ yields
the corollary result.

\paragraph{Proof of Corollary \ref{cor:sym_diagonal}}

We show that, under the conditions of the corollary, $\ell^{\ast}=(1/K)\cdot\one_{K}$.
This result follows by Lemma \ref{lem:lstar_fixed_point}: with such
$\ell^{\ast}$, $D_{\ell^{\ast}}=\frac{1}{K}I_{K}$, $M^{p}_{\ell^{\ast}}=\left(KW'W\right)^{p}$,
and thus
\[
\frac{\left(M^{p}_{\ell^{\ast}}\right)_{kk}}{\tr\left(M^{p}_{\ell^{\ast}}\right)}=\frac{\left(\left(W'W\right)^{p}\right)_{kk}}{\tr\left(W'W\right)^{p}}=\frac{1}{K}=\ell^{\ast}_{k}.
\]
Thus, in Proposition \ref{prop:Optimize_l}
\[
\Sigma^{\ast}=\frac{K}{4}\frac{\left(W'W\right)^{p}}{\tr\left(W'W\right)^{p}},
\]
as required.

For the optimal IV, consider the thin singular value decomposition
that drops the zero eigenvalues:
\[
W=U\Lambda V'
\]
where $\Lambda$ is diagonal with strictly positive elements and $U$
and $V$ have orthonormal columns. Thus, $W'W=V\Lambda^{2}V'$, $\left(W'W\right)^{p}=V\Lambda^{2p}V'$,
and 
\[
\Sigma^{\ast}=\frac{K}{4}\frac{V\Lambda^{2p}V'}{\tr\left(\left(W'W\right)^{p}\right)}.
\]
Moreover,
\[
S=W\Sigma^{\ast}W'=\frac{K}{4}\frac{U\Lambda^{2p+2}U'}{\tr\left(\left(W'W\right)^{p}\right)}.
\]
Thus,
\[
\left(S^{\dag}\right)^{1/(p+1)}=\left(\frac{K}{4\tr\left(\left(W'W\right)^{p}\right)}\right)^{-1/(p+1)}U\Lambda^{-2}U'=aU\Lambda^{-2}U',
\]
 and the Theorem \ref{prop:OptimalIV} optimal IV is
\begin{align*}
\left(S^{\dag}\right)^{1/(p+1)}\tilde{x} & =\left(S^{\dag}\right)^{1/(p+1)}W\tilde{g}\\
 & =aU\Lambda^{-1}V'\tilde{g}\\
 & =a\left(W'\right)^{\dag}\tilde{g}.
\end{align*}
When $W$ is block-diagonal, the relaxed problem can be solved block-by-block,
implying that $\Sigma^{\ast}$ is the same. Moreover, since the implied
$S$ is block-diagonal, Theorem \ref{prop:OptimalIV} also applies
block-by-block.

\paragraph{}

\subsection{\label{app:Theorem1}Proof of Theorem \ref{thm:main} and Associated
Claims}

In this section, we first prove Theorem \ref{thm:main}. Then, we
provide brief proofs of claims made in the main text about sufficient
conditions for Assumption \ref{ass:asymptotic}. Finally, we prove
that $\hat{v}\toP v^{*}$. Define $\mathcal{A}_{K}$ as the event
under which Assumption \ref{ass:asymptotic} holds. Next, define the
good event $\mathcal{G}_{K}=\mathcal{A}_{K}\cap\{v_{K}>0\}$. By Assumption
\ref{ass:asymptotic}, $\PR(\mathcal{G}_{K})\to1$. For the convergence-in-probability
and convergence-in-distribution statements in the proof, which are
not affected by modifying vanishing-in-probability sequences, we therefore
work without loss of generality as if $\mathcal{G}_{K}$ holds. We
start by decomposing the estimator error as:
\begin{equation}
\sqrt{K}(\bhat-\beta)=\frac{K^{-1/2}b'\tilde{g}}{K^{-1}\tilde{g}'B\left(\tfrac{1}{2}\one+\tilde{g}\right)},\label{eq:ratio}
\end{equation}
where for this section we use the shorthand notation $\tilde{g}=g^{GR}-0.5\cdot\one$
for the recentered Gaussian rounded shocks, $S$ instead of $S_{GR}$
and $B=W'(S^{\dag})^{1/(p+1)}W$. The proof of the theorem then follows
by combining the decomposition in Equation (\ref{eq:ratio}) with
Lemma~\ref{lem:numerator} and Lemma~\ref{lem:denominator} by Slutsky's
theorem.

We handle the denominator and numerator separately. First, for the
numerator, we use an existing result on dependency graph central limit
theorems (CLTs) from \citet[Theorem 2.4]{chen2004normal}.\footnote{This result would allow us to grow the degree of $\Sigma^{*}$ slowly,
where the rate of growth has to be slower for higher $p$, but to
keep things simple we assume that the degree does not grow with $K$.} The denominator law of large numbers (LLN) relies on a more direct
argument that also relies on sparsity in $\Sigma^{*}$.
\begin{lem}
\label{lem:numerator} Under Assumption \ref{ass:asymptotic}, 
\[
K^{-1/2}\sum^{K}_{k=1}b_{k}\tilde{g}_{k}\toD\mathcal{N}(0,v^{*}).
\]
\end{lem}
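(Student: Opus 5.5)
The plan is to condition on $(W,\varepsilon)$, treat $b=W'(S^{\dag})^{1/(p+1)}\varepsilon$ as fixed, and apply a dependency-graph CLT to $\sum_{k}b_{k}\tilde{g}_{k}$. All remaining randomness then comes from the Gaussian rounding draw $\xi\sim\mathcal{N}(0,\Sigma^{\ast})$, with $\tilde{g}_{k}=\one[\xi_{k}\ge0]-\tfrac{1}{2}$. The final step converts the conditional result into an unconditional one. As in the decomposition (\ref{eq:ratio}), we work on the good event $\mathcal{G}_{K}$.

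\emph{Step 1 (dependency graph).} Let the graph on $\{1,\dots,K\}$ have an edge between $k$ and $l$ whenever $\Sigma^{\ast}_{kl}\neq0$. Take any index set $A$, and let $B$ be the set of indices that are neither in $A$ nor adjacent to $A$. Then the cross-covariance block $\Sigma^{\ast}_{AB}$ vanishes, so by joint Gaussianity $\xi_{A}$ is independent of $\xi_{B}$. It follows that $(\tilde{g}_{k})_{k\in A}$ is independent of $(\tilde{g}_{l})_{l\in B}$. Hence this is a dependency graph in the sense of \citet{chen2004normal}, with maximal degree at most $d(\Sigma^{\ast})\le\bar{d}$ by Assumption \ref{ass:asymptotic}(a). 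The rounding also preserves zeros, since $\arcsin(0)=0$, so $\Sigma^{GR}$ has the same sparsity pattern as $\Sigma^{\ast}$.

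\emph{Step 2 (variance identification).} Conditionally on $(W,\varepsilon)$, the variance is
\[
\var{}{b'\tilde{g}}=b'\Sigma^{GR}b=\varepsilon'(S^{\dag})^{1/(p+1)}S(S^{\dag})^{1/(p+1)}\varepsilon,
\]
using $S=W\Sigma^{GR}W'$. By the spectral decomposition of $S$, this equals $\varepsilon'S^{(p-1)/(p+1)}\varepsilon=Kv_{K}$, with the convention $S^{0}=S^{\dag}S$ when $p=1$. So the normalized sum $K^{-1/2}b'\tilde{g}$ has conditional variance exactly $v_{K}$.

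\emph{Step 3 (Berry--Esseen-type bound).} Set $X_{k}=b_{k}\tilde{g}_{k}/\sqrt{Kv_{K}}$. Since $|\tilde{g}_{k}|=1/2$ and the $\tilde{g}_{k}$ are mean zero, Theorem 2.4 of \citet{chen2004normal} applies. It bounds the Kolmogorov distance between the conditional law of $\sum_{k}X_{k}$ and $\mathcal{N}(0,1)$ by
\[
C\,\bar{d}^{\,2}\sum_{k}\mathbb{E}|X_{k}|^{3}\le C'\bar{d}^{\,2}\,v_{K}^{-3/2}K^{-3/2}\sum_{k}|b_{k}|^{3}.
\]
By Assumption \ref{ass:asymptotic}(e) and (d), this bound is $o_{p}(1)$, provided $v^{\ast}>0$, which holds since $v^{\ast}\in(0,\infty)$. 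Combined with $v_{K}\toP v^{\ast}$, the conditional characteristic function of $K^{-1/2}b'\tilde{g}$ converges in probability to $\exp(-v^{\ast}t^{2}/2)$ for each $t$.

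\emph{Step 4 (unconditioning).} Characteristic functions are bounded by one, so dominated convergence lets us take expectations over $(W,\varepsilon)$. This gives the unconditional limit $\mathcal{N}(0,v^{\ast})$, with the event $\mathcal{G}_{K}^{c}$ contributing $o(1)$.

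The main obstacle is this final conditional-to-unconditional passage. It requires that the dependency-graph bound be uniform given $(W,\varepsilon)$, and that the assumptions in (a), (d) and (e), which hold only in probability and with probability $1-o(1)$, be handled through the good event rather than deterministically.
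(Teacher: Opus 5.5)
Your proposal is correct and follows essentially the same route as the paper: condition on $(W,\varepsilon)$, get a bounded-degree dependency graph from the sparsity of $\Sigma^{\ast}$ (preserved by Gaussian rounding), identify the conditional variance as $v_K$, apply Theorem 2.4 of \citet{chen2004normal} together with Assumption~\ref{ass:asymptotic}(e), and uncondition by bounded convergence. Using characteristic functions where the paper uses the Kolmogorov distance $\delta_K$ is immaterial. One small correction: the constant the paper extracts from that theorem, via condition LD3 with neighborhood counts up to $(\bar d+1)^{5}$, is $75(\bar d+1)^{10}$ rather than of order $\bar d^{\,2}$. This is harmless because $\bar d$ is fixed, and your set-wise independence argument in Step 1 does verify LD3 with the 1-, 2- and 3-step neighborhoods.
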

\begin{proof}
We first prove a CLT conditional on a sequence of $W$ and $\varepsilon$.
We write $Y_{k}=K^{-1/2}b_{k}\tilde{g}_{k}$, so that 
\[
\var{}{\sum_{k}Y_{k}\mid W,\varepsilon}=\frac{1}{K}b'\Sigma^{GR}b=v_{K}.
\]
Conditional on $W$ and $\varepsilon$, $Y$ is an element-wise transformation
of $\xi$, which is a Gaussian random vector. This means that independence
properties pass from $\xi$ to $Y$ and so if $\Sigma^{*}_{kl}=0$
then $Y_{k}$ and $Y_{l}$ are conditionally independent. Let $\mathcal{G}$
be the dependency graph for $Y|W,\varepsilon$ that has $K$ vertices,
with an edge between vertex $k$ and $l\ne k$ if $Y_{k}$ and $Y_{l}$
are dependent, conditional on $(W,\varepsilon)$; equivalently, there
is an edge if and only if $\Sigma^{*}_{kl}\ne0$ and $k\ne l$.

Define the $s$-neighborhood of $k$ as $N^{s}[k]=\left\{ l:\dist{}_{\mathcal{G}}(k,l)\le s\right\} $,
where the distance measure is the shortest path between two vertices,
and we use the convention that $d_{\mathcal{G}}(k,k)=0$. $N^{1}[k]$,
$N^{2}[k]$ and $N^{3}[k]$ are three sets that play the role of the
sets $A_{i}$, $B_{i}$, $C_{i}$ from condition LD3 in \citet{chen2004normal}.
Specifically, conditionally on $W,\varepsilon$:
\begin{itemize}
\item $Y_{k}$ is independent of any $Y_{t}$ with $t\notin N^{1}[k]$;
\item $Y_{l}$ for any $l\in N^{1}[k]$ is independent of any $Y_{t}$ with
$t\notin N^{2}[k]$ because $N^{2}[k]$ contains all neighbors of
those in $N^{1}[k]$;
\item $Y_{l}$ for any $l\in N^{2}[k]$ is independent of any $Y_{t}$ with
$t\notin N^{3}[k]$.
\end{itemize}
The next step in applying the theorem is computing an upper bound,
for any $k$, on:
\[
\begin{split} & \max\{|\{l:N^{3}[k]\cap N^{2}[l]\neq\emptyset\}|,|N^{3}[k]|\}\\
 & \leq\max\{|N^{5}[k]|,|N^{3}[k]|\}\\
 & \leq(\bar{d}+1)^{5}.
\end{split}
\]

Next, we define the conditional normal approximation error for the
centered and normalized random variable $\sum\limits^{K}_{k=1}Y_{k}/\sqrt{v_{K}}$:
\[
\delta_{K}=\sup_{z}\left|\PR\left\{ \frac{K^{-1/2}\sum^{K}_{k=1}b_{k}\tilde{g}_{k}}{\sqrt{v_{K}}}\le z\mid W,\varepsilon\right\} -\Phi(z)\right|.
\]

We can now apply Theorem 2.4 of \citet{chen2004normal}, which gives
us the first line of the below display. 

\[
\begin{split}\delta_{K} & \leq75(\bar{d}+1)^{10}\sum\limits^{K}_{k=1}\mathbb{E}\left[\lvert Y_{k}/\sqrt{v_{K}}\rvert^{3}\mid W,\varepsilon\right]\\
 & \leq75(\bar{d}+1)^{10}\frac{K^{-3/2}}{8v^{3/2}_{K}}\sum\limits^{K}_{k=1}\left|b_{k}\right|^{3}=o_{p}(1).
\end{split}
\]
The second line uses the fact that $|\tilde{g}_{k}|=1/2$, recalling
that $Y_{k}=K^{-1/2}b_{k}\tilde{g}_{k}$, and Assumption \ref{ass:asymptotic}(e).
We now have a conditional CLT; by the bounded convergence theorem,
since $0\leq\delta_{K}\leq1$, we have that $\mathbb{E}[\delta_{K}]\to0$
and we also have an unconditional CLT.
\end{proof}

We finish by providing an LLN for the denominator of the estimator.
\begin{lem}
\label{lem:denominator} Under Assumption \ref{ass:asymptotic}, 
\[
K^{-1}\tilde{g}'B\left(\tfrac{1}{2}\one+\tilde{g}\right)\toP h^{*}.
\]
\end{lem}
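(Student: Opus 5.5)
We need to show $K^{-1}\tilde g'B(\tfrac12\one+\tilde g)\toP h^*$. The plan is to split this quantity into a quadratic part $K^{-1}\tilde g'B\tilde g$ and a linear part $\tfrac{1}{2}K^{-1}\tilde g'B\one$. The quadratic part has mean $K^{-1}\tr(B\Sigma^{GR})$. Since $B\Sigma^{GR}=W'(S^{\dag})^{1/(p+1)}W\Sigma^{GR}$, cycling the trace gives $K^{-1}\tr((S^\dag)^{1/(p+1)}S)=K^{-1}\tr(S^{p/(p+1)})=h_K$. By Assumption \ref{ass:asymptotic}(c), $h_K\toP h^*$. The linear part has conditional mean zero because $\expec{}{\tilde g}=0$ under the symmetric Gaussian rounding design. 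It therefore suffices to show that both parts have conditional variance $o_p(1)$ given $W$, and then apply Chebyshev's inequality conditionally, followed by bounded convergence as in Lemma \ref{lem:numerator}.

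The first ingredient is an operator-norm bound on $B$, which holds on the event where Assumption \ref{ass:asymptotic}(a)--(b) apply. We have
\[
\|B\|_\infty\le \|W\|_\infty^2\,\|(S^\dag)^{1/(p+1)}\|_\infty\le \bar\lambda\,\underline{\lambda}^{-1/(p+1)},
\]
using $\lambda_{\max}(W'W)\le\bar\lambda$ and $\lambda^+_{\min}(S)>\underline\lambda$. We also need a bound on $\|\Sigma^{GR}\|_\infty$. Since $g^{GR}_k$ and $g^{GR}_l$ are independent whenever $\Sigma^*_{kl}=0$, $\Sigma^{GR}$ has the same sparsity pattern as $\Sigma^*$. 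Its entries are bounded by $1/4$, so $\|\Sigma^{GR}\|_\infty\le \bar d/4$ by the row-sum (Gershgorin) bound.

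With these bounds, the linear part is handled directly:
\[
\var{}{K^{-1}\tilde g'B\one\mid W}=K^{-2}\one'B\Sigma^{GR}B\one\le K^{-2}\|B\|_\infty^2\|\Sigma^{GR}\|_\infty K=O(K^{-1}).
\]

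The quadratic form is the main obstacle, because $\tilde g$ is a non-Gaussian transformation of a Gaussian vector and $B$ need not be sparse. Write
\[
\var{}{\tilde g'B\tilde g}=\sum_{k,l,m,n}B_{kl}B_{mn}\cov{}{\tilde g_k\tilde g_l,\tilde g_m\tilde g_n}.
\]
The covariance vanishes unless $\{k,l\}$ and $\{m,n\}$ are linked in the dependency graph $\mathcal G$ of $\Sigma^*$. This holds because the $\tilde g$'s indexed by two sets with no edges between them are functions of independent Gaussian subvectors.

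I would first try to bound the sum by grouping it according to which pair of indices is adjacent, say $m\in N^1[k]$, and then apply Cauchy--Schwarz on the remaining free indices using $|\tilde g|=1/2$. The aim is an estimate of the form $\sum_{k}\sum_{m\in N^1[k]}\sum_{l,n}|B_{kl}||B_{mn}|$. That bound is too crude, since it involves absolute row sums of $B$ rather than the operator norm. A cleaner route is conditioning. For $k\neq l$, write $\tilde g'B\tilde g-\expec{}{\tilde g'B\tilde g}$ as a sum over dependency-graph neighborhoods and use an Efron--Stein or Hanson--Wright-type argument for functions of Gaussians: $\tilde g=\phi(\xi)$ with $\xi=\Sigma^{*1/2}\eta$. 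The Gaussian Poincaré inequality is not directly usable because the indicator is discontinuous, so instead I would use the decomposition into local dependence. Bound the covariance terms by the products $\sum_{k\sim m}\sum_{l\sim n}|B_{kl}B_{mn}|$ plus the crossed pairing, which can be written as $\tr(|B|A|B|A)$ with $A$ the adjacency-plus-identity matrix of $\mathcal G$ ($\|A\|_\infty\le\bar d+1$), where $|B|$ is the entrywise absolute value. If entrywise absolute values become a problem, fall back to the representation $B=W'(S^\dag)^{1/(p+1)}W$ and push the absolute values through $W$ by assuming the bounded-degree, bounded-entry conditions that the text says guarantee Assumption \ref{ass:asymptotic}(a). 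Either way, the target bound is $O(K)\cdot\mathrm{poly}(\bar d)\,\|B\|^2$, which yields a variance of $O(K^{-1})$ after scaling by $K^{-2}$. Combining the two parts with (c) then completes the argument.
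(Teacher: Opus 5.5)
Most of your outline matches the paper's proof. That includes the split into linear and quadratic parts, the computation $\expec{}{K^{-1}\tilde g'B\tilde g\mid W}=h_K$, the bound $\|B\|_\infty\le\bar\lambda\underline\lambda^{-1/(p+1)}$, the sparsity of $\Sigma^{GR}$, and Chebyshev. Your linear-part bound via $\|B\|_\infty^2\|\Sigma^{GR}\|_\infty$ is a fine substitute for the paper's use of $B\Sigma^{GR}B=W'S^{(p-1)/(p+1)}W$.

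The gap is the step that carries the lemma: you never show $\var{}{\tilde g'B\tilde g\mid W}=O(K)$. You list candidate tools, discard two of them, and end with a ``target bound.'' The pieces you do commit to do not close it:
\begin{itemize}
\item Describing the nonvanishing covariances as ``pairing plus crossed pairing'' is incomplete. $\cov{}{\tilde g_k\tilde g_l,\tilde g_m\tilde g_n}$ is generically nonzero also for stars ($k$, $l$, $n$ each adjacent only to $m$) and for paths ($l\sim k\sim m\sim n$). So $\tr(|B|A|B|A)$ alone does not dominate the variance.
\item The fallback to bounded-degree, bounded-entry $W$ imports conditions that are only sufficient for Assumption \ref{ass:asymptotic}(a), not implied by it. Even granting them, they do not by themselves control the entrywise absolute value of $B$, because $(S^\dag)^{1/(p+1)}$ is generally dense.
\end{itemize}

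The missing idea is that you never need an operator norm of $|B|$; a counting argument plus AM--GM gives a Frobenius-norm bound directly. The covariance vanishes unless one of $m,n$ lies in $N^1[k]\cup N^1[l]$ and the other lies in $N^1[k]\cup N^1[l]\cup N^1[m]$. Otherwise either the two pairs are independent, or a single mean-zero $\tilde g$ factors out of both $\expec{}{\tilde g_k\tilde g_l\tilde g_m\tilde g_n}$ and $\expec{}{\tilde g_k\tilde g_l}\expec{}{\tilde g_m\tilde g_n}$.

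Hence each off-diagonal pair $(k,l)$ has at most $12(\bar d+1)^2$ partners, the relation is symmetric, and all covariances are bounded by a constant. Then $|B_{kl}B_{mn}|\le\tfrac12(B_{kl}^2+B_{mn}^2)$ gives $\var{}{\tilde g'B\tilde g}\le C(\bar d+1)^2\sum_{k,l}B_{kl}^2$, which is the paper's Lemma \ref{lem:quad}. Finally, $\sum_{k,l}B_{kl}^2\le\rank(B)\|B\|_\infty^2\le K\bar\lambda^2\underline\lambda^{-2/(p+1)}$, so the conditional variance is $O(K^{-1})$ after the $K^{-2}$ scaling.

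The same AM--GM step also bounds your $\tr(|B|A|B|A)$ by $(\bar d+1)^2\sum_{k,l}B_{kl}^2$, so your worry about entrywise absolute values was unnecessary.
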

\begin{proof}
The expectation of the first term is zero, and 
\[
\expec{}{\tilde{g}'B\tilde{g}\mid W}=\tr(B\Sigma^{GR})=\tr\left\{ (S^{\dag})^{1/(p+1)}S\right\} =\tr(S^{p/(p+1)}),
\]
 so $K^{-1}\mathbb{E}\left[\tilde{g}'B\left(\tfrac{1}{2}\one+\tilde{g}\right)\mid W\right]=h_{K}$.
By Assumption~\ref{ass:asymptotic}(c), $h_{K}\toP h^{*}$. It remains
to prove that
\[
K^{-1}\tilde{g}'B\left(\tfrac{1}{2}\one+\tilde{g}\right)-h_{K}\toP0,
\]
which will be accomplished by showing that the variance of the term
goes to zero as $K\to\infty$. The arcsin transformation does not
change sparsity, so the row-sum sparsity of $\Sigma^{*}$ passes to
$\Sigma^{GR}$. Since $|\Sigma^{GR}_{kl}|\le1/4$ for all $k,l$,
this means that every row has absolute sum of at most $(\bar{d}+1)/4$;
hence 
\[
\op{\Sigma^{GR}}\le(\bar{d}+1)/4,\qquad\op S\le\op W^{2}\op{\Sigma^{GR}}\le(\bar{d}+1)\bar{\lambda}/4,
\]
where the last inequality uses the second part of Assumption~\ref{ass:asymptotic}(a).
Furthermore, expanding definitions, we have that: 
\[
B\Sigma^{GR}B=W'(S^{\dag})^{1/(p+1)}S(S^{\dag})^{1/(p+1)}W=W'S^{(p-1)/(p+1)}W
\]
So, 
\[
\begin{aligned}\var{}{K^{-1}\tilde{g}'B(\tfrac{1}{2}\one)\mid W} & =K^{-2}\bigl(W(\tfrac{1}{2}\one)\bigr)'S^{(p-1)/(p+1)}\bigl(W(\tfrac{1}{2}\one)\bigr)\\
 & \le K^{-2}\op{S^{(p-1)/(p+1)}}\,\|W(\tfrac{1}{2}\one)\|^{2}_{2}\\
 & \le K^{-2}\left\Vert S\right\Vert ^{(p-1)/(p+1)}_{\infty}\left\Vert W\right\Vert ^{2}_{\infty}K/4\\
 & \le C_{1}/K,
\end{aligned}
\]
for a finite constant $C_{1}$. For the quadratic term, by Assumption~\ref{ass:asymptotic}(a,b),
$\op B\le\op W^{2}\op{(S^{\dag})^{1/(p+1)}}=\lambda_{\max}(W'W)\{\lambda^{+}_{\min}(S)\}^{-1/(p+1)}\le\bar{\lambda}\cdot\underline{\lambda}^{-1/(p+1)}$.
Then, using Lemma~\ref{lem:quad}, applied conditionally on $W$,
and the fact that $\rank(B)\leq K$, we have: 
\[
\var{}{K^{-1}\tilde{g}'B\tilde{g}\mid W}\le CK^{-2}(\bar{d}+1)^{2}\Frob B^{2}\leq CK^{-2}(\bar{d}+1)^{2}\rank(B)\op B^{2}\le C_{2}/K
\]
for a finite constant $C_{2}$. Using $\var{}{X+Y\mid W}\le2\var{}{X\mid W}+2\var{}{Y\mid W}$,
we get 
\[
\var{}{K^{-1}\tilde{g}'B\left(\tfrac{1}{2}\one+\tilde{g}\right)\,\middle|\,W}\le2(C_{1}+C_{2})/K.
\]
For every $\eta>0$, Chebyshev's inequality gives 
\[
\PR\left(\left.\left|K^{-1}\tilde{g}'B\left(\tfrac{1}{2}\one+\tilde{g}\right)-h_{K}\right|>\eta\,\right|W\right)\le\frac{C}{\eta^{2}K}.
\]
Taking expectations yields 
\[
\PR\left(\left|K^{-1}\tilde{g}'B\left(\tfrac{1}{2}\one+\tilde{g}\right)-h_{K}\right|>\eta\right)\le\frac{C}{\eta^{2}K}\to0.
\]
Therefore $K^{-1}\tilde{g}'B\left(\tfrac{1}{2}\one+\tilde{g}\right)-h_{K}\toP0$. 
\end{proof}

\begin{lem}
\label{lem:quad} Let $\tilde{g}_{k}\in\{-1/2,1/2\}$ be centered
random variables with a dependency graph whose maximum degree is at
most $\bar{d}$. Then, for every symmetric and fixed $K\times K$
matrix $B$, 
\[
\var{}{\tilde{g'B\tilde{g}}}\le C(\bar{d}+1)^{2}\Frob B^{2},
\]
where $C$ is a finite constant. 
\end{lem}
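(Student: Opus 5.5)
Write $Q=\tilde g'B\tilde g=\sum_{k,l}B_{kl}\tilde g_k\tilde g_l$. Then $\var{}{Q}=\sum_{k,l,m,n}B_{kl}B_{mn}\cov{}{\tilde g_k\tilde g_l,\tilde g_m\tilde g_n}$, and the argument rests on bounding this fourfold sum by exploiting the dependency graph. Let $\mathcal{G}$ be the dependency graph, with maximum degree at most $\bar d$. Write $N[k]$ for the closed neighborhood of $k$ and $\{k,l\}\sim\{m,n\}$ for the event that $\{k,l\}\cap(N[m]\cup N[n])\neq\emptyset$. If $\{k,l\}\not\sim\{m,n\}$, then the pair $(\tilde g_k,\tilde g_l)$ is independent of $(\tilde g_m,\tilde g_n)$ and the covariance term vanishes. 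I will use the dependency graph in this stronger, set-wise sense: disjoint non-adjacent sets of variables are independent. In our application $\tilde g$ is a coordinatewise transform of a Gaussian vector $\xi$, and zero entries of $\Sigma^*$ give mutual independence of non-adjacent blocks, so this holds there. Every nonvanishing covariance is bounded by $|\cov{}{\cdot,\cdot}|\le 2\cdot(1/2)^4\le 1$, because $|\tilde g_k|=1/2$.

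This gives $\var{}{Q}\le\sum_{(k,l,m,n):\{k,l\}\sim\{m,n\}}|B_{kl}||B_{mn}|$. The linking condition means that some index in $\{k,l\}$ lies within graph distance $1$ of some index in $\{m,n\}$. Up to four symmetric cases, the sum is therefore bounded by $4\sum_{k,l}\sum_{m\in N[k]}\sum_n |B_{kl}||B_{mn}|$. The remaining step is to control this without pulling in row sums of $|B|$. A naive Cauchy–Schwarz over $n$ would introduce $\|B\|_\infty$-type row $\ell_1$ norms rather than the Frobenius norm, and that is the main obstacle. To avoid it I will not bound covariances crudely. Instead I will use a finer structure: when the two pairs are linked only through a single adjacency, the covariance factorizes further into products of pairwise covariances, and independent singletons give zero contributions.

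Concretely, expand $\cov{}{\tilde g_k\tilde g_l,\tilde g_m\tilde g_n}=\mathbb{E}[\tilde g_k\tilde g_l\tilde g_m\tilde g_n]-\mathbb{E}[\tilde g_k\tilde g_l]\mathbb{E}[\tilde g_m\tilde g_n]$. The term $\mathbb{E}[\tilde g_k\tilde g_l\tilde g_m\tilde g_n]$ vanishes unless every index is adjacent to, or equal to, some other index in the quadruple. If some index were isolated from the other three in $\mathcal{G}$, the mean-zero property together with set-wise independence would kill the term. The same reasoning applies to the product term: $\mathbb{E}[\tilde g_k\tilde g_l]\neq0$ requires $l\in N[k]$, and likewise for $m,n$. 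So every nonzero term has a quadruple whose induced graph has no isolated vertex and whose two pairs are linked.

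Each such configuration can then be organized as a pair of edges in $\mathcal{G}\cup\{\text{equality}\}$. There are two patterns:
\begin{itemize}
\item[(i)] $l\in N[k]$, $n\in N[m]$, and $m\in N^2[k]$;
\item[(ii)] $m\in N[k]$ and $n\in N[l]$, or the symmetric variant with $m$ and $n$ swapped.
\end{itemize}
In each pattern, AM–GM $|B_{kl}||B_{mn}|\le\frac12(B_{kl}^2+B_{mn}^2)$ lets me fix the pair carrying the squared entry, summed over all $(k,l)$ to give $\Frob{B}^2$. The other two free indices then range over neighborhoods of size at most $(\bar d+1)$ each. For pattern (i) with $B_{kl}^2$ fixed, $m\in N^2[k]$ and $n\in N[m]$ would give $(\bar d+1)^3$. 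To sharpen this to $(\bar d+1)^2$, I will use that the covariance also vanishes when $\{k,l\}$ and $\{m,n\}$ are unlinked, so $m$ must lie in $N[k]\cup N[l]$ rather than $N^2[k]$. This leaves $2(\bar d+1)$ choices for $m$ and $(\bar d+1)$ for $n$. Pattern (ii) is immediate. Collecting the finitely many cases yields $\var{}{Q}\le C(\bar d+1)^2\Frob{B}^2$ with an absolute constant $C$. Symmetry of $B$ is used only to merge the symmetric cases.
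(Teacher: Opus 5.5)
Your route is the paper's (set-wise independence across $\mathcal{G}$, vanishing of $\mathrm{Cov}(\tilde g_k\tilde g_l,\tilde g_m\tilde g_n)$ unless the two pairs are linked and no index of the quadruple is isolated, then AM--GM and a count of partners). The gap is in your case enumeration, which is not exhaustive.

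Consider the graph on the four positions $k,l,m,n$ with an edge whenever two indices are equal or adjacent in $\mathcal{G}$. Your two necessary conditions say only that this graph has no isolated vertex and joins $\{k,l\}$ to $\{m,n\}$. Patterns (i) and (ii) are exactly its three possible perfect matchings $\{kl,mn\}$, $\{km,ln\}$, $\{kn,lm\}$. A four-vertex graph without isolated vertices need not contain a perfect matching: the star $K_{1,3}$ does not.

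Such terms are genuinely nonzero. Take i.i.d.\ uniform signs $s_l,s_m,s_n$, set $s_k=s_ls_ms_n$ and $\tilde g_j=s_j/2$. The star centered at $k$ is a valid set-wise dependency graph of maximum degree $3$. Since $\mathbb{E}[\tilde g_k\tilde g_l]=0$, we get $\mathrm{Cov}(\tilde g_k\tilde g_l,\tilde g_m\tilde g_n)=\mathbb{E}[\tilde g_k\tilde g_l\tilde g_m\tilde g_n]=\tfrac{1}{16}$. Yet $n\notin N[m]$, $n\notin N[l]$ and $m\notin N[l]$, so this quadruple lies in neither (i) nor (ii). The same shape gives nonzero terms under Gaussian rounding when $\xi_k$ loads on three mutually independent $\xi_l,\xi_m,\xi_n$. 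So the claim that there are two patterns is false, and your sum as organized omits these terms.

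The bound survives, so the repair is short. Fix $(k,l)$. A star centered at $k$ or $l$ puts both $m$ and $n$ in that center's closed neighborhood. A star centered at $m$ (resp.\ $n$) forces $m\in N[k]$ and $n\in N[m]$ (resp.\ the mirror image). This adds at most $4(\bar d+1)^2$ partners, and the same count holds with $(m,n)$ fixed by symmetry of the relation.

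Cleaner is the paper's sequential count, which uses your two conditions directly and needs no enumeration of shapes:
\begin{itemize}
\item linkage puts one of the two indices, say $m$, in $N[k]\cup N[l]$, giving at most $2(\bar d+1)$ choices, times $2$ for which index;
\item non-isolation of the other forces $n\in N[k]\cup N[l]\cup N[m]$, giving at most $3(\bar d+1)$ choices.
\end{itemize}
Hence each pair has at most $12(\bar d+1)^2$ partners with nonzero covariance, and AM--GM with $|\mathrm{Cov}|\le 1/8$ finishes. The paper also discards the diagonal terms first, since $\tilde g_k^2=1/4$ is constant; keeping them, as you do, is harmless.
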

\begin{proof}
Because $\tilde{g}^{2}_{k}=1/4$, we can write 
\[
\tilde{g}'B\tilde{g}=\frac{1}{4}\tr(B)+Q,\qquad Q=2\sum_{k<l}B_{kl}\tilde{g}_{k}\tilde{g}_{l}.
\]
Hence $\var{}{\tilde{g}'B\tilde{g}}=\var{}Q$, and 
\[
\var{}Q=4\sum_{k<l}\sum_{a<b}B_{kl}B_{ab}\cov{}{\tilde{g}_{k}\tilde{g_{l}},\tilde{g}_{a}\tilde{g}_{b}}.
\]
We next count for fixed $k<l$, how many pairs $a<b$ can have nonzero
covariance with $\tilde{g}_{k}\tilde{g}_{l}$. As before, let $\mathcal{G}$
be the dependency graph of $\tilde{g}$ and let $N^{s}[k]=\{l:\dist{}_{\mathcal{G}}(k,l)\le s\}$.
To have non-zero covariance, at least one of $a$ or $b$ must be
in $N^{1}[k]\cup N^{1}[l]$, i.e. $\{a,b\}\cap\bigl(N^{1}[k]\cup N^{1}[l]\bigr)\ne\emptyset.$
After choosing one such endpoint, say $a\in N^{1}[k]\cup N^{1}[l]$,
the other endpoint must satisfy $b\in N^{1}[k]\cup N^{1}[l]\cup N^{1}[a]$;
otherwise the covariance is zero. So, the number of non-zero covariance
partners of the $kl$ pair is at most: 
\[
2\,\bigl|N^{1}[k]\cup N^{1}[l]\bigr|\max_{a\in N^{1}[k]\cup N^{1}[l]}\bigl|N^{1}[k]\cup N^{1}[l]\cup N^{1}[a]\bigr|\le12(\bar{d}+1)^{2}.
\]
Since $|\tilde{g}_{k}\tilde{g}_{l}|\le1/4$, all covariance terms
are bounded by a universal constant. Using the count above, the symmetry
of the covariance relation, and $|xy|\le(x^{2}+y^{2})/2$, 
\[
\begin{aligned}\var{}Q & \le C\sum_{\substack{k<l,\ a<b:\,\cov{}{\tilde{g}_{k}\tilde{g}_{l},\tilde{g}_{a}\tilde{g}_{b}}\ne0}
}|B_{kl}B_{ab}|\\
 & \le C\sum_{\substack{k<l,\ a<b:\,\cov{}{\tilde{g_{k}}\tilde{g}_{l},\tilde{g}_{a}\tilde{g}_{b}}\ne0}
}(B^{2}_{kl}+B^{2}_{ab})\\
 & \le C(\bar{d}+1)^{2}\sum_{k<l}B^{2}_{kl}\le C(\bar{d}+1)^{2}\Frob B^{2},
\end{aligned}
\]
where the constant $C$ can change from line to line. Thus $\var{}{\tilde{g}'B\tilde{g}}\le C(\bar{d}+1)^{2}\Frob B^{2}$.
\end{proof}

We next provide brief proofs of some claims made in the discussion
of Assumption \ref{ass:asymptotic}(a) in the main text.
\begin{prop}
Under the assumptions of Proposition \ref{prop:Optimize_l}, when
$p$ is an integer, the Proposition \ref{prop:Optimize_l} maximizer
$\Sigma^{\ast}$ of the relaxed problem (\ref{eq:relax2}), and therefore
its Gaussian rounding implementation $\Sigma^{GR}$, satisfy $d(\Sigma^{*})=d(\Sigma^{GR})\leq\left(d(W'W)\right)^{p}$.
\end{prop}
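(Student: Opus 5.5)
The argument rests on the explicit form of the relaxed solution in Proposition \ref{prop:Optimize_l}:
\[
\Sigma^{\ast}=\frac{1}{4\tr(M^{p}_{\ell^{\ast}})}D^{-1/2}_{\ell^{\ast}}M^{p}_{\ell^{\ast}}D^{-1/2}_{\ell^{\ast}},\qquad M_{\ell^{\ast}}=D^{-1/2}_{\ell^{\ast}}W'WD^{-1/2}_{\ell^{\ast}} .
\]
Since $\ell^{\ast}>0$, $D_{\ell^{\ast}}$ is diagonal with strictly positive entries. Conjugating by such a matrix, or multiplying by one, does not change which entries are zero. Expanding,
\[
M^{p}_{\ell^{\ast}}=D^{-1/2}_{\ell^{\ast}}(W'W)D^{-1}_{\ell^{\ast}}(W'W)D^{-1}_{\ell^{\ast}}\cdots(W'W)D^{-1/2}_{\ell^{\ast}},
\]
with $p$ factors of $W'W$. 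Therefore $\Sigma^{\ast}_{kl}\neq0$ can only happen if the $(k,l)$ entry of the weighted product $A D A D\cdots A$ is nonzero, where $A=W'W$ and $D=D^{-1}_{\ell^{\ast}}$.

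The second step is a support argument, deliberately avoiding any claim of equality because cancellations may occur. The $(k,l)$ entry of that product is a sum over walks $k=k_{0},k_{1},\dots,k_{p}=l$ of products of entries $A_{k_{j-1}k_{j}}$ times positive weights. If every such walk contains a zero entry of $A$, the $(k,l)$ entry is zero. So $\Sigma^{\ast}_{kl}\neq0$ requires a walk of length $p$ from $k$ to $l$ in the graph whose edges are the nonzero entries of $W'W$, self-loops included.

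The count then follows. Write $d=d(W'W)$, so each vertex $k$ has at most $d$ indices $m$ with $(W'W)_{km}\neq0$, counting $m=k$; the diagonal is positive since $W$ has no zero column. By induction on $p$, the number of endpoints reachable from $k$ by walks of length $p$ is at most $d^{p}$: each step multiplies the reachable set by at most $d$. Hence row $k$ of $\Sigma^{\ast}$ has at most $d^{p}$ nonzero entries, which gives $d(\Sigma^{\ast})\le d^{p}$.

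The last step is the claim $d(\Sigma^{GR})=d(\Sigma^{\ast})$. This uses equation (\ref{eq:V}): $\Sigma^{GR}=\frac{1}{2\pi}\arcsin[4\Sigma^{\ast}]$ entrywise. Because $\arcsin(t)=0$ if and only if $t=0$ on $[-1,1]$, and $4\Sigma^{\ast}$ has unit diagonal with off-diagonal entries in $[-1,1]$, the two matrices have exactly the same zero pattern.

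The delicate points are not to assert equality where cancellation could occur, and to make sure the convention for $d(\cdot)$ counts diagonal entries consistently on both sides. Since both $d(\Sigma^{\ast})$ and $d(W'W)$ count the diagonal, the walk count with self-loops gives the bound directly.
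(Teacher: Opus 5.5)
Your proposal is correct and follows essentially the same route as the paper: write $\Sigma^{\ast}$ as a product of $p$ copies of $W'W$ interleaved with positive diagonal matrices, note that a nonzero entry requires a length-$p$ walk through nonzero entries of $W'W$ (at most $(d(W'W))^{p}$ from any vertex), and observe that the entrywise arcsine preserves the zero pattern. Your version is slightly more careful than the paper's, since you explicitly allow for cancellations and justify that $\arcsin(t)=0$ only at $t=0$.
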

\begin{proof}
From Proposition \ref{prop:Optimize_l}, $\Sigma^{\ast}=\frac{1}{4}D^{-1/2}_{\ell^{\ast}}\frac{M^{p}_{\ell^{\ast}}}{\tr\left(M^{p}_{\ell^{\ast}}\right)}D^{-1/2}_{\ell^{\ast}}$,
where $M_{\ell}=D^{-1/2}_{\ell}W'WD^{-1/2}_{\ell}$, $D_{\ell}$ is
a non-singular diagonal matrix, and $\tr\left(M^{p}_{\ell^{\ast}}\right)$
is a non-zero scalar. Thus we can write
\[
\Sigma^{\ast}\propto D^{-1}_{\ell^{\ast}}\left(W'W\right)D^{-1}_{\ell^{\ast}}\left(W'W\right)\cdots D^{-1}_{\ell^{\ast}},
\]
with $p$ copies of $W'W$, or equivalently
\[
\Sigma^{\ast}_{kl}\propto\sum^{K}_{t_{1}=1}\dots\sum^{K}_{t_{p-1}=1}\frac{\left(W'W\right)_{kt_{1}}\cdot\left(W'W\right)_{t_{1}t_{2}}\cdot\ldots\cdot\left(W'W\right)_{t_{p-1}l}}{D_{kk}\cdot D_{t_{1}t_{1}}\cdot\ldots\cdot D_{ll}}.
\]
 Therefore, the non-zero entries of $\Sigma^{\ast}$ require at least
one path through $p$ non-zero entries of $W'W$. There are $\left(d(W'W)\right)^{p}$
such paths. Since the non-zero elements are the same for $\Sigma^{\ast}$
and $\Sigma^{GR}$, $d(\Sigma^{\ast})=d(\Sigma^{GR})$, so the inequality
holds for $d(\Sigma^{GR})$ as well.
\end{proof}

For the next claim in the main text, if $W$ has row and column sparsity
bounded by constants $c$ and $r$ respectively, then $d(W'W)\leq cr$
and $\left(d(W'W)\right)^{p}$ are also bounded as long as $p$ is
finite, implying the first part of Assumption \ref{ass:asymptotic}(a).
If the entries of $W$ are also bounded, then row and column sparsity
means that row and column sums of the absolute values of the entries
of $W$ are bounded. The largest eigenvalue of $W'W$, $\lambda_{\text{max}}(W'W)$,
which is the spectral norm of $W$, squared, is bounded by the product
of the maximum absolute column sum and maximum absolute row sum of
$W$. This yields the second part of Assumption \ref{ass:asymptotic}(a).

We next show $K\asymp N$ in non-degenerate cases. Suppose we have
both row and column sparsity, and a non-vanishing fraction of outcome
units are connected to at least some shocks as $K$ grows, meaning
there exist constants $\rho>0$ and $\underline{w}>0$ such that,
with probability $1-o(1)$, 
\[
\frac{1}{N}\sum^{N}_{i=1}\mathbf{1}\left\{ \sum^{K}_{k=1}W^{2}_{ik}\geq\underline{w}\right\} \geq\rho.
\]
On this event, 
\[
\begin{aligned}\rho\underline{w}N & \leq\sum^{N}_{i=1}\sum^{K}_{k=1}W^{2}_{ik}\\
 & =\tr(W'W)\\
 & \leq K\lambda_{\max}(W'W)\\
 & \leq\bar{\lambda}K.
\end{aligned}
\]
This implies that we cannot have $N/K\to\infty$.

Conversely, we have already shown that for a finite constant $C$,
$\lambda_{\max}(S)\leq C.$ So, 
\[
\begin{aligned}h_{K} & =\frac{1}{K}\tr\left(S^{p/(p+1)}\right)\\
 & \leq\frac{\rank(S)}{K}\lambda_{\max}(S_{GR})^{p/(p+1)}\\
 & \leq C^{p/(p+1)}\frac{N}{K},
\end{aligned}
\]
So $h_{K}\toP0$ if $N/K\to0$, contradicting Assumption \ref{ass:asymptotic}(c).
Therefore, $K\asymp N$.

Finally, we prove asymptotic validity of our confidence intervals
constructed using $\hat{v}$, as defined in the main text.
\begin{prop}
Under the assumptions of Theorem \ref{thm:main}, $\hat{v}\toP v^{*}$.
\end{prop}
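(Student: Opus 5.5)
The plan is to expand $\hat{v}-v_{K}$ algebraically and show that each piece vanishes. Assumption \ref{ass:asymptotic}(d) already gives $v_{K}\toP v^{*}$, so this is enough. Write $A=S^{(p-1)/(p+1)}$, with $S=S_{GR}$. Since $y=\beta x+\varepsilon$, the plug-in residual is $\hat{\varepsilon}=\varepsilon-(\hat{\beta}-\beta)x$. Hence
\[
\hat{v}-v_{K}=-\frac{2(\hat{\beta}-\beta)}{K}x'A\varepsilon+\frac{(\hat{\beta}-\beta)^{2}}{K}x'Ax.
\]
By Theorem \ref{thm:main}, $\hat{\beta}-\beta=O_{p}(K^{-1/2})$. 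It therefore suffices to show two bounds: $K^{-1}x'Ax=O_{p}(1)$, and $K^{-1}x'A\varepsilon=O_{p}(1)$. With these, the first term is $O_{p}(K^{-1/2})$ and the second is $O_{p}(K^{-1})$.

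For the quadratic term, I would use the bound established in the proof of Lemma \ref{lem:denominator}: $\op{S}\le(\bar{d}+1)\bar{\lambda}/4$ on the good event. Because $(p-1)/(p+1)\in[0,1)$, this gives $\op{A}\le\max\{1,\op{S}\}$, which is bounded. (For $p=1$, $A=S^{0}$ is a projection, so the bound is immediate.) Then
\[
K^{-1}x'Ax\le\op{A}\,K^{-1}\|Wg\|_{2}^{2}\le\op{A}\,\lambda_{\max}(W'W)\,K^{-1}\|g\|_{2}^{2}\le\op{A}\,\bar{\lambda},
\]
using $\|g\|_{2}^{2}\le K$ and Assumption \ref{ass:asymptotic}(a). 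So this term is bounded with probability $1-o(1)$.

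The cross term is the main step. I would not try to bound $\varepsilon$ directly. Instead I would apply Cauchy--Schwarz in the $A$-seminorm, which is valid since $A\succeq0$:
\[
\left|K^{-1}x'A\varepsilon\right|\le\left(K^{-1}x'Ax\right)^{1/2}\left(K^{-1}\varepsilon'A\varepsilon\right)^{1/2}=\left(K^{-1}x'Ax\right)^{1/2}v_{K}^{1/2}.
\]
The first factor is $O_{p}(1)$ by the previous step. The second is $O_{p}(1)$ because $v_{K}\toP v^{*}<\infty$ by Assumption \ref{ass:asymptotic}(d). This neatly avoids needing any moment conditions on $\varepsilon$.

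Combining the pieces gives $\hat{v}-v_{K}=O_{p}(K^{-1/2})$, and therefore $\hat{v}\toP v^{*}$. As in the proof of Theorem \ref{thm:main}, we may work on the event $\mathcal{G}_{K}$, which has probability tending to one. The only point needing care is that the plug-in uses the same $S_{GR}$ as in the definition of $v_{K}$. Since $S_{GR}$ is known, this holds exactly.
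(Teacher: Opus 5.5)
Your proposal is correct and matches the paper's proof essentially step for step. The paper uses the same expansion of $\hat{v}$ around $v_K$ and the same operator-norm bound on $K^{-1}x'S_{GR}^{(p-1)/(p+1)}x$, drawn from the Lemma~\ref{lem:denominator} bound on $\lambda_{\max}(S_{GR})$. It also uses the same Cauchy--Schwarz step $|K^{-1}x'A\varepsilon|\le (K^{-1}x'Ax)^{1/2}v_K^{1/2}$. The one difference is that you invoke tightness, $\hat{\beta}-\beta=O_p(K^{-1/2})$, from Theorem~\ref{thm:main}, whereas the paper only needs $\hat{\beta}-\beta=o_p(1)$; this gives you a rate but is otherwise immaterial.
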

\begin{proof}
We can write $\hat{v}=v_{K}-2(\bhat-\beta)K^{-1}x'M\varepsilon+(\bhat-\beta)^{2}K^{-1}x'Mx$,
where $M=S^{(p-1)/(p+1)}_{GR}$. By Theorem \ref{thm:main}, $\hat{\beta}-\beta=o_{p}(1)$
and, by Assumption \ref{ass:asymptotic}, $v_{K}-v^{\ast}=o_{p}(1)$.
To finish the proof, we need that $a_{K}=K^{-1}x'M\varepsilon$ and
$\omega^{2}_{K}=K^{-1}x'Mx$ are both $O_{p}(1)$. First, write $\omega^{2}_{K}\leq K^{-1}\op M\lambda_{\text{max}}(W'W)\left\Vert g\right\Vert ^{2}=O(1)$.
This is because $\op M=\lambda_{\max}(S)^{(p-1)/(p+1)}$ and, by the
argument in Lemma \ref{lem:denominator}, $\lambda_{\max}(S)$\textbf{
}is bounded, as well as $\left\Vert g\right\Vert ^{2}\leq K$. Second,
by Cauchy-Schwartz, since $M\succeq0$, $\left|a_{K}\right|\leq K^{-1}\sqrt{(x'Mx)(\varepsilon'M\varepsilon)}=\omega_{K}\sqrt{v_{K}}=O_{p}(1)$.
\end{proof}

\section{Additional Results\label{appx:Additional-Results}}

\subsection{Interpretation with Heterogeneous Effects\label{appx:HetFX}}

Suppose the true outcome model is
\begin{align*}
y_{i} & =\beta_{i}x_{i}+\varepsilon_{i},
\end{align*}
where $\beta_{i}$ now represents a heterogeneous treatment effect
for unit $i$. For any design $\delta$, consider any partially-whitened
instrument from Theorem \ref{prop:OptimalIV}, $z^{\ast}=z^{\ast}_{\delta}$
with $z^{\ast}_{i}=\left(S^{\dag/(p+1)}_{\delta}\tilde{x}\right)_{i}$
for any $p$. Following \citet{borusyak_negative_2024,borusyak_optimal_2026},
we characterize  the IV estimand, i.e., ratio of $\expec{\delta}{\sum_{i}z_{i}y_{i}}$
and $\expec{\delta}{\sum_{i}z_{i}x_{i}}$. Treating $(\beta_{i},\varepsilon_{i})$
as fixed without loss of generality:
\begin{align*}
\frac{\expec{\delta}{\sum_{i}z^{\ast}_{i}y_{i}}}{\expec{\delta}{\sum_{i}z^{\ast}_{i}x_{i}}} & =\frac{\expec{\delta}{\sum_{i}z^{\ast}_{i}\beta_{i}x_{i}}}{\expec{\delta}{\sum_{i}z^{\ast}_{i}x_{i}}}\\
 & =\frac{\sum_{i}\cov{\delta}{z^{\ast}_{i},x_{i}}\beta_{i}}{\sum_{i}\cov{\delta}{z^{\ast}_{i},x_{i}}},
\end{align*}
which is a convex average of heterogeneous effects when $\cov{\delta}{z^{\ast}_{i},x_{i}}\ge0$
for all $i$. To see that this holds, write:
\[
\cov{\delta}{z^{\ast},x}=\expec{\delta}{z^{\ast}\tilde{x}'}=S^{\dag/(p+1)}_{\delta}S_{\delta}=S^{p/(p+1)}_{\delta}.
\]
This matrix is positive semidefinite so every diagonal element is
nonnegative. These convex weights are known given the known design
$\delta$ and exposure mapping of $x$.\footnote{We note, however, that the convex weighting result does not generally
extend to the nonlinear outcome model $y_{i}=y_{i}(x_{i})$, except
when the optimal instrument is unwhitened. The partially whitened
IV estimate identifies a weighted average of causal effects $\partial y_{i}(x)/\partial x$
at different margins $x$ with weights proportional to $\sum_{j}(S^{\dag/(p+1)}_{\delta})_{ij}\cov{\delta}{x_{j},\mathbf{1}[x_{i}\ge x]\mid y(\cdot)}$
(see Appendix S.1 of \citet{borusyak_optimal_2026}) which are not
guaranteed to be non-negative.}

In the model with both direct and spillover effects, this result generally
breaks down due to contamination bias \citep{goldsmith-pinkham_contamination_2024}
except in special cases. Specifically, rewrite (\ref{eq:model_direct})
with heterogeneous effects model as:
\begin{align*}
y_{i} & =\beta_{i}x_{i}+\tau_{i}g_{i}+\varepsilon_{i},
\end{align*}
and note that since the instrument $z^{\ast}$ from Theorem \ref{prop:OptimalIV_direct}
is orthogonal to $g$ we can write the estimand for the spillover
effect as:
\[
\frac{\expec{\delta}{\sum_{i}z^{\ast}_{i}y_{i}}}{\expec{\delta}{\sum_{i}z^{\ast}_{i}x_{i}}}=\frac{\sum_{i}\cov{\delta}{z^{\ast}_{i},x_{i}}\beta_{i}}{\sum_{i}\cov{\delta}{z^{\ast}_{i},x_{i}}}+\frac{\sum_{i}\expec{\delta}{z^{\ast}_{i}\tilde{g}_{i}}\tau_{i}}{\sum_{i}\cov{\delta}{z^{\ast}_{i},x_{i}}}
\]
The first term is a convex weighted average of $\beta_{i}$, as before,
but the second term is not zero since we do not generally have $\expec{\delta}{z^{\ast}_{i}\tilde{g}_{i}}=0$
for each $i$: just $\expec{\delta}{\sum_{i}z^{\ast}_{i}\tilde{g}_{i}}=0$.
An exception is when the direct effects are constant, $\tau_{i}=\tau$,
or otherwise uncorrelated with $\expec{\delta}{z^{\ast}_{i}\tilde{g}_{i}}$.

\subsection{Non-Recentered Estimators\label{appx:non-recentered}}

This appendix extends the main results to allow for non-recentered
instruments; specifically, to characterize the designs and instruments
which minimize a finite-sample approximate worst-case mean-squared
error (MSE) allowing for some bias. We show that the MSE-optimal design
and estimator have the same form as the baseline design and estimator
with recentering, but that the optimal design can have undesirable
properties. We further show that the MSE advantage of non-recentered
instruments disappears when appropriate covariates are included.

Consider an extension of the \citet{borusyak_optimal_2026} approximation,
now to approximate mean-squared error (MSE) of an IV estimator using
a possibly non-recentered instrument $z_{i}(g)$:
\begin{align*}
\mathcal{M}_{\delta,\mathcal{E}}\left[z\right] & =\frac{\expec{\delta,\mathcal{E}}{(\sum^{N}_{i=1}z_{i}(g)\varepsilon_{i})^{2}}}{\expec{\delta}{\sum^{N}_{i=1}z_{i}(g)x_{i}}^{2}}.
\end{align*}
Following the same steps as in the Appendix \ref{appx:Proof-Thm1}--\ref{appx:Proof-Prop1}
proofs of Theorem \ref{prop:OptimalIV} and Proposition \ref{prop:OptimalDesign},
one can show the MSE-optimal design and estimator 
\begin{align*}
\delta^{*},z^{*} & \in\arg\min_{\delta\in\mathcal{D},z}\max_{\mathcal{E}\in\mathcal{F}_{p}(\sigma)}\mathcal{M}_{\delta,\mathcal{E}}\left[z\right]
\end{align*}
are, for $M_{\delta}=\expec{\delta}{xx^{\prime}}$: 
\begin{align*}
\delta^{*} & \in\arg\max_{\delta\in\mathcal{D}}\tr\left(M^{\frac{p}{p+1}}_{\delta}\right),\\
z^{*} & \propto\left(M^{\dag}_{\delta^{*}}\right)^{\frac{1}{p+1}}x.
\end{align*}
The MSE-optimal instrument is generally not recentered (i.e., $\expec{\delta^{*}}{z^{*}_{i}(g)}\neq0$
in general), implying some small bias can help for this problem.

However, the optimal design can have undesirable properties. To see
this simply, consider the no-spillovers special case where $W=I$.
Previously, the optimal design was independent randomization with
$\expec{\delta^{*}}{g_{i}}=0.5$ for all $i$. Now we have the following:
\begin{prop}
\label{prop:mse_nospillovers}When $W=I$ and $p<\infty$, an optimal
design $\delta^{*}\in\arg\max_{\delta\in\mathcal{D}}\tr\left(M^{\frac{p}{p+1}}_{\delta}\right)$
is given by complete randomization with $K_{1}$ treated units, where
\begin{align*}
K_{1} & =\begin{cases}
\left\lfloor m^{*}\right\rfloor , & w.p.\,1-\theta^{*}\\
\left\lfloor m^{*}\right\rfloor +1, & w.p.\,\theta^{*}
\end{cases}
\end{align*}
for 
\begin{align*}
m^{*} & \in\arg\max_{m\in[0,N]}\left[\left(\frac{v(m)}{N}\right)^{\frac{p}{p+1}}+(N-1)\left(\frac{Nm-v(m)}{N(N-1)}\right)^{\frac{p}{p+1}}\right],\\
v(m) & =\left\lfloor m\right\rfloor ^{2}+(2\left\lfloor m\right\rfloor +1)(m-\left\lfloor m\right\rfloor ),\\
\theta^{*} & =m^{*}-\left\lfloor m^{*}\right\rfloor .
\end{align*}
\end{prop}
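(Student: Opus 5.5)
The plan is to exploit concavity of the objective in the design together with permutation symmetry. With $W=I$ we have $x=g$ and $M_{\delta}=\expec{\delta}{gg'}$, which is linear in $\delta$. The map $A\mapsto\tr(A^{p/(p+1)})$ is concave on the positive semidefinite cone, because $t\mapsto t^{p/(p+1)}$ is operator concave for exponents in $(0,1]$. Hence $\delta\mapsto\tr(M^{p/(p+1)}_{\delta})$ is concave on $\mathcal{D}$.

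\emph{Step 1: reduction to exchangeable designs.} Let $\pi$ be a permutation of the $N$ units and $\delta_{\pi}$ the design obtained by relabeling. Then $M_{\delta_{\pi}}=P_{\pi}M_{\delta}P'_{\pi}$, which has the same spectrum as $M_{\delta}$, so the objective is permutation invariant. Averaging $\delta$ over all $N!$ relabelings therefore weakly increases the objective by Jensen's inequality. This is the same argument as the symmetry result in Appendix \ref{appx:Symmetry-Normalization}.

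\emph{Step 2: parametrizing exchangeable designs.} Every exchangeable distribution on $\{0,1\}^{N}$ is a mixture, over the number treated $K_{1}$, of complete randomizations with $K_{1}$ treated. Write $m=\expec{}{K_{1}}$ and $s=\expec{}{K_{1}^{2}}$. Then
\[
M_{ii}=m/N,\qquad M_{ik}=(s-m)/(N(N-1))\ \text{ for } i\neq k.
\]
So $M$ is a combination of $I$ and $\one\one'$, with eigenvalues
\begin{itemize}
\item $s/N$ on the eigenvector $\one$;
\item $(Nm-s)/(N(N-1))$ with multiplicity $N-1$.
\end{itemize}
Writing $a=p/(p+1)\in(0,1)$, the objective becomes
\[
f(m,s)=\left(\frac{s}{N}\right)^{a}+(N-1)\left(\frac{Nm-s}{N(N-1)}\right)^{a}.
\]
The design enters only through the pair $(m,s)$.

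\emph{Step 3: optimizing $s$ for fixed $m$.} We have
\[
\partial_{s}f\propto\left(\frac{s}{N}\right)^{a-1}-\left(\frac{Nm-s}{N(N-1)}\right)^{a-1}.
\]
Since $a-1<0$, this derivative is $\le0$ exactly when $s/N\ge(Nm-s)/(N(N-1))$, which is equivalent to $s\ge m$. This always holds, because $K_{1}^{2}\ge K_{1}$ for integer $K_{1}$. So $f$ is nonincreasing in $s$, and for fixed $m$ we want the smallest feasible second moment of an integer random variable on $\{0,\dots,N\}$ with mean $m$. By convexity of $t\mapsto t^{2}$, the minimum is attained by putting mass only on $\lfloor m\rfloor$ and $\lfloor m\rfloor+1$, with probability $\theta=m-\lfloor m\rfloor$ on the latter. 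Its second moment is $\lfloor m\rfloor^{2}+(2\lfloor m\rfloor+1)\theta=v(m)$.

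\emph{Step 4: optimizing $m$.} The problem reduces to maximizing $f(m,v(m))$ over $m\in[0,N]$. The function $v$ is continuous and piecewise linear, so $f(m,v(m))$ is continuous on a compact interval and a maximizer $m^{*}$ exists. Setting $\theta^{*}=m^{*}-\lfloor m^{*}\rfloor$ yields exactly the stated mixture of complete randomizations.

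\emph{Main obstacle.} The delicate step is Step 1: justifying concavity of the trace power and the averaging argument cleanly. I would cite Theorem II.3.1 or the operator-concavity results of \citet{bhatia2013matrix} for this. A second point needing care is the characterization of exchangeable designs as mixtures over $K_{1}$; this is standard, since conditional on $K_{1}$ an exchangeable law is uniform over assignments with $K_{1}$ treated. The sign check in Step 3 is easy once the spectrum is written down.
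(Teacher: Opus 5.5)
Your proposal is correct and follows essentially the same route as the paper's proof. Both arguments reduce to exchangeable designs via symmetry and concavity of $A\mapsto\tr(A^{p/(p+1)})$, write such designs as mixtures of complete randomizations over $K_1$, and use the two eigenvalues $s/N$ and $(Nm-s)/(N(N-1))$ to show the objective is nonincreasing in $s=\expec{}{K_1^2}$ because $s\ge m$; the second moment is then minimized for fixed mean by the adjacent-integer two-point law before optimizing over $m$. One phrasing is loose: the minimal-second-moment step rests on integrality of $K_1$, for instance $(K_1-\lfloor m\rfloor)(K_1-\lfloor m\rfloor-1)\ge 0$, rather than on convexity of $t\mapsto t^{2}$ alone, which only gives $\expec{}{K_1^2}\ge m^{2}$.
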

In the case where $m^{*}$ is an integer, this result shows complete
randomization with exactly $K_{1}=m^{*}$ units treated is MSE-optimal.
The expression for $m^{*}$ shows that when $p=1$ the optimal share
of treated units solves
\begin{align*}
\max_{t\in[0,1]} & t+\sqrt{N-1}\cdot\sqrt{t(1-t)}
\end{align*}
giving $t^{*}=(1+1/\sqrt{N})/2>1/2$. As $p\rightarrow\infty$ with
$N$ fixed, this treated share grows to one. Intuitively, without
recentering, the constant mean direction counts towards the “signal”
of the treatment leading to unusually high treatment rates.

A complementary result shows that the MSE advantage of non-recentered
estimators disappears (i.e., there is no MSE cost to recentering)
whenever $W\mathbf{1}$ is spanned by controls (included as in Appendix
\ref{appx:predetermined_covs}, below). In the previous no-spillovers
special case, this would hold when estimation includes an intercept:
\begin{prop}
\label{prop:mse_recentered}Let $R$ be a matrix of predetermined
controls, let $M_{R}=I-R(R^{\prime}R)^{\dag}R^{\prime}$ and assume
$W\mathbf{1}\in\text{Col}(R)$. Then there is no loss in only considering
recentered instruments when minimizing worst-case MSE; i.e. when solving
\begin{align*}
\delta^{*},z^{*} & \in\arg\min_{\delta\in\mathcal{D},z\in\mathcal{Z}}\max_{\mathcal{E}\in\mathcal{F}_{p}(\sigma)}\frac{\expec{\delta,\mathcal{E}}{(z(g)^{\prime}M_{R}\varepsilon)^{2}}}{\expec{\delta}{z(g)^{\prime}M_{R}x}^{2}},
\end{align*}
it is without loss to limit to $z$ with $\expec{\delta}{z_{i}(g)}=0$
for all $i$.
\end{prop}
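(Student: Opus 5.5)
The idea is to take any design–instrument pair $(\delta,z)$ and build a symmetrized pair $(\delta^{sym},z^{sym})$ with $\expec{\delta^{sym}}{z^{sym}_{i}(g)}=0$ for all $i$. This pair should leave the denominator unchanged and weakly lower the numerator for \emph{every} error distribution $\mathcal{E}$, so it also lowers the worst case over $\mathcal{F}_{p}(\sigma)$. The only structural fact used is $M_{R}W\mathbf{1}=0$, which follows from $W\mathbf{1}\in\col(R)$. It gives
\[
M_{R}x(\mathbf{1}-g)=M_{R}W\mathbf{1}-M_{R}Wg=-M_{R}x(g),
\]
so under controls, complementing the assignment flips the sign of the residualized exposure.

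\textbf{Construction.} As in the proof of Proposition \ref{prop:DesignProperties}, let $\delta^{sym}(g)=\tfrac{1}{2}\left(\delta(g)+\delta(\mathbf{1}-g)\right)$. On the support of $\delta^{sym}$ define the probability-weighted antisymmetrization
\[
z^{sym}(g)=\frac{\delta(g)z(g)-\delta(\mathbf{1}-g)z(\mathbf{1}-g)}{\delta(g)+\delta(\mathbf{1}-g)}.
\]
Then $z^{sym}(\mathbf{1}-g)=-z^{sym}(g)$. Since $\delta^{sym}$ is invariant under $g\mapsto\mathbf{1}-g$, pairing each $g$ with its complement gives $\expec{\delta^{sym}}{z^{sym}}=0$, so $z^{sym}$ is recentered.

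\textbf{Denominator.} Expand
\[
\expec{\delta^{sym}}{z^{sym\prime}M_{R}x}=\tfrac{1}{2}\sum_{g}\left[\delta(g)z(g)'M_{R}x(g)-\delta(\mathbf{1}-g)z(\mathbf{1}-g)'M_{R}x(g)\right].
\]
Apply the sign-flip identity to the second term, replacing $-M_{R}x(g)$ by $M_{R}x(\mathbf{1}-g)$. Reindexing $g\mapsto\mathbf{1}-g$ then turns the second sum into a copy of the first. Hence the bracket sums to $2\expec{\delta}{z'M_{R}x}$, and the first stage is exactly preserved.

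\textbf{Numerator.} Fix $\mathcal{E}$, write $A=M_{R}\expec{\mathcal{E}}{\varepsilon\varepsilon'}M_{R}\succeq0$, and use $g\perp\varepsilon$, so the numerator equals $\expec{\delta}{z'Az}$. For each $g$, the vector $z^{sym}(g)$ is a convex combination of $z(g)$ and $-z(\mathbf{1}-g)$ with weights $\delta(g)/(\delta(g)+\delta(\mathbf{1}-g))$ and $\delta(\mathbf{1}-g)/(\delta(g)+\delta(\mathbf{1}-g))$. Since $u\mapsto u'Au$ is convex and even, Jensen's inequality gives
\[
\left(\delta(g)+\delta(\mathbf{1}-g)\right)z^{sym\prime}Az^{sym}\le\delta(g)z(g)'Az(g)+\delta(\mathbf{1}-g)z(\mathbf{1}-g)'Az(\mathbf{1}-g).
\]
Multiply by $\tfrac{1}{2}$ and sum over $g$; the right side sums to $\expec{\delta}{z'Az}$ by the same reindexing. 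So $\expec{\delta^{sym}}{(z^{sym\prime}M_{R}\varepsilon)^{2}}\le\expec{\delta}{(z'M_{R}\varepsilon)^{2}}$ for every $\mathcal{E}$. Taking the maximum over $\mathcal{F}_{p}(\sigma)$ preserves the inequality, so $(\delta^{sym},z^{sym})$ is weakly better in the minimax problem. If $(\delta,z)$ was optimal, the recentered pair is optimal too.

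\textbf{Main obstacle.} The delicate point is that $z^{sym}$ must be a function of $g$ alone. A naive mixture, using $z$ on $\delta$-draws and $-z(\mathbf{1}-\cdot)$ on complement draws, is ill-defined when $g$ and $\mathbf{1}-g$ are both in the support of $\delta$. The probability-weighted definition above resolves this, and convexity is exactly what lets the averaging go through. Two degenerate cases also need checking. Points with $\delta(g)=\delta(\mathbf{1}-g)=0$ lie outside the support and can be ignored. The case of a zero denominator is harmless, because the objective is then infinite for both pairs, since the denominator is preserved.
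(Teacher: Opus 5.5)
Your proof is correct, and it takes a different route from the paper's.

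\textbf{The paper's route.} The paper first extends Theorem~\ref{prop:OptimalIV} to the MSE criterion. For a fixed design it asserts that the optimal non-recentered instrument is $(A_\delta^{\dag})^{1/(p+1)}M_RWg$, with $A_\delta=M_RW\expec{\delta}{gg'}W'M_R$, so the design problem reduces to maximizing $\tr(A_\delta^{p/(p+1)})$. It then uses the same sign-flip identity $M_Rx(\mathbf{1}-g)=-M_Rx(g)$ to get $A_{\delta^{c}}=A_\delta$, hence $A_{\delta^{sym}}=A_\delta$, so symmetrized designs lose nothing. Finally, under a design with $\expec{\delta}{g}=\tfrac12\mathbf{1}$, that optimal instrument equals $(A_\delta^{\dag})^{1/(p+1)}M_RW(g-\tfrac12\mathbf{1})$ and is automatically recentered.

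\textbf{Your route.} You never characterize the optimum. Instead you symmetrize the design and the instrument jointly, through the probability-weighted odd part $z^{sym}$. You then show directly that the first stage is unchanged and that $\expec{\delta^{sym}}{z^{sym}z^{sym\prime}}\preceq\expec{\delta}{zz'}$; the latter is exactly what your Jensen step proves for every $A\succeq0$.

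\textbf{What each buys.} Your argument is more elementary and more general. It does not need the inner minimax problem solved, which is the step the paper dispatches with ``by the same steps as above.'' Because the dominance holds pointwise in $\mathcal{E}$, it also applies to any class of error distributions, not just Schatten-$p$ balls. The paper's route, in exchange, delivers the explicit MSE-optimal instrument and design objective. It also makes it immediate, since $M_RW\mathbf{1}=0$, that under a symmetric design $A_\delta$ reduces to $S_{\delta,R}$ from Appendix~\ref{appx:predetermined_covs}. So the MSE-optimal solution coincides with the recentered, covariate-adjusted one, not merely that recentering is without loss.
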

Intuitively, with an intercept, the non-recentered instrument can
no longer leverage the constant mean direction which can yield unusually
high treatment rates (while allowing for some bias). Once an intercept
or, more generally, the expected treatment $W\mathbf{1}$ is absorbed
by the controls there is no reason to not recenter. 

\subsection{Equicorrelated \texorpdfstring{$\varepsilon_{i}$}{εi}\label{appx:Equicorrelated}}

Consider $\varepsilon\in\mathbb{R}^{N}$ where, for some $\nu>0$
and $\rho\in[-(N-1)^{-1},1]$,
\begin{align*}
\Omega_{\rho} & =\expec{}{\varepsilon\varepsilon'}=\nu^{2}\left[(1-\rho)I+\rho\one\one'\right].
\end{align*}
Here $\nu^{2}=\var{}{\varepsilon_{i}}$ and $\rho=\corr{}{\varepsilon_{i},\varepsilon_{j}}$
for $j\ne i$. Since $\Omega_{\rho}=\nu^{2}\left[(1-\rho)(I-\one\one'/N)+(1+(N-1)\rho)\one\one'/N\right]$
by Lemma \ref{lem:equicor}: 
\begin{align*}
\left\Vert \Omega_{\rho}\right\Vert _{p} & =\left[\frac{1}{N}(1+(N-1)\rho)^{p}+\frac{N-1}{N}(1-\rho)^{p}\right]^{1/p}\cdot\nu^{2}N^{1/p}.
\end{align*}
Thus, $\Omega_{\rho}\in\mathcal{F}_{p}(\sigma)$ for some $\sigma\ge\nu$
if and only if 
\begin{equation}
\left[\frac{1}{N}(1+(N-1)\rho)^{p}+\frac{N-1}{N}(1-\rho)^{p}\right]^{1/p}\le\frac{\sigma^{2}}{\nu^{2}}.\label{eq:equicorr_condition}
\end{equation}
This condition is illustrated in Figure \ref{fig:Equicorrelation}
which shows the maximum $\nu^{2}$ corresponding to each $\rho$,
depending on $N\in\left\{ 10,50\right\} $ and $p\in\left\{ 1,2,\infty\right\} $.
It is evident that smaller $p$ makes the condition weaker: For $p=1$,
it holds for any $\rho$ and $N$. But for larger $p$, the condition
penalizes correlations, especially in large samples: $\nu^{2}$ has
to be much smaller to stay within $\mathcal{F}_{p}(\sigma)$.

\subsection{Rank-1 Designs\label{appx:Rank-1-Designs}}
\begin{prop}
\label{prop:Rank1-designs}When $p=\infty$, there is always an optimal
design $\delta^{*}$ supported on a single pair of complementary assignments
$g^{*}$ and $\one-g^{*}$ so $\var{\delta}g$ has rank 1. Moreover,
when $W\ge0$ elementwise, one such design corresponds to setting
$g=\mathbf{1}$ with probability $0.5$ and $g=\mathbf{0}$ otherwise.
\end{prop}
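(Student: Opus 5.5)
With $p=\infty$, Proposition \ref{prop:OptimalDesign} reduces the design problem to maximizing $\tr(S_\delta)=\tr(W\var{\delta}gW')=\expec{\delta}{\|W(g-\expec{\delta}g)\|_2^2}$. The plan has three steps: first restrict to symmetric designs, then exploit linearity of the objective over those designs, and finally verify the nonnegative-$W$ case directly.

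\emph{Step 1: restrict to symmetric designs.} By the symmetrization argument in the proof of Proposition \ref{prop:DesignProperties}, we may assume $\expec{\delta}{g}=\tfrac12\one$. Indeed, averaging $\delta$ with its complement $\delta^{c}$ gives
\[
\var{\delta^{sym}}{g}\succeq\var{\delta}{g},
\]
and the trace objective is monotone in the positive semidefinite order.

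\emph{Step 2: linearity and the extreme-point argument.} Once $\expec{\delta}{g}=\tfrac12\one$ is fixed, the objective is linear in $\delta$:
\[
\tr(S_\delta)=\sum_{g}\delta(g)\,\bigl\|W\bigl(g-\tfrac12\one\bigr)\bigr\|_2^2 .
\]
The set of symmetric designs (those invariant under $g\mapsto\one-g$) is the convex hull of the pair measures
\[
\delta_{g^*}=\tfrac12\mathbf{1}_{g^*}+\tfrac12\mathbf{1}_{\one-g^*}.
\]
Each pair measure is symmetric, and every symmetric $\delta$ can be written as $\sum_{g}\delta(g)\,\delta_{g}$. A linear functional over a polytope attains its maximum at an extreme point, so an optimal design is $\delta_{g^*}$ with
\[
g^*\in\arg\max_{g\in\{0,1\}^K}\bigl\|W\bigl(g-\tfrac12\one\bigr)\bigr\|_2^2 .
\]
Under $\delta_{g^*}$, $g-\tfrac12\one=\pm(g^*-\tfrac12\one)$, so $\var{\delta}{g}=(g^*-\tfrac12\one)(g^*-\tfrac12\one)'$, which has rank 1. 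The hypothesis $S_\delta\neq0$ for some $\delta$ guarantees that this maximum is positive, so the chosen design is admissible.

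\emph{Step 3: the case $W\ge0$.} Write $s=2g-\one\in\{\pm1\}^K$, so the objective becomes $\tfrac14\|Ws\|_2^2$. For each row $i$, the triangle inequality with nonnegative weights gives
\[
|w_i's|\le w_i'\one .
\]
Summing the squares over $i$ yields $\|Ws\|_2^2\le\|W\one\|_2^2$, with equality at $s=\one$, that is, $g^*=\one$. The resulting design sets $g=\one$ with probability $0.5$ and $g=\mathbf{0}$ otherwise.

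The only step needing care is the extreme-point decomposition in Step 2. The objective is quadratic in $\delta$ in general, because the mean $\expec{\delta}{g}$ itself depends on $\delta$. The argument must therefore restrict to the symmetric face before invoking linearity, which is exactly why Step 1 comes first.
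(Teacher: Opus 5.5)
Your proposal is correct and follows essentially the paper's proof: symmetrize to $\expec{\delta}{g}=\tfrac12\one$, rewrite the objective as $\tfrac14\expec{\delta}{h'W'Wh}$ with $h=2g-\one$, and bound this expectation by its maximum over $h\in\{-1,1\}^K$, which a complementary pair attains. Your extreme-point decomposition is a repackaging of that expectation-at-most-max bound, and it is valid because the symmetrized design is complement-invariant, not merely mean-$\tfrac12$; for $W\ge0$ the paper uses entrywise nonnegativity of $W'W$ where you use a row-wise triangle inequality, which amounts to the same observation.
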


\subsection{Symmetry Normalization\label{appx:Symmetry-Normalization}}
\begin{prop}
\label{prop:Symmetry}Suppose $\mathcal{G}$ is a finite group of
relabelings of observations and shocks represented by pairs of permutation
matrices $(P,Q)$ such that $PWQ^{\prime}=W$ for every $(P,Q)\in\mathcal{G}$.
Then there exists a solution $\delta^{*}$ to Proposition \ref{prop:OptimalDesign}
such that $\expec{\delta^{\ast}}{g_{k}}=0.5$, for all $k$ and $\delta^{*}$
is $\mathcal{G}$-invariant: i.e., if $g\sim\delta^{*}$ then $Qg$
has the same distribution as $g$ for every $(P,Q)\in\mathcal{G}$.
Equivalently, $\delta^{*}$ may be chosen to assign equal probability
to assignments in the same $\mathcal{G}$-orbit, i.e. for any $g,g'\in\left\{ 0,1\right\} ^{K}$
such that $g'=Qg$ for some $(P,Q)\in\mathcal{G}$.
\end{prop}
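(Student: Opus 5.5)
We would combine the symmetrization step from the proof of Proposition \ref{prop:DesignProperties} with group averaging over $\mathcal{G}$. The argument rests on the concavity of $F(\delta)=\tr(S^{p/(p+1)}_{\delta})$, viewed as a function of $\Sigma_\delta=\var{\delta}g$, together with the invariance of $F$ under the relabelings in $\mathcal{G}$.

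First, we fix an optimal $\delta$ from Proposition \ref{prop:OptimalDesign}. By the first claim of Proposition \ref{prop:DesignProperties}, we may take it to have $0.5$ marginals, and it remains optimal. For $(P,Q)\in\mathcal{G}$ we define the relabeled design $\delta_Q$ as the law of $Qg$ when $g\sim\delta$. Then $\expec{\delta_Q}{g}=Q\expec{\delta}{g}=0.5\one$ and $\var{\delta_Q}{g}=Q\Sigma_\delta Q'$. Since $PWQ'=W$ and $P,Q$ are orthogonal, we have $W=P'WQ$, so
\[
S_{\delta_Q}=WQ\Sigma_\delta Q'W'=P'(PWQ')\,\cdots
\]
More directly, $WQ=P'W$ gives $S_{\delta_Q}=P'W\Sigma_\delta W'P=P'S_\delta P$. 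This is a unitary conjugation, so $F(\delta_Q)=F(\delta)$, and every $\delta_Q$ is optimal.

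Second, we average: $\bar\delta=|\mathcal{G}|^{-1}\sum_{(P,Q)\in\mathcal{G}}\delta_Q$. This is a mixture of designs that all share the mean $0.5\one$, so the between-component term in the law of total variance vanishes. Hence $\Sigma_{\bar\delta}$ is the average of the $Q\Sigma_\delta Q'$, and $S_{\bar\delta}=|\mathcal{G}|^{-1}\sum P'S_\delta P$. The map $A\mapsto\tr(A^{p/(p+1)})$ is concave on the PSD cone because $t\mapsto t^{p/(p+1)}$ is operator concave for exponents in $(0,1]$; alternatively, one can use the majorization argument already employed for pinchings in Proposition \ref{prop:DesignProperties}. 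Concavity gives $F(\bar\delta)\ge|\mathcal{G}|^{-1}\sum F(\delta_Q)=F(\delta)$, so $\bar\delta$ is optimal and still has $0.5$ marginals.

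Third, we verify invariance. For $(P_0,Q_0)\in\mathcal{G}$, the law of $Q_0g$ under $\bar\delta$ is the average of the laws of $Q_0Qg$ under $\delta$. Since $\mathcal{G}$ is a group, $(P_0P,Q_0Q)$ runs over $\mathcal{G}$, so this average equals $\bar\delta$. The equivalent orbit formulation then follows, because a $\mathcal{G}$-invariant distribution is constant on orbits, and conversely.

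The main subtlety is the group-structure bookkeeping. The pairs must compose as a group under $(P,Q)(P',Q')=(PP',QQ')$ so that the reindexing in the third step is valid, and the relabeling must act on $\delta$ through $Q$ alone while its effect on $S$ is absorbed by $P$. The concavity step is standard, but if operator concavity feels too heavy, the fallback is to note that $\tr(A^{s})$ for $s\in(0,1]$ is a concave unitarily invariant spectral function (by Ky Fan/Lewis), which suffices.
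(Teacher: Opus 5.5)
Your proposal is correct and follows the paper's proof essentially step for step: reduce to $0.5$ marginals via Proposition~\ref{prop:DesignProperties}, relabel by $Q$ so that $S$ transforms by an orthogonal conjugation, average over $\mathcal{G}$ using equal means and concavity of $A\mapsto\tr(A^{p/(p+1)})$, and pass to the orbit formulation. Your explicit group-reindexing check of $\mathcal{G}$-invariance is a useful addition, since the paper only asserts it. One small slip: right-multiplying $PWQ'=W$ by $Q$ gives $WQ=PW$, not $P'W$, so $S_{\delta_Q}=PS_\delta P'$. This is harmless, because it is still an orthogonal conjugation and the group average is unchanged after reindexing by inverses.
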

In particular, this result shows that if $W$ is invariant under permutations
of a subset of shocks (possibly together with corresponding permutations
of observations), then we may without loss restrict attention to designs
that are exchangeable within that subset. Similarly, if $W$ has multiple
isomorphic blocks, we may take the induced block-level design to be
exchangeable across those blocks. A version of this result also goes
through for the multiple-exposure extension, with $W$ replaced by
$W-cU$, as long as $W-cU$ is invariant under $\mathcal{G}$ for
every $c$.

\subsection{Special Case: Leave-Out Averages\label{appx:Leave-Out-Averages}}

\paragraph{No Direct Effects.}

In this special case the intervention units coincide with outcome
units, which are partitioned into $J$ clusters $C_{j}$ of sizes
$N_{j}$ with $x_{i}=\bar{g}_{-i}=\frac{1}{N_{j(i)}-1}\sum_{k\in C_{j(i)},k\neq i}g_{k}$.
We know from Proposition \ref{prop:DesignProperties} that it is optimal
to randomize across clusters with $\expec{\delta}{g_{i}}=0.5$. Focusing
on cluster $j$, the within-cluster exposure matrix can be written
as
\begin{align*}
W_{(j)} & =\frac{\one_{N_{j}}\one_{N_{j}}'-I_{N_{j}}}{N_{j}-1}=-\frac{1}{N_{j}-1}(I_{N_{j}}-P_{N_{j}})+P_{N_{j}},
\end{align*}
with
\[
W_{(j)}'W_{(j)}=(N_{j}-1)^{-2}(I_{N_{j}}-P_{N_{j}})+P_{N_{j}}.
\]

We solve the relaxed problem for cluster $j$ and propose a feasible
implementation of this solution. Since $W_{(j)}$ is symmetric, we
apply Corollary \ref{cor:sym_diagonal} to get
\[
\Sigma^{\ast}_{(j)}=\frac{1}{4}\frac{\left(W_{(j)}'W_{(j)}\right)^{p}}{\tr\left(\left(W_{(j)}'W_{(j)}\right)^{p}\right)/N_{j}}.
\]
By Lemma \ref{lem:equicor}, 
\[
\left(W_{(j)}'W_{(j)}\right)^{p}=r_{j}\left(I_{N_{j}}-P_{N_{j}}\right)+P_{N_{j}}
\]
and
\[
\tr\left(\left(W_{(j)}'W_{(j)}\right)^{p}\right)=r_{j}(N_{j}-1)+1,
\]
such that the off-diagonal elements of the within-cluster correlation
matrix $4\Sigma^{\ast}_{(j)}$ are all equal to
\[
\psi_{j}=\frac{(1-r_{j})/N_{j}}{r_{j}(N_{j}-1)/N_{j}+1/N_{j}}=\frac{1-r_{j}}{1+(N_{j}-1)r_{j}},
\]
Since $r_{j}\in[0,1]$ and therefore $\psi_{j}\in[0,1]$, this relaxed
problem solution can be implemented by mixing clustered assignment
and the Bernoulli design with probabilities $\psi_{j}$ and $1-\psi_{j}$.

For the optimal instrument, for $i\in C_{j}$ write
\begin{align}
\tilde{x}_{i} & =x_{i}-0.5=\frac{N_{j}\bar{g}_{j}-g_{i}}{N_{j}-1}-0.5=(\bar{g}_{j}-0.5)+\frac{\bar{g}_{j}-g_{i}}{N_{j}-1},\label{eq:loo_decomposition}
\end{align}
where the first term is cluster-level and the second term averages
to zero within the cluster. Moreover,
\[
S_{(j)}=W_{(j)}\Sigma^{\ast}_{(j)}W_{(j)}\propto\frac{W_{(j)}\left(W_{(j)}'W_{(j)}\right)^{p}W_{(j)}}{\tr\left(\left(W_{(j)}'W_{(j)}\right)^{p}\right)/N_{j}}=\frac{\left(N_{j}-1\right)^{-2(p+1)}\left(I_{N_{j}}-P_{N_{j}}\right)+P_{N_{j}}}{r_{j}(N_{j}-1)/N_{j}+1/N_{j}}
\]
and, applying Lemma \ref{lem:equicor} again, 
\begin{align*}
S^{-1/(p+1)}_{(j)} & =\left(r_{j}(N_{j}-1)/N_{j}+1/N_{j}\right)^{1/(p+1)}\cdot\left(\left(N_{j}-1\right)^{2}\left(I_{N_{j}}-P_{N_{j}}\right)+P_{N_{j}}\right)\\
 & =\left(1+(N_{j}-1)\psi_{j}\right)^{-1/(p+1)}\cdot\left(\left(N_{j}-1\right)^{2}\left(I_{N_{j}}-P_{N_{j}}\right)+P_{N_{j}}\right).
\end{align*}
Moreover, for $\tilde{x}_{(j)}=W_{(j)}\tilde{g}_{(j)}$ where $\tilde{g}_{(j)}$
is the appropriate subvector of $\tilde{g}$,
\begin{align*}
z^{*}_{i} & =\left(S^{-1/(p+1)}_{(j)}\tilde{x}_{(j)}\right)_{i}\\
 & =\left(S^{-1/(p+1)}_{(j)}W\tilde{g}_{(j)}\right)_{i}\\
 & =\left(1+(N_{j}-1)\psi_{j}\right)^{-1/(p+1)}\cdot\left(-\left(N_{j}-1\right)\left(I_{N_{j}}-P_{N_{j}}\right)\tilde{g}_{(j)}+P_{N_{j}}\tilde{g}_{(j)}\right)_{i}\\
 & =\left(1+(N_{j}-1)\psi_{j}\right)^{-1/(p+1)}\left(\left(\bar{g}_{j}-0.5\right)+\left(N_{j}-1\right)\left(\bar{g}_{j}-g_{i}\right)\right)\cdot
\end{align*}

\paragraph{With Direct Effects.}

We start again by characterizing the solution of the relaxed problem.
For a fixed $c$,
\begin{align*}
W_{(j),c}=W_{(j)}-cI_{N_{j}} & =\frac{1}{N_{j}-1}\left(\one_{N_{j}}\one_{N_{j}}'-\left(1+c(N_{j}-1)\right)I_{N_{j}}\right)\\
 & =-\left(\frac{1}{N_{j}-1}+c\right)\left(I_{N_{j}}-P_{N_{j}}\right)+\left(1-c\right)P_{N_{j}}.
\end{align*}
Thus,
\[
W_{(j),c}'W_{(j),c}=\left(\frac{1}{N_{j}-1}+c\right)^{2}\left(I_{N_{j}}-P_{N_{j}}\right)+\left(1-c\right)^{2}P_{N_{j}}
\]
and, by Lemma \ref{lem:equicor},
\[
\left(W_{(j),c}'W_{(j),c}\right)^{p}=\left|\frac{1}{N_{j}-1}+c\right|^{2p}(I_{N_{j}}-P_{N_{j}})+\left|1-c\right|^{2p}P_{N_{j}}
\]
with
\[
\tr\left(\left(W_{(j),c}'W_{(j),c}\right)^{p}\right)=\left|\frac{1}{N_{j}-1}+c\right|^{2p}\left(N_{j}-1\right)+\left|1-c\right|^{2p}.
\]
By Corollary \ref{cor:sym_diagonal}, the correlation matrix is given
by 
\[
4\Sigma^{\ast}_{(j)}=\frac{\left(W_{(j),c}'W_{(j),c}\right)^{p}}{\tr\left(\left(W_{(j),c}'W_{(j),c}\right)^{p}\right)/N_{j}}=\frac{\left|\frac{1}{N_{j}-1}+c\right|^{2p}(I_{N_{j}}-P_{N_{j}})+\left|1-c\right|^{2p}P_{N_{j}}}{\frac{N_{j}-1}{N_{j}}\left|\frac{1}{N_{j}-1}+c\right|^{2p}+\frac{1}{N_{j}}\left|1-c\right|^{2p}}
\]
with off-diagonal elements
\[
\psi^{\perp}_{j}(c)=\frac{\left|1-c\right|^{2p}-\left|\frac{1}{N_{j}-1}+c\right|^{2p}}{\left|1-c\right|^{2p}+(N_{j}-1)\left|\frac{1}{N_{j}-1}+c\right|^{2p}}=\frac{1-r^{\perp}_{j}(c)}{1+(N_{j}-1)r^{\perp}_{j}(c)}
\]
for $r^{\perp}_{j}(c)=\left(\frac{\begin{vmatrix}\frac{1}{N_{j}-1}+c\end{vmatrix}}{|1-c|}\right)^{2p}$.
Thus we have
\[
S_{(j),c}=W_{(j),c}\Sigma^{\ast}_{(j)}W_{(j),c}'\propto\frac{\left|\frac{1}{N_{j}-1}+c\right|^{2(p+1)}(I_{N_{j}}-P_{N_{j}})+\left|1-c\right|^{2(p+1)}P_{N_{j}}}{\frac{N_{j}-1}{N_{j}}\left|\frac{1}{N_{j}-1}+c\right|^{2p}+\frac{1}{N_{j}}\left|1-c\right|^{2p}}.
\]

Then $c^{\ast}$ corresponding to the optimal design minimizes 
\begin{align}
\sum_{j}\tr\left(S^{p/(p+1)}_{(j),c}\right) & =\sum_{j}N^{p/(p+1)}_{j}\frac{\tr\left(\left|\frac{1}{N_{j}-1}+c\right|^{2p}(I_{N_{j}}-P_{N_{j}})+\left|1-c\right|^{2p}P_{N_{j}}\right)}{\left(\left(N_{j}-1\right)\left|\frac{1}{N_{j}-1}+c\right|^{2p}+\left|1-c\right|^{2p}\right)^{p/(p+1)}}\nonumber \\
 & =\sum_{j}N^{p/(p+1)}_{j}\left(\left(N_{j}-1\right)\left|\frac{1}{N_{j}-1}+c\right|^{2p}+\left|1-c\right|^{2p}\right)^{1/(p+1)}.\label{eq:loo_c_min}
\end{align}

In general, we cannot guarantee that $\psi^{\perp}_{j}(c^{\ast})\ge0$.
However, for even $N_{j}$, any such $\Sigma^{\ast}_{(j)}$ is implementable
by mixing cluster-level shocks with complete randomization within
the cluster.\footnote{This is not exactly true for odd $N_{j}$ where the most negative
correlation implementable by binary shocks is $-1/N_{j}$ (compared
to the $-1/(N_{j}-1)$ bound guaranteed by positive semi-definiteness).
While we are not sure whether $\psi^{\perp}_{j}(c^{\ast})<-1/N_{j}$
is possible, if it is, we conjecture that the best design is one that
achieves correlations $-1/N_{j}$ by randomly choosing whether to
treat $(N_{j}-1)/2$ or $(N_{j}+1)/2$ units in the cluster and then
randomly choosing which ones.}

To characterize the optimal spillover instrument, we have
\begin{multline*}
S^{-1/(p+1)}_{(j),c}\propto\left(\frac{N_{j}-1}{N_{j}}\left|\frac{1}{N_{j}-1}+c\right|^{2p}+\frac{1}{N_{j}}\left|1-c\right|^{2p}\right)^{1/(p+1)}\\
\cdot\left(\left(\frac{1}{N_{j}-1}+c\right)^{-2}(I_{N_{j}}-P_{N_{j}})+\left(1-c\right)^{-2}P_{N_{j}}\right)
\end{multline*}
and thus
\begin{align*}
z^{\ast}_{i} & \propto\left(S^{-1/(p+1)}_{(j),c}W_{(j),c}\tilde{g}_{(j)}\right)_{i}\\
 & \propto\left(\frac{N_{j}-1}{N_{j}}\left|\frac{1}{N_{j}-1}+c\right|^{2p}+\frac{1}{N_{j}}\left|1-c\right|^{2p}\right)^{1/(p+1)}\cdot\left(-\left(\frac{1}{N_{j}-1}+c\right)^{-1}(I_{N_{j}}-P_{N_{j}})\tilde{g}_{(j)}+\left(1-c\right)^{-1}P_{N_{j}}\tilde{g}_{(j)}\right)_{i}\\
 & =\left(\frac{N_{j}-1}{N_{j}}\left|\frac{1}{N_{j}-1}+c\right|^{2p}+\frac{1}{N_{j}}\left|1-c\right|^{2p}\right)^{1/(p+1)}\cdot\left(\left(1-c\right)^{-1}\left(\bar{g}_{j}-0.5\right)+\left(\frac{1}{N_{j}-1}+c\right)^{-1}\left(\bar{g}_{j}-g_{i}\right)\right),
\end{align*}
evaluated at $c=c^{\ast}$.

We now consider two special cases. First, when all clusters are of
the same size, $N_{j}=n$, the optimal design simplifies:
\[
c^{\ast}=\arg\min_{c}(n-1)\begin{vmatrix}\frac{1}{n-1}+c\end{vmatrix}^{2p}+\left|1-c\right|^{2p},
\]
with the first order condition that implies
\begin{align*}
\frac{c^{\ast}+\frac{1}{n-1}}{1-c^{\ast}} & =(n-1)^{-1/(2p-1)}.
\end{align*}
Plugging this in gives $r^{\perp}(c^{\ast})=(n-1)^{-2p/(2p-1)}\in[0,1]$
with a common optimal mixing probability of $\psi^{\perp}(c^{\ast})=(1-r^{\perp}(c^{\ast}))/(1+(n-1)r^{\perp}(c^{\ast}))\in[0,1]$.
Thus, the design is implementable by mixing cluster-level assignment
with Bernoulli randomization. The optimal spillover instrument simplifies
to
\begin{align*}
z^{*}_{i} & \propto\left(\bar{g}_{j(i)}-\frac{1}{2}\right)+(n-1)^{1/(2p-1)}(\bar{g}_{j(i)}-g_{i}).
\end{align*}

Second, when $p=1$ (but cluster sizes can vary), the first-order
condition in (\ref{eq:loo_c_min}) yields $c^{\ast}=0$. Thus, the
design and optimal IV are the same as in the setting without direct
effects.

\subsection{Special Case: Spatial Spillovers\label{appx:Spatial-Spillovers}}

Fix $p=1$ and assume for simplicity that $2L+1$ divides $N$ and
$2L+1\le N/2$. That is, the circle can be split into an integer number
of blocks of length $2L+1$, at least two of them.

We first solve the relaxed problem. The structure of spillovers implies
$\left\Vert w_{\cdot k}\right\Vert _{2}=1/\sqrt{2L+1}$ and $w_{\cdot k}'w_{\cdot l}=\frac{1}{\left(2L+1\right)^{2}}\max\left\{ 2L+1-d(k,l),0\right\} $
for $k,l\in\left\{ 1,\dots,N\right\} $, where $d(k,l)$ is the circular
distance between units. Thus, by Corollary \ref{cor:p1} the covariances
in the relaxed problem's solution,
\[
\Sigma^{\ast}_{kl}=\frac{1}{4}\max\left\{ 1-\frac{d(k,l)}{2L+1},0\right\} ,
\]
decay linearly reaching zero at distance $2L+1$. This covariance
matrix can be implemented with binary shocks, specifically by the
procedure described in the main text. Indeed, the correlation between
$g_{k}$ and $g_{l}$ equals the probability that they are in the
same block, which equals $\max\left\{ 1-\frac{d(k,l)}{2L+1},0\right\} $,
as required.

For this symmetric design, Corollary \ref{cor:sym_diagonal} implies
$z^{\ast}\propto W^{\dag}\tilde{g}$. We now verify the characterization
of $z^{\ast}$ in the main text. First, since an $L$-neighborhood
of any unit $i$ includes the center of the block to which $i$ belongs
and no other block center, averaging the values of $z^{\ast}$ within
a neighborhood yields $\frac{1}{2L+1}\left((2L+1\right)\tilde{g}_{i}-2L\bar{\tilde{g}}+2L\bar{\tilde{g}})=\tilde{g}_{i}$.
Thus, $Wz^{\ast}=\tilde{g}$. For this $z^{\ast}$ to be $W^{\dag}\tilde{g}$,
it remains to show that $z^{\ast}\perp\ker\left(W\right)$, i.e. $z^{\ast}{}'v=0$
for any $v$ such that $Wv=0$. The structure of $W$ implies that
its kernel consists of vectors $v$ that are $(2L+1)$-periodic with
zero sums in each block: $v_{i+(2L+1)}=v_{i}$ and $\sum^{2L+1}_{i=1}v_{i}=0$.
Moreover, the structure of $z^{\ast}$ implies that its sums over
any $(2L+1)$-spaced orbit (i.e., set of units with the same position
relative to the block center) is the same, $\frac{N}{2L+1}\bar{\tilde{g}}$.
Thus, $\sum^{N}_{i=1}z^{\ast}_{i}v_{i}=\sum^{2L+1}_{i=1}\frac{N}{2L+1}\bar{\tilde{g}}v_{i}=0$,
which concludes the proof. We note that, despite $p=1$, this optimal
IV is not fully whitened because $S$ is singular in this setting.

\subsection{Predetermined Covariates\label{appx:predetermined_covs} }

This appendix extends the optimal design and instrument results to
specifications with predetermined controls. We allow the model error
to contain an arbitrary component in the span of the controls and
show that the controlled problem is exactly the baseline problem after
residualizing exposures and instruments: i.e., all formulas apply
with $W$ replaced by $M_{R}W$.

Let $R=(r_{i}')$ be an $N\times L$ matrix of predetermined covariates
and let $M_{R}=I-R(R^{\prime}R)^{\dag}R^{\prime}$ be the associated
residual-maker matrix. We expand the class of allowed error distributions
to include linear combinations of $r_{i}$ with unrestricted coefficients:
\begin{align*}
\mathcal{F}^{R}_{p}(\sigma) & =\left\{ \mathcal{E}:\varepsilon_{i}=\gamma'r_{i}+\tilde{\varepsilon}_{i},\,\gamma\in\mathbb{R}^{L},\,\left\Vert \expec{\mathcal{\mathcal{E}}}{\tilde{\varepsilon}\tilde{\varepsilon}^{\prime}}\right\Vert _{p}\le N^{1/p}\sigma^{2}\right\} .
\end{align*}
Let 
\begin{align*}
\tilde{x}_{\delta,R} & =M_{R}W\tilde{g}_{\delta},\\
S_{\delta,R}=\var{\delta}{\tilde{x}_{\delta,R}} & =M_{R}W\Sigma_{\delta}W'M_{R}.
\end{align*}
We then have the following results:
\begin{prop}
\label{prop:optimal_iv_wcovs}Fix $\delta$ and suppose $S_{\delta,R}\neq0$.
Then the minimax optimal recentered instrument is:
\begin{align*}
z^{*}_{\delta,R}(g) & =\left(S^{\dag}_{\delta,R}\right)^{\frac{1}{p+1}}\tilde{x}_{\delta,R}\in\arg\min_{z\in\mathcal{Z}_{\delta}}\max_{\mathcal{E}\in\mathcal{F}^{R}_{p}(\sigma)}\V{\delta,\mathcal{E}}z,
\end{align*}
with the worst-case approximate variance of
\begin{align*}
\max_{\mathcal{E}\in\mathcal{F}^{R}_{p}(\sigma)}\V{\delta,\mathcal{E}}z & =N^{1/p}\sigma^{2}\tr(S^{p/(p+1)}_{\delta,R})^{-(p+1)/p}.
\end{align*}
\end{prop}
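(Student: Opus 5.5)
The strategy is to reduce the controlled problem to Theorem \ref{prop:OptimalIV} with $W$ replaced by $M_{R}W$. There are three steps: (i) show that, for any recentered $z$ not annihilated by $M_R$, the worst case over $\mathcal{F}^{R}_{p}(\sigma)$ is infinite unless $z(g)\in\col(M_R)$ almost surely; (ii) restrict to such $z$ and rewrite the approximate variance; (iii) invoke the Theorem \ref{prop:OptimalIV} argument verbatim.

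\textbf{Step (i).} Since $\gamma$ is unrestricted, consider $\varepsilon=R\gamma+\tilde{\varepsilon}$. The numerator of $\V{\delta,\mathcal{E}}z$ contains the term $\gamma'\,\expec{\delta}{R'zz'R}\,\gamma$. The cross terms are harmless: $\gamma$ is fixed, $\tilde{\varepsilon}$ is independent of $g$, and $\expec{\delta}{z}=0$, so the cross terms vanish (and in any case they can only add to the variance). Hence the worst case is $+\infty$ unless $R'z(g)=0$ for every $g$ in the support of $\delta$.

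I would make this reduction concrete by projection. Replace any $z$ by $M_Rz$. This instrument is still recentered because $M_R$ is a fixed matrix. It satisfies $(M_Rz)'R\gamma=0$, so the $\gamma$ component of the error drops out, and $\var{}{(M_Rz)'\tilde{\varepsilon}}$ is governed by $Q_{M_Rz}=M_RQ_zM_R$.

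The denominator changes, however: $\expec{\delta}{z'M_R\tilde{x}}$ versus $\expec{\delta}{z'\tilde{x}}$. The cleanest way to handle this is to follow the paper's stated reading, namely that "controlling for $R$" means the estimator uses $M_Ry$ and $M_Rx$. Under that reading the objective is
\[
\frac{\var{}{z'M_R\varepsilon}}{\expec{\delta}{z'M_R x}^{2}}.
\]
The $\gamma$ component is then annihilated automatically, and $z$ enters only through $M_Rz$. I would state this interpretation explicitly, since the definition of $\V{}{}$ with controls is implicit in the statement.

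\textbf{Step (ii).} Set $\check{z}=M_Rz$. The objective becomes
\[
\frac{\tr\left(Q_{\check z}\Omega_{\tilde\varepsilon}\right)}{\expec{\delta}{\check z'\tilde{x}_{\delta,R}}^{2}},
\qquad \tilde{x}_{\delta,R}=M_RW\tilde{g}_{\delta},
\]
where the denominator uses recentering exactly as in Appendix \ref{appx:Proof-Thm1}. The duality step (\ref{eq:minimax_var}) with $\Omega_{\tilde\varepsilon}$ ranging over the Schatten ball then gives the value
\[
N^{1/p}\sigma^{2}\,\frac{\|Q_{\check z}\|_q}{\expec{\delta}{\check z'\tilde{x}_{\delta,R}}^{2}}.
\]
The worst-case matrices are again those from (\ref{eq:int_cov}).

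\textbf{Step (iii).} I would project $\check z$ onto $\tilde{x}_{\delta,R}$, using $S_{\delta,R}^{\dag}$ in place of $S^{\dag}$. Monotonicity of the Schatten norm reduces the problem to instruments $V\tilde{x}_{\delta,R}$. The Hölder inequality (\ref{eq:inst_inequality}) with $S=S_{\delta,R}$ then yields the lower bound $\tr\left(S_{\delta,R}^{p/(p+1)}\right)^{-(p+1)/p}$, attained by $V^*=\left(S^{\dag}_{\delta,R}\right)^{1/(p+1)}$.

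Finally, I would check that $z^*_{\delta,R}$ already lies in $\col(M_R)$. This holds because $\col(S_{\delta,R})\subseteq\col(M_R)$ and its pseudoinverse powers share that range. Consequently $M_Rz^*=z^*$, and the optimizer over $\check z$ is attained by an admissible $z\in\mathcal{Z}_\delta$, which matches the stated form.

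\textbf{Main obstacle.} I expect the main obstacle to be step (i): pinning down what $\V{\delta,\mathcal{E}}{z}$ means with controls, and showing that restricting to $\check z\in\col(M_R)$ loses nothing. Under the literal uncontrolled formula, one must argue that any $z$ with $R'z\neq0$ with positive probability gives infinite worst case by sending $\gamma\to\infty$. Even then, the optimum over the remaining $z$ (those with $R'z=0$) coincides with the optimum over $\check z$, so the two readings give the same result, and I would note this equivalence.
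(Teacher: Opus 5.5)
Your proposal is correct and is essentially the paper's proof. Because $\gamma$ is unrestricted, any $z$ with finite worst case must satisfy $R'z=0$ almost surely; on that set the objective is exactly the Theorem~\ref{prop:OptimalIV} objective with $W$ replaced by $M_RW$; and the unconstrained minimizer $z^{*}_{\delta,R}$ is admissible because $\col(S_{\delta,R})\subseteq\col(M_R)$.

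Two small remarks. First, the detour through the ``controlled'' reading of $\mathcal{V}$ is unnecessary, since the first argument in your step (i) already rules out every $z$ with $R'z\neq0$ under the literal definition. Second, the cross term is not generally zero or nonnegative: it equals $2\gamma'R'Q_z\mathbb{E}[\tilde\varepsilon]$, and $\mathcal{F}_p(\sigma)$ allows $\mathbb{E}[\tilde\varepsilon]\neq0$. Either take $\tilde\varepsilon=0$, as the paper does, or note that this term is only linear in $\gamma$, so the quadratic term still dominates.
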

\begin{prop}
\label{prop:optimal_design_wcovs}Suppose there is $\delta\in\mathcal{D}$
such that $S_{\delta,R}\ne0$. Then any 
\begin{align*}
\delta^{*}_{R} & \in\arg\max_{\delta\in\mathcal{D}}\tr(S^{p/(p+1)}_{\delta,R})
\end{align*}
together with the instrument from Proposition \ref{prop:optimal_iv_wcovs}
solves the minimax problem.
\end{prop}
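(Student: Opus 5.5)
The plan is to show that, once Proposition \ref{prop:optimal_iv_wcovs} is in hand, the design problem with controls is the baseline design problem of Proposition \ref{prop:OptimalDesign} with $W$ replaced by $M_{R}W$. I would follow the same backward-induction logic used there.

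For each fixed design $\delta$ with $S_{\delta,R}\neq0$, Proposition \ref{prop:optimal_iv_wcovs} gives the inner value
\[
\min_{z\in\mathcal{Z}_{\delta}}\max_{\mathcal{E}\in\mathcal{F}^{R}_{p}(\sigma)}\V{\delta,\mathcal{E}}z=N^{1/p}\sigma^{2}\tr(S^{p/(p+1)}_{\delta,R})^{-(p+1)/p}.
\]
This value is attained by $z^{*}_{\delta,R}$.

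For designs with $S_{\delta,R}=0$, I would argue that the worst-case variance is $+\infty$ for every recentered $z$. In that case $M_{R}\tilde{x}_{\delta}=0$ almost surely, so $\tilde{x}_{\delta}$ lies in $\col(R)$. The adversary can therefore pick $\gamma$ in the unrestricted component $\gamma'r_{i}$ with $\expec{\delta}{z'R}\gamma\neq0$ whenever the first stage $\expec{\delta}{z'\tilde{x}}$ is nonzero. Because $R$ is fixed and $z$ is recentered, this term enters only through $\var{\delta}{z'R\gamma}$, so the step needs care. More concretely, I would reuse the mechanism from the proof of Proposition \ref{prop:optimal_iv_wcovs}: an unbounded $\gamma$ makes the variance infinite unless $z$ is orthogonal to $R$ in the relevant sense, and orthogonal $z$ then has zero first stage when $M_{R}\tilde{x}_{\delta}=0$. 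Either way, such designs are weakly dominated, and by assumption at least one design has $S_{\delta,R}\neq0$.

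The outer minimization over $\delta$ then reduces to minimizing $N^{1/p}\sigma^{2}\tr(S^{p/(p+1)}_{\delta,R})^{-(p+1)/p}$. This is a strictly decreasing function of the positive quantity $\tr(S^{p/(p+1)}_{\delta,R})$, so it is equivalent to maximizing that trace, which is the stated $\delta^{*}_{R}$. I would also check that the maximum is attained. The map $\delta\mapsto\Sigma_{\delta}$ is continuous on the compact simplex $\mathcal{D}$, and $A\mapsto\tr(A^{p/(p+1)})$ is continuous on the PSD cone, so the argmax is nonempty. Pairing $\delta^{*}_{R}$ with $z^{*}_{\delta^{*}_{R},R}$ then achieves the joint minimax value, because for every $(\delta,z)$ the objective is at least the inner optimum at $\delta$, which is at least the value at $\delta^{*}_{R}$.

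I expect the degenerate case $S_{\delta,R}=0$ to be the main obstacle, since it is the only step that does not follow directly from the displayed formula. I would first try to show that it is handled by the convention $\V{\delta,\mathcal{E}}z=\infty$ together with the residualization argument from the proof of Proposition \ref{prop:optimal_iv_wcovs}.
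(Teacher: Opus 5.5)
Your proposal is correct and follows the paper's route. The paper proves this by applying Proposition \ref{prop:OptimalDesign} to the residualized problem, i.e., plugging the Proposition \ref{prop:optimal_iv_wcovs} worst-case value $N^{1/p}\sigma^{2}\tr(S^{p/(p+1)}_{\delta,R})^{-(p+1)/p}$ into the outer minimization and using monotonicity. Your treatment of designs with $S_{\delta,R}=0$ and of existence of the maximizer are welcome additions that the paper leaves implicit. In the degenerate case, drop your first formulation: it is wrong, since $\expec{\delta}{z'R}=0$ for any recentered $z$. Keep the second one, where unbounded $\gamma$ forces $R'z=0$ a.s., and then $\expec{\delta}{z'\tilde{x}_{\delta}}=\expec{\delta}{z'M_{R}\tilde{x}_{\delta}}=0$.
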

\begin{prop}
\label{prop:design_properties_wcovs} There exists a maximizer $\delta^{*}_{R}$
from Proposition \ref{prop:optimal_design_wcovs} such that $\expec{\delta^{*}_{R}}{g_{k}}=0.5$,
for all $k$. Moreover, if $M_{R}W$ is block-diagonal after a joint
partition of rows and columns, then there exists such a $\delta^{*}_{R}$
with independent subvectors of $g$ across blocks.
\end{prop}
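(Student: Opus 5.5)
The plan is to reduce the claim to the first two assertions of Proposition \ref{prop:DesignProperties}, applied with $W$ replaced by $M_{R}W$. By Proposition \ref{prop:optimal_design_wcovs}, the design objective is $\delta\mapsto\tr(S^{p/(p+1)}_{\delta,R})$ with $S_{\delta,R}=(M_{R}W)\Sigma_{\delta}(M_{R}W)'$. This depends on $\delta$ only through $\Sigma_{\delta}=\var{\delta}g$, and it has exactly the structure of the Proposition \ref{prop:OptimalDesign} objective with exposure matrix $M_{R}W$.

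For the 50/50 marginals, I will repeat the symmetrization argument. Given any maximizer $\delta$, define $\delta^{sym}=\frac12\delta+\frac12\delta^{c}$, where $\delta^{c}$ draws $\one-g$. Then $\expec{\delta^{sym}}{g_{k}}=0.5$ and
\[
\var{\delta^{sym}}g=\var{\delta}g+\left(\expec{\delta}g-\tfrac12\one\right)\left(\expec{\delta}g-\tfrac12\one\right)'\succeq\var{\delta}g.
\]
Conjugating by $M_{R}W$ preserves the Loewner order, so $S_{\delta^{sym},R}\succeq S_{\delta,R}$. The map $A\mapsto\tr(A^{p/(p+1)})$ is monotone on the PSD cone, since $t\mapsto t^{p/(p+1)}$ is operator monotone (or, more simply, eigenvalues are monotone under $\succeq$ by Weyl). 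Hence $\delta^{sym}$ is also a maximizer. One point needs checking: if $S_{\delta,R}\neq0$, then $S_{\delta^{sym},R}\neq0$ as well, so Proposition \ref{prop:optimal_iv_wcovs} still applies.

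For the block claim, suppose $M_{R}W=\diag(A_{(1)},\dots,A_{(B)})$ after a joint permutation of rows and columns. Starting from the symmetrized maximizer, I replace it with the product of its block marginals, $\delta^{ind}=\delta_{1}\otimes\cdots\otimes\delta_{B}$. This keeps the marginals at $0.5$ and zeroes the cross-block blocks of $\Sigma$. As a result, $S_{\delta^{ind},R}$ is the pinching of $S_{\delta,R}$ onto the diagonal blocks. By Bhatia (II.39), the eigenvalues of the pinching are majorized by those of $S_{\delta,R}$. Concavity of $t^{p/(p+1)}$ then gives $\tr(S^{p/(p+1)}_{\delta^{ind},R})\ge\tr(S^{p/(p+1)}_{\delta,R})$ via Theorem II.3.1.

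The main obstacle is a bookkeeping check: the block structure must be stated for $M_{R}W$, not for $W$. The row partition of $M_{R}W$ indexes observations and its column partition indexes shocks, so independence across blocks is meaningful for subvectors of $g$. With that in place, everything is verbatim from the proof of Proposition \ref{prop:DesignProperties}.
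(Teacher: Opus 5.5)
Your proposal is correct and follows the paper's route. The paper proves this by observing that the covariate-adjusted design objective is the Proposition~\ref{prop:OptimalDesign} objective with $W$ replaced by $M_{R}W$, and then invoking Proposition~\ref{prop:DesignProperties}, whose proof is exactly the complement-symmetrization and pinching/majorization argument you reproduce (including imposing the block structure on $M_{R}W$ rather than on $W$).
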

The proofs here follow by showing the $R$-controlled problem is equivalent
to the original problem with $W$ replaced by the columnwise residualized
matrix $M_{R}W$. Notice that the optimal instrument $z^{*}_{\delta,R}(g)$
is always in-sample orthogonal to $R$; variation in directions absorbed
by $R$ is not rewarded by the objective and so is not used. A sufficient
condition for block-diagonality of $M_{R}W$ is that $M_{R}$ and
$W$ are jointly block-diagonal after the same joint partition of
rows and columns, such as when all covariates are included with interactions
with the dummies of blocks in $W$.

\subsection{Nonlinear Spillover Formulas\label{appx:nonlinear}}

It is straightforward to extend Theorem \ref{prop:OptimalIV} and
Proposition \ref{prop:OptimalDesign} to settings where $x_{i}=X_{i}(g)$
for a nonlinear (but still known) set of formulas $X_{i}(\cdot)$
which can depend on $W$, as studied in \citet{BH1}. Indeed, the
Appendix \ref{appx:Proof-Thm1}-\ref{appx:Proof-Prop1} proofs extend
verbatim after defining the general $\tilde{x}_{\delta}=x-\expec{\delta}x$
and $S_{\delta}=\var{\delta}x$. The central challenge with this extension
is computation: unlike in the linear case, the optimal design objective
is not reducible to a search over covariance matrices of the $g$
shocks and is thus generally much more difficult. Still, a researcher
can compute and maximize this objective over a relatively small set
of candidate designs $\mathcal{D}_{r}\subset\mathcal{D}$:
\[
\max_{\delta\in\mathcal{D}_{r}}\tr\left(\var{\delta}{X_{i}(g)}^{p/(p+1)}\right).
\]
The second and third claims of Proposition \ref{prop:DesignProperties}
extend similarly.

However, the first claim of Proposition \ref{prop:DesignProperties}
--- that there is an optimal design with $\expec{\delta}{g_{k}}=0.5$
--- does not generalize. We illustrate this and build intuition for
nonlinear solutions in a special case in which $N=K$ observations
are clustered into groups $C_{j}$ and $x_{i}=\mathbf{1}\left[\sum_{k\in C_{j(i),k\neq i}}g_{k}\ge1\right]$
indicates that $i$ has at least one shocked neighbor in his group.
Without loss, consider a design which optimizes group-by-group and
is exchangeable within groups. Focusing on one cluster and dropping
the $j$ subscript for brevity, let $T=\sum_{i\in C}g_{i}$ and
\begin{align*}
a & =Pr_{\delta}(T=0)\\
b & =Pr_{\delta}(T=1)\\
1-a-b & =Pr_{\delta}(T\ge2),
\end{align*}
with the uniquely shocked unit chosen at random when $T=1$. We then
have the following result:
\begin{prop}
\label{prop:one_treated_neighbor}Fix $n=|C|\ge3$ and $p\in[1,\infty)$
in the above nonlinear spillover formula case. Then among exchangeable
designs,
\begin{align*}
\max_{\delta} & \tr(S^{p/(p+1)}_{\delta})
\end{align*}
has a unique solution where $Pr_{\delta}(T=0)=a^{*}$ and $Pr_{\delta}(T=1)=1-a^{*}$
for $a^{*}$ solving
\begin{align*}
1-2a^{*} & =(a^{*})^{1/(p+1)}(n-1)^{-(p-1)/(p+1)}.
\end{align*}
Hence the optimal design never shocks more than one unit in the group,
and $\expec{\delta}{g_{k}}\ne1/2$. Moreover:
\begin{enumerate}
\item[(a)] \textup{}\emph{If $n=3$ or $p=1$ then $a^{*}=0.25$;}
\item[(b)] \textup{}As $p\rightarrow\infty$, $a^{*}\rightarrow\frac{n-2}{2(n-1)}$;
\item[(c)] \textup{}As $n\rightarrow\infty$ for $p>1$, $a^{*}\rightarrow1/2$.
\end{enumerate}
\end{prop}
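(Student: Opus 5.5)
The plan is to make everything explicit in terms of the cluster-level distribution of $T$, reduce the objective to a function of two scalars $(a,b)$, show that the constraint $a+b\le1$ binds, and then solve a one-dimensional first-order condition.

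\emph{Step 1: compute $S_\delta$.} Write $1-x_i=\mathbf{1}[T=0]+\mathbf{1}[T=1,\,g_i=1]$. Under an exchangeable design, $\Pr{\delta}{x_i=0}=a+b/n\equiv m$. For $i\neq j$, both $x_i$ and $x_j$ vanish only if $T=0$, so $\Pr{\delta}{x_i=x_j=0}=a$. Hence
\[
S_\delta=(m-a)I+(a-m^2)\one\one'.
\]
Its eigenvalues are $\lambda_2=m-a=b/n$, with multiplicity $n-1$, and $\lambda_1=n\,\var{\delta}{D}$ on $\one$. Here $D=\frac1n\sum_i(1-x_i)$ takes the value $1$ with probability $a$, the value $1/n$ with probability $b$, and $0$ otherwise. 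So $\lambda_1=n\left(a+b/n^2-m^2\right)$, and with $q=p/(p+1)$ the objective is
\[
f(a,b)=\lambda_1^{q}+(n-1)\lambda_2^{q}.
\]

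\emph{Step 2: show $a+b=1$ at the optimum.} I expect this to be the main obstacle. Fix $b$. Then $\lambda_2$ is fixed and $f$ is increasing in $\lambda_1$, so $a$ should maximize $a+b/n^2-(a+b/n)^2$. This concave quadratic has unconstrained maximizer $a=\tfrac12-b/n$.

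There are two cases.
\begin{itemize}
\item If $b>n/(2(n-1))$, this maximizer exceeds $1-b$, so the best feasible choice is $a=1-b$, which is on the boundary.
\item If $b\le n/(2(n-1))$, plugging in the maximizer gives $\lambda_1=n/4-b(n-1)/n$. Along this path,
\[
\frac{d}{db}f=q\frac{n-1}{n}\left(\lambda_2^{q-1}-\lambda_1^{q-1}\right),
\]
which is strictly positive whenever $\lambda_2<\lambda_1$, i.e.\ whenever $b<n/4$. Since $n/(2(n-1))\le n/4$ for $n\ge3$, the value increases strictly up to the threshold $b=n/(2(n-1))$, and that threshold point lies on $a+b=1$.
\end{itemize}
In both cases an optimum has $\Pr{\delta}{T\ge2}=0$, and any design with $a+b<1$ is strictly dominated. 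Along the way I will need to check the strictness claims and the corner $n=3$, where $b\le3/4=n/4$ holds with equality only at the endpoint.

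\emph{Step 3: optimize on the boundary.} Set $b=1-a$. Then $\lambda_2=(1-a)/n$ and $\lambda_1=a(1-a)(n-1)^2/n$, because $D$ is two-point with gap $1-1/n$. The function $a\mapsto\lambda_1^q+(n-1)\lambda_2^q$ is strictly concave on $(0,1)$, being a sum of increasing concave powers of a concave and an affine function. It tends to a non-optimal value at the endpoints, so the maximizer is unique and interior. The first-order condition is
\[
(1-2a)(n-1)=\left(\lambda_1/\lambda_2\right)^{1/(p+1)}=\left(a(n-1)^2\right)^{1/(p+1)},
\]
which rearranges to the stated equation $1-2a^*=(a^*)^{1/(p+1)}(n-1)^{-(p-1)/(p+1)}$. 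The left side is decreasing and the right side increasing in $a$, which re-confirms uniqueness. Since $\expec{\delta}{g_k}=(1-a^*)/n<1/2$, the marginals differ from $1/2$.

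\emph{Step 4: verify the limits.}
\begin{itemize}
\item[(a)] For $p=1$ the equation becomes $1-2a=\sqrt a$, giving $a=1/4$. For $n=3$, the value $a=1/4$ satisfies it: $2^{-2/(p+1)}2^{-(p-1)/(p+1)}=1/2$.
\item[(b)] As $p\to\infty$, the right side tends to $(n-1)^{-1}$, so $a^*\to\frac{n-2}{2(n-1)}$.
\item[(c)] As $n\to\infty$ with $p>1$, the right side tends to $0$ uniformly for $a\in[0,1]$, so $a^*\to1/2$.
\end{itemize}
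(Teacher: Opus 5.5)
Your proof is correct, but Step 2 slices the feasible set differently from the paper.

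\emph{The paper's route.} The paper fixes $u=\mathrm{Pr}_\delta(x_i=0)$, which pins down $\tr(S_\delta)=nu(1-u)$. The objective is then a concave function of how this fixed trace is split between the $(n-1)$-fold eigenvalue $\lambda=u-a$ and the eigenvalue on $\one$, maximized at the isotropic point $\lambda=u(1-u)$. The only obstacle is feasibility, $\lambda\le(1-u)/(n-1)$, which binds for $u\ge 1/(n-1)$. The region $u\le 1/(n-1)$ is disposed of because $n\bigl(u(1-u)\bigr)^{p/(p+1)}$ is increasing there.

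\emph{Your route.} You fix $b$, and hence the within-cluster eigenvalue $b/n$. Monotonicity of the objective in $\lambda_1$ then reduces the inner problem to maximizing $\mathrm{Var}(D)$, a concave quadratic in $a$. You handle the interior branch by the derivative comparison $\lambda_2<\lambda_1\iff b<n/4$, which is exactly where $n\ge3$ enters.

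\emph{What each buys.} The paper's slicing makes the signal-versus-isotropy logic explicit. However, it reaches the boundary $a+b=1$ only on the part $u\ge 1/(n-1)$, i.e.\ $a\ge 1/(n-1)^2$, so it must check at the end that the first-order-condition root satisfies this. Your slicing shows the global optimum lies on the whole segment $a+b=1$, so no such check is needed. It also yields the complementary bound $a^*\le (n-2)/(2(n-1))$. The two bounds coincide at $1/4$ when $n=3$, consistent with (a).

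\emph{One small gap, in (b).} The claim ``the right side tends to $(n-1)^{-1}$'' needs $a^*_p$ to stay away from $0$, because $a^{1/(p+1)}\to1$ is not uniform near $a=0$. Add one line: since $a^{1/(p+1)}\le 1$, the defining equation gives $1-2a^*_p\le (n-1)^{-(p-1)/(p+1)}$. Hence $a^*_p\ge\tfrac12\bigl(1-(n-1)^{-(p-1)/(p+1)}\bigr)$, which is bounded away from zero for large $p$; this is how the paper argues it. Alternatively, use the paper's bound $a^*\ge 1/(n-1)^2$.
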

Intuitively, the optimal design mixes between shocking nobody in a
cluster or shocking exactly one unit since any additional shocks are
redundant in terms of the nonlinear spillover treatment. As isotropy
becomes less important ($p\rightarrow\infty$), the mixing probability
approaches the value which maximizes the total variance of the $x_{i}$.\footnote{One can extend the result to $p=\infty$; as before, the optimal exchangeable
design is not unique and here it does not generally set $Pr_{\delta}(T>1)=0$.} As group sizes get large ($n\rightarrow\infty$), the mixing probability
approaches $1/2$ because the identity of the shocked unit becomes
second-order and we approach the group-specific shock special case.

\subsection{Budget Constraints\label{appx:budgets}}

For the optimal design, consider first the analytically simpler case
where the researcher faces a constraint that the probability of treating
each unit be exactly $q$. Proposition \ref{prop:OptimalDesign} is
then modified to 
\[
\delta^{\ast}\in\arg\max_{\delta\colon\expec{\delta}g=q\one}\tr\left(W\var{\delta}gW'\right)^{p/(p+1)}.
\]
In the case of a block-diagonal $W$, this problem still splits into
block-by-block parts, so this claim of Proposition \ref{prop:DesignProperties}
extends, too.

We propose to formulate the relaxed problem as an optimization over
the set of all correlation matrices multiplied by $q(1-q)$ instead
of $1/4$. Clearly, the objective $\tr\left(W\Sigma W'\right)^{p/(p+1)}$
is just a rescaling relative to $\Sigma\in\mathcal{Q}_{K}$, and the
optimal covariance matrix $\Sigma^{\ast}$ is the same, rescaled by
$4q(1-q)$. The Theorem \ref{prop:OptimalIV} characterization of
the relaxed problem's solution is therefore unchanged, too. The Gaussian
rounding approximation, in turn, is modified to drawing $\xi\sim\mathcal{N}(0,\Sigma^{\ast}/\left(q(1-q)\right))$
and setting $g_{k}=\one\left[\xi_{k}\ge\Phi^{-1}(1-q)\right]$ where
$\Phi(\cdot)$ is the standard normal CDF. Since $\var{}{\xi_{k}}=1$,
we have $\expec{\delta}{g_{k}}=q$ for all $k$ as required.\footnote{Note that the optimization problem could impose an additional restriction
that $\Sigma^{\ast}_{kl}\ge-\min\left\{ q^{2},(1-q)^{2}\right\} $
for $k\ne l$, which has to hold for any binary vector with marginals
$q$. We do not recommend this, however, as the Gaussian rounding
procedure already ensures this inequality starting from any $\Sigma^{\ast}\succeq0$.
Imposing extra constraints on $\Sigma^{\ast}$ would unnecessarily
limit the set of Gaussian rounding designs.}

We now consider three alternative constraints the researcher may face.
First, suppose the budget constraint takes the inequality form $\expec{\delta}{g_{k}}\le q$.
In general, we do not know whether the analog of the first claim of
Proposition \ref{prop:DesignProperties} holds, i.e. that all of the
inequality constraints have to bind, with $\expec{\delta}{g_{k}}=q$.
This is true, however, in the relaxed problem, as for any $\Sigma\succeq0$
with $\Sigma_{kk}\le q(1-q)$ and some inequalities strict, we can
consider $\tilde{\Sigma}=\Sigma+\diag\left(q(1-q)-\Sigma_{kk}\right)\succeq\Sigma$
which yields a weakly lower worst-case variance.

Second, suppose the budget constraint is soft, in the sense that the
researcher is willing to trade off the worst-case estimation variance
with the expected cost of treating more units. For instance, one can
choose $q$ by solving
\[
\min_{q\in[0,1/2]}\left\{ \left(\max_{\delta\colon\expec{\delta}g=q\one}\tr\left(W\var{\delta}gW'\right)^{p/(p+1)}\right)^{-(p+1)/p}+\lambda qK\right\} ,
\]
where the first term is the worst-case variance of the estimator and
the second term is the expected cost of the experiment, with weight
$\lambda$.\footnote{Note the $q\in[0,1/2]$ domain is without loss, as it is never optimal
to pick $q>1/2$.} This is a straightforward scalar optimization, and it is especially
simple if the inner optimization is replaced by its relaxed version,
as the optimal shock correlation matrix does not depend on $q$. 

The final scenario is when the researcher's constraint is on the realized
number of treated units: $\frac{1}{K}\sum_{k}g_{k}\le q$. This formulation
is difficult to implement within our approach, as it does not naturally
translate to a restriction on $\Sigma$ and it is not consistent with
Gaussian rounding. In cases like this, we recommend using our original
constraint on $\expec{\delta}{g_{k}}$ but either with a tighter $q$
(such that it is very unlikely that the realization violates the constraint
on the realized number of treated units, as formalized in \citet{sun2026empirical})
or with small \emph{ad hoc} modifications (e.g., switching a randomly
chosen set of $g_{k}=1$ to $g_{k}=0$ until the constraint holds).

\subsection{Error Bound on the Gaussian Rounding Implementation\label{appx:approx_bound}}
\begin{prop}
\label{prop:approx_bound} We have:
\[
\tr\left(S^{p/(p+1)}_{GR}\right)\ge\left(\frac{2}{\pi}\right)^{p/(p+1)}\max_{\Sigma\in\mathcal{C}_{K}}\tr\left(\left(W\Sigma W'\right)^{p/(p+1)}\right).
\]
Thus, the worst-case approximate variance of the Theorem \ref{prop:OptimalIV}
optimal recentered IV estimator under the Gaussian rounding design
based on the relaxed problem's solution from Proposition \ref{prop:Optimize_l}
is at most $\pi/2$ times that of the original Proposition \ref{prop:OptimalDesign}
problem. The same result holds for the multiple-exposure extension
in Sections \ref{subsec:Direct-Effects} and \ref{subsec:Computation}.
\end{prop}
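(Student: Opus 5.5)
The plan is to compare $\Sigma^{GR}=\frac{1}{2\pi}\arcsin[4\Sigma^{\ast}]$ with $\Sigma^{\ast}$ in the Loewner order and then use monotonicity. The key identity is the Taylor expansion $\arcsin(t)=\sum_{m\ge0}c_m t^{2m+1}$ with all $c_m\ge0$ and $c_0=1$. Applied entrywise to the correlation matrix $C=4\Sigma^{\ast}$, this gives
\[
\arcsin[C]=C+\sum_{m\ge1}c_m C^{\circ(2m+1)}.
\]
By the Schur product theorem, every Hadamard power $C^{\circ(2m+1)}$ is positive semidefinite. Hence $\arcsin[C]\succeq C$, and therefore $\Sigma^{GR}\succeq\frac{1}{2\pi}\cdot 4\Sigma^{\ast}=\frac{2}{\pi}\Sigma^{\ast}$. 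This is the Goemans--Williamson/Nesterov argument. The only care needed is convergence at entries where $|C_{kl}|=1$; the series converges absolutely on $[-1,1]$, so the expansion is valid.

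Conjugating by $W$ preserves the Loewner order, so
\[
S_{GR}=W\Sigma^{GR}W'\succeq\tfrac{2}{\pi}W\Sigma^{\ast}W'.
\]
Since $A\mapsto\tr(A^{p/(p+1)})$ is monotone on the PSD cone and homogeneous of degree $p/(p+1)$, as already used in the proof of Proposition \ref{prop:DesignProperties}, we get
\[
\tr(S_{GR}^{p/(p+1)})\ge(2/\pi)^{p/(p+1)}\tr\big((W\Sigma^{\ast}W')^{p/(p+1)}\big).
\]
Next, $\Sigma^{\ast}$ solves the relaxed problem (\ref{eq:relax2}) by Proposition \ref{prop:Optimize_l}, and $\mathcal{Q}_K$ contains every $\operatorname{Var}_\delta[g]$ with $0.5$ marginals. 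By Proposition \ref{prop:DesignProperties}, restricting to such designs loses nothing in $\max_{\Sigma\in\mathcal{C}_K}$, since symmetrizing only increases $\Sigma$ in the Loewner order. Hence the relaxed value is at least $\max_{\Sigma\in\mathcal{C}_K}\tr((W\Sigma W')^{p/(p+1)})$, which gives the displayed inequality.

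For the variance statement, plug this into the worst-case formula of Theorem \ref{prop:OptimalIV}. The worst-case variance is $N^{1/p}\sigma^2(\tr S^{p/(p+1)})^{-(p+1)/p}$, and raising the factor $(2/\pi)^{p/(p+1)}$ to the power $-(p+1)/p$ gives exactly $\pi/2$. Comparing with the optimum from Proposition \ref{prop:OptimalDesign} completes this part.

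For the multiple-exposure case, I will use Theorem \ref{prop:OptimalIV_direct}(a): the worst case is governed by $\min_c H_\Sigma(c)$ with $H_\Sigma(c)=\tr(((W-cU)\Sigma(W-cU)')^{p/(p+1)})$. For each fixed $c$, the same Loewner argument gives $H_{\Sigma^{GR}}(c)\ge(2/\pi)^{p/(p+1)}H_{\Sigma^{\ast}}(c)$, where $\Sigma^{\ast}$ is the relaxed solution of (\ref{eq:relaxp}). Then
\[
\min_c H_{\Sigma^{GR}}(c)\ge(2/\pi)^{p/(p+1)}\min_c H_{\Sigma^\ast}(c).
\]
It remains to show that $\min_c H_{\Sigma^\ast}(c)$ dominates the original optimum $\max_{\Sigma\in\mathcal{C}_K}\min_c H_\Sigma(c)$. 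This follows because $\Sigma^\ast$ is a saddle point of the min-max (\ref{eq:relaxp}) over the relaxed set, so $\min_c H_{\Sigma^\ast}(c)=\min_c\max_{\mathcal{Q}_K}H\ge\max_{\mathcal{C}_K}\min_c H$. I expect this step to be the main obstacle. Justifying that $\Sigma^\ast$ is a saddle point needs a minimax swap: $H$ is concave in $\Sigma$, since $A\mapsto\tr A^{p/(p+1)}$ is concave and $\Sigma\mapsto(W-cU)\Sigma(W-cU)'$ is linear, and $H$ is convex in $c$ by the Schatten-norm representation $\|(W-cU)R\|_r^r$. Together with compactness of $\mathcal{Q}_K$, this lets me invoke Sion's theorem as in the proof of Theorem \ref{prop:OptimalIV_direct}. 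If the saddle-point identification is delicate, the fallback is to define the rounded design from any maximizer of $\Sigma\mapsto\min_c H_\Sigma(c)$ over $\mathcal{Q}_K$, for which the inequality is immediate.
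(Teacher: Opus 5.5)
Your proposal is essentially the paper's proof. The arcsin series with nonnegative coefficients plus the Schur product theorem give $\Sigma^{GR}\succeq\frac{2}{\pi}\Sigma^{\ast}$, and monotonicity of $A\mapsto\tr(A^{p/(p+1)})$ then gives the bound for each fixed $c$. For multiple exposures, the paper defines $\Sigma^{\ast}\in\arg\max_{\Sigma\in\mathcal{Q}_K}\min_{c}H_{\Sigma}(c)$ directly, which is exactly your fallback. Your symmetrization remark, needed because $\mathcal{C}_K$ is not literally contained in $\mathcal{Q}_K$, fills in a step the paper leaves implicit.

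Rely on that fallback, not the saddle-point route. Sion's theorem only equates the two optimal values; it does not make an arbitrary maximizer of $H_{\Sigma}(c^{\ast})$ a saddle point. Take $N=1$, $W=(1,1)$, $U=(0,1)$. The unique optimal $c^{\ast}=1$ makes $H_{\Sigma}(c^{\ast})=4^{-p/(p+1)}$ constant on $\mathcal{Q}_2$, so the perfectly correlated $\Sigma$ is an inner maximizer. Yet it makes $x$ and $h$ collinear, so $\min_{c}H_{\Sigma}(c)=0$ and the claimed bound fails for that choice.
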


\section{Proofs of Additional Results\label{appx:Proofs-additional}}

\subsection{A Useful Lemma}

For $n\in\mathbb{N}$, denote $P_{n}=\one_{n}\one'_{n}/n$.
\begin{lem}
\label{lem:equicor}Consider $n\in\mathbb{N}$ and $\alpha,\beta,\gamma\in\mathbb{R}$
such that $\alpha,\beta\ge0$ and if $\gamma<0$ then $\alpha,\beta>0$.
Then the matrix $A=\alpha(I_{n}-P_{n})+\beta P_{n}$ is positive semi-definite
and we have:
\[
A^{\gamma}=\alpha^{\gamma}(I_{n}-P_{n})+\beta^{\gamma}P_{n}.
\]
Moreover,
\[
\tr\left(A^{\gamma}\right)=(n-1)\alpha^{\gamma}+\beta^{\gamma}
\]
and, when $\gamma\ge1$,
\[
\left\Vert A\right\Vert _{\gamma}=\left((n-1)\alpha^{\gamma}+\beta^{\gamma}\right)^{1/\gamma}.
\]
\end{lem}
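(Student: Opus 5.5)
The plan is to diagonalize $A$ explicitly. The matrices $P_n=\one_n\one_n'/n$ and $I_n-P_n$ are complementary orthogonal projections: both are symmetric and idempotent, and they satisfy $P_n(I_n-P_n)=0$ and $P_n+(I_n-P_n)=I_n$. Their ranks are $1$ and $n-1$. Pick an orthonormal basis consisting of $\one_n/\sqrt n$ together with $n-1$ orthonormal vectors orthogonal to $\one_n$. In this basis, $A$ is diagonal, with eigenvalue $\beta$ (multiplicity one) and eigenvalue $\alpha$ (multiplicity $n-1$). Since $\alpha,\beta\ge0$, $A$ is positive semi-definite.

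Next I would compute $A^\gamma$. For a symmetric positive semi-definite matrix, $A^\gamma$ is defined spectrally: apply $t\mapsto t^\gamma$ to each eigenvalue and keep the eigenvectors. This gives $A^\gamma=\alpha^\gamma(I_n-P_n)+\beta^\gamma P_n$ directly. The edge cases need care.
\begin{itemize}
\item When $\gamma<0$, the hypothesis $\alpha,\beta>0$ ensures the negative powers are finite, so $A$ is invertible and the formula holds.
\item When $\gamma=0$ and an eigenvalue vanishes, the paper's convention $A^0=A^\dag A$ would give $0$ on that eigenspace. Reading $0^0$ as $0$ in that case makes the formula consistent.
\item When $\gamma>0$, $0^\gamma=0$ causes no issue.
\end{itemize}
If $n=1$, $I_n-P_n=0$ and the $\alpha$ term simply drops out. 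This is consistent with $(n-1)\alpha^\gamma=0$ in the trace formula, provided $\alpha^\gamma$ is finite, which the hypotheses guarantee.

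The trace then follows from linearity, using $\tr(I_n-P_n)=n-1$ and $\tr(P_n)=1$: $\tr(A^\gamma)=(n-1)\alpha^\gamma+\beta^\gamma$. For $\gamma\ge1$, the Schatten norm of a positive semi-definite matrix is $\left\Vert A\right\Vert_\gamma=(\tr A^\gamma)^{1/\gamma}$, and substituting the trace gives the stated formula.

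The main obstacle is only bookkeeping the conventions for zero eigenvalues and $n=1$. The spectral argument itself is immediate.
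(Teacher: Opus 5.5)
Your proposal is correct and follows the paper's own proof. The paper likewise treats $P_n$ and $I_n-P_n$ as complementary orthogonal projectors, reads off the eigenvalue $\beta$ once and $\alpha$ with multiplicity $n-1$, and gets the power, trace, and Schatten-norm formulas from that spectrum. Your handling of the $\gamma=0$ convention $A^0=A^{\dag}A$ and of the $n=1$ case is extra care that the paper leaves implicit.
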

\begin{proof}
The Lemma statement provides a spectral decomposition for matrix $A$,
since $P_{n}$ is a rank 1 orthogonal projector onto the span of $\one_{n}$
and $I_{n}-P_{n}$ is a rank $n-1$ orthogonal projector onto the
space orthogonal to $\one_{n}$. This means that $A$ has $n-1$ eigenvalues
that are $\alpha$ and one eigenvalue that is $\beta$, and the rest
of the claims come from spectral properties of the trace and the Schatten-$\gamma$
norm.
\end{proof}

\subsection{Proposition \ref{prop:mse_nospillovers}}

With no spillovers, the design problem is simply:
\begin{align*}
\max_{\delta} & \tr\left(\expec{\delta}{gg^{\prime}}^{\frac{p}{p+1}}\right).
\end{align*}
By symmetry and concavity of $A\mapsto\tr(A^{p/(p+1)})$, there is
an exchangeable optimizer. Any such design can be represented by first
drawing $K_{1}=\mathbf{1}^{\prime}g$ from some distribution and then
choosing $K_{1}$ treated units at random. For such a design, $\expec{\delta}{g^{2}_{k}}=\expec{\delta}{g_{k}}=\expec{\delta}{\frac{K_{1}}{N}}$
and, for $k\ne l$, 
\[
\expec{\delta}{g_{k}g_{\ell}}=\expec{\delta}{\expec{\delta}{g_{k}g_{\ell}\mid K_{1}}}=\expec{\delta}{\frac{K_{1}(K_{1}-1)}{N(N-1)}}.
\]
Simple algebra yields $\expec{\delta}{gg'}=\lambda_{1,\delta}P_{N}+\lambda_{2,\delta}\left(I_{N}-P_{N}\right)$
for 
\[
\lambda_{1,\delta}=\frac{\expec{\delta}{K^{2}_{1}}}{N},\qquad\lambda_{2,\delta}=\frac{\expec{\delta}{K_{1}(N-K_{1})}}{N(N-1)}.
\]
By Lemma \ref{lem:equicor}, the objective is: 
\begin{align*}
\max_{\delta} & \lambda^{p/(p+1)}_{1,\delta}+(N-1)\lambda^{p/(p+1)}_{2,\delta}.
\end{align*}
Now fix $m=\expec{\delta}{K_{1}}\in[0,N]$. We have $\lambda_{1,\delta}\ge\lambda_{2,\delta}$
since $\lambda_{1,\delta}-\lambda_{2,\delta}=\expec{\delta}{K_{1}(K_{1}-1)}/N(N-1)\ge0$,
with equality only if $K_{1}\in\left\{ 0,1\right\} $ a.s., i.e. $\expec{\delta}{K^{2}_{1}}=m$.
Moreover, the derivative of the objective with respect to $\expec{\delta}{K^{2}_{1}}$
is 
\[
\frac{p}{p+1}\cdot\left(\frac{1}{N}\lambda^{-1/(p+1)}_{1,\delta}-\frac{1}{N}\lambda^{-1/(p+1)}_{2,\delta}\right)\le0,
\]
with strict inequality except at a single point $\expec{\delta}{K^{2}_{1}}=m$.
Thus, the objective is strictly decreasing in $\expec{\delta}{K^{2}_{1}}$.
Hence, for fixed $m$, we minimize $\expec{\delta}{K^{2}_{1}}$. That
is achieved by a distribution for $K_{1}$ that has support only on
the two adjacent integers around $m$. Specifically, for $k=\left\lfloor m\right\rfloor $
and $\theta=m-k\in[0,1)$, it should place probability $1-\theta$
on $k$ and probability $\theta$ on $k+1$, such that
\begin{align*}
\expec{\delta}{K^{2}_{1}}=v(m) & =(1-\theta)k^{2}+\theta(k+1)^{2}=k^{2}+(2k+1)(m-k).
\end{align*}
The optimal $m$ is given by maximizing the objective:
\begin{align*}
m^{*} & \in\arg\max_{m\in[0,N]}\left[\left(\frac{v(m)}{N}\right)^{r}+(N-1)\left(\frac{Nm-v(m)}{N(N-1)}\right)\right],
\end{align*}
which yields the desired solution.

\subsection{Proposition \ref{prop:mse_recentered}}

It is straightforward to show (by the same steps as above) that for
fixed $\delta$ the optimal IV is
\begin{align*}
z^{*}_{\delta}(g) & \propto(A^{\dag}_{\delta})^{1/(p+1)}M_{R}Wg
\end{align*}
where
\begin{align*}
A_{\delta} & =M_{R}W\expec{\delta}{gg^{\prime}}W^{\prime}M_{R}
\end{align*}
and that optimal designs using this instrument are given by
\begin{align*}
\delta^{*} & \in\arg\max_{\delta}\tr(A^{p/(p+1)}_{\delta}).
\end{align*}
We now show it is without loss in this problem to restrict to designs
with equal marginals. For any $\delta$, let $\delta^{c}$ be the
complement design and let $\delta^{sym}=\frac{1}{2}\delta+\frac{1}{2}\delta^{c}$.
Since $x^{c}=W(\mathbf{1}-g)=W\mathbf{1}-x$ and $W\mathbf{1}\in\text{Col}(R)$,
we have 
\begin{align*}
M_{R}x^{c} & =M_{R}(W\mathbf{1}-x)=-M_{R}x.
\end{align*}
Therefore
\begin{align*}
A_{\delta^{c}} & =M_{R}\expec{\delta^{c}}{xx^{\prime}}M_{R}\\
 & =\expec{\delta}{(M_{R}x^{c})(M_{R}x^{c})^{\prime}}\\
 & =\expec{\delta}{(M_{R}x)(M_{R}x)^{\prime}}\\
 & =A_{\delta},
\end{align*}
and also $A_{\delta^{sym}}=\frac{1}{2}A_{\delta}+\frac{1}{2}A_{\delta^{c}}=A_{\delta}$.
Hence restricting to symmetrized designs is without loss, and they
all have equal marginals. 

Finally, note that under any equal-marginal design,
\begin{align*}
M_{R}W(g-\expec{\delta}g) & =M_{R}W(g-\frac{1}{2}\mathbf{1})=M_{R}Wg.
\end{align*}
Therefore the optimal instrument can be written 
\begin{align*}
z^{*}_{\delta}(g) & \propto(A^{\dag}_{\delta})^{1/(p+1)}M_{R}W(g-\expec{\delta}g)
\end{align*}
which is recentered. Therefore, it is without loss to restrict attention
to recentered instruments.

\subsection{Proposition \ref{prop:Rank1-designs}}

When $p=\infty$, the problem in Proposition \ref{prop:OptimalDesign}
becomes 
\begin{align*}
\max_{\delta}\tr(S_{\delta})= & \max_{\delta}\tr(W\var{\delta}gW^{\prime}).
\end{align*}
Moreover, by Proposition \ref{prop:DesignProperties} we may restrict
attention to designs with $\expec{\delta}g=\frac{1}{2}\mathbf{1}$.
Define $h=2g-1\in\left\{ -1,1\right\} ^{K}$ where $\expec{\delta}h=0$
and $\var{\delta}g=\frac{1}{4}\expec{\delta}{hh'}$. Hence,
\begin{align*}
\text{tr}(S_{\delta}) & =\tr(W\var{\delta}gW^{\prime})=\tr(W^{\prime}W\var{\delta}g)=\frac{1}{4}\expec{\delta}{h'W'Wh}.
\end{align*}
Choose $h^{*}$ to maximize $h^{\prime}W^{\prime}Wh$ over $\left\{ -1,1\right\} ^{K}$
and let $\delta^{*}$ put equal probability on $g^{*}=\frac{1}{2}(\mathbf{1}+h^{*})$
and $1-g^{*}=\frac{1}{2}(\mathbf{1}-h^{*})$. Here $\expec{\delta^{\ast}}g=\frac{1}{2}\mathbf{1}$
and 
\begin{align*}
\text{tr}(S_{\delta^{*}}) & =\frac{1}{4}h^{*\prime}W^{\prime}Wh^{*}\ge\frac{1}{4}\expec{\delta}{h'W'Wh}=\tr(S_{\delta})
\end{align*}
for every design $\delta$ with $\expec{\delta}g=\frac{1}{2}\one$.
Therefore $\delta^{*}$ is optimal. Finally, note that
\begin{align*}
\var{\delta^{\ast}}g & =\left(g^{*}-\frac{1}{2}\mathbf{1}\right)\left(g^{*}-\frac{1}{2}\mathbf{1}\right)^{\prime}=\frac{1}{4}h^{*}h^{*\prime}
\end{align*}
has rank 1. Moreover, when $W\ge0$, $W^{\prime}W\ge0$ so $h^{\prime}W^{\prime}Wh$
is maximized by $h=\pm\mathbf{1}$; i.e., setting all $g_{k}=1$ with
probability $0.5$ and otherwise setting all $g_{k}=0$.

\subsection{Proposition \ref{prop:Symmetry}}

By Proposition \ref{prop:DesignProperties}, we may restrict attention
to designs with $\expec{\delta}{g_{k}}=0.5$. For any $(P,Q)\in\mathcal{G}$,
let $\delta^{(P,Q)}$ be the distribution of $Qg$ for $g\sim\delta$.
Since $Q$ is a relabeling of shocks, each $\delta^{(P,Q)}$ also
satisfies $\expec{\delta^{(P,Q)}}g=\frac{1}{2}\mathbf{1}$. We have
$\Sigma_{\delta^{(P,Q)}}=Q\Sigma_{\delta}Q^{\prime}$. Moreover, since
$Q$ is a relabeling matrix, $Q'Q=I$ such that $PWQ^{\prime}=W$
implies $PW=WQ$. Thus,
\begin{align*}
S_{\delta^{(P,Q)}} & =W\Sigma_{\delta^{(P,Q)}}W^{\prime}=WQ\Sigma_{\delta}Q^{\prime}W^{\prime}=PW\Sigma_{\delta}W^{\prime}P^{\prime}=PS_{\delta}P^{\prime}.
\end{align*}
Since $P$ is an orthogonal matrix, $\tr\left(S^{p/(p+1)}_{\delta^{(P,Q)}}\right)=\tr\left(S^{p/(p+1)}_{\delta}\right)$.
Now average $\delta$ over the transformations in $\mathcal{G}$,
defining
\begin{align*}
\bar{\delta} & =\left|\mathcal{G}\right|^{-1}\sum_{(P,Q)\in\mathcal{G}}\delta^{(P,Q)}.
\end{align*}
Because all designs in this average have the same mean, $\Sigma_{\bar{\delta}}=\left|\mathcal{G}\right|^{-1}\sum_{(P,Q)\in\mathcal{G}}Q\Sigma_{\delta}Q^{\prime}$
and $S_{\bar{\delta}}=\left|\mathcal{G}\right|^{-1}\sum_{(P,Q)\in\mathcal{G}}PS_{\delta}P^{\prime}$.
And since $A\mapsto\text{tr}(A^{p/(p+1)})$ is weakly concave on the
positive semidefinite cone for $p\in[1,\infty]$,
\[
\tr\left(S^{p/(p+1)}_{\bar{\delta}}\right)\ge\left|\mathcal{G}\right|^{-1}\sum_{(P,Q)\in\mathcal{G}}\tr\left(\left(PS_{\delta}P^{\prime}\right)^{p/(p+1)}\right)=\tr\left(S^{p/(p+1)}_{\delta}\right).
\]
Thus averaging over the symmetry group weakly improves the Proposition
\ref{prop:OptimalDesign} objective while producing a $\mathcal{G}$-invariant
design. Applying this argument to an optimal $\delta^{*}$ proves
the main result.

It remains only to justify the final equivalence. For the $\mathcal{G}$-invariant
optimal design $\delta^{*}$ and for $a\in\{0,1\}^{K}$ write $\delta^{*}(a)=\Pr{\delta^{\ast}}{g=a}$.
For every $(P,Q)\in\mathcal{G}$, if $g\sim\delta^{*}$ then $Qg\overset{d}{=}g$
(where $\overset{d}{=}$ stands for equal in distribution). Take any
two assignments $a,b\in\{0,1\}^{K}$ in the same $\mathcal{G}$-orbit.
Then there exists $(P,Q)\in\mathcal{G}$ such that $b=Qa$. By $\mathcal{G}$-invariance,
\[
\delta^{*}(b)=\Pr{\delta^{\ast}}{g=b}=\Pr{\delta^{\ast}}{Qg=b}=\Pr{\delta^{\ast}}{g=Q^{-1}b}=\Pr{\delta^{\ast}}{g=a}=\delta^{*}(a).
\]
Hence $\delta^{*}$ is constant on each $\mathcal{G}$-orbit, i.e.
it assigns equal probability to assignments in the same orbit.

Conversely, if a design $\delta$ is constant on each $\mathcal{G}$-orbit,
then for any $(P,Q)\in\mathcal{G}$ and any $A\subseteq\{0,1\}^{K}$,
\[
\Pr{\delta}{Qg\in A}=\sum_{a\in A}\Pr{\delta}{Qg=a}=\sum_{a\in A}\Pr{\delta}{g=Q^{-1}a}=\sum_{a\in A}\delta(Q^{-1}a)=\sum_{a\in A}\delta(a)=\Pr{\delta}{g\in A},
\]
because $Q^{-1}a$ and $a$ are in the same $\mathcal{G}$-orbit.
Thus $Qg\overset{d}{=}g$, proving equivalence.

\subsection{Propositions \ref{prop:optimal_iv_wcovs}, \ref{prop:optimal_design_wcovs}, and \ref{prop:design_properties_wcovs}}

The following facts about the projection matrix are useful: $M_{R}$
is symmetric and idempotent, $M_{R}R=0$, $\left\Vert M_{R}\right\Vert _{\infty}\le1$,
and for a vector $v$, $M_{R}v=v$ if and only if $R'v=0$.

For Proposition \ref{prop:optimal_iv_wcovs}, we first show that any
$z\in\mathcal{Z}_{\delta}$ with $\max_{\mathcal{E}\in\mathcal{F}^{R}_{p}(\sigma)}\mathcal{V}_{\delta,\mathcal{E}}\left[z\right]<\infty$
has to satisfy $R'z=0$ a.s. Indeed, for $\gamma\in\mathbb{R}^{L}$
let $\mathcal{E}_{\gamma}\in\mathcal{F}^{R}_{p}(\sigma)$ put probability
one on $\varepsilon=R\gamma$, such that
\[
\mathcal{V}_{\delta,\mathcal{\mathcal{E}_{\gamma}}}\left[z\right]\propto\var{\delta}{\gamma'R'z}=\gamma'\var{\delta}{R'z}\gamma.
\]
If $\var{\delta}{R'z}\ne0$, we can pick $\gamma$ to make $\mathcal{V}_{\delta,\mathcal{E}_{\gamma}}\left[z\right]$
arbitrarily large, such that $\sup\limits_{\mathcal{E}\in\mathcal{F}^{R}_{p}(\sigma)}\mathcal{V}_{\delta,\mathcal{E}}\left[z\right]\ge\sup\limits_{\gamma}\mathcal{V}_{\delta,\mathcal{\mathcal{E}_{\gamma}}}\left[z\right]=\infty$.
Thus, $\var{\delta}{R'z}=0$ and, since $\expec{\delta}{R'z}=0$,
we have $R'z=0$ a.s.

Thus, there is a solution to $\inf\limits_{z\in\mathcal{Z}_{\delta}}\max\limits_{\mathcal{E}\in\mathcal{F}^{R}_{p}(\sigma)}\V{\delta,\mathcal{E}}z$
that belongs to $\mathcal{Z}^{R}_{\delta}=\left\{ z\in\mathcal{Z}_{\delta}\colon R'z=0\text{ a.s.}\right\} $,
so we will now look for that solution. Let $W_{R}=M_{R}W$, $x_{R}=W_{R}g$,
$\tilde{x}_{R}=W_{R}\tilde{g}$, and $q=p/(p-1)$, and note that for
$z\in\mathcal{Z}^{R}_{\delta}$ and $\varepsilon=R\gamma+\tilde{\varepsilon}$
we have $z'\varepsilon=z'\tilde{\varepsilon}$. Moreover, using $z=M_{R}z$
and $\expec{\delta}z=0$ for the denominator, 
\[
\mathcal{V}_{\delta,\mathcal{E}}\left[z\right]=\frac{\var{\delta,\mathcal{E}}{z'\varepsilon}}{\expec{\delta}{z'Wg}^{2}}=\frac{\var{\delta,\mathcal{E}}{z'\tilde{\varepsilon}}}{\expec{\delta}{z'W_{R}\tilde{g}}^{2}}=\frac{\var{\delta,\mathcal{E}}{z'\tilde{\varepsilon}}}{\expec{\delta}{z'\tilde{x}_{R}}^{2}}.
\]
Thus, recalling that $Q_{z}=\expec{\delta}{zz'}$, analogously to
(\ref{eq:minimax_var}),
\begin{equation}
\max_{\mathcal{E}\in\mathcal{F}^{R}_{p}(\sigma)}\mathcal{V}_{\delta,\mathcal{E}}\left[z\right]=\max_{\tilde{\mathcal{E}}\in\mathcal{F}_{p}(\sigma)}\frac{\var{\delta,\tilde{\mathcal{E}}}{z'\tilde{\varepsilon}}}{\expec{\delta}{z'\tilde{x}_{R}}^{2}}=N^{1/p}\sigma^{2}\frac{\left\Vert Q_{z}\right\Vert _{q}}{\expec{\delta}{z'\tilde{x}_{R}}^{2}}.\label{eq:obj_R}
\end{equation}
Our goal is to find $z\in\mathcal{Z}^{R}_{\delta}$ that minimizes
the right-hand side of (\ref{eq:obj_R}). Minimizing that function
over the larger set $\mathcal{Z}_{\delta}$ is the Theorem \ref{prop:OptimalIV}
optimization problem with the exposure matrix $W_{R}$ instead of
$W$, which has the desired solution $z^{\ast}_{\delta,R}$ and the
worst-case variance given in the proposition. It is therefore the
solution over $\mathcal{Z}^{R}_{\delta}$ as long as $z^{\ast}_{\delta,R}\in\mathcal{Z}^{R}_{\delta}$.
To see this, note that $S_{\delta,R}=M_{R}W\Sigma_{\delta}W'M_{R}=M_{R}S_{\delta,R}M_{R}$
and therefore $\col\left(S_{\delta,R}\right)\subseteq\col(M_{R})$.
Moreover, $\left(S^{\dag}_{\delta,R}\right)^{1/(p+1)}$ has the same
column space as $S_{\delta,R}$ (including for $p=\infty$ under our
convention that $\left(S^{\dag}_{\delta,R}\right)^{0}=S_{\delta,R}S^{\dag}_{\delta,R}$).
It follows that, for every $g$, $z^{\ast}_{\delta,R}=\left(S^{\dag}_{\delta,R}\right)^{1/(p+1)}\tilde{x}_{R}\in\col(M_{R})$
and therefore $M_{R}z^{\ast}_{\delta,R}=z^{\ast}_{\delta,R}$ a.s.
or, equivalently, $R'z^{\ast}_{\delta,R}=0$ a.s., completing the
proof.

Similarly, applying Proposition \ref{prop:OptimalDesign} to the same
residualized problem proves Proposition \ref{prop:optimal_design_wcovs},
while Proposition \ref{prop:design_properties_wcovs} follows from
Proposition \ref{prop:DesignProperties}.

\subsection{Proposition \ref{prop:one_treated_neighbor}}

Let $u=Pr_{\delta}(x_{i}=0)=a+\frac{b}{n}$, which does not depend
on $i$ by exchangeability. Then:
\begin{align*}
\expec{\delta}{x_{i}} & =1-u\\
\var{\delta}{x_{i}} & =u(1-u)\\
\cov{_{\delta}}{x_{i},x_{j}} & =a-u^{2},\qquad i\ne j
\end{align*}
Thus,
\[
S_{\delta}=\left(u-a\right)\left(I_{n}-P_{n}\right)+\left(nu(1-u)-(n-1)(u-a)\right)P_{n}
\]
and, by Lemma \ref{lem:equicor}, 
\begin{align*}
\tr\left(S^{p/(p+1)}_{\delta}\right) & =(n-1)(u-a)^{p/(p+1)}+\left(nu(1-u)-(n-1)(u-a)\right)^{p/(p+1)}.
\end{align*}
Letting $\lambda=u-a$, we can rewrite the objective function as 
\begin{align*}
\Phi_{u}(\lambda) & =(n-1)\lambda^{p/(p+1)}+(nu(1-u)-(n-1)\lambda)^{p/(p+1)}.
\end{align*}
Since $p/(p+1)\in(0,1)$, $\Phi_{u}(\cdot)$ is concave and from the
first order condition is maximized at $\lambda=u(1-u)$. But feasibility
requires 
\begin{align*}
1-a-b & =1-u-(n-1)\lambda\ge0\\
\implies\lambda & \le\frac{1-u}{n-1}.
\end{align*}
Hence for fixed $u$ the maximizer is
\begin{align*}
\lambda^{*}(u) & =\min\left\{ u(1-u),\frac{1-u}{n-1}\right\} .
\end{align*}
Consider first the case $u\le1/(n-1)$. Then the $u(1-u)$ solution
is feasible and
\begin{align*}
\tr(S^{p/(p+1)}_{\delta}) & =n(u(1-u))^{p/(p+1)},
\end{align*}
which is increasing in $u$ on $[0,1/(n-1)]$. Hence the optimum over
this region is attained on the boundary $u=1/(n-1)$. Thus, without
loss we can restrict attention to the other case, $u\ge1/(n-1)$ and
$\lambda=(1-u)/(n-1)$. In this case, $b^{*}=1-a^{*}$ such that $Pr_{\delta^{*}}(T\ge2)=0$.
Moreover, $u=a+\frac{1-a}{n}$ and the objective simplifies to 
\[
\tr(S^{p/(p+1)}_{\delta})=(n-1)\left(\frac{1-a}{n}\right)^{p/(p+1)}+\left(\frac{a(1-a)(n-1)^{2}}{n}\right)^{p/(p+1)},
\]
with the constraints $a+(1-a)/n\ge1/(n-1)$, or equivalently $a\ge1/(n-1)^{2}$.
This objective is concave in $a$. Taking first-order conditions and
simplifying, we have
\begin{align}
1-2a^{*} & =(a^{*})^{1/(p+1)}(n-1)^{-(p-1)/(p+1)}.\label{eq:astar}
\end{align}
The left hand side is strictly decreasing and positive on $(0,0.5)$
while the right-hand side is weakly increasing and always positive,
so the $a^{*}\in(0,0.5)$ satisfying this is unique. Moreover, this
solution satisfies the feasibility constraint: at $a=1/(n-1)^{2}$,
the right-hand side equals $1/(n-1)\le1/2$ while $1-2/(n-1)^{2}\ge1/2$
for $n\ge3$; thus, $a^{\ast}\ge1/(n-1)^{2}$.

We now show the remaining claims. We prove $\expec{\delta}{g_{k}}\ne1/2$
by contradiction. If $\expec{\delta}{g_{k}}=1/2$, $\expec{\delta}T=n/2>1$
by exchangeability, while we have shown that $\expec{\delta}T=1-a^{\ast}\le1$.

To prove (a), note that when $n=3$ the above first-order condition
becomes
\begin{align*}
1-2a^{*} & =(a^{*})^{1/(p+1)}2^{-(p-1)/(p+1)},
\end{align*}
which is uniquely solved by $a^{*}=0.25$. Similarly, note the first-order
condition at $p=1$ is
\begin{align*}
1-2a^{*} & =\sqrt{a^{*}}
\end{align*}
which is uniquely solved by $a^{*}=0.25$. For (b), let $p_{n}\rightarrow\infty$
be any sequence along which $a^{*}_{p_{n}}\rightarrow a^{*}_{\infty}$;
since we know 
\begin{align*}
1-2a^{*}_{p_{n}} & \le(n-1)^{-(p-1)/(p+1)}\\
\implies a^{*}_{p_{n}} & \ge\frac{1-(n-1)^{-(p-1)/(p+1)}}{2}\rightarrow\frac{n-2}{2(n-1)}>0,
\end{align*}
we have by passing through the limit:
\begin{align*}
1-2a^{*}_{\infty} & =\lim_{p_{n}\rightarrow\infty}(a^{*}_{p_{n}})^{1/(p+1)}(n-1)^{-(p-1)/(p+1)}=(n-1)^{-1}
\end{align*}
Hence $a^{*}_{\infty}=(n-2)/(2(n-1))$; (c) similarly follows taking
limits of $n\rightarrow\infty$ with $p>1$.

\subsection{Proposition \ref{prop:approx_bound}}

We prove the more general result with multiple exposures: let $J(\Sigma)=\min_{c\in\mathbb{R}}\tr\left(((W-cU)\Sigma(W-cU)^{\prime})^{p/(p+1)}\right)$
and let
\begin{align*}
\Sigma^{*} & \in\arg\max_{\Sigma\in\mathcal{Q}_{K}}J(\Sigma),
\end{align*}
where $\mathcal{Q}_{K}$ and $\mathcal{C}_{K}$ are as before; we
again implement $\Sigma^{*}$ by Gaussian rounding with $\Sigma^{\mathrm{GR}}$
denoting the variance of $g^{\mathrm{GR}}-\frac{1}{2}\mathbf{1}$.
We prove that 
\begin{align*}
J(\Sigma^{GR}) & \ge\left(\frac{2}{\pi}\right)^{p/(p+1)}\max_{\Sigma\in\mathcal{C}_{K}}J(\Sigma).
\end{align*}

Let $R=4\Sigma^{*}$ be the relaxed correlation matrix. Under Gaussian
rounding, we have $\Sigma^{GR}=\frac{1}{2\pi}\arcsin(R)$ where $\arcsin(\cdot)$
is applied entrywise. By the power series
\begin{align*}
\arcsin(x) & =x+\sum^{\infty}_{m=1}a_{m}x^{2m+1},\,\,a_{m}\ge0
\end{align*}
we have entrywise:
\begin{align*}
\arcsin(R)-R & =\sum^{\infty}_{m=1}a_{m}R^{\circ(2m+1)}
\end{align*}
where each Hadamard power $R^{\circ(2m+1)}$ is positive semidefinite
by the Schur product theorem (the Hadamard product of two positive
semidefinite matrices is also positive semidefinite). Hence $\arcsin(R)-R\succeq0$
and
\begin{align*}
\Sigma_{GR} & \succeq\frac{1}{2\pi}R=\frac{2}{\pi}\Sigma^{*}.
\end{align*}
For any $c$, this gives:
\begin{align*}
(W-cU)\Sigma_{GR}(W-cU)^{\prime} & \succeq\frac{2}{\pi}(W-cU)\Sigma^{*}(W-cU)^{\prime}.
\end{align*}
Since $p/(p+1)\in(0,1]$, the map $A\mapsto\tr(A^{p/(p+1)})$ is monotone
on the positive semidefinite cone. Hence, for any fixed $c$:
\begin{align*}
\tr\left(((W-cU)\Sigma_{GR}(W-cU)^{\prime})^{p/(p+1)}\right) & \ge\left(\frac{2}{\pi}\right)^{p/(p+1)}\tr\left(((W-cU)\Sigma^{*}(W-cU)^{\prime})^{p/(p+1)}\right).
\end{align*}
Because this inequality holds pointwise, we have by minimizing $c$
on both sides:
\begin{align*}
J(\Sigma^{GR}) & \ge\left(\frac{2}{\pi}\right)^{p/(p+1)}J(\Sigma^{*})\ge\left(\frac{2}{\pi}\right)^{p/(p+1)}\max_{\Sigma\in\mathcal{C}_{K}}J(\Sigma),
\end{align*}
and thus
\[
N^{1/p}\sigma^{2}J(\Sigma^{GR})^{-(p+1)/p}\le\frac{\pi}{2}\cdot N^{1/p}\sigma^{2}\left(\max_{\Sigma\in\mathcal{C}_{K}}J(\Sigma)\right)^{-(p+1)/p}.
\]
Here by Theorem \ref{prop:OptimalIV} the left-hand side is the worst-case
approximate variance corresponding to the Gaussian rounding design
while the right-hand side is the lowest possible worst-case approximate
variance.

\addcontentsline{toc}{section}{Additional Exhibits}

\newpage{}

\section*{Additional Exhibits}

\begin{figure}[H]
\caption{Correlation Distortion from Gaussian Rounding\label{fig:arcsin}}

\begin{centering}
\includegraphics[width=0.4\textwidth]{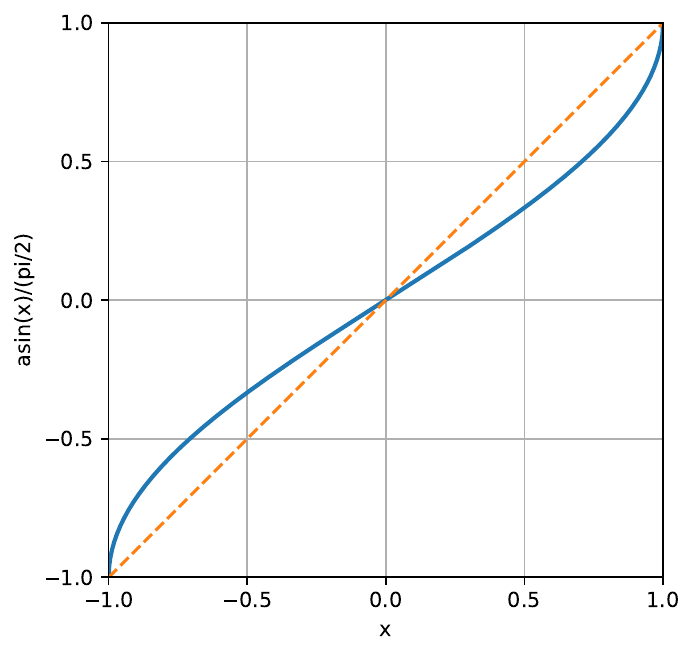}\smallskip{}
\par\end{centering}
{\small\emph{Notes}}{\small : For each value of $\corr{}{\xi_{k},\xi_{l}}$
on the horizontal axis, the vertical axis shows $\corr{}{g_{k},g_{l}}$
from the Gaussian rounding procedure of Section \ref{subsec:Feasible-Implementation}.
The dashed 45-degree line indicates the benchmark of no distortion.}{\small\par}
\end{figure}

\begin{figure}[H]
\caption{Signal and Isotropy Frontiers\label{fig:frontier}}

\begin{centering}
\includegraphics[scale=0.6]{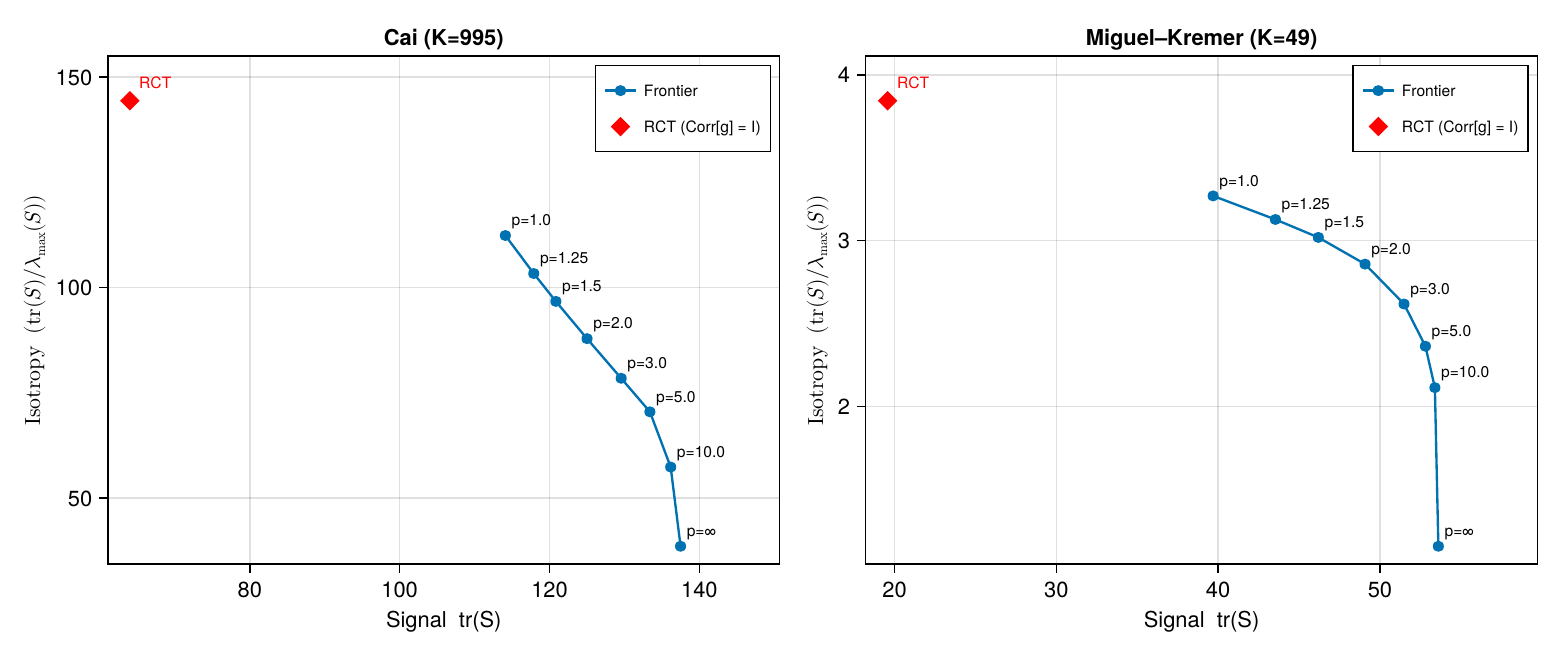}\smallskip{}
\par\end{centering}
{\small\emph{Notes}}{\small : This figure shows the isotropy and signal
of optimal designs, for different values of $p$. Signal and isotropy
of the baseline of independent Bernoulli assignment is included for
reference. Signal is defined as $\tr(S)$ for $S=\var{\delta}x$,
while isotropy is defined as $\tr(S)/\lambda_{\max}(S)$.}{\small\par}
\end{figure}

\begin{figure}[H]
\caption{\label{fig:Equicorrelation}The Tradeoff Between Variance and Mutual
Correlation in the Equicorrelated Setting}

\begin{centering}
\begin{tabular}{cc}
(a) $N=10$ & (b) $N=50$\tabularnewline
\includegraphics[width=0.45\textwidth]{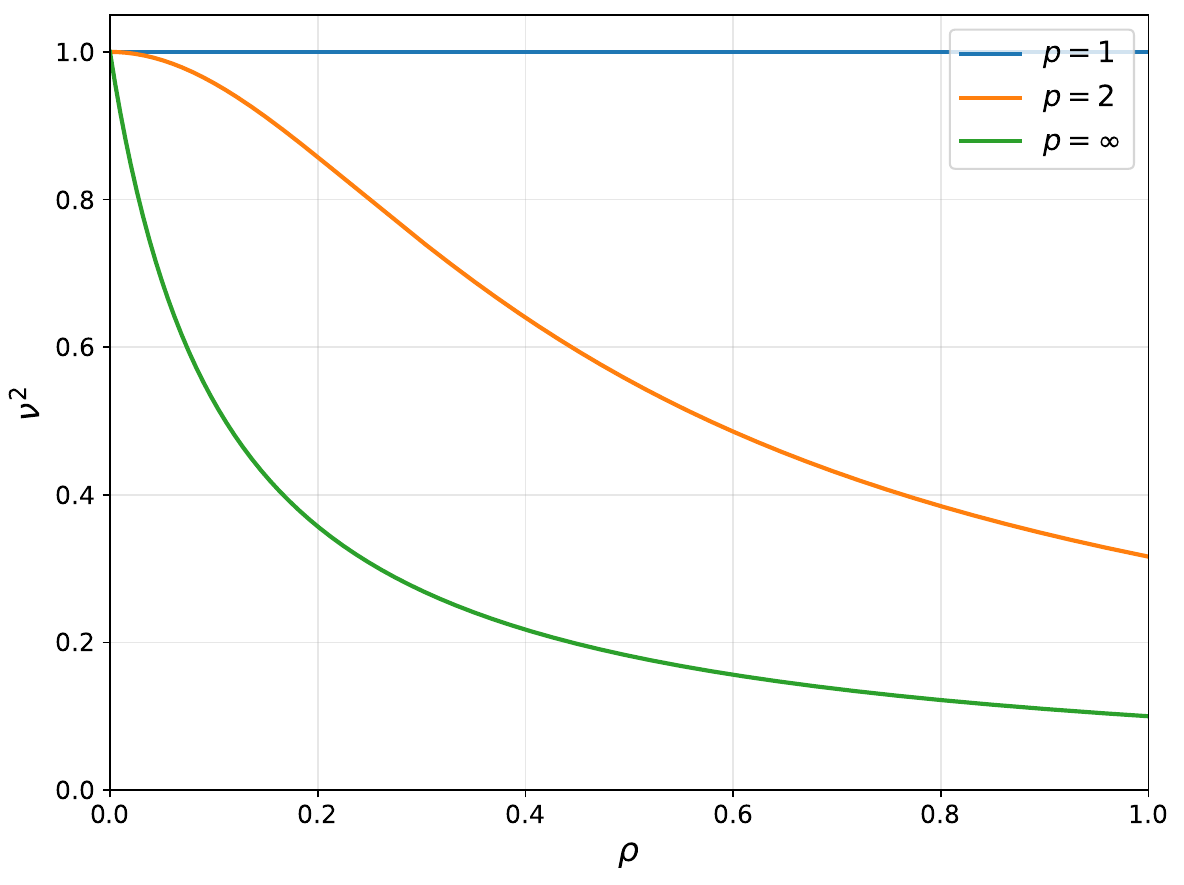} & \includegraphics[width=0.45\textwidth]{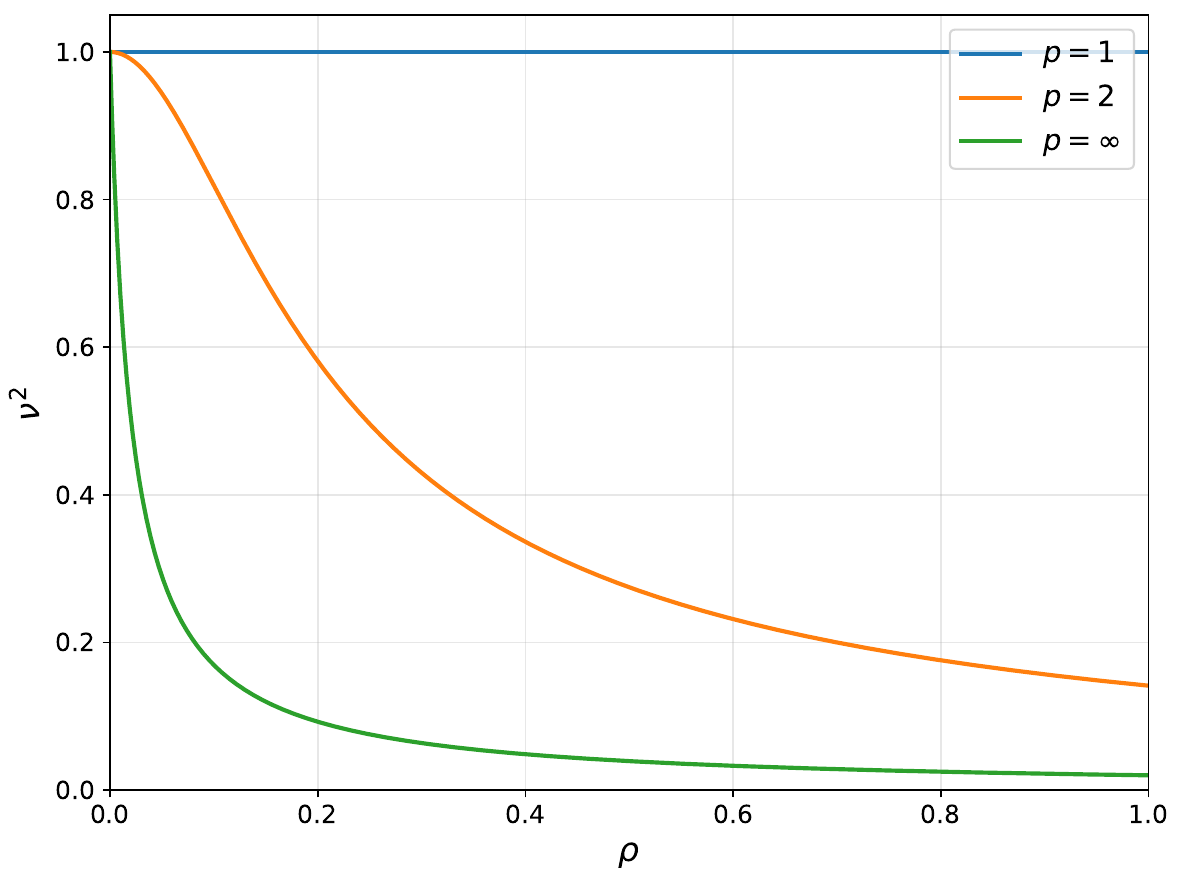}\tabularnewline
\end{tabular}
\par\end{centering}
{\small\emph{Notes: }}{\small In the setting of Appendix \ref{appx:Equicorrelated},
panel (a) shows the maximum variance of errors $\nu^{2}$, assumed
equal across all $N=10$ observations, such that the error distribution
with equal mutual correlations of errors equal to $\rho\in[0,1]$
is within $\mathcal{F}_{p}(\sigma)$ for $\sigma=1$. The computation
follows condition (\ref{eq:equicorr_condition}). Each line corresponds
to a different choice of $p$. Panel (b) repeats this analysis for
$N=50$.}{\small\par}
\end{figure}

\begin{table}[H]
\caption{RMSE and Effective Sample Size Gains for the \citet{cai_social_2015}
Application, Without Whitening\label{tab:cai-extra}}

\begin{centering}
\begin{tabular}{l*{6}{c}}
\toprule
 & \shortstack{Homo- \\ skedastic} & \shortstack{Hetero- \\ skedastic} & \shortstack{Deg.-in- \\ Mean} & \shortstack{Village- \\ Corr.} & \shortstack{Network- \\ Corr.} & \shortstack{Estimated \\ Residuals} \\
Design and Estimator: & (1) & (2) & (3) & (4) & (5) & (6) \\
\midrule
\shortstack[l]{RCT Benchmark \\ \strut} & \shortstack{0.150 \\ {[+0\%]}} & \shortstack{0.139 \\ {[+0\%]}} & \shortstack{0.142 \\ {[+0\%]}} & \shortstack{0.138 \\ {[+0\%]}} & \shortstack{0.141 \\ {[+0\%]}} & \shortstack{0.129 \\ {[+0\%]}} \\
\shortstack[l]{Optimal, $p=1$ \\ \strut} & \shortstack{0.116 \\ {[+68\%]}} & \shortstack{0.119 \\ {[+36\%]}} & \shortstack{0.124 \\ {[+29\%]}} & \shortstack{0.118 \\ {[+36\%]}} & \shortstack{0.113 \\ {[+54\%]}} & \shortstack{0.102 \\ {[+60\%]}} \\
\shortstack[l]{Optimal, $p=2$ \\ \strut} & \shortstack{0.110 \\ {[+85\%]}} & \shortstack{0.109 \\ {[+61\%]}} & \shortstack{0.128 \\ {[+23\%]}} & \shortstack{0.111 \\ {[+55\%]}} & \shortstack{0.114 \\ {[+52\%]}} & \shortstack{0.102 \\ {[+60\%]}} \\
\shortstack[l]{Optimal, $p=\infty$ \\ \strut} & \shortstack{0.095 \\ {[+151\%]}} & \shortstack{0.099 \\ {[+96\%]}} & \shortstack{0.133 \\ {[+13\%]}} & \shortstack{0.117 \\ {[+39\%]}} & \shortstack{0.107 \\ {[+73\%]}} & \shortstack{0.099 \\ {[+71\%]}} \\
\bottomrule
\end{tabular}

\smallskip{}
\par\end{centering}
{\small\emph{Notes}}{\small : For the \citet{cai_social_2015} application,
this table reports the root mean-squared error of different designs
(rows) under different data-generating processes for the errors (columns),
using the unwhitened recentered IV estimator. The first row is the
baseline of independent Bernoulli assignment while the other rows
are the Algorithm \ref{alg:full} feasible optimal designs for different
values of $p$ (but not the optimal IV). The text describes the six
error DGPs. Increases in effective sample sizes, given in brackets,
are computed as the squared ratio of inverse RMSE under the focal
scenario relative to the RCT baseline with the same error distributions,
minus one.}{\small\par}
\end{table}

\begin{table}[H]
\caption{RMSE and Effective Sample Size Gains for the \citet{Miguel2004} Application,
Without Whitening\label{tab:MK-extra}}

\begin{centering}
(a) Spillover Effect\smallskip{}
\par\end{centering}
\begin{centering}
\begin{tabular}{l*{6}{c}}
\toprule
 & \shortstack{Homo- \\ skedastic} & \shortstack{Hetero- \\ skedastic} & \shortstack{Deg.-in- \\ Mean} & \shortstack{Cluster- \\ Corr.} & \shortstack{Network- \\ Corr.} & \shortstack{Estimated \\ Residuals} \\
Design and Estimator: & (1) & (2) & (3) & (4) & (5) & (6) \\
\midrule
\shortstack[l]{RCT Benchmark \\ \strut} & \shortstack{0.101 \\ {[+0\%]}} & \shortstack{0.099 \\ {[+0\%]}} & \shortstack{0.128 \\ {[+0\%]}} & \shortstack{0.106 \\ {[+0\%]}} & \shortstack{0.109 \\ {[+0\%]}} & \shortstack{0.179 \\ {[+0\%]}} \\
\shortstack[l]{Optimal, $p=1$ \\ \strut} & \shortstack{0.070 \\ {[+112\%]}} & \shortstack{0.076 \\ {[+70\%]}} & \shortstack{0.072 \\ {[+214\%]}} & \shortstack{0.075 \\ {[+98\%]}} & \shortstack{0.074 \\ {[+118\%]}} & \shortstack{0.139 \\ {[+66\%]}} \\
\shortstack[l]{Optimal, $p=2$ \\ \strut} & \shortstack{0.064 \\ {[+148\%]}} & \shortstack{0.071 \\ {[+97\%]}} & \shortstack{0.064 \\ {[+293\%]}} & \shortstack{0.072 \\ {[+117\%]}} & \shortstack{0.071 \\ {[+135\%]}} & \shortstack{0.128 \\ {[+95\%]}} \\
\shortstack[l]{Optimal, $p=\infty$ \\ \strut} & \shortstack{0.052 \\ {[+277\%]}} & \shortstack{0.056 \\ {[+215\%]}} & \shortstack{0.049 \\ {[+570\%]}} & \shortstack{0.067 \\ {[+147\%]}} & \shortstack{0.057 \\ {[+267\%]}} & \shortstack{0.111 \\ {[+160\%]}} \\
\bottomrule
\end{tabular}
\par\end{centering}
\begin{centering}
\bigskip{}
\par\end{centering}
\begin{centering}
(b) Direct Effect\smallskip{}
\par\end{centering}
\begin{centering}
\begin{tabular}{l*{6}{c}}
\toprule
 & \shortstack{Homo- \\ skedastic} & \shortstack{Hetero- \\ skedastic} & \shortstack{Deg.-in- \\ Mean} & \shortstack{Cluster- \\ Corr.} & \shortstack{Network- \\ Corr.} & \shortstack{Estimated \\ Residuals} \\
Design and Estimator: & (1) & (2) & (3) & (4) & (5) & (6) \\
\midrule
\shortstack[l]{RCT Benchmark \\ \strut} & \shortstack{0.052 \\ {[+0\%]}} & \shortstack{0.049 \\ {[+0\%]}} & \shortstack{0.040 \\ {[+0\%]}} & \shortstack{0.041 \\ {[+0\%]}} & \shortstack{0.034 \\ {[+0\%]}} & \shortstack{0.064 \\ {[+0\%]}} \\
\shortstack[l]{Optimal, $p=1$ \\ \strut} & \shortstack{0.053 \\ {[-2\%]}} & \shortstack{0.051 \\ {[-10\%]}} & \shortstack{0.041 \\ {[-4\%]}} & \shortstack{0.046 \\ {[-20\%]}} & \shortstack{0.035 \\ {[-7\%]}} & \shortstack{0.066 \\ {[-6\%]}} \\
\shortstack[l]{Optimal, $p=2$ \\ \strut} & \shortstack{0.053 \\ {[-4\%]}} & \shortstack{0.049 \\ {[-1\%]}} & \shortstack{0.038 \\ {[+11\%]}} & \shortstack{0.046 \\ {[-20\%]}} & \shortstack{0.039 \\ {[-24\%]}} & \shortstack{0.065 \\ {[-3\%]}} \\
\shortstack[l]{Optimal, $p=\infty$ \\ \strut} & \shortstack{0.051 \\ {[+6\%]}} & \shortstack{0.048 \\ {[+5\%]}} & \shortstack{0.040 \\ {[+1\%]}} & \shortstack{0.046 \\ {[-20\%]}} & \shortstack{0.038 \\ {[-23\%]}} & \shortstack{0.100 \\ {[-59\%]}} \\
\bottomrule
\end{tabular}\smallskip{}
\par\end{centering}
{\small\emph{Notes}}{\small : For the \citet{Miguel2004} application,
panel (a) reports the root mean-squared error of different designs
(rows) for the spillover effect under different data-generating processes
for the errors (columns), using the unwhitened recentered IV estimator.
Panel (b) reports the same for the direct effect. The first row is
the baseline of independent Bernoulli assignment while the other rows
are the Algorithm \ref{alg:full} feasible optimal designs for different
values of $p$ (but not the optimal IV). The text describes the six
error DGPs. Increases in effective sample sizes, given in brackets,
are computed as the squared ratio of inverse RMSE under the focal
scenario relative to the RCT baseline with the same error distributions,
minus one.}{\small\par}
\end{table}

\begin{table}
\caption{Coverage for \citet{cai_social_2015} Simulation\label{tab:Coverage-Cai}}

\begin{centering}
\begin{tabular}{lcccccc}
\toprule
 & \shortstack{Homo- \\ skedastic} & \shortstack{Hetero- \\ skedastic} & \shortstack{Deg.-in- \\ Mean} & \shortstack{Village- \\ Corr.} & \shortstack{Network- \\ Corr.} & \shortstack{Estimated \\ Residuals} \\
Design and Estimator: & (1) & (2) & (3) & (4) & (5) & (6) \\
\hline
RCT Benchmark & $0.950$ & $0.949$ & $0.947$ & $0.949$ & $0.948$ & $0.947$ \\
$p=1$ & $0.948$ & $0.946$ & $0.952$ & $0.947$ & $0.948$ & $0.950$ \\
$p=2$ & $0.948$ & $0.950$ & $0.950$ & $0.951$ & $0.947$ & $0.949$ \\
$p=\infty$ & $0.948$ & $0.945$ & $0.946$ & $0.948$ & $0.947$ & $0.950$ \\
\bottomrule
\end{tabular}\smallskip{}
\par\end{centering}
{\small\emph{Notes}}{\small : For the \citet{cai_social_2015} application,
this table reports the empirical coverage for the whitened recentered
IV estimator with different designs (rows) and different data-generating
processes for the errors (columns), using a confidence interval based
on the normal approximation $\hat{\beta}\pm1.96\cdot\frac{\sqrt{\hat{v}}}{\sqrt{K}\left|h_{K}\right|}$,
as defined in Section \ref{subsec:clt-inference}. Coverage is computed
as the fraction of times in 10,000 replications of the Monte-Carlo
simulation that the confidence interval contains the true $\beta$.}{\small\par}
\end{table}

\begin{table}
\caption{Coverage for the \citet{Miguel2004} Application\label{tab:Coverage-MK}}

\begin{centering}
(a) Spillover Effect\smallskip{}
\par\end{centering}
\begin{centering}
\begin{tabular}{lcccccc}
\toprule
 & \shortstack{Homo- \\ skedastic} & \shortstack{Hetero- \\ skedastic} & \shortstack{Deg.-in- \\ Mean} & \shortstack{Cluster- \\ Corr.} & \shortstack{Network- \\ Corr.} & \shortstack{Estimated \\ Residuals} \\
Design and Estimator: & (1) & (2) & (3) & (4) & (5) & (6) \\
\hline
RCT Benchmark& $0.966$ & $0.965$ & $0.935$ & $0.949$ & $0.941$ & $0.935$ \\
$p=1$ & $0.964$ & $0.960$ & $0.937$ & $0.955$ & $0.947$ & $0.936$ \\
$p=2$ & $0.952$ & $0.950$ & $0.932$ & $0.933$ & $0.926$ & $0.918$ \\
$p=\infty$ & $0.651$ & $0.655$ & $0.609$ & $0.491$ & $0.590$ & $0.424$ \\
\bottomrule
\end{tabular}
\par\end{centering}
\begin{centering}
\bigskip{}
\par\end{centering}
\begin{centering}
(b) Direct Effect\smallskip{}
\par\end{centering}
\begin{centering}
\begin{tabular}{lcccccc}
\toprule
 & \shortstack{Homo- \\ skedastic} & \shortstack{Hetero- \\ skedastic} & \shortstack{Deg.-in- \\ Mean} & \shortstack{Cluster- \\ Corr.} & \shortstack{Network- \\ Corr.} & \shortstack{Estimated \\ Residuals} \\
Design and Estimator: & (1) & (2) & (3) & (4) & (5) & (6) \\
\hline
RCT Benchmark& $0.953$ & $0.951$ & $0.958$ & $0.951$ & $0.953$ & $0.951$ \\
$p=1$ & $0.947$ & $0.951$ & $0.955$ & $0.945$ & $0.945$ & $0.938$ \\
$p=2$ & $0.946$ & $0.943$ & $0.951$ & $0.942$ & $0.941$ & $0.932$ \\
$p=\infty$ & $0.690$ & $0.731$ & $0.655$ & $0.653$ & $0.728$ & $0.413$ \\
\bottomrule
\end{tabular}
\par\end{centering}
\begin{centering}
\smallskip{}
\par\end{centering}
{\small\emph{Notes}}{\small : For the \citet{Miguel2004} application,
panel (a) reports the empirical coverage for the whitened recentered
IV estimator for the indirect effect with different designs (rows)
and different data-generating processes for the errors (columns),
using a confidence interval based on the normal approximation, based
on an extension of Theorem \ref{thm:main} to multiple exposures.
Panel (b) reports the same for the direct effect. For each panel,
coverage is computed as the fraction of times in 10,000 replications
of the Monte-Carlo simulation that the confidence interval contains
the true target parameter.}{\small\par}
\end{table}

\end{document}